\documentclass[
a4paper,reqno]{amsart}
\usepackage[T1]{fontenc}
\usepackage[english]{babel}
\usepackage{amssymb,bbm,enumerate}
\usepackage{color}
\usepackage[linktocpage=true,colorlinks=true, linkcolor=blue, citecolor=red, urlcolor=green]{hyperref}
\usepackage[all]{xy}

\renewcommand{\phi}{\varphi}

\newcommand{\mc}[1]{\mathcal{#1}}
\newcommand{\mf}[1]{\mathfrak{#1}}
\newcommand{\mb}[1]{\mathbb{#1}}

\newcommand{\id}{\mathbbm{1}}
\newcommand{\quot}[2] {\ensuremath{\raisebox{.40ex}{\ensuremath{#1}}
\! \big / \! \raisebox{-.40ex}{\ensuremath{#2}}}}
\newcommand{\tint}{{\textstyle\int}}

\DeclareMathOperator{\Mat}{Mat}
\DeclareMathOperator{\Hom}{Hom}
\DeclareMathOperator{\End}{End}
\DeclareMathOperator{\diag}{diag}
\DeclareMathOperator{\tr}{Tr}
\DeclareMathOperator{\res}{Res}
\DeclareMathOperator{\ad}{ad}
\DeclareMathOperator{\im}{Im}

\DeclareMathOperator{\Der}{Der}
\DeclareMathOperator{\Ker}{Ker}

\newcommand{\dx}{\delta_x}
\DeclareMathOperator{\wt}{wt}

\theoremstyle{plain}
\newtheorem{theorem}{Theorem}[section]
\newtheorem{lemma}[theorem]{Lemma}
\newtheorem{proposition}[theorem]{Proposition}
\newtheorem{corollary}[theorem]{Corollary}

\theoremstyle{definition}
\newtheorem{definition}[theorem]{Definition}
\newtheorem{example}[theorem]{Example}

\theoremstyle{remark}
\newtheorem{remark}[theorem]{Remark}

\setcounter{tocdepth}{1}
\setcounter{section}{-1}

\numberwithin{equation}{section}

\definecolor{light}{gray}{.9}


\title{Classical \texorpdfstring{$\mc W$}{W}-algebras 
and generalized Drinfeld-Sokolov bi-Hamiltonian systems 
within the theory of Poisson vertex algebras}
\author{Alberto De Sole, Victor G. Kac, Daniele Valeri}
\address{Dept. of Math., Univ. of Rome 1,
P.le Aldo Moro 2, 00185 Roma, Italy.}
\email{desole@mat.uniroma1.it}
\address{Dept. of Math., MIT,
77 Massachusetts Ave, Cambridge, MA 02139, USA.}
\email{kac@math.mit.edu}
\address{
SISSA-ISAS,
Via Bonomea 265, 34136 Trieste, Italy.}
\email{daniele.valeri@sissa.it}

\begin{document}

\pagestyle{plain}

\begin{abstract}
We describe of the generalized Drinfeld-Sokolov Hamiltonian reduction 
for the construction of classical $\mc W$-algebras 
within the framework of Poisson vertex algebras.
In this context, the gauge group action on the phase space
is translated in terms of (the exponential of) a Lie conformal algebra action
on the space of functions.
Following the ideas of Drinfeld and Sokolov,
we then establish under certain sufficient conditions the applicability 
of the Lenard-Magri scheme of integrability 
and the existence of the corresponding integrable hierarchy of bi-Hamiltonian equations.
\end{abstract}

\maketitle

\tableofcontents

\section{Introduction}

In the seminal paper \cite{DS85}, 
Drinfeld and Sokolov defined a 1-parameter family of Poisson brackets
on the space $\quot{\mc W}{\partial\mc W}$ of local functionals 
on an infinite-dimensional Poisson manifold $\mc M$.
Such Poisson manifold is obtained, 
starting from an affine Kac-Moody algebra $\widehat{\mf g}$, via a Hamiltonian reduction,
and the corresponding differential algebra $\mc W$ of functions on $\mc M$,
with its Poisson bracket on the space $\quot{\mc W}{\partial\mc W}$ of local functionals,
is known as principal classical $\mc W$-algebra.
In the same paper Drinfeld and Sokolov 
constructed an integrable hierarchy of bi-Hamiltonian equations associated
to each principal classical $\mc W$-algebra.
The Korteweg-de Vries (KdV) equation appears in the case of $\mf g=\mf{sl}_2$.
For $\mf g=\mf{sl}_n$, the principal classical $\mc W$-algebra Poisson bracket
coincides with the Adler-Gelfand-Dickey Poisson bracket \cite{Adl79,GD87}
on the space of local functionals on the set of ordinary differential operators of the 
form $\partial^n+u_1\partial^{n-2}+\ldots+u_{n-1}$,
and the corresponding integrable hierarchy is the so called $n$-th Gelfand-Dickey hierarchy
(see \cite{Dic97} for a review).

In a few words, the construction of \cite{DS85} is as follows.
Let $\mf g$ be a simple finite-dimensional Lie algebra
with a non-degenerate symmetric invariant bilinear form $\kappa$,
and let $f$ be a principal nilpotent element in $\mf g$,
which we include in an $\mf{sl}_2$-triple $(f,h=2x,e)$ in $\mf g$.
Then $\mf g$ decomposes as a direct sum of $\ad x$-eigenspaces
$\mf g=\bigoplus_{i\in\mb Z}\mf g_i$.
To define the Poisson manifold $\mc M$, consider first the space $\widetilde{\mc M}$ 
of first order differential operators of the form
$$
L(z)=\partial_x+f+zs+q(x)\,,
$$
where $s$ is a fixed non-zero element of the center of $\mf n_+=\bigoplus_{i>0}\mf g_i$,
$q(x)$ is a smooth map $S^1\to\mf b=\mf g_{0}\oplus\mf n_+$,
and $z$ is an indeterminate.
On this space there is an action of the infinite-dimensional Lie group $N$,
whose Lie algebra is the space of smooth maps $S^1\to\mf n_+$,
by gauge transformations:
$$
L^{A}(z)=e^{\ad A}L(z)\,,
$$
for any smooth map $A:\, S^1\to\mf n_+$.
The Poisson manifold $\mc M$ is then obtained as the quotient of $\widetilde{\mc M}$
by the action of the gauge group.
As a differential algebra, the principal classical $\mc W$-algebra is therefore the space
of functions on $\widetilde{\mc M}$ which are gauge invariant.
The corresponding 1-parameter family of Poisson brackets on $\quot{\mc W}{\partial\mc W}$
($z$ being the parameter)
is obtained as a reduction of the affine algebra Lie-Poisson bracket.
An explicit formula for it is
$$
\{\tint g,\tint h\}_{z,\rho}
=\int\kappa\Big(\frac{\delta h}{\delta q}\,\Big|\,\big[L(z),\frac{\delta g}{\delta q}\big]\Big)\,,
$$
where $\frac{\delta g}{\delta q}$ denotes the variational derivative of the local functional 
$\tint g\in\quot{\mc W}{\partial\mc W}$
(the index $\rho$ will be explained in Section \ref{sec:3.2}).

In order to construct an integrable hierarchy of bi-Hamiltonian equations for $\mc W$,
one conjugates $L$ to an operator of the form
$$
L_0(z)=e^{\ad U(z)}L(z)=\partial_x+f+zs+h(z)\,,
$$
where $U(z)$ is a smooth function on $S^1$ with values in $\mf n_+\oplus\mf g[[z^{-1}]]z^{-1}$,
and $h(z)$ is a smooth function on $S^1$ with values in $\mf h\cap\mf g[[z^{-1}]]$,
where $\mf h=\Ker\ad(f+zs)$ (it is an abelian subalgebra of $\mf g((z^{-1}))$).
Then, for any element $a(z)\in\mf h$ we obtain an infinite sequence of Hamiltonian functionals in involution
defined by ($n\in\mb Z_+$):
$$
\tint \mc H_n=\tint\res_zz^{n-1}\kappa(a(z)\mid h(z))\,\in\quot{\mc W}{\partial\mc W}\,.
$$
The corresponding generalized KdV hierarchy of Hamiltonian equations is
$\frac{dp}{dt_n}=\{\tint\mc H_n,\tint p\}_{0,\rho}$, $n\in\mb Z_+$, $p\in\mc W$.

\bigskip

Since the original paper of Drinfeld and Sokolov, 
the construction of the classical $\mc W$-algebras
has been generalized by many authors to the case when $f\in\mf g$ is an arbitrary nilpotent element.
In the framework of Poisson vertex algebras,
they have been constructed in \cite{DSK06}.
In \cite{dGHM92,BdGHM93,FGMS95,FGMS96}
they constructed the corresponding generalized KdV hierarchies,
starting with a Heisenberg subalgebra $\mc H\subset\mf g((z^{-1}))$.
In this approach, they cover all classical $\mc W$-algebras associated 
to nilpotent elements $f\in\mf g$,
for which there exists a graded semisimple element of the form $f+zs\in\mc H$
(the existence of such a graded semisimple element is also studied, in the regular, or ``type I'', case, 
in \cite{FHM92,DF95}, using results in \cite{KP85}, and, for $\mf g$ of type $A_n$,
in \cite{FGMS95,FGMS96}).

\bigskip

In \cite{BDSK09} the theory of Hamiltonian equations and integrable bi-Hamiltonian hierarchies
has been naturally related to the theory of Poisson vertex algebras.

Recall that a \emph{Poisson vertex algebra} (PVA) is a differential algebra $\mc V$,
with a derivation $\partial$,
endowed with a $\lambda$-bracket 
$\{\cdot\,_\lambda\,\cdot\}:\,\mc V\otimes\mc V\to\mb F[\lambda]\otimes\mc V$
satisfying sesquilinearity \eqref{sesqui}, left and right Leibniz rules \eqref{lleibniz}-\eqref{rleibniz},
skew-symmetry \eqref{skewsim}, and Jacobi identity \eqref{jacobi},
displayed in Section \ref{sec:pva}.
Given a PVA structure on an algebra $\mc V$ of smooth functions $u:\,S^1\to\mb R$,
or, in a more algebraic context, on an algebra of differential polynomials $\mc V$
over a field $\mb F$ of characteristics zero,
and a local Hamiltonian functional $\tint h\in\quot{\mc V}{\partial\mc V}$,
the corresponding \emph{Hamiltonian equation} is
\begin{equation}\label{eq:hameq}
\frac{du}{dt}=\{h_\lambda u\}\big|_{\lambda=0}\,.
\end{equation}
An \emph{integral of motion} for such evolution equation 
is a local functional $\tint g\in\quot{\mc V}{\partial\mc V}$
such that 
$$
\{\tint h,\tint g\}:=\tint\{h_\lambda g\}\big|_{\lambda=0}=0\,.
$$
Equation \eqref{eq:hameq} is said to be \emph{integrable}
if there exists an infinite sequence $\tint h_0=\tint h,\,\tint h_1,\,\tint h_2,\dots$
of integrals of motion in involution: 
$\{\tint h_m,\tint h_n\}=0$, for all $m,n\in\mb Z_+$,
which span an infinite dimensional subspace of $\quot{\mc V}{\partial\mc V}$.

The main tool to construct an infinite hierarchy of Hamiltonian equations
is the so called Lenard-Magri scheme (see \cite{Mag78}).
This scheme can be applied to a bi-Hamiltonian equation, that is an evolution equation
which can be written in two compatible Hamiltonian forms:
\begin{equation}\label{eq:bihameq}
\frac{du}{dt}=\{{h_0}_\lambda u\}_H\big|_{\lambda=0}=\{{h_1}_\lambda u\}_K\big|_{\lambda=0}\,,
\end{equation}
where $\{\cdot\,_\lambda\,\cdot\}_H$ and $\{\cdot\,_\lambda\,\cdot\}_K$
are compatible $\lambda$-brackets,
in the sense that any their linear combination defines a PVA structure on $\mc V$.
In this case, under some additional conditions, 
one can solve the recurrence equation
$$
\{{h_n}_\lambda u\}_H=\{{h_{n+1}}_\lambda u\}_K,
\qquad n\in\mb Z_+\,.
$$
Then, according to the Magri Theorem, the local functionals $\tint h_n,\,n\in\mb Z_+$,
are in involution,
so that equation \eqref{eq:bihameq} is integrable, provided that the $\tint h_n$'s
span an infinite-dimensional vector space.

\bigskip

The main aim of the present paper is to derive the Drinfeld-Sokolov construction of classical
$\mc W$-algebras and generalized KdV hierarchies,
as well as the generalizations 
mentioned above, within the context of Poisson vertex algebras.

In fact, it appears clear from the results in Section \ref{sec:dsred},
that Poisson vertex algebras provide the most natural framework 
to describe classical $\mc W$-algebras and the corresponding 
generalized Drinfeld-Sokolov Hamiltonian reduction.
In particular, Theorem \ref{20120511:thm1} (and the following Remark \ref{ref:confgroup}) 
shows that the action of the gauge group $N$
on the phase space $\widetilde{\mc M}$ coincides with an action of a ``Lie conformal group''
on the space $\widetilde{\mc W}$ of functions on $\widetilde{\mc M}$,
obtained by exponentiating the natural Lie conformal algebra action 
of $\mb F[\partial]\mf n$ on $\widetilde{\mc W}$,
where $\mf n$ is a certain subalgebra of $\mf n_+$.

\bigskip

The paper is organized as follows.
In Section \ref{sec:pva} we review, following \cite{BDSK09}, 
the basic definitions and notations of Poisson vertex algebra theory
and its application to the theory of integrable bi-Hamiltonian equations.

In Section \ref{sec:hom}, following the original ideas of Drinfeld and Sokolov,
we show how to apply the Lenard-Magri scheme of integrability 
for the affine Poisson vertex algebra $\mc V(\mf g)$ (defined in Example \ref{affinePVA}),
where $\mf g$ is a reductive Lie algebra.
This is known as the homogeneous case, as it corresponds to the choice $f=0$.
The main result here is Corollary \ref{cor:homog},
which provides an integrable hierarchy of Hamiltonian equations 
in $\mc V(\mf g)$,
associated to a regular semisimple element $s\in\mf g$ 
and an element $a\in Z(\Ker(\ad s))\backslash Z(\mf g)$.
Here and further, for a Lie algebra $\mf a$, $Z(\mf a)$ denotes its center.

Section \ref{sec:dsred} is the heart of the paper.
We first define the action of the Lie conformal algebra $\mb F[\partial]\mf n$ 
on a suitable differential subalgebra $\mc V(\mf p)$ of $\mc V(\mf g)$.
The classical $\mc W$-algebra is then defined as $\mc V(\mf p)^{\mb F[\partial]\mf n}$,
that is the subspace of $\mb F[\partial]\mf n$-invariants in $\mc V(\mf p)$.
As discussed above, in Section \ref{sec:3.3}, we show how
this Lie conformal algebra action is related to the action of the gauge group $N$
on the phase space $\widetilde{\mc M}$,
and we then show how our definition of a classical $\mc W$-algebra is related
to the original definition of Drinfeld and Sokolov.
Then, in Section \ref{sec:3.4}, we use this correspondence
to prove that the classical $\mc W$-algebra is an algebra of differential polynomials in $r=\dim(\Ker\ad f)$
variables, and to give an explicit set of generators for it.

Finally, in Section \ref{sec:non_hom}, following the ideas of Drinfeld and Sokolov,
we apply the Lenard-Magri scheme of integrability 
to derive integrable hierarchies for classical $\mc W$-algebras.
The main result here is Theorem \ref{final},
where we construct an integrable hierarchy of bi-Hamiltonian equations
associated to a nilpotent element $f\in\mf g$,
a homogeneous element $s\in\mf g$ such that $[s,\mf n]=0$ and $f+zs\in\mf g((z^{-1}))$
is semisimple, and an element $a(z)\in Z(\Ker\ad(f+zs))\backslash Z(\mf g((z^{-1}))$.
In Section \ref{sec:4.10} we discuss, in the case of $\mf{gl}_n$,
for which nilpotent elements $f$ Theorem \ref{final} can be applied,
obtaining the same restrictions as in \cite{FGMS95}.

\subsubsection*{Acknowledgments}
Part of the research was conducted at MIT, while the first and third author were visiting the Mathematics Department.
We also thank the Institute Henri Poincar\`e in Paris, where this paper was completed.
%
%
We wish to thank Corrado De Concini for useful discussions
and Andrea Maffei for (always) illuminating observations.
The third author wishes to thank Alessandro D'Andrea for revising his Ph.D. thesis,
from which the present paper originated.

\section{Poisson vertex algebras and Hamiltonian equations}\label{sec:pva}

In this section we review the connection between Poisson vertex algebras
and the theory of Hamiltonian equations as laid down in \cite{BDSK09}.
It is shown that Poisson vertex algebras provide a convenient framework 
for Hamiltonian PDE's
(similar to that of Poisson algebras for Hamiltonian ODE's).
As the main application we explain how to establish integrability
of such partial differential equations using the Lenard-Magri scheme.

\subsection{Algebras of differential polynomials}

By a \emph{differential algebra} we mean a unital commutative associative algebra
$\mc V$ over a field $\mb F$ of characteristic $0$, with a derivation $\partial$,
that is an $\mb F$-linear map from $\mc V$ to itself such that, for $a,b\in \mc V$
$$
\partial(ab)=\partial(a)b+a\partial(b)\,.
$$
In particular $\partial1=0$.

The most important examples we are interested in are
the \emph{algebras of differential polynomials} in the variables $u_1,\ldots,u_{\ell}$:
$$
\mc V=\mb F[u_i^{(n)}\mid i\in I=\{1,\ldots,\ell\},n\in\mb Z_+]\,,
$$
where $\partial$ is the derivation defined by $\partial(u_i^{(n)})=u_i^{(n+1)}$, $i\in I,n\in\mb Z_+$.
Note that we have in $\mc V$ the following commutation relations:
\begin{equation}\label{partialcomm}
\left[\frac{\partial}{\partial u_i^{(n)}},\partial\right]=\frac{\partial}{\partial u_i^{(n-1)}}\,,
\end{equation}
where the RHS is considered to be zero if $n=0$,
which can be written equivalently
in terms of generating series, as follows
\begin{equation}\label{partialcomm2}
\sum_{n\in\mb Z_+}z^n\frac{\partial(\partial g)}{\partial u_i^{(n)}}
=
(\partial+z) \sum_{n\in\mb Z_+} z^n\frac{\partial g}{\partial u_i^{(n)}}
\,,\,\,g\in\mc V
\,.
\end{equation}
We say that $f\in\mc V\backslash\mb F$ has \emph{differential order} $m\in\mb Z_+$ if
$\frac{\partial f}{\partial u_i^{(m)}}\neq0$ for some $i\in I$, 
and $\frac{\partial f}{\partial u_j^{(n)}}=0$ for all $j\in I$ and $n>m$.

The \emph{variational derivative} of $f\in\mc V$ with respect to $u_i$ is, by definition,
$$
\frac{\delta f}{\delta u_i}=\sum_{n\in\mb Z_+}(-\partial)^n\frac{\partial f}{\partial u_i^{(n)}}\,.
$$
It is immediate to check, using \eqref{partialcomm2}, that $\frac{\delta}{\delta u_i}\circ\partial=0$ for every $i$.
In the algebra $\mc V$ of differential polynomials the converse is true too:
if $\frac{\delta f}{\delta u_i}=0$ for every $i=1,\dots,\ell$,
then necessarily $f$ lies in $\partial \mc V\oplus\mb F$
(see e.g. \cite{BDSK09}). 
Letting $U=\bigoplus_{i\in I}\mb F u_i$ be the generating space of $\mc V$,
we define the \emph{variational derivative} of $f\in\mc V$ as
\begin{equation}\label{eq:varder}
\frac{\delta f}{\delta u}=
\sum_{i\in I}u_i\otimes\frac{\delta f}{\delta u_i}\in U\otimes\mc V\,.
\end{equation}


\subsection{Poisson vertex algebras}

\begin{definition}
Let $\mc V$ be a differential algebra. A \emph{$\lambda$-bracket} on $\mc V$ is an $\mb F$-linear map
$\mc V\otimes\mc V\to\mb F[\lambda]\otimes\mc V$,
denoted by $f\otimes g\to\{f_\lambda g\}$,
satisfying \emph{sesquilinearity} ($f,g\in\mc V$):
\begin{equation}\label{sesqui}
\{\partial f_\lambda g\}=
-\lambda\{f_\lambda g\},\qquad 
\{f_\lambda\partial g\}=
(\lambda+\partial)\{f_\lambda g\}\,,
\end{equation} 
and the \emph{left and right Leibniz rules} ($f,g,h\in\mc V$):
\begin{align}\label{lleibniz}
\{f_\lambda gh\}&=
\{f_\lambda g\}h+\{f_\lambda h\}g,\\
\{fh_\lambda g\}&=
\{f_{\lambda+\partial}g\}_{\rightarrow}h+\{h_{\lambda+\partial}g\}_{\rightarrow}f\,,\label{rleibniz}
\end{align}
where we use the following notation: if
$\{f_\lambda g\}=\sum_{n\in\mb Z_+}\lambda^n c_n$,
then
$\{f_{\lambda+\partial}g\}_{\rightarrow}h=\sum_{n\in\mb Z_+}c_n(\lambda+\partial)^nh$.
We say that the $\lambda$-bracket is \emph{skew-symmetric} if
\begin{equation}\label{skewsim}
\{g_\lambda f\}=-\{f_{-\lambda-\partial}g\}\,,
\end{equation}
where, now, 
$\{f_{-\lambda-\partial}g\}=\sum_{n\in\mb Z_+}(-\lambda-\partial)^nc_n$
(if there is no arrow we move $\partial$ to the left).
\end{definition}

\begin{definition}
A \emph{Poisson vertex algebra} (PVA) is a differential algebra $\mc V$ endowed 
with a $\lambda$-bracket which is skew-symmetric and satisfies 
the following \emph{Jacobi identity} in $\mc V[\lambda,\mu]$ ($f,g,h\in\mc V$):
\begin{equation}\label{jacobi}
\{f_\lambda\{g_\mu h\}\}=
\{\{f_\lambda g\}_{\lambda+\mu}h\}+
\{g_\mu\{f_\lambda h\}\}\,.
\end{equation}
\end{definition}

In this paper we consider PVA structures on an algebra 
of differential polynomials $\mc V$ in the variables $\{u_i\}_{i\in I}$.
In this case, thanks to sesquilinearity and Leibniz rules,
the $\lambda$-brackets $\{u_i{}_{\lambda}u_j\},\,i,j\in I$,
completely determine 
the $\lambda$-bracket on the whole algebra $\mc V$. 

\begin{theorem}[{\cite[Theorem 1.15]{BDSK09}}]\label{master}
Let $\mc V$ be an algebra of differential polynomials in the variables $\{u_i\}_{i\in I}$,
and let $H_{ij}(\lambda)\in\mb F[\lambda]\otimes\mc V,\,i,j\in I$.
\begin{enumerate}[(a)]
\item 
The Master Formula
\begin{equation}\label{masterformula}
\{f_\lambda g\}=
\sum_{\substack{i,j\in I\\m,n\in\mb Z_+}}\frac{\partial g}{\partial u_j^{(n)}}(\lambda+\partial)^n
H_{ji}(\lambda+\partial)(-\lambda-\partial)^m\frac{\partial f}{\partial u_i^{(m)}}
\end{equation}
defines the $\lambda$-bracket on $\mc V$
with given $\{u_i{}_\lambda u_j\}=H_{ji}(\lambda),\,i,j\in I$.
\item 
The $\lambda$-bracket \eqref{masterformula} on $\mc V$ satisfies the skew-symmetry 
condition \eqref{skewsim} provided that the same holds on generators ($i,j\in I$):
\begin{equation}\label{skewsimgen}
\{u_i{}_\lambda u_j\}=-\{u_j{}_{-\lambda-\partial}u_i\}\,.
\end{equation}
\item 
Assuming that the skew-symmetry condition \eqref{skewsimgen} holds, 
the $\lambda$-bracket \eqref{masterformula} satisfies the Jacobi identity \eqref{jacobi}, 
thus making $\mc V$ a PVA, provided that the Jacobi identity holds 
on any triple of generators ($i,j,k\in I$):
$$
\{u_i{}_\lambda\{u_j{}_\mu u_k\}\}=
\{\{u_i{}_\lambda u_j\}_{\lambda+\mu}u_k\}+\{u_j{}_\mu\{u_i{}_\lambda u_k\}\}\,.
$$
\end{enumerate}
\end{theorem}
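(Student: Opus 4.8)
The plan is to prove the three parts in order, using that an algebra of differential polynomials is generated, as a differential algebra, by the $u_i$'s, so that sesquilinearity together with the two Leibniz rules force the Master Formula once the brackets of generators are prescribed. For part (a), I would first check that the RHS of \eqref{masterformula} is well defined and, when $f=u_i$, $g=u_j$, reduces to $H_{ji}(\lambda)$: indeed only the $m=n=0$ term survives, giving $H_{ji}(\lambda)$, as required. Then I would verify \eqref{masterformula} satisfies sesquilinearity and both Leibniz rules directly. Left sesquilinearity in $g$ and the left Leibniz rule \eqref{lleibniz} are immediate from the way the $\frac{\partial g}{\partial u_j^{(n)}}(\lambda+\partial)^n$ part enters, using that $\frac{\partial}{\partial u_j^{(n)}}$ is a derivation and the commutation rule \eqref{partialcomm2} (which accounts for the shift $\lambda\mapsto\lambda+\partial$). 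Left sesquilinearity in $f$, $\{\partial f_\lambda g\}=-\lambda\{f_\lambda g\}$, follows from \eqref{partialcomm2} applied to $\frac{\partial(\partial f)}{\partial u_i^{(m)}}$ together with the factor $(-\lambda-\partial)^m$; the telescoping collapses to multiplication by $-\lambda$. The right Leibniz rule \eqref{rleibniz} is the most computational of these checks: one expands $\frac{\partial(fh)}{\partial u_i^{(m)}}=\frac{\partial f}{\partial u_i^{(m)}}h+f\frac{\partial h}{\partial u_i^{(m)}}$, and the arrow notation on the RHS of \eqref{rleibniz} is precisely what is needed to reabsorb the extra $\partial$'s hitting $h$ (resp. $f$) through $(-\lambda-\partial)^m$. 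Finally I would argue uniqueness: any $\lambda$-bracket with the prescribed values on generators must, by sesquilinearity and the Leibniz rules, agree with \eqref{masterformula} on all of $\mc V$, by induction on the number of generators (with derivatives) appearing in a monomial.

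For part (b), assuming skew-symmetry \eqref{skewsimgen} on generators, I would show \eqref{skewsim} holds for all $f,g\in\mc V$. The cleanest route is to introduce the two bilinear maps $B_\pm(f,g):=\{f_\lambda g\}\pm\{g_{-\lambda-\partial}f\}$ and check that $B_-$ obeys sesquilinearity and the two Leibniz rules in each argument (this uses only part (a), together with skew-symmetry of the ``$-\lambda-\partial$'' operation, which is what makes the right Leibniz rule in $f$ match the left Leibniz rule in $g$ after the flip). Since $B_-$ vanishes on pairs of generators by hypothesis, and a bilinear map that is sesquilinear and satisfies both Leibniz rules is determined by its values on generators, $B_-\equiv0$, which is \eqref{skewsim}.

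For part (c), assuming \eqref{skewsimgen}, I would similarly consider the trilinear map
\[
J(f,g,h):=\{f_\lambda\{g_\mu h\}\}-\{\{f_\lambda g\}_{\lambda+\mu}h\}-\{g_\mu\{f_\lambda h\}\}\,,
\]
and prove that $J$ satisfies, in each of its three arguments, the appropriate sesquilinearity and Leibniz-type identities (so that it too is determined by its values on generators). Concretely: $J$ is sesquilinear and ``Leibniz'' in $h$ by a direct computation from \eqref{lleibniz}, \eqref{sesqui} and part (a); in $g$ one uses skew-symmetry \eqref{skewsim} (available from part (b)) to flip $g$ past the bracket and reduce to the Leibniz property in the other slot; in $f$ one likewise uses \eqref{skewsim} together with the Jacobi identity already being symmetric under the Leibniz expansion. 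This is the step I expect to be the main obstacle: keeping track of the polynomial variables $\lambda,\mu$ and the order in which $\partial$ acts (the arrow conventions) when expanding, say, $\{f_\lambda\{g_\mu h_1h_2\}\}$, and checking that all the correction terms produced by the Leibniz rules cancel in pairs against those from the other two terms of $J$. Once $J$ is shown to be determined by its values on triples of generators, the hypothesis that Jacobi holds on generators gives $J\equiv0$, completing the proof that $\mc V$ is a PVA.
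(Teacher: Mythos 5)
The paper does not prove this theorem: it is quoted verbatim from \cite[Theorem 1.15]{BDSK09}, so there is no in-paper proof to compare against. Your outline is essentially the standard proof given in that reference: verify sesquilinearity and both Leibniz rules for the Master Formula directly (using \eqref{partialcomm}--\eqref{partialcomm2}), observe that these properties determine a $\lambda$-bracket by its values on generators, and then show that the skew-symmetry and Jacobi ``defects'' again satisfy sesquilinearity and Leibniz-type identities in each slot, hence vanish once they vanish on generators. Two remarks. First, a sign slip: with your definition $B_\pm(f,g)=\{f_\lambda g\}\pm\{g_{-\lambda-\partial}f\}$, the skew-symmetry condition \eqref{skewsim} is $B_+(f,g)=0$, not $B_-(f,g)=0$; equivalently, the clean formulation is that the opposite bracket $(f,g)\mapsto-\{g_{-\lambda-\partial}f\}$ is itself a $\lambda$-bracket agreeing with the original on generators, hence equal to it by the uniqueness in part (a). Second, in part (c) you correctly identify the computational crux (the derivation property of $J$ in the first two slots), and the standard shortcut is worth making explicit: once \eqref{skewsim} is available, the Jacobi expression is, up to sign and the substitution $\lambda\leftrightarrow\mu$, symmetric under exchanging $f$ and $g$, and $h$ can be moved to the first slot by skew-symmetry, so it suffices to verify the Leibniz property of $J$ in the last argument $h$, which follows directly from \eqref{lleibniz}. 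As written, that step is only asserted, not carried out, but the strategy is sound.
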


\begin{example}\label{affinePVA}
Let $\mf g$ be a Lie algebra over $\mb F$ with a symmetric invariant bilinear form $\kappa$, 
and let $s$ be an element of $\mf g$.
The \emph{affine PVA} $\mc V(\mf g,\kappa,s)$, associated to the triple $(\mf g,\kappa,s)$,
is the algebra of differential polynomials $\mc V=S(\mb F[\partial]\mf g)$ 
(where $\mb F[\partial]\mf g$ is the free $\mb F[\partial]$-module generated by $\mf g$
and $S(R)$ denotes the symmetric algebra over the $\mb F$-vector space $R$)
together with the $\lambda$-bracket given by
\begin{equation}\label{affinebracket}
\{a_\lambda b\}=[a,b]+\kappa(s\mid[a,b])+\kappa(a\mid b)\lambda
\qquad\text{ for } a,b\in\mf g
\,,
\end{equation}
and extended to $\mc V$ by sesquilinearity and the left and right Leibniz rules.
\end{example}

In Section \ref{sec:dsred} we will define classical $\mc W$-algebras in terms 
of representations of Lie conformal algebras.
Let us recall here some definitions \cite{Kac98}.
\begin{definition}\phantomsection\label{def:lca}
\begin{enumerate}[(a)]
\item
A \emph{Lie conformal algebra} is an $\mb F[\partial]$-module $R$
with an $\mb F$-linear map  $\{\cdot\,_\lambda\,\cdot\}:\,R\otimes R\to\mb F[\lambda]\otimes R$ 
satisfying \eqref{sesqui}, \eqref{skewsim} and \eqref{jacobi}.
\item
A \emph{representation} of a Lie conformal algebra $R$ on an $\mb F[\partial]$-module $\mc V$
is a $\lambda$-action $R\otimes\mc V\to\mb F[\lambda]\otimes\mc V$,
denoted 
$a\otimes g\mapsto a\,^\rho_\lambda\,g$,
satisfying sesquilinearity,
$(\partial a)\,^\rho_\lambda\,g=-\lambda a\,^\rho_\lambda\,g,\,
a\,^\rho_\lambda\,(\partial g)=(\lambda+\partial)a\,^\rho_\lambda\,g$,
and Jacobi identity 
$a\,^\rho_\lambda\,(b\,^\rho_\mu\,g)-b\,^\rho_\mu\,(a\,^\rho_\lambda\,g)
=\{a_\lambda b\}\,^\rho_{\lambda+\mu}\,g$
($a,b\in R$, $g\in\mc V$).
\item
If, moreover, $\mc V$ is a differential algebra,
we say the the action of $R$ on $\mc V$ is by \emph{conformal derivations} if
$a\,^\rho_\lambda\,(gh)=(a\,^\rho_\lambda\,g)h+(a\,^\rho_\lambda\,h)g$.
\end{enumerate}
\end{definition}

\subsection{Poisson structures and Hamiltonian equations}

By Theorem \ref{master}(a), if $\mc V$ is an algebra of differential polynomials
in the variables $\{u_i\}_{i\in I}$, there is a bijective correspondence between
$\ell\times\ell$-matrices 
$H(\lambda)=
\left(H_{ij}(\lambda)\right)_{i,j\in I}\in\Mat_{\ell\times \ell}\mc V[\lambda]$
and the $\lambda$-brackets $\{\cdot\,_\lambda\,\cdot\}_H$ on $\mc V$.

Let $U=\bigoplus_{i\in I}\mb F u_i$ be the generating space of $\mc V$,
and let $\{\chi^i\}_{i\in I}$ be the dual basis of $U^*$.
We have a natural identification
\begin{equation}\label{eq:identif}
\Mat_{\ell\times\ell}\mc V[\lambda]
\hookrightarrow
\Hom(U\otimes\mc V,U^*\otimes\mc V)
\,,
\end{equation}
associating to the matrix $H=\left(H_{ij}(\lambda)\right)_{i,j\in I}$ 
the linear map 
$U\otimes\mc V\to U^*\otimes\mc V$ given by
$$
a\otimes f\mapsto
\sum_{i,j\in I}\chi^j(a) \chi^i\otimes H_{ij}(\partial)f\,.
$$
By an abuse of notation, from now on we will denote by the same letter
an element $H\in\Mat_{\ell\times\ell}\mc V[\lambda]$
or the corresponding linear map $H:\,U\otimes\mc V\to U^*\otimes\mc V$.

\begin{definition}\label{hamop}
A \emph{Poisson structure} on $\mc V$ is a matrix $H\in\Mat_{\ell\times\ell}\mc V[\lambda]$
such that the corresponding $\lambda$-bracket $\{\cdot\,_\lambda\,\cdot\}_H$,
given by equation \eqref{masterformula},
defines a PVA structure on $\mc V$.
\end{definition}

\begin{example}\label{affineHAMOP}
Consider the affine PVA $\mc V(\mf g,\kappa,s)$ defined in Example \ref{affinePVA}. 
Let $\{u_i\}_{i\in I}$ be a basis of $\mf g$ and
let $\{\chi^i\}_{i\in I}\subset\mf g^*$ be its dual basis. 
The corresponding Poisson structure 
$H=\left(H_{ij}(\lambda)\right)\in\Mat_{\ell\times\ell}\mc V[\lambda]$ 
to the $\lambda$-bracket defined in \eqref{affinebracket} is given by
$$
H_{ij}(\lambda)=
\{u_j{}_{\lambda}u_i\}=
[u_j,u_i]+\kappa(s\mid[u_j,u_i])+\kappa(u_i\mid u_j)\lambda\,.
$$
Via the identifications \eqref{eq:identif}, $H$ 
corresponds to the linear map 
$\mf g\otimes\mc V\to\mf g^*\otimes\mc V$ given by
\begin{equation}\label{H3}
H(a\otimes f)
=\sum_{i\in I}\chi^i\otimes[a,u_i]f+\kappa([s,a]\mid\cdot)\otimes f
+\kappa(a\mid\cdot)\otimes\partial f\,.
\end{equation}
In the special case when $\kappa$ is non-degenerate, we can identify 
$\mf g^*\stackrel{\sim}{\longrightarrow}\mf g$ via the isomorphism 
$\kappa(a\mid\cdot)\mapsto a$. 
Let $\{u^i\}_{i\in I}\subset\mf g$ be the dual basis of $\mf g$ with respect to $\kappa$:
$\kappa(u^i\mid u_j)=\delta_{ij},\,i,j\in I$. 
Then,
the map \eqref{H3} is identified with 
the linear map $\mf g\otimes\mc V\to\mf g\otimes\mc V$
given by
\begin{equation}\label{H4}
H(a\otimes f)=
\sum_{i\in I}[u^i,a]\otimes u_i f+[s,a]\otimes f+a\otimes\partial f\,.
\end{equation}
\end{example}

The relation between PVAs and Hamiltonian equations 
associated to a Poisson structure is based on the following simple observation.
\begin{proposition}\label{pvahamop}
Let $\mc V$ be a PVA. The $0$-th product on $\mc V$
induces a well defined Lie algebra bracket on the quotient space $\quot{\mc V}{\partial\mc V}$:
\begin{equation}\label{lambda=0}
\{\tint f,\tint g\}=\tint \left.\{f_\lambda g\}\right|_{\lambda=0}\,,
\end{equation}
where $\tint:\mc V\to\quot{\mc V}{\partial\mc V}$ is the canonical quotient map.
Moreover, we have a well defined Lie algebra action of $\quot{\mc V}{\partial\mc V}$ on $\mc V$
by derivations of the commutative associative product on $\mc V$, commuting with $\partial$,
given by 
$$
\{\tint f,g\}=\{f_\lambda g\}|_{\lambda=0}\,.
$$
\end{proposition}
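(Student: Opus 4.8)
The plan is to verify the three assertions of Proposition~\ref{pvahamop} directly from the PVA axioms, reducing everything to $\lambda=0$ evaluations. First I would check that \eqref{lambda=0} is well defined on $\quot{\mc V}{\partial\mc V}$. By sesquilinearity \eqref{sesqui}, $\{\partial f_\lambda g\}=-\lambda\{f_\lambda g\}$, so evaluating at $\lambda=0$ gives $\{\partial f_\lambda g\}|_{\lambda=0}=0$; hence $\tint\{f_\lambda g\}|_{\lambda=0}$ does not depend on the representative of $\tint f$. For independence of the representative of $\tint g$: $\{f_\lambda\partial g\}=(\lambda+\partial)\{f_\lambda g\}$, so at $\lambda=0$ we get $\partial\bigl(\{f_\lambda g\}|_{\lambda=0}\bigr)$, which lies in $\partial\mc V$ and thus vanishes after applying $\tint$. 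So \eqref{lambda=0} descends to a well-defined bilinear bracket on $\quot{\mc V}{\partial\mc V}$.

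Next I would establish that this bracket is a Lie bracket. Skew-symmetry: by \eqref{skewsim}, $\{g_\lambda f\}=-\{f_{-\lambda-\partial}g\}$, and at $\lambda=0$ the right side is $-\{f_{-\partial}g\}$, which differs from $-\{f_\lambda g\}|_{\lambda=0}$ by a total derivative (expand $\{f_\lambda g\}=\sum_n\lambda^n c_n$; then $\{f_\lambda g\}|_{\lambda=0}=c_0$ while $\{f_{-\partial}g\}=\sum_n(-\partial)^n c_n=c_0+\partial(\dots)$), so $\{\tint g,\tint f\}=-\{\tint f,\tint g\}$. The Jacobi identity: apply $\tint$ and set $\lambda=\mu=0$ in \eqref{jacobi}. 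The term $\{\{f_\lambda g\}_{\lambda+\mu}h\}$ at $\lambda=\mu=0$ becomes $\{(\{f_\lambda g\}|_{\lambda=0})_0 h\}$ up to total derivatives (the $\lambda$-dependence inside the first slot contributes, after the sesquilinearity rule $\{\partial(\cdot)_\mu h\}=-\mu\{\cdot_\mu h\}$, terms proportional to $\mu$ that vanish at $\mu=0$), so it reads $\{\tint\{f,g\},\tint h\}$ after passing to the quotient; the other two terms directly give the nested brackets. Thus \eqref{jacobi} yields the Jacobi identity on $\quot{\mc V}{\partial\mc V}$. The analogous bookkeeping, keeping things in $\mc V$ rather than the quotient, shows $\{\tint f,\cdot\}$ is a well-defined map $\mc V\to\mc V$ (well-definedness in $\tint f$ again from $\{\partial f_\lambda g\}|_{\lambda=0}=0$).

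Finally I would show $\{\tint f,\cdot\}$ acts by derivations commuting with $\partial$, and that these form a Lie algebra action. The derivation property is immediate from the left Leibniz rule \eqref{lleibniz}: $\{f_\lambda gh\}=\{f_\lambda g\}h+\{f_\lambda h\}g$, evaluated at $\lambda=0$. Commutation with $\partial$ follows from $\{f_\lambda\partial g\}=(\lambda+\partial)\{f_\lambda g\}$, which at $\lambda=0$ gives $\{\tint f,\partial g\}=\partial\{\tint f,g\}$. That $\tint f\mapsto\{\tint f,\cdot\}$ is a Lie algebra homomorphism, i.e.\ $\{\tint f,\{\tint g,h\}\}-\{\tint g,\{\tint f,h\}\}=\{\{\tint f,\tint g\},h\}$ for $h\in\mc V$, is exactly the $\lambda=\mu=0$ specialization of the Jacobi identity \eqref{jacobi}, now read in $\mc V$ rather than its quotient, using that the $\{\{f_\lambda g\}_{\lambda+\mu}h\}$ term at $\lambda=\mu=0$ equals $\{(\{f,g\})_0 h\}=\{\tint\{f,g\},h\}$.

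The only real subtlety is the term $\{\{f_\lambda g\}_{\lambda+\mu}h\}$ in the Jacobi identity: one must carefully expand $\{f_\lambda g\}=\sum_n\lambda^n c_n(f,g)$, apply sesquilinearity to move the resulting powers of $\partial$ out of the first argument of the outer bracket (picking up factors of $(\lambda+\mu)$), and then set $\lambda=\mu=0$; all the $c_n$ with $n\ge 1$ drop out and only $c_0=\{f_\lambda g\}|_{\lambda=0}$ survives. This is the step where the sesquilinearity axiom does the essential work, and where one should be careful not to confuse $\{f_{-\partial}g\}$ with $\{f_\lambda g\}|_{\lambda=0}$ — they agree only modulo $\partial\mc V$, which is why the first claim is a statement about $\quot{\mc V}{\partial\mc V}$ while the Lie bracket in the second claim genuinely lives on $\mc V$. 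Everything else is a routine unwinding of \eqref{sesqui}, \eqref{lleibniz}, \eqref{skewsim}, \eqref{jacobi}.
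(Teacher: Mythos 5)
Your proof is correct; the paper gives no proof of this proposition (it is stated as a simple observation imported from \cite{BDSK09}), and your direct verification from sesquilinearity, skew-symmetry, the left Leibniz rule and the Jacobi identity, specialized at $\lambda=\mu=0$, is precisely the standard argument. One cosmetic remark: with the usual convention $\{\{f_\lambda g\}_{\lambda+\mu}h\}=\sum_n\lambda^n\{c_n{}_{\lambda+\mu}h\}$ for $\{f_\lambda g\}=\sum_n\lambda^n c_n$, all terms with $n\geq 1$ are killed \emph{exactly} by setting $\lambda=0$, so for that term no appeal to sesquilinearity or to working modulo $\partial\mc V$ is actually needed --- the only places where one genuinely passes to the quotient are the well-definedness in the second argument and the skew-symmetry, exactly as you note at the end.
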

In the special case when $\mc V$ is an algebra of differential polynomials in $\ell$ variables $\{u_i\}_{i\in I}$
and the PVA $\lambda$-bracket on $\mc V$ is associated to the Poisson structure 
$H\in\Mat_{\ell\times\ell}\mc V[\lambda]$,
the Lie bracket \eqref{lambda=0} on $\quot{\mc V}{\partial\mc V}$ takes the form 
(cf. \eqref{masterformula}):
\begin{equation}\label{liebrak}
\{\tint f,\tint g\}
=\sum_{i,j\in I}\int\frac{\delta g}{\delta u_j}H_{ji}(\partial)\frac{\delta f}{\delta u_i}
\,.
\end{equation}

\begin{definition}\label{hamsys}
Let $\mc V$ be an algebra of differential polynomials
with a Poisson structure $H$.
\begin{enumerate}[(a)]
\item 
Elements of $\quot{\mc V}{\partial\mc V}$ are called \emph{local functionals}. 
\item
Given a local functional $\int h\in\quot{\mc V}{\partial\mc V}$, 
the corresponding \emph{Hamiltonian equation} is
\begin{equation}\label{hameq}
\frac{du}{dt}
=\{\tint h,u\}_H
\qquad
\Big(\text{equivalently,}\quad
\frac{du_i}{dt}
=\sum_{j\in I}H_{ij}(\partial)\frac{\delta h}{\delta u_j},\ i\in I\Big)\,.
\end{equation}
\item 
A local functional $\int f\in\quot{\mc V}{\partial\mc V}$ is called an \emph{integral of motion} 
of equation \eqref{hameq} if $\frac{df}{dt}=0\mod\partial\mc V$
in virtue of \eqref{hameq}, or, equivalently, if $\tint h$ and $\tint f$ are \emph{in involution}:
$$
\{\tint h,\tint f\}_H=0\,.
$$
Namely, $\tint f$ lies in the centralizer of $\tint h$ in the Lie algebra $\quot{\mc V}{\partial\mc V}$
with Lie bracket \eqref{liebrak}.
\item 
Equation \eqref{hameq} is called \emph{integrable}
if there exists an infinite sequence $\tint f_0=\tint h,\,\tint f_1,\,\tint f_2,\dots$,
of linearly independent integrals of motion in involution.
The corresponding \emph{integrable hierarchy of Hamiltonian equations} is
\begin{equation}\label{eq:hierarchy}
\frac{du}{dt_n}=\{\tint f_n,u\}_H,\ n\in\mb Z_+
\,.
\end{equation}
(Equivalently,
$\frac{du_i}{dt_n}
=\sum_{j\in I}H_{ij}(\partial)\frac{\delta f_n}{\delta u_j}$, $n\in\mb Z_+,\,i\in I$)

\end{enumerate}
\end{definition}

\subsection{Bi-Poisson structures and integrability of Hamiltonian equations}\label{sec:1.3}

\begin{definition}
A \emph{bi-Poisson structure} $(H,K)$ is a pair of Poisson structures 
on an algebra of differential polynomials $\mc V$, which are compatible,
in the sense that any $\mb F$-linear combination of them is a Poisson structure.
\end{definition}

\begin{example}\label{affineHAMOP2}
Consider Example \ref{affineHAMOP}, with non-degenerate $\kappa$.
We identify $\mf g^*\stackrel{\sim}{\longrightarrow}\mf g$
via $\kappa(a\mid\cdot)\mapsto a$
and we describe a Poisson structure on $\mc V=\mc V(\mf g,\kappa,s)$ 
as an element of $\End(\mf g\otimes\mc V)$
via the identification \eqref{eq:identif}.
Then, the maps
$H,K:\,\mf g\otimes\mc V\to\mf g\otimes\mc V$
given by
\begin{equation}\label{H5}
H(a\otimes f)=
\sum_{i\in I}[u^i,a]\otimes u_i f+a\otimes\partial f,
\qquad
K(a\otimes f)=
[a,s]\otimes f\,,
\end{equation}
form a bi-Poisson structure on $\mc V$.
Comparing \eqref{H4} and \eqref{H5}, we get that the Poisson structure 
of Example \ref{affineHAMOP} is equal to $H-K$.
\end{example}

Let $\mc V$ be an algebra of differential polynomials,
with the generating space $U\subset\mc V$,
and we let $(H,K)$ be a bi-Poisson structure on $\mc V$.
According to the \emph{Lenard-Magri scheme of integrability}
\cite{Mag78} (see also \cite{BDSK09}),
in order to obtain an integrable hierarchy of Hamiltonian equations,
one needs to find a sequence 
of local functionals $\{\tint f_n\}_{n\in\mb Z_+}$ 
spanning an infinite-dimensional subspace of $\quot{\mc V}{\partial\mc V}$,
such that their variational derivatives
$F_n=\frac{\delta f_n}{\delta u}\in U\otimes\mc V,\,n\in\mb Z_+$, satisfy
\begin{equation}\label{lenardrecursion}
K(F_0)=0,
\qquad
H(F_n)=K(F_{n+1})
\,\,\Big(\in U^*\otimes\mc V\Big)\quad\text{ for every } n\in\mb Z_+\,.
\end{equation}
If this is the case, the elements $\tint f_n,\,n\in\mb Z_+$,
form an infinite sequence of local functionals in involution:
$\{\tint f_m,\tint f_n\}_H=\{\tint f_m,\tint f_n\}_K=0$, for all $m,n\in\mb Z_+$.
Hence, we get the corresponding integrable hierarchy of Hamiltonian equations \eqref{eq:hierarchy},
provided that the span of $\tint f_n$'s is infinite dimensional.

We note that, using the generating series
$F(z)=\sum_{n\in\mb Z_+}z^{-n}\frac{\delta f_n}{\delta u}\in (U\otimes\mc V)[[z^{-1}]]$,
we can rewrite the Lenard-Magri recursion \eqref{lenardrecursion} as:
\begin{equation}\label{lenardseries}
K(F_0)=0,\qquad
(H-zK)F(z)=0
\,.
\end{equation}

\section{Drinfeld-Sokolov hierarchies in the homogeneous case}\label{sec:hom}

As the first application of the Lenard-Magri scheme,
we construct integrable hierarchies of Hamiltonian equations \eqref{hameq}
for the bi-Poisson structure provided by Example \ref{affineHAMOP2}.
This is referred to as the homogeneous Drinfeld-Sokolov hierarchy.
The non-homogeneous case will be treated in the next sections, 
after giving the definition of classical $\mc W$-algebras.

Let $\mf g$ be a finite-dimensional Lie algebra 
with a non-degenerate symmetric invariant bilinear form $\kappa$,
let $s$ be an element of $\mf g$,
and let $\mc V=S(\mb F[\partial]\mf g)$.
Recall from Example \ref{affineHAMOP2} that
we have a bi-Poisson structure $H,K:\,\mf g\otimes\mc V\to\mf g\otimes\mc V$
on $\mc V$, given by \eqref{H5}
(as before, we are using the identification \eqref{eq:identif} for Poisson structures,
and the isomorphism $\mf g^*\stackrel{\sim}{\longrightarrow}\mf g$ associated
to the bilinear form $\kappa$).

We endow the space $\mf g\otimes\mc V$ with a Lie algebra structure letting,
for $a,b\in\mf g$ and $f,g\in\mc V$,
$$
[a\otimes f,b\otimes g]=[a,b]\otimes fg\in\mf g\otimes\mc V\,.
$$ 
We extend the bilinear form $\kappa$ of $\mf g$ to a bilinear map 
$\kappa:(\mf g\otimes\mc V)\times(\mf g\otimes\mc V)\to\mc V$, 
given by ($a,b\in\mf g,\,f,g\in\mc V$):
\begin{equation}\label{eq:forma}
\kappa(a\otimes f\mid b\otimes g)=\kappa(a\mid b)fg\in\mc V\,.
\end{equation}
Clearly, this bilinear map $\kappa$ is symmetric invariant and non-degenerate.
We extend $\partial\in\Der(\mc V)$ to a derivation of the Lie algebra $\mf g\otimes\mc V$
by
$$
\partial(a\otimes f)=a\otimes\partial f\,,
$$
for any $a\in\mf g$ and $f\in\mc V$. 
Thus we get the semidirect product Lie algebra $\mb F\partial\ltimes(\mf g\otimes\mc V)$, 
where
$[\partial,a\otimes f]=\partial(a\otimes f)=a\otimes\partial f$ for $a\in\mf g$ and $f\in\mc V$.

We set $\widetilde{\mf g}=(\mb F\partial\ltimes(\mf g\otimes\mc V))((z^{-1}))$,
the space of Laurent series in $z^{-1}$ with coefficients in $\mb F\partial\ltimes(\mf g\otimes\mc V)$,
endowed with the Lie algebra structure induced by that on $\mb F\partial\ltimes(\mf g\otimes\mc V)$.
We note that
$(\mf g\otimes\mc V)[[z^{-1}]]\subset\widetilde{\mf g}$
is a Lie subalgebra.
\begin{proposition}\label{int_hier1}
Let $L(z)=\partial+u+zs\otimes1\in\widetilde{\mf g}$,
where $u=\sum_{i\in I}u^i\otimes u_i\in\mf g\otimes\mc V$.
Then
$$
(H-zK)(a\otimes f)=[L(z),a\otimes f]\,,
$$
for any $a\in\mf g$ and $f\in\mc V$.
\end{proposition}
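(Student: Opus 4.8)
\emph{Plan of proof.} The statement is essentially an unravelling of definitions, so the plan is simply to expand both sides and match terms. First I would recall from Example \ref{affineHAMOP2} the explicit formulas for the two Poisson structures under the identification \eqref{eq:identif} and the isomorphism $\mf g^*\simeq\mf g$ induced by $\kappa$, namely
\begin{equation*}
H(a\otimes f)=\sum_{i\in I}[u^i,a]\otimes u_i f+a\otimes\partial f,
\qquad
K(a\otimes f)=[a,s]\otimes f,
\end{equation*}
so that
\begin{equation*}
(H-zK)(a\otimes f)=\sum_{i\in I}[u^i,a]\otimes u_i f+a\otimes\partial f-z\,[a,s]\otimes f.
\end{equation*}

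Next I would compute $[L(z),a\otimes f]$ directly in the Lie algebra $\widetilde{\mf g}=(\mb F\partial\ltimes(\mf g\otimes\mc V))((z^{-1}))$, using linearity of the bracket to split $L(z)=\partial+u+zs\otimes1$ into its three summands. By the definition of the semidirect product bracket, $[\partial,a\otimes f]=\partial(a\otimes f)=a\otimes\partial f$; by the definition of the bracket on $\mf g\otimes\mc V$, writing $u=\sum_{i\in I}u^i\otimes u_i$, we get $[u,a\otimes f]=\sum_{i\in I}[u^i,a]\otimes u_i f$; and $[zs\otimes1,a\otimes f]=z\,[s,a]\otimes f=-z\,[a,s]\otimes f$. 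Summing the three contributions gives exactly the expression for $(H-zK)(a\otimes f)$ obtained above, which proves the identity.

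The only points that require a little care are bookkeeping ones: keeping track of the sign in the $K$-term (note $[s\otimes1,a\otimes f]=-[a,s]\otimes f$, which is precisely why the combination $H-zK$, rather than $H+zK$, appears on the left), and being clear that the commutative product $u_i f$ in the first term arises from the product in $\mc V$ entering the bracket on $\mf g\otimes\mc V$ as in $[a\otimes f,b\otimes g]=[a,b]\otimes fg$. There is no genuine obstacle here; the proposition is a reformulation of the bi-Poisson structure of Example \ref{affineHAMOP2} in the ``Lax'' language of the operator $L(z)$, and its role is to set up the Lenard--Magri recursion \eqref{lenardseries} in the geometrically transparent form $[L(z),F(z)]=0$.
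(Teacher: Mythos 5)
Your proof is correct and follows exactly the route the paper takes: the paper's own proof simply states that the identity follows immediately from \eqref{H5} and the definition of the Lie bracket on $\widetilde{\mf g}$, and your term-by-term expansion (including the sign bookkeeping for the $K$-term) is precisely the verification being alluded to.
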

\begin{proof}
It follows immediately by \eqref{H5} and the definition of the Lie bracket on $\widetilde{\mf g}$.
\end{proof}

Recall from Section \ref{sec:1.3} that, in order to construct an integrable hierarchy 
of Hamiltonian equations,
we need to find $\tint f(z)\in(\quot{\mc V}{\partial\mc V})[[z^{-1}]]$
such that $F(z)=\frac{\delta f(z)}{\delta u}\in(\mf g\otimes\mc V)[[z^{-1}]]$
is a solution of equation \eqref{lenardseries}.
Hence, using Proposition \ref{int_hier1}, we conclude that 
the Lenard-Magri scheme can be applied if there exists
$F(z)\in(\mf g\otimes\mc V)[[z^{-1}]]$ satisfying the following three conditions:
\begin{enumerate}[(C1)]
\item
$[s\otimes1,F_0]=0$,
\item
$[L(z),F(z)]=0$.
\item
$F(z)=\frac{\delta f(z)}{\delta u}$, for some $\tint f(z)\in(\quot{\mc V}{\partial\mc V})[[z^{-1}]]$.
\end{enumerate}
The solution of the above problem will be achieved in Propositions \ref{int_hier2} and \ref{int_hier3}
below (which are essentially due to Drinfeld and Sokolov \cite{DS85}), 
under the assumption that $s\in\mf g$ is a semisimple element: 
in Proposition \ref{int_hier2} we find an element $F(z)\in (\mf g\otimes\mc V)[[z^{-1}]]$ 
satisfying conditions (C1) and (C2),
and in Proposition \ref{int_hier3} we show that this element satisfies condition (C3).

Before stating the results we need to introduce some notation.
For $U(z)\in (\mf g\otimes\mc V)[[z^{-1}]] z^{-1}$, 
we have a well-defined Lie algebra automorphism
$$
e^{\ad U(z)}:\widetilde{\mf g}\to\widetilde{\mf g}\,.
$$
By the Baker-Campbell-Hausdorff formula \cite{Ser92}, 
automorphisms of this type form a group.
Fix a semisimple element $s\in\mf g$, and denote $\mf h=\Ker(\ad s)\subset\mf g$
(it is clearly a subalgebra);
by invariance of the bilinear form $\kappa$ we have that $\mf h^{\perp}=\im(\ad s)$,
and that $\mf g=\Ker(\ad s)\oplus\im(\ad s)$.
%
\begin{remark}\label{rem:semisimple}
We can replace the assumption that $s\in\mf g$ is semisimple by the assumption that
$\mf g$ admits a vector space decomposition $\mf g=\Ker(\ad s)\oplus\im(\ad s)$.
It is not hard to show that then $s$ is semisimple, provided that $\mf g$
is a reductive Lie algebra.
\end{remark}

\begin{proposition}\phantomsection\label{int_hier2}
\begin{enumerate}[(a)]
\item
There exist unique formal series $U(z)\in (\mf{h}^\perp\otimes\mc V)[[z^{-1}]]z^{-1}$ 
and $h(z)\in(\mf{h}\otimes\mc V)[[z^{-1}]]$ such that
\begin{equation}\label{L0}
L_0(z)=e^{\ad U(z)}(L(z))=\partial+zs\otimes1+h(z)\,.
\end{equation}
\item
An automorphism $e^{\ad U(z)}\in (\mf{g}\otimes\mc V)[[z^{-1}]]z^{-1}$ 
solving \eqref{L0} for some $h(z)\in(\mf{h}\otimes\mc V)[[z^{-1}]]$
is defined uniquely up to multiplication on the left 
by automorphisms of the form $e^{\ad S(z)}$, 
where $S(z)\in (\mf h\otimes\mc V)[[z^{-1}]]z^{-1}$.
\item
Let $a\in Z(\mf h)$ (the center of $\mf h$),
and let $U(z)\in (\mf{g}\otimes\mc )V[[z^{-1}]]z^{-1}$ 
and $h(z)\in(\mf{h}\otimes\mc V)[[z^{-1}]]$ 
solve equation \eqref{L0}.
Then 
\begin{equation}\label{eq:F}
F(z)=e^{-\ad U(z)}(a\otimes1)\in(\mf g\otimes\mc V)[[z^{-1}]]
\end{equation}
is independent on the choice of $U(z)$
and it solves equations (C1) and (C2) above: 
$[s\otimes1,F_0]=0$ and $[L(z),F(z)]=0$.
\end{enumerate}
\end{proposition}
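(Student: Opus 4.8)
The plan is to handle the three parts in order, with part (a) --- the construction of the normalizing transformation --- carrying most of the work, and parts (b), (c) following by fairly formal arguments. For part (a) I would build $U(z)=\sum_{n\geq 1}U_nz^{-n}$ (with $U_n\in\mf h^\perp\otimes\mc V$) and $h(z)=\sum_{n\geq 0}h_nz^{-n}$ (with $h_n\in\mf h\otimes\mc V$) recursively, comparing the coefficients of $z^{-(n-1)}$ on the two sides of $e^{\ad U(z)}(L(z))=\partial+zs\otimes1+h(z)$. Expanding $e^{\ad U(z)}(L(z))=L(z)+\sum_{k\geq1}\tfrac1{k!}(\ad U(z))^k(\partial+u+zs\otimes1)$ and tracking powers of $z$, one checks that the coefficient of $z^{-(n-1)}$ equals $[U_n,s\otimes1]+P_n$, where $P_n=P_n(U_1,\dots,U_{n-1};u)$ involves only the lower-index terms: any $k\geq 2$ nested brackets landing on $zs\otimes1$, and any number of brackets landing on $\partial+u$, can reach $z$-degree $-(n-1)$ only with the indices of the $U_i$'s summing to at most $n$, hence each at most $n-1$. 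Now $[U_n,s\otimes1]=-(\ad s\otimes1)(U_n)\in\mf h^\perp\otimes\mc V$, and $\ad s$ is invertible on $\mf h^\perp=\im(\ad s)$ (this is the only place semisimplicity of $s$, equivalently the splitting $\mf g=\Ker(\ad s)\oplus\im(\ad s)$ of Remark \ref{rem:semisimple}, is used), so projecting the identity onto $\mf h^\perp$ determines $U_n$ uniquely in terms of $P_n$, and projecting onto $\mf h$ defines $h_{n-1}$; the case $n=1$ gives $U_1=(\ad s\otimes1)^{-1}(u^{\mf h^\perp})$ and $h_0=u^{\mf h}$. This proves existence and uniqueness simultaneously.

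For part (b) I would argue in both directions. First, if $S(z)\in(\mf h\otimes\mc V)[[z^{-1}]]z^{-1}$ then $e^{\ad S(z)}$ preserves the normal form: $e^{\ad S(z)}(zs\otimes1)=zs\otimes1$ since every bracket against $s$ vanishes as $[\mf h,s]=0$, while $e^{\ad S(z)}(\partial)-\partial$ and $e^{\ad S(z)}(h(z))$ stay in $(\mf h\otimes\mc V)[[z^{-1}]]$ because $\mf h$ is a subalgebra and $\partial$ preserves $\mf h\otimes\mc V$; thus a normalized operator goes to another one. Conversely, if $e^{\ad U'(z)}\circ e^{-\ad U(z)}$ --- which, by the Baker-Campbell-Hausdorff formula, equals $e^{\ad S(z)}$ for some $S(z)\in(\mf g\otimes\mc V)[[z^{-1}]]z^{-1}$ --- sends one normalized operator to another, I would show $S_n\in\mf h\otimes\mc V$ by induction on $n$: granting $S_1,\dots,S_{n-1}\in\mf h\otimes\mc V$, every term in the coefficient of $z^{-(n-1)}$ of $e^{\ad S(z)}(L_0(z))-L_0(z)$ other than $[S_n,s\otimes1]$ is built from $S_1,\dots,S_{n-1}$ and brackets among elements of $\mf h\otimes\mc V$, hence lies in $\mf h\otimes\mc V$ --- or vanishes outright, since any iterated bracket reaching $zs\otimes1$ meets some $S_i$ with $i\leq n-1$ and $[S_i,s\otimes1]=0$. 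Since that coefficient must equal $h'_{n-1}-h_{n-1}\in\mf h\otimes\mc V$, projecting onto $\mf h^\perp$ forces $(\ad s\otimes1)(S_n^{\mf h^\perp})=0$, i.e. $S_n\in\mf h\otimes\mc V$.

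For part (c), independence of $F(z)$ from the choice of $U(z)$ is immediate from (b): two choices differ by a left factor $e^{\ad S(z)}$ with $S(z)\in(\mf h\otimes\mc V)[[z^{-1}]]z^{-1}$, and since $a\in Z(\mf h)$ we have $[S(z),a\otimes1]=0$, so $e^{-\ad S(z)}(a\otimes1)=a\otimes1$ and $F(z)=e^{-\ad U(z)}(a\otimes1)$ is unchanged. Condition (C1) holds since $U(z)=O(z^{-1})$ gives $F_0=a\otimes1$ and $[s\otimes1,a\otimes1]=[s,a]\otimes1=0$ as $a\in\mf h=\Ker(\ad s)$. For (C2), write $L(z)=e^{-\ad U(z)}(L_0(z))$; since $e^{-\ad U(z)}$ is a Lie algebra automorphism, $[L(z),F(z)]=e^{-\ad U(z)}\big([L_0(z),a\otimes1]\big)$, and $[L_0(z),a\otimes1]=0$ because $[\partial,a\otimes1]=a\otimes\partial1=0$, $[zs\otimes1,a\otimes1]=0$, and $[h(z),a\otimes1]=0$ (here $h(z)\in(\mf h\otimes\mc V)[[z^{-1}]]$ and $a$ centralizes $\mf h$).

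I expect the main obstacle to be the combinatorial bookkeeping in the recursions of (a) and (b): confirming precisely that the coefficient of $z^{-(n-1)}$ isolates $[U_n,s\otimes1]$ (respectively $[S_n,s\otimes1]$) with all remaining contributions depending only on strictly lower-index data, and --- crucially in (b) --- that in the inductive step those contributions either lie in $\mf h\otimes\mc V$ or vanish because they bracket against $s$. Once the splitting $\mf g=\mf h\oplus\mf h^\perp$ with $\ad s$ invertible on $\mf h^\perp$ is in hand, everything else is formal manipulation with the automorphisms $e^{\ad U(z)}$.
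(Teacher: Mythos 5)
Your proposal is correct and follows essentially the same route as the paper: a recursion in powers of $z^{-1}$ using the splitting $\mf g=\mf h\oplus\mf h^\perp$ and the invertibility of $\ad s$ on $\mf h^\perp$ for (a), the Baker--Campbell--Hausdorff factorization $e^{\ad\widetilde U}=e^{\ad S}e^{\ad U}$ with an induction showing $S_n\in\mf h\otimes\mc V$ for (b), and the formal automorphism argument with $[L_0(z),a\otimes1]=0$ for (c). The only (harmless) additions are your explicit check of the forward direction in (b) and the more detailed degree bookkeeping isolating $[U_n,s\otimes1]$, both of which the paper leaves implicit.
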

\begin{proof}
Writing $U(z)=\sum_{i\geq1}U_iz^{-i}$ and $h(z)=\sum_{i\in\mb Z_+}h_iz^{-i}$, 
with $U_i,h_i\in\mf g\otimes\mc V$, 
and equating coefficients of $z^{-i}$ in both sides of \eqref{L0}, we find an equation of the form 
$$
h_i+[s\otimes1,U_{i+1}]=A\,,
$$ 
where $A\in\mf g\otimes\mc V$ is expressed, inductively, 
in terms of the elements $U_1,U_2,\ldots, U_i$ and $h_0,h_1,\ldots,h_{i-1}$. 
For example, equating the constant term in \eqref{L0} gives the relation $h_0+[s\otimes1,U_1]=u$, 
while, equating the coefficients of $z^{-1}$, gives the relation 
$h_1+[s\otimes1,U_2]=-U_1^{\prime}+[U_1,u]+\frac{1}{2}[U_1,[U_1,s\otimes1]]$.
Decomposing $A=A_{\mf h}+A_{\mf h^\perp}$, according to $\mf g=\mf h\oplus\mf h^\perp$, 
we get $h_i=A_{\mf h}$ and $[s\otimes1,U_{i+1}]=A_{\mf h^{\perp}}$,
which uniquely defines $U_{i+1}\in\mf h^{\perp}\otimes\mc V$, proving (a).

Let $U(z)\in (\mf{h}^\perp\otimes\mc V)[[z^{-1}]]z^{-1},\,h(z)\in(\mf h\otimes\mc V)[[z^{-1}]]$ 
be the unique solution of \eqref{L0} given by part (a),
and let $\widetilde U(z)\in (\mf g\otimes\mc V)[[z^{-1}]]z^{-1},\,
\widetilde h(z)\in(\mf h\otimes\mc V)[[z^{-1}]]$ 
be some other solution of \eqref{L0}:
$e^{\ad \widetilde{U}(z)}(L(z))=\partial+zs\otimes1+\widetilde{h}(z)$.
By the observation before the statement of the proposition,
there exists $S(z)=\sum_{i=1}^\infty S_iz^{-i}\in(\mf g\otimes\mc V)[[z^{-1}]]z^{-1}$ 
such that
\begin{equation}\label{eq:ciao}
e^{\ad\widetilde U(z)}=e^{\ad S(z)}e^{\ad U(z)}\,.
\end{equation}
To conclude the proof of (b), we need to show that $S(z)\in(\mf h\otimes\mc V)[[z^{-1}]]z^{-1}$.
Applying equation \eqref{eq:ciao} to $L(z)$ we get
\begin{equation}\label{eq:ciao1}
\partial+zs\otimes1+\widetilde{h}(z)=e^{\ad S(z)}(\partial+zs\otimes1+h(z))\,.
\end{equation}
Comparing the constant terms in $z$ of both sides of \eqref{eq:ciao1}, we get
$\widetilde{h}_0=h_0+[s\otimes1,S_1]$,
which clearly implies $\widetilde{h}_0=h_0$ and $S_1\in\mf h\otimes\mc V$.
Assuming by induction that $S_1,\dots,S_i$ lie in $\mf h\otimes\mc V$,
and comparing the coefficients of $z^{-i}$ in both sides of \eqref{eq:ciao1}
we get $[s\otimes1,S_{i+1}]\in(\mf h^\perp\cap\mf h)\otimes\mc V=0$,
so that $S_{i+1}\in\mf h\otimes\mc V$, as desired.

To prove part (c),
note that, by part (b), 
$e^{-\ad\widetilde{U}(z)}(a\otimes1)=e^{-\ad U(z)}e^{-\ad S(z)}(a\otimes1)=F(z)$,
since, by construction $S(z)$ lies in $(\mf h\otimes\mc V)[[z^{-1}]]z^{-1}$ (hence, it commutes
with $a\otimes1\in Z(\mf h)\otimes\mc V$).
Moreover, $F_0=a\otimes1$, hence it commutes with $[s\otimes1]$, proving condition (C1).
Finally, condition (C2) follows from the facts that $[L_0(z),a\otimes1]=0$
and $e^{-\ad U(z)}$ is a Lie algebra automorphism of $\widetilde{\mf g}$.
\end{proof}

\begin{proposition}\label{int_hier3}
Let $U(z)\in(\mf g\otimes\mc V)[[z^{-1}]]z^{-1}$ and $h(z)\in(\mf h\otimes\mc V)[[z^{-1}]]$
be a solution of equation \eqref{L0}.
Then the formal power series $F(z)\in(\mf g\otimes\mc V)[[z^{-1}]]$ defined in \eqref{eq:F}
(where $a\in Z(\mf h)$)
satisfies condition (C3) above,
namely $F(z)=\frac{\delta f(z)}{\delta u}$, where
\begin{equation}\label{var_hier}
\tint f(z)=\tint\kappa(a\otimes1\mid h(z))\in\quot{\mc V}{\partial\mc V}[[z^{-1}]]\,.
\end{equation}
\end{proposition}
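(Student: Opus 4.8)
The plan is to compute the variational derivative of the functional $\tint f(z)=\tint\kappa(a\otimes1\mid h(z))$ and show that it equals $F(z)=e^{-\ad U(z)}(a\otimes1)$. The natural way to access the variational derivative is via its defining property: for any $b\otimes g\in\mf g\otimes\mc V$, the pairing of $\frac{\delta f(z)}{\delta u}$ against a variation of $u$ in the direction $b\otimes g$ should reproduce the Fr\'echet derivative of $f(z)$ in that direction. So first I would set up the formalism of variations: replace $u\in\mf g\otimes\mc V$ by $u+\epsilon v$ (with $v=b\otimes g$ an arbitrary ``tangent vector'', and more precisely working with the full generating space), compute $\frac{d}{d\epsilon}\big|_{\epsilon=0}$ of both $U(z)$ and $h(z)$, and then take the integral (i.e.\ work modulo $\partial\mc V$, and modulo total $u$-derivatives). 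The key identity to exploit is that $L_0(z)=e^{\ad U(z)}L(z)$, so a variation $\delta u$ of $L(z)$ (note $\delta L = \delta u$, since $\partial$ and $zs\otimes1$ are fixed) produces
$$
\delta L_0 = \delta\big(e^{\ad U}\big)L(z)\cdot e^{-\ad U}\big|_{\text{acting}} + e^{\ad U}(\delta u),
$$
which, writing $\delta(e^{\ad U})e^{-\ad U}=\ad(\Xi)$ for an appropriate $\Xi\in(\mf g\otimes\mc V)[[z^{-1}]]z^{-1}$ (the standard derivative-of-exponential formula, with $\Xi = \frac{1-e^{-\ad U}}{\ad U}(\delta U)$), gives
$$
\delta h(z) = [\Xi, L_0(z)] + e^{\ad U(z)}(\delta u).
$$

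The second step is to pair this against $a\otimes1\in Z(\mf h)$ using the invariant form $\kappa$, and integrate. The term $\kappa(a\otimes1\mid[\Xi,L_0(z)])$ should drop out modulo $\partial\mc V$: by invariance of $\kappa$ it equals $-\kappa([a\otimes1,\Xi]\mid L_0(z)) = -\kappa([a\otimes1,\Xi]\mid \partial+zs\otimes1+h(z))$, and since $a\otimes1$ commutes with $zs\otimes1$ and with $h(z)\in\mf h\otimes\mc V$ (as $a\in Z(\mf h)$), only the $\partial$-term survives, contributing $-\kappa([a\otimes1,\Xi]\mid\partial\cdot) = \kappa(\partial(a\otimes1), \Xi)\cdots$ — more carefully, $\kappa(a\otimes1\mid[\Xi,\partial]) = \kappa(a\otimes 1\mid -\partial\Xi) = -\partial\kappa(a\otimes1\mid\Xi)$ since $a\otimes1$ is $\partial$-constant, which is a total derivative and hence zero in $\quot{\mc V}{\partial\mc V}$. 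Thus
$$
\delta\!\!\tint f(z) = \tint\kappa\big(a\otimes1\mid e^{\ad U(z)}(\delta u)\big) = \tint\kappa\big(e^{-\ad U(z)}(a\otimes1)\mid \delta u\big) = \tint\kappa\big(F(z)\mid\delta u\big),
$$
again using invariance of $\kappa$. Comparing with the definition \eqref{eq:varder} of the variational derivative (and using that $\kappa$ on $\mf g$ is non-degenerate to identify $\mf g^*$ with $\mf g$), this says precisely $\frac{\delta f(z)}{\delta u} = F(z)$, which is condition (C3).

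I expect the main obstacle to be purely bookkeeping: making the ``derivative of the exponential'' step and the integration-by-parts step rigorous at the level of formal power series in $z^{-1}$ with coefficients that are themselves differential polynomials, and in particular being careful that the variational-derivative identity $\delta\tint f = \tint\kappa(\frac{\delta f}{\delta u}\mid\delta u)$ is being applied in the right form (one must check that $\Xi$ indeed lies in $(\mf g\otimes\mc V)[[z^{-1}]]z^{-1}$, so that all the exponential series converge $z^{-1}$-adically, and that the map $\delta u\mapsto\Xi$ is $\mc V$-linear in the appropriate sense so that no extra $\partial$-terms are generated when passing to variational derivatives). The conceptual content — invariance of $\kappa$ kills the commutator term because $a\in Z(\mf h)$ commutes with everything in $L_0(z)$ except $\partial$, and the $\partial$-contribution is a total derivative — is short; the technical content is verifying the formal-series manipulations are legitimate.
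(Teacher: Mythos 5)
Your proposal is correct, and it reorganizes the paper's computation in a genuinely useful way. The paper computes $\frac{\delta f(z)}{\delta u}$ head-on from the formula $\sum_{i,m}u_i\otimes(-\partial)^m\frac{\partial}{\partial u_i^{(m)}}\kappa(a\otimes1\mid h(z))$, distributing the partial derivatives over $e^{\ad U(z)}(L(z))$ via a product-rule lemma (Lemma \ref{lem1}), repackaging the terms where the derivative hits $U(z)$ into a commutator $[A_{i,m}(z),e^{\ad U(z)}L(z)]$ via Lemma \ref{lem2} --- which is exactly your derivative-of-the-exponential identity, with $A_{i,m}(z)$ playing the role of your $\Xi$ --- and then exhibiting an explicit cancellation between this commutator contribution and the extra terms produced by $[\frac{\partial}{\partial u_i^{(m)}},\partial]=\frac{\partial}{\partial u_i^{(m-1)}}$. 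You instead compute the Fr\'echet variation $\delta\tint f$ along an arbitrary $\delta u$ and work modulo $\partial\mc V$ from the start, so the entire commutator term $\tint\kappa(a\otimes1\mid[\Xi,L_0(z)])$ collapses to $-\tint\partial\kappa(a\otimes1\mid\Xi)=0$ in one stroke (using $a\in Z(\mf h)$, $[a,s]=0$, and invariance of $\kappa$), and the paper's two-sum cancellation becomes invisible. What your route buys: you never need to know how $\Xi$ depends on $\delta u$ (your worry about $\mc V$-linearity of $\delta u\mapsto\Xi$ is actually moot --- the term dies for any $\Xi\in(\mf g\otimes\mc V)[[z^{-1}]]z^{-1}$ whatsoever), and Lemma \ref{lem1} is replaced by the standard integration-by-parts identity $\delta\tint f=\tint\sum_i\frac{\delta f}{\delta u_i}\,\delta u_i$. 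What it costs: the final identification of $\frac{\delta f(z)}{\delta u}$ with $F(z)$ requires the non-degeneracy of the pairing $(g,h)\mapsto\tint gh$ on $\quot{\mc V}{\partial\mc V}$ (e.g.\ by adjoining a fresh variable $w$, setting $\delta u_i=w$, and applying $\frac{\delta}{\delta w}$), a step the paper's direct computation avoids; and, as in the paper, one should note that the coefficients of $U(z)$ are differential polynomials in the $u_i$ (guaranteed for the canonical solution of \eqref{L0}, and sufficient by Remark \ref{totder}) so that $\delta U$ makes sense.
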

Before proving Proposition \ref{int_hier3} we introduce some notation and prove two 
preliminary lemmas.
We extend the partial derivatives $\frac{\partial}{\partial u_i^{(m)}}$ to derivations of the Lie algebra
$\mf g\otimes\mc V$ in the obvious way: 
$\frac{\partial}{u_i^{(m)}}(a\otimes f)=a\otimes\frac{\partial f}{u_i^{(m)}}$.
We also define the differential order of elements 
$F\in\mf g\otimes\mc V\backslash\mf g\otimes\mb F$ in the same way as before: 
$F$ has differential order $m\in\mb Z_+$ if
$\frac{\partial F}{\partial u_i^{(m)}}\neq0$ for some $i\in I$, and $\frac{\partial F}{\partial u_j^{(n)}}=0$ for all $j\in I$ and $n>m$.

Note that if $A\in\mf g\otimes\mc V$ and $B\in\mb F\partial\ltimes(\mf g\otimes\mc V)$, 
then $(\ad A)B\in\mf g\otimes\mc V$.
\begin{lemma}\label{lem1}
For $\alpha\in\mb F,\,A,U_1,\dots,U_k\in\mf g\otimes\mc V$, with $k\geq1$, we have
\begin{equation}\label{eq:induttiva}
\begin{array}{l}
\displaystyle{
\frac{\partial}{\partial u_i^{(m)}} \Big(\ad(U_1)\cdots\ad(U_k)(\alpha\partial+A)\Big)
} \\
\displaystyle{
=
\sum_{h=1}^k\ad(U_1)\cdots\ad\Big(\frac{\partial U_h}{\partial u_i^{(m)}}\Big)\cdots\ad(U_k)(\alpha\partial+A) 
} \\
\displaystyle{
+\ad(U_1)\cdots\ad(U_k)\Big(\frac{\partial A}{\partial u_i^{(m)}}\Big)
-\alpha\ad(U_1)\cdots\ad(U_{k-1})\Big(\frac{\partial U_k}{\partial u_i^{(m-1)}}\Big)\,.
}
\end{array}
\end{equation}
\end{lemma}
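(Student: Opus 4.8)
The plan is to prove the identity \eqref{eq:induttiva} by induction on $k$, the number of $\ad$-factors. First I would set up the base case $k=1$: compute $\frac{\partial}{\partial u_i^{(m)}}\big(\ad(U_1)(\alpha\partial+A)\big)=\frac{\partial}{\partial u_i^{(m)}}\big(\alpha[U_1,\partial]+[U_1,A]\big)$. Since $[U_1,\partial]=-\partial(U_1)=-U_1'$ (viewing $\partial$ acting on the $\mc V$-component, as in the definition $\partial(a\otimes f)=a\otimes\partial f$), and using the commutation relation \eqref{partialcomm} extended to $\mf g\otimes\mc V$, namely $\big[\frac{\partial}{\partial u_i^{(m)}},\partial\big]=\frac{\partial}{\partial u_i^{(m-1)}}$ (zero if $m=0$), I get $\frac{\partial}{\partial u_i^{(m)}}(U_1')=\big(\frac{\partial U_1}{\partial u_i^{(m)}}\big)'+\frac{\partial U_1}{\partial u_i^{(m-1)}}$. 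Meanwhile $\frac{\partial}{\partial u_i^{(m)}}[U_1,A]=\big[\frac{\partial U_1}{\partial u_i^{(m)}},A\big]+\big[U_1,\frac{\partial A}{\partial u_i^{(m)}}\big]$ since $\frac{\partial}{\partial u_i^{(m)}}$ is a derivation of $\mf g\otimes\mc V$. Collecting terms, $\ad\big(\frac{\partial U_1}{\partial u_i^{(m)}}\big)(-\alpha\partial+A)$... — more precisely $\ad\big(\frac{\partial U_1}{\partial u_i^{(m)}}\big)(\alpha\partial+A)$ plus $\ad(U_1)\big(\frac{\partial A}{\partial u_i^{(m)}}\big)$ minus $\alpha\frac{\partial U_1}{\partial u_i^{(m-1)}}$, which is exactly \eqref{eq:induttiva} for $k=1$ (the sum over $h$ has one term, the last term of the triple product is absent because there are no $U_1,\dots,U_{k-1}$ factors for $k=1$, so it reads $-\alpha\frac{\partial U_k}{\partial u_i^{(m-1)}}$ directly).

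For the inductive step I would write $\ad(U_1)\cdots\ad(U_k)(\alpha\partial+A)=\ad(U_1)\big(B\big)$ where $B:=\ad(U_2)\cdots\ad(U_k)(\alpha\partial+A)\in\mf g\otimes\mc V$ (using the observation, stated just before the lemma, that $(\ad A)B\in\mf g\otimes\mc V$ when $A\in\mf g\otimes\mc V$ and $B\in\mb F\partial\ltimes(\mf g\otimes\mc V)$). Since $B$ has no $\partial$-component, applying the derivation $\frac{\partial}{\partial u_i^{(m)}}$ to $\ad(U_1)(B)=[U_1,B]$ gives simply $\ad\big(\frac{\partial U_1}{\partial u_i^{(m)}}\big)(B)+\ad(U_1)\big(\frac{\partial B}{\partial u_i^{(m)}}\big)$ — no extra derivative-shift term appears here because $B$ carries no $\partial$. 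Then I expand $\frac{\partial B}{\partial u_i^{(m)}}$ by the inductive hypothesis applied to the $(k-1)$-fold product $\ad(U_2)\cdots\ad(U_k)(\alpha\partial+A)$, which yields the $h=2,\dots,k$ terms of the sum (each wrapped in $\ad(U_1)$ on the left), the $\ad(U_1)\cdots\ad(U_k)\big(\frac{\partial A}{\partial u_i^{(m)}}\big)$ term, and the $-\alpha\,\ad(U_1)\cdots\ad(U_{k-1})\big(\frac{\partial U_k}{\partial u_i^{(m-1)}}\big)$ term. Combining the first piece $\ad\big(\frac{\partial U_1}{\partial u_i^{(m)}}\big)(B)$ with these gives precisely the full sum over $h=1,\dots,k$ in \eqref{eq:induttiva}.

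The only real subtlety — and the main thing to get right — is the appearance of the shifted-index term $-\alpha\,\frac{\partial U_k}{\partial u_i^{(m-1)}}$, which comes solely from differentiating past the one genuine $\partial$ inside $\alpha\partial+A$. The cleanest way to isolate it is to observe that the only place $\partial$ can survive a string of $\ad(U_j)$'s is if it is never hit: in $\ad(U_1)\cdots\ad(U_k)(\alpha\partial)$, moving $\frac{\partial}{\partial u_i^{(m)}}$ inward, every time it crosses an $\ad(U_j)$ it either lands on that $U_j$ (producing an $h=j$ term with no index shift, since no $\partial$ has yet been touched) or passes through; when it finally reaches $\alpha\partial$ the last surviving factor $\ad(U_k)(\alpha\partial)=-\alpha U_k'=-\alpha\,\partial(U_k)$ and then $\frac{\partial}{\partial u_i^{(m)}}\big(\alpha\,\partial(U_k)\big)=\alpha\,\partial\big(\frac{\partial U_k}{\partial u_i^{(m)}}\big)+\alpha\frac{\partial U_k}{\partial u_i^{(m-1)}}$ by \eqref{partialcomm2}; the first summand recombines into the $h=k$ term of the main sum via the derivation property of $\ad$, and the second is the leftover shifted term with the stated sign. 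I expect the bookkeeping of signs and of which $\ad$-factors wrap the leftover term to be the one place where care is needed; everything else is a routine application of the Leibniz rule for the derivation $\frac{\partial}{\partial u_i^{(m)}}$ together with the commutation relation \eqref{partialcomm}.
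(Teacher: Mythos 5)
Your proof is correct and is essentially the paper's own (one-line) argument --- the derivation property of $\frac{\partial}{\partial u_i^{(m)}}$ on $\mf g\otimes\mc V$ combined with the commutation rule \eqref{partialcomm} --- carried out carefully as an induction on $k$. Both the base case and the inductive step check out, including the sign and the index shift in the $-\alpha\,\ad(U_1)\cdots\ad(U_{k-1})\big(\frac{\partial U_k}{\partial u_i^{(m-1)}}\big)$ term, which as you note is the only delicate point.
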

\begin{proof}
Equation \eqref{eq:induttiva} follows from the fact that 
$\frac{\partial}{\partial u_i^{(m)}}$ is a derivation of $\mf g\otimes\mc V$
and by the commutation rule \eqref{partialcomm}.
\end{proof}
\begin{lemma}\label{lem2}
For $U,V\in\mf g\otimes\mc V$ and $L\in\mb F\partial\ltimes(\mf g\otimes\mc V)$, we have
\begin{equation}\label{eq:commut}
\Big[\!\sum_{h\in\mb Z_+}\!
\frac1{(h+1)!}(\ad U)^h(V),e^{\ad U}(L)\Big]
=
\!\sum_{h,k\in\mb Z_+}\!
\frac1{(h+k+1)!}(\ad U)^h(\ad V)(\ad U)^k(L)
\,.
\end{equation}
\end{lemma}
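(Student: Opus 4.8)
The plan is to verify the identity \eqref{eq:commut} by expanding both sides as (formal) sums over $\mathbb{Z}_+$ and matching coefficients, reducing everything to a purely combinatorial identity about compositions of the operators $\ad U$ and $\ad V$ on $\mathbb{F}\partial\ltimes(\mathfrak g\otimes\mc V)$. First I would rewrite the left-hand side using the Leibniz-type rule for the bracket: since $\Big[X,e^{\ad U}(L)\Big]=(\ad X)\big(e^{\ad U}(L)\big)$ for $X=\sum_{h}\frac1{(h+1)!}(\ad U)^h(V)\in\mathfrak g\otimes\mc V$, and since $e^{\ad U}(L)=\sum_{k\ge 0}\frac1{k!}(\ad U)^k(L)$, the left-hand side becomes
\[
\sum_{h,k\in\mathbb{Z}_+}\frac1{(h+1)!\,k!}\,(\ad U)^h(\ad V)(\ad U)^k(L)\,;
\]
here I use $\ad\big((\ad U)^h(V)\big)=(\ad\,U)^h\,(\ad V)\,(\ad\,U)^{-h}\cdot\dots$ — more precisely I would instead expand $(\ad X)$ directly, writing $X=\sum_h\frac1{(h+1)!}(\ad U)^h(V)$ and using bilinearity of the bracket, so that the LHS equals $\sum_{h,k}\frac{1}{(h+1)!\,k!}\big[(\ad U)^h(V),(\ad U)^k(L)\big]$, and then rewriting $\big[(\ad U)^h(V),\,\cdot\,\big]$ is not literally $(\ad U)^h\ad V(\ad U)^{-h}$, so the cleaner route is the coefficient comparison below.

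The key step is therefore: after expanding, the LHS is $\sum_{m\ge 0}c_m^{\text{LHS}}$ and the RHS is $\sum_{m\ge 0}c_m^{\text{RHS}}$, where $c_m$ collects all terms that are a product of $m$ copies of $\ad U$ interleaved with exactly one $\ad V$, applied to $L$. On the RHS, $c_m^{\text{RHS}}=\sum_{h+k=m}\frac1{(m+1)!}(\ad U)^h(\ad V)(\ad U)^k(L)$, so the total coefficient of the monomial $(\ad U)^h(\ad V)(\ad U)^k(L)$ with $h+k=m$ is $\frac1{(m+1)!}$, uniformly in $h$. For the LHS I would carefully expand $\big[\sum_h\frac1{(h+1)!}(\ad U)^h(V),\,e^{\ad U}(L)\big]$ using the Jacobi identity to move all the $\ad U$'s coming from the first argument ``around'' the bracket; concretely, repeated application of $[(\ad U)(W),Y]=(\ad U)[W,Y]-[W,(\ad U)Y]$ shows that $\big[(\ad U)^h(V),\,(\ad U)^{k'}(L)\big]$ expands as a signed sum of monomials $(\ad U)^a(\ad V)(\ad U)^b(L)$ with $a+b=h+k'$, and collecting all contributions and resumming the resulting binomial-type sums yields exactly the coefficient $\frac1{(h+b+1)!}$ predicted by the RHS. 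The identity to check then reduces to a standard binomial identity of the shape $\sum_{j}(-1)^{j}\binom{h}{j}\frac1{(h+1-j+\ell)!}\cdot(\text{something})=\frac1{(h+\ell+1)!}$, which follows from the known partial-fraction/Vandermonde-type identities (equivalently, from the identity already implicit in writing $e^{\ad U}$ and the ``$\frac{1-e^{-\ad U}}{\ad U}$''-type series, cf.\ \cite{Ser92}).

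The main obstacle will be precisely this bookkeeping of signs and multinomial coefficients when the $\ad U$'s are moved past the single $\ad V$ via the Jacobi identity: one must be careful that $L$ lies in $\mathbb{F}\partial\ltimes(\mathfrak g\otimes\mc V)$ rather than in $\mathfrak g\otimes\mc V$, but this causes no trouble because (as noted in the remark preceding the lemma) $(\ad A)B\in\mathfrak g\otimes\mc V$ whenever $A\in\mathfrak g\otimes\mc V$, so every term on both sides already lies in $\mathfrak g\otimes\mc V$ and all manipulations take place inside the Lie algebra $\mathbb{F}\partial\ltimes(\mathfrak g\otimes\mc V)$ without convergence issues (all series are finite modulo any fixed power of $z^{-1}$ once the lemma is applied with $U=U(z)$). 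An alternative, perhaps cleaner, route I would consider is to prove the operator identity $\big[\frac{e^{\ad U}-1}{\ad U}(V),\,e^{\ad U}(L)\big]=\big(\text{the RHS}\big)$ by introducing a parameter $t$, replacing $U$ by $tU$, differentiating in $t$, and checking that both sides satisfy the same first-order linear ODE in $t$ with the same initial value at $t=0$ (both vanish, or both equal $[V,L]$ after dividing appropriately); this reduces the whole statement to the elementary identity $\frac{d}{dt}e^{t\ad U}=(\ad U)e^{t\ad U}$ together with the product rule, and avoids the multinomial bookkeeping entirely.
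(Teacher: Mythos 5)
Your proposal is correct in substance, but neither of your two routes is the one the paper takes, and the paper's is simpler than your first route. The paper starts from the \emph{right}-hand side of \eqref{eq:commut}, writes $(\ad U)^h(\ad V)(\ad U)^k(L)=(\ad U)^h\big[V,(\ad U)^k(L)\big]$, and distributes $(\ad U)^h$ over the bracket using only the derivation property, $(\ad U)^h[V,W]=\sum_{l=0}^h\binom{h}{l}[(\ad U)^l V,(\ad U)^{h-l}W]$; this produces no signs at all, and after reindexing the coefficient match reduces to the hockey-stick identity $\sum_{h=m}^{m+n}\binom{h}{m}=\binom{m+n+1}{m+1}$. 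Your first route goes in the opposite direction (expand the left-hand side and push the $\ad U$'s out of the first slot of the bracket via $[(\ad U)W,Y]=(\ad U)[W,Y]-[W,(\ad U)Y]$), which does work but forces you through the alternating sum $\sum_{j=0}^{n}(-1)^j\binom{n}{j}\frac{1}{m+j+1}=\frac{m!\,n!}{(m+n+1)!}$ (a Beta-integral identity) rather than the hockey-stick identity; you leave this step at the level of a sketch and your stated ``shape'' of the identity is not quite the right one, so as written this route is incomplete though salvageable. Your second route is genuinely different and arguably the cleanest of the three: replacing $U$ by $tU$, both sides become formal power series $\Phi(t)$ satisfying $\Phi'(t)=(\ad U)\Phi(t)+(\ad V)e^{t\ad U}(L)$ with $\Phi(0)=0$ (for the left side this follows from $\frac{d}{dt}e^{t\ad U}=(\ad U)e^{t\ad U}$ and the derivation property; for the right side by splitting off the $h=0$ terms), so they coincide; this avoids all binomial bookkeeping at the cost of a slightly less self-contained argument. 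Your remarks about well-definedness (all terms landing in $\mf g\otimes\mc V$) are correct but not needed, since the lemma is stated for single elements $U,V\in\mf g\otimes\mc V$ and the sums converge termwise in the graded sense exactly as you say.
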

\begin{proof}
Since $\ad U$ is a derivation of the Lie bracket in $\mb F\partial\ltimes(\mf g\otimes\mc V)$,
we have
$$
\begin{array}{l}
\displaystyle{
RHS\eqref{eq:commut}
=
\sum_{h,k\in\mb Z_+}\!\frac{(\ad U)^h}{(h+k+1)!}\big[V,(\ad U)^k(L)\big]
} \\
\displaystyle{
=
\sum_{h,k\in\mb Z_+}\sum_{l=0}^h\frac{\binom{h}{l}}{(h+k+1)!}\big[(\ad U)^l(V),(\ad U)^{h+k-l}(L)\big]
} \\
\displaystyle{
=
\sum_{m,n\in\mb Z_+}
\frac1{(m+1)!}\frac1{n!}
\big[(\ad U)^m(V),(\ad U)^n(L)\big]
\sum_{h=m}^{m+n}\frac{\binom{h}{m}}{\binom{m+n+1}{m+1}}
\,. }
\end{array}
$$
The RHS above is the same as the LHS of \eqref{eq:commut} thanks to the
simple combinatorial identity
$\sum_{h=m}^{m+n}\binom{h}{m}=\binom{m+n+1}{m+1}$.
\end{proof}
\begin{proof}[Proof of Proposition \ref{int_hier3}]
We need to compute $\frac{\delta f(z)}{\delta u}$.
By the definition \eqref{eq:varder} of the variational derivative
and the definition \eqref{var_hier} of $\tint f(z)$,
we have
\begin{equation}\label{eq:21mar-1}
\begin{array}{l}
\displaystyle{
\frac{\delta f(z)}{\delta u}
=\sum_{i\in I,m\in\mb Z_+}
u_i\otimes(-\partial)^m
\kappa\Big(a\otimes1\,\Big|\,\frac{\partial h(z)}{\partial u_i^{(m)}}\Big)
} \\
\displaystyle{
=\sum_{i\in I,m\in\mb Z_+}
u_i\otimes(-\partial)^m
\kappa\Big(
a\otimes1\,\Big|\,\frac{\partial}{\partial u_i^{(m)}}
\Big(
e^{\ad U(z)}(\partial+u+zs\otimes1)-\partial-zs\otimes1
\Big)\Big)
} \\
\displaystyle{
=
a\otimes1
+\!\!\sum_{i\in I,m\in\mb Z_+}\!\!\!
u_i\otimes(-\partial)^m
\kappa\Big(
a\otimes1\,\Big|\,
\sum_{k=1}^\infty\frac1{k!}
\frac{\partial}{\partial u_i^{(m)}}(\ad U(z))^k(\partial+u+zs\otimes1)
\Big)\,.
}
\end{array}
\end{equation}
In the second identity we used the definition \eqref{L0} of $h(z)$,
and in the last identity we used the Taylor series expansion for the exponential $e^{\ad U(z)}$.
By Lemma \ref{lem1}, the last term in the RHS of \eqref{eq:21mar-1} can be rewritten as
\begin{equation}\label{eq:21mar-2}
\begin{array}{l}
\displaystyle{
\sum_{i\in I,m\in\mb Z_+}\!\!\!\!
u_i\otimes(-\partial)^m
\kappa\Bigg(\!
a\otimes1\,\Bigg|\,
\sum_{k=1}^\infty\sum_{h=0}^{k-1}\frac1{k!}
(\!\ad U(z))^h \big(\ad \frac{\partial U(z)}{\partial u_i^{(m)}}\big) (\ad U(z))^{k-h-1}\! L(z)
} \\
\displaystyle{
+
\sum_{k=1}^\infty\frac1{k!}
(\ad U(z))^k\frac{\partial}{\partial u_i^{(m)}}(u+zs\otimes1)
-
\sum_{k\in\mb Z_+}\frac1{(k+1)!}
(\ad U(z))^{k}\frac{\partial U(z)}{\partial u_i^{(m-1)}}
\Bigg)
} \\
\displaystyle{
=
\sum_{k=1}^\infty\frac1{k!}
(-\ad U(z))^k (a\otimes1)
} \\
\displaystyle{
+
\sum_{i\in I,m\in\mb Z_+}
u_i\otimes(-\partial)^m
\kappa\Bigg(
a\otimes1\,\Bigg|\,
-\sum_{k\in\mb Z_+}\frac1{(k+1)!}
(\ad U(z))^{k}\frac{\partial U(z)}{\partial u_i^{(m-1)}}
} \\
\displaystyle{
\,\,\,\,\,\,\,\,\,\,\,\,\,\,\,\,\,\,\,\,\,\,\,\,\,\,\,\,\,\,\,\,\,\,\,\,
+\sum_{h,k\in\mb Z_+}^\infty\frac1{(h+k+1)!}
(\ad U(z))^h \big(\ad \frac{\partial U(z)}{\partial u_i^{(m)}}\big) (\ad U(z))^k L(z)
\Bigg)
\,.
}
\end{array}
\end{equation}
For the first term in the RHS we used the invariance of the bilinear map
$\kappa:\,(\mf g\otimes\mc V)\times(\mf g\otimes\mc V)\to\mc V$.
Combining the first term in the RHS of \eqref{eq:21mar-1} 
and the first term in the RHS of \eqref{eq:21mar-2},
we get $e^{-\ad U(z)}(a\otimes1)$, which is the same as $F(z)$ by \eqref{eq:F}.
Hence, in order to complete the proof of the proposition,
we are left to show that the last two terms in the RHS of \eqref{eq:21mar-2} cancel.
Let
$$
A_{i,m}(z)=\sum_{k\in\mb Z_+}\frac1{(k+1)!}
(\ad U(z))^{k}\frac{\partial U(z)}{\partial u_i^{(m)}}\,.
$$
Using this notation, the second term of the RHS of \eqref{eq:21mar-2} can be rewritten as
\begin{equation}\label{eq:21mar-3}
-\sum_{i\in I,m\in\mb Z_+}
u_i\otimes(-\partial)^m
\kappa\big(
a\otimes1\mid
A_{i,m-1}(z)
\big)\,,
\end{equation}
while, by Lemma \ref{lem2}, the third term of the RHS of \eqref{eq:21mar-2} is
$$
\sum_{i\in I,m\in\mb Z_+}
u_i\otimes(-\partial)^m
\kappa\Big(
a\otimes1\,\Big|\,
\big[A_{i,m}(z),e^{\ad U(z)}L(z)\big]
\Big)
\,.
$$
By equation \eqref{L0}, the invariance of the bilinear map $\kappa$
and the assumption that $a$ lies in the center of $\mf h$,
the above expression is equal to
$$
\sum_{i\in I,m\in\mb Z_+}
u_i\otimes(-\partial)^{m+1}
\kappa\Big(
a\otimes1\,\Big|\, A_{i,m}(z)
\Big)
\,,
$$
which, combined with \eqref{eq:21mar-3}, gives zero.
\end{proof}
\begin{remark}\label{rem:grado2}
Consider the usual polynomial 
grading of the algebra of differential polynomials $\mc V=S(\mb F[\partial]\mf g)$.
We can compute the part of $\tint f(z)\in(\quot{\mc V}{\partial\mc V})[[z^{-1}]]$ of degree less or equal than $2$,
using equations \eqref{L0} and \eqref{var_hier}.
For $n\in\mb Z_+$, we denote by $U(z)(n)$, $h(z)(n)$ and $\tint f(z)(n)$
the homogeneous components of degree $n$ in $U(z)$, $h(z)$ and $\tint f(z)$ respectively.
Using equation \eqref{L0} and the fact that $\mf h\cap\mf h^\perp=0$,
it is easy to show, inductively on the negative powers of $z$, 
that $U(z)(0)=0$ and $h(z)(0)=0$, so that $\tint f(z)(0)=0$.
Similarly, equating the homogeneous components of degree 1 in equation \eqref{L0},
we get
$h(z)(1)=\pi_{\mf h}u$, where $\pi_{\mf h}:\,\mf g\otimes\mc V\to\mf h\otimes\mc V$ 
is the canonical quotient map (with respect to the decomposition $\mf g=\mf h\oplus\mf h^\perp$).
Hence, $\tint f(z)(1)=\tint \kappa(a\otimes 1\mid u)$.
Moreover, $U(z)(1)$ solves the equation
$$
[zs\otimes 1,U(z)(1)]=\pi_{\mf h^\perp}u-U^\prime(z)(1)\,.
$$
More explicitly, the coefficient of $z^{-n}$ in $U(z)(1)$ is
given by $(\ad s)^{-n}(-\partial)^{n-1}\pi_{\mf h^\perp}u$,
where $\ad s$ is considered as an invertible endomorphism of $\mf h^\perp$.
Finally, equating the homogeneous components of degree 2 in \eqref{L0}, we get
$h(z)(2)=\frac12\pi_{\mf h}\big[U(z)(1),u\big]$.
Hence,
$$
\tint f(z)(2)=\frac12\sum_{n=1}^\infty z^{-n}\tint
\kappa\big(a\otimes1\,\mid\,[(\ad s)^{-n}(-\partial)^{n-1}\pi_{\mf h^\perp}u,u]\big)\,.
$$
\end{remark}
\begin{remark}\label{totder}
Let $U(z),h(z)$ and $\widetilde U(z), \widetilde h(z)$ be two solutions of \eqref{L0}.
Recall by Proposition \ref{int_hier2}(b) that
$e^{\ad\widetilde U(z)}=e^{\ad S(z)}e^{\ad U(z)}$ 
for some $S(z)\in(\mf h\otimes\mc V)[[z^{-1}]]z^{-1}$.
By Proposition \ref{int_hier2}(c),
$F(z)=e^{-\ad U(z)}(a\otimes1)=e^{-\ad\widetilde U(z)}(a\otimes1)$.
Hence, by Proposition \ref{int_hier3},
$f(z)=\tint\kappa(a\otimes1\mid h(z))$ and $\widetilde f(z)=\tint\kappa(a\otimes1\mid\widetilde h(z))$
differ by a total derivative.
In particular, if $\mf h$ is abelian 
(this is the case when $s\in\mf g$ is regular semisimple), 
then $h(z)-\widetilde h(z)=\partial S(z)$.
\end{remark}
\begin{corollary}\label{cor:homog}
Let $\mf g$ be a finite-dimensional Lie algebra
with a non-degenerate symmetric invariant bilinear form $\kappa$, and let $s\in\mf g$ be a semisimple element. Let
$\mc V=S(\mb F[\partial]\mf g)=\mb F[u_i^{(n)}\mid
i\in I,n\in\mb Z_+]$ (where $\{u_i\}_{i\in I}\subset\mf g$ is a basis of $\mf g$),
and let us extend $\kappa$ to a bilinear map 
$\kappa:\,\mf g\otimes\mc V\times\mf g\otimes\mc V\to\mc V$.
Let $U(z)\in(\mf g\otimes\mc V)[[z^{-1}]]z^{-1},\,h(z)\in(\mf h\otimes\mc V)[[z^{-1}]]$
be a solution of equation \eqref{L0}, where $\mf h=\Ker(\ad s)$.
Given an element $a\in Z(\mf h)\backslash Z(\mf g)$,
we have an infinite hierarchy of integrable bi-Hamiltonian equations
associated to the bi-Poisson structure $(H,K)$ on $\mc V$,
defined in \eqref{H5}:
$$
\frac{du_i}{dt_n}
=\sum_{j\in I}H_{ij}(\partial)\frac{\delta f_n}{\delta u_j}
=\sum_{j\in I}K_{ij}(\partial)\frac{\delta f_{n+1}}{\delta u_j},\,i\in I, n\in\mb Z_+\,,
$$
where $\tint f_n\in\quot{\mc V}{\partial\mc V}$ is the coefficient of $z^{-n}$ 
in $\tint f(z)=\tint \kappa(a\otimes1\mid h(z))\in(\quot{\mc V}{\partial\mc V})[[z^{-1}]]$.
\end{corollary}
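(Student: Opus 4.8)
The plan is to assemble Propositions \ref{int_hier1}--\ref{int_hier3} into the Lenard--Magri scheme of Section \ref{sec:1.3}, and then to check separately that the span of the $\tint f_n$ is infinite-dimensional, which is the only point where the hypothesis $a\notin Z(\mf g)$ is used.

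First I would set $F(z)=e^{-\ad U(z)}(a\otimes1)$ and $\tint f(z)=\tint\kappa(a\otimes1\mid h(z))$, as in \eqref{eq:F} and \eqref{var_hier}. By Proposition \ref{int_hier2}(c), $F(z)$ satisfies conditions (C1) and (C2), i.e.\ $[s\otimes1,F_0]=0$ and $[L(z),F(z)]=0$; and by Proposition \ref{int_hier3} it satisfies (C3), namely $F(z)=\frac{\delta f(z)}{\delta u}$. By Proposition \ref{int_hier1}, extended $\mb F$-linearly from elements $a\otimes g$ to all of $\mf g\otimes\mc V$, the operator $[L(z),\,\cdot\,]$ coincides with $H-zK$; hence conditions (C1) and (C2) translate into $K(F_0)=0$ and $(H-zK)F(z)=0$, which is precisely the Lenard--Magri recursion \eqref{lenardseries} for the bi-Poisson structure $(H,K)$ of \eqref{H5}. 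Writing $\tint f_n$ for the coefficient of $z^{-n}$ in $\tint f(z)$ and $F_n=\frac{\delta f_n}{\delta u}$, this recursion reads $K(F_0)=0$ and $H(F_n)=K(F_{n+1})$ for all $n\in\mb Z_+$. Therefore, by the discussion following \eqref{lenardrecursion}, the $\tint f_n$ are pairwise in involution with respect to both $\{\cdot\,_\lambda\,\cdot\}_H$ and $\{\cdot\,_\lambda\,\cdot\}_K$, and they yield the asserted hierarchy $\frac{du_i}{dt_n}=\sum_{j\in I}H_{ij}(\partial)\frac{\delta f_n}{\delta u_j}=\sum_{j\in I}K_{ij}(\partial)\frac{\delta f_{n+1}}{\delta u_j}$.

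It then remains to show that the $\tint f_n$ span an infinite-dimensional subspace of $\quot{\mc V}{\partial\mc V}$; here I would combine $a\in Z(\mf h)\setminus Z(\mf g)$ with the low-degree computation of Remark \ref{rem:grado2}. With respect to the polynomial grading of $\mc V=S(\mb F[\partial]\mf g)$, that remark shows that for $n\geq1$ the homogeneous components of $\tint f_n$ of degrees $0$ and $1$ vanish, while its degree-$2$ component is $\tint f_n(2)=\frac12\tint\kappa\big(a\otimes1\mid[(\ad s)^{-n}(-\partial)^{n-1}\pi_{\mf h^\perp}u,\,u]\big)$. Since $[a,s]=0$, the operator $\ad a$ commutes with $\ad s$ and with $\partial$, so it preserves $\mf h^\perp=\im(\ad s)$; using this, the invariance of $\kappa$, and integration by parts, one checks that $\tint f_n(2)$ is a quadratic differential polynomial which is nonzero as soon as the operator $(\ad a)(\ad s)^{-n}$ on $\mf h^\perp$ is nonzero, and that in that case its variational derivative has differential order exactly $n-1$. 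Now $(\ad s)^{-n}$ is invertible on $\mf h^\perp$, so $(\ad a)(\ad s)^{-n}$ is nonzero there precisely when $\ad a$ does not vanish on $\mf h^\perp$; and $\ad a$ cannot vanish on $\mf h^\perp$, for it already vanishes on $\mf h$ (because $a\in Z(\mf h)$), so if it also vanished on $\mf h^\perp$ it would vanish on $\mf g=\mf h\oplus\mf h^\perp$, i.e.\ $a\in Z(\mf g)$, contrary to hypothesis. Hence $\tint f_n(2)\neq0$ for every $n\geq1$; since $\tint f_n(2)$ is the lowest-degree component of $\tint f_n$ and these components have strictly increasing differential order, a nontrivial finite linear combination $\sum_n c_n\tint f_n=0$ (with $n\geq1$) would, after applying $\frac{\delta}{\delta u}$ and passing to the part of polynomial degree one, force a nontrivial relation among elements of distinct differential order --- impossible. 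Thus the $\tint f_n$ are linearly independent, in particular they span an infinite-dimensional subspace, and the hierarchy is integrable.

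The first step is essentially bookkeeping, assembling the already-established Propositions \ref{int_hier1}--\ref{int_hier3}. The substantive point, and the step I expect to be the main obstacle, is the infinite-dimensionality: one must extract enough explicit information from equation \eqref{L0} to exhibit non-triviality --- this is exactly the purpose of Remark \ref{rem:grado2} --- together with the structural observation that an element of $Z(\mf h)$ belongs to $Z(\mf g)$ if and only if it acts trivially on $\im(\ad s)$. Some care with signs and with the integrations by parts hidden in the variational-derivative computation is also needed, though it does not affect the conclusion.
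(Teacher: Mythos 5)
Your proposal is correct. The first half (assembling Propositions \ref{int_hier1}, \ref{int_hier2} and \ref{int_hier3} into the recursion \eqref{lenardseries} and invoking the Lenard--Magri scheme) is exactly what the paper does. Where you genuinely diverge is the linear independence of the $\tint f_n$. The paper argues abstractly from the recursion $H(F_n)=K(F_{n+1})$ together with three elementary properties read off from \eqref{H5}: $H(a\otimes1)\neq0$ of differential order $0$ when $a\notin Z(\mf g)$, $H$ raises the differential order of $F\in\mf g\otimes\mc V$ by exactly one, and $K$ does not raise it; hence the $F_n$ have pairwise distinct differential orders and are independent, with no explicit computation of any $\tint f_n$. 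You instead extract the quadratic part of $\tint f_n$ from Remark \ref{rem:grado2} and show it is nonzero with variational derivative of differential order $n-1$. This buys an explicit formula for the leading term of each Hamiltonian, but it puts all the weight on a point you only gesture at with ``one checks'': a quadratic local functional $\tint\kappa(B(-\partial)^{n-1}u\mid u)$ can vanish modulo $\partial\mc V$ even when $B\neq0$, namely when the associated operator is skew-adjoint. Here this degeneration does not occur: since $a\in Z(\mf h)$, the operators $\ad a$ and $\ad s$ commute and are both $\kappa$-skew, so $B=(\ad a)(\ad s)^{-n}\pi_{\mf h^\perp}$ satisfies $B^T=(-1)^{n+1}B$, and the self-adjoint part of $B(-\partial)^{n-1}$ vanishes only if $B=0$, i.e.\ only if $\ad a$ kills $\im(\ad s)$, which together with $a\in Z(\mf h)$ would force $a\in Z(\mf g)$. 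With that parity check supplied, your argument closes; without it the proof has a hole precisely at the step you identify as the main obstacle. The paper's route avoids this issue entirely, at the cost of being less explicit.
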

\begin{proof}
By Propositions \ref{int_hier2} and \ref{int_hier3},
the formal power series $F(z)\in(\mf g\otimes\mc V)[[z^{-1}]]$, defined in \eqref{eq:F},
satisfies conditions (C1), (C2) and (C3) above.
Hence, according to the Lenard-Magri scheme, we only need to check that
the local functionals $\tint f_n\in\quot{\mc V}{\partial\mc V},\,n\in\mb Z_+$,
are linearly independent.
This is is obtained by the following simple observations.
By the definitions \eqref{H5} of  $H$ and $K$,
it is clear that, if $F=a\otimes1$ with $a\not\in Z(\mf g)$, 
then $H(F)\neq0$ and it is of differential order $0$;
if $F\in\mf g\otimes\mc V$ has differential order $m\in\mb Z_+$,
then necessarily $H(F)\in\mf g\otimes\mc V$ has differential order $m+1$,
and if $K(F)\in\mf g\otimes\mc V$ has differential order $m$,
then necessarily $F\in\mf g\otimes\mc V$ has differential order at least $m$.
Hence, by the recursion formula \eqref{lenardrecursion}
we immediately get that the elements $F_{n}\in\mf g\otimes\mc V$, $n\in\mb Z_+$, 
have distinct differential orders.
In particular, the elements $F_n=\frac{\delta f_n}{\delta u}\in\mf g\otimes\mc V$
are linearly independent, and therefore the elements $\tint f_n\in\quot{\mc V}{\partial\mc V}$
are linearly independent as well.
\end{proof}

\begin{example}\label{ex:nwave}
\emph{The $N$-wave equation}.
Let $\mf g=\mf{gl}_N$, with the bilinear form $\kappa(A\mid B)=\tr(AB)$, 
and let $s=\diag(s_1,\dots,s_N)$ be a diagonal matrix with distinct eigenvalues.
Then $\mf h=\Ker(\ad s)$ is the abelian subalgebra of diagonal $N\times N$ matrices,
and $\mf h^\perp=\im(\ad s)$ consists of $N\times N$ matrices with zeros along the diagonal.
We also have $u=\sum_{i,j=1}^NE_{ij}\otimes E_{ji}\in\mf{gl}_N\otimes\mc V$,
where $\mc V$ is the algebra of differential polynomials generated by $\mf{gl}_N$.
In this case, for $U(z)\in(\mf{h}^\perp\otimes\mc V)[[z^{-1}]]z^{-1}$,
there exists a unique $T(z)=\sum_{n\in\mb Z_+}T_nz^{-n}\in(\mf{gl}_N\otimes\mc V)[[z^{-1}]]$,
with $T_0=\id_N$ and $T_n\in\mf h^\perp$ for all $n\geq1$,
such that $e^{\ad U(z)}:\,\widetilde{\mf{gl}}_N\to\widetilde{\mf{gl}}_N$ 
coincides with conjugation by $T(z)$.
Hence, equation \eqref{L0} reduces, in this case, to finding
$T(z)$ as above,
and $h(z)=\sum_{n\in\mb Z_+}h_nz^{-n}\in\mf h\otimes\mc V[[z^{-1}]]$,
such that
$$
T(z)\big(\partial+u+zs\otimes 1\big)=\big(\partial+zs\otimes1+h(z)\big)T(z)\,.
$$
The above equation gives rise to the following recursive formula for $h_n\in\mf h\otimes\mc V$
and $T_{n+1}\in\mf h^\perp\otimes\mc V$ ($n\in\mb Z_+$):
$$
h_n+[s\otimes1,T_{n+1}]=T_nu-\partial T_n-\sum_{k=0}^{n-1}h_kT_{n-k}\,.
$$
Clearly, the above equation determines $h_n$ and $T_{n+1}$ uniquely.
The first few terms in the recursion are given by
(in the sums below the terms with zero denominator are dropped):
$$
\begin{array}{l}
\displaystyle{
h_0=\sum_{k}E_{kk}\otimes E_{kk}\,,
\qquad
T_1=\sum_{i,j}E_{ij}\otimes\frac{E_{ji}}{s_i-s_j}\,,
} \\
\displaystyle{
h_1=\sum_{k}E_{kk}\otimes
\Bigg(\sum_{\alpha}
\frac{E_{k\alpha}E_{\alpha k}}{s_k-s_\alpha}\Bigg)\,,
} \\
\displaystyle{
T_2=
\sum_{i,j}E_{ij}\otimes
\Bigg(\sum_{k}
\frac{E_{jk}E_{ki}}{(s_i-s_j)(s_i-s_k)}
-\frac{E_{ji}^{\prime}+E_{ji}E_{ii}}{(s_i-s_j)^2}
\Bigg)\,,
} \\
\displaystyle{
h_2=
\sum_{k}E_{kk}\otimes
\Bigg(\sum_{\alpha,\beta}
\frac{E_{k\alpha}E_{\alpha\beta}E_{\beta k}}{(s_k-s_\alpha)(s_k-s_\beta)}
-\sum_{\alpha}
\frac{E_{k\alpha}E_{\alpha k}^{\prime}+E_{k\alpha}E_{\alpha k}E_{kk}}{(s_k-s_\alpha)^2}
\Bigg)\,.
}
\end{array}
$$
The integrable hierarchy associated to 
$a=\diag(a_1,\dots,a_N)\in Z(\mf h)=\mf h$,
with not all $a_i$'s equal,
is defined in terms of the Hamiltonian functionals in involution
$\tint f_n=\tint \tr((a\otimes1) h_n),\,n\in\mb Z_+$.
The first few elements are
(again the terms with zero denominator are dropped from the sums):
$$
\begin{array}{l}
\displaystyle{
\int f_0
=\int\sum_{k}a_kE_{kk},
\qquad
\int f_1
=\int\sum_{k,\alpha}
\frac{a_kE_{k\alpha}E_{\alpha k}}{s_k-s_\alpha}\,,
}\\
\displaystyle{
\int f_2
=\int\sum_{\alpha,\beta,k}
\frac{a_kE_{k\alpha}E_{\alpha\beta}E_{\beta k}}{(s_k-s_\alpha)(s_k-s_\beta)}
-\sum_{\alpha,k}
\frac{a_k\big(E_{k\alpha}E_{\alpha k}^{\prime}+E_{k\alpha}E_{\alpha k}E_{kk}\big)}{(s_k-s_\alpha)^2}
\,.
}
\end{array}
$$
The corresponding hierarchy of Hamiltonian equations is 
$\frac{du}{dt_n}=H(F_{n})=\partial F_{n}+[u,F_n],\,n\in\mb Z_+$,
where $F(z)=\sum_{n\in\mb Z_+}F_nz^{-n}=T(z)^{-1}(a\otimes1)T(z)$ (see equation \eqref{eq:F}).
In particular, $F_0=a\otimes1$ and $F_1=[a\otimes1,T_1]
=\sum_{i\neq j}\frac{a_i-a_j}{s_i-s_j}E_{ij}\otimes E_{ji}\in\mf{gl}_N\otimes\mc V$.
Hence, 
the first two equations of the hierarchy are ($1\leq i,j\leq N$):
$$
\frac{d E_{ij}}{dt_0}=(a_i-a_j)E_{ij},
\qquad
\frac{d E_{ij}}{dt_1}
=\gamma_{ij}E_{ij}^\prime
+\sum_{k}(\gamma_{ik}-\gamma_{kj})E_{ik}E_{kj}
\,,
$$
where
$\gamma_{ij}=\frac{a_i-a_j}{s_i-s_j}$ for $i\neq j$ and $\gamma_{ij}=0$ for $i=j$.
The last equation is known as the $N$-wave equation.
\end{example}

\section{Classical \texorpdfstring{$\mc W$}{W}-algebras}\label{sec:dsred}

In this section we give the definition of classical $\mc W$-algebras in the language of Poisson vertex algebras.
We also show how this definition is related to the original definition
of Drinfeld and Sokolov \cite{DS85}.
We thus obtain a bi-Poisson structure for classical $\mc W$-algebras,
that we will use in the next section to apply successfully the Lenard-Magri scheme of integrability.

\subsection{Setup}\label{sec:3.1}
Throughout the rest of the paper we make the following assumptions.

Let $\mf g$ be a reductive finite-dimensional Lie algebra over the field $\mb F$
with a non-degenerate symmetric invariant bilinear form $\kappa$,
and let $f\in\mf g$ be a non-zero nilpotent element.
By the Jacobson-Morozov Theorem \cite[Theorem 3.3.1]{CMG93},
$f$ is an element of an $\mf{sl}_2$-triple $(f,h=2x,e)\subset\mf g$.
Then we have the $\ad x$-eigenspace decomposition
\begin{equation}\label{dec}
\mf g=\bigoplus_{i\in\frac{1}{2}\mb Z}\mf g_i\,,
\end{equation}
so that $f\in\mf g_{-1}$, $h\in\mf g_0$ and $e\in\mf g_1$.

There is a well-known skew-symmetric bilinear form
$\omega$ on $\mf g$ defined by
$$
\omega(a,b)=\kappa(f\mid[a,b])
\,\,,\,\,\,\,a,b\in\mf g\,.
$$
Its restriction to $\mf g_{\frac12}$ 
is non-degenerate since $\ad f:\,\mf g_{\frac12}\to\mf g_{-\frac12}$ is bijective.
Fix an isotropic subspace $\mf l\subset\mf g_{\frac12}$
(with respect to $\omega$)
and denote by $\mf l^{\perp\omega}=
\{a\in\mf g_{\frac12}\mid \omega(a,b)=0\text{ for all }b\in\mf l\}
\subset\mf g_{\frac12}$ its symplectic complement
with respect to $\omega$.
We consider the following nilpotent subalgebras of $\mf g$:
\begin{equation}\label{subalgebras}
\mf m=\mf l\oplus\mf g_{\geq1}\subset
\mf n=\mf l^{\perp\omega}\oplus\mf g_{\geq1}\,,
\end{equation}
where $\mf g_{\geq1}=\bigoplus_{i\geq1}\mf g_i$.
Clearly, $\omega(\cdot\,,\,\cdot)$ restricts to a skew-symmetric bilinear form on $\mf n$,
and $\mf m\subset\mf n$ is the kernel of this restriction.
Hence, we have an induced non-degenerate skew-symmetric bilinear form on $\quot{\mf n}{\mf m}$.

Let $\mf p\subset\mf g_{\leq\frac12}$
be a subspace complementary to $\mf m$ in $\mf g$:
$\mf g=\mf m\oplus\mf p$.
Since $\kappa$
is non-degenerate, we also have the corresponding decomposition with the orthogonal complements
$\mf g=\mf p^\perp\oplus\mf m^\perp$.
Hence, identifying $\mf g\simeq\mf g^*$ via $\kappa$,
we get the isomorphisms $\mf p^*\simeq\quot{\mf g}{\mf p^\perp}\simeq\mf m^\perp$.
Let $\{q_i\}_{i\in P}$ be a basis of $\mf p$, and let
$\{q^i\}_{i\in P}$ be the dual (with respect to $\kappa$) basis of $\mf m^{\perp}$,
namely, such that $\kappa(q^j\mid q_i)=\delta_{ij}$.
These dual bases are equivalently defined by the completeness relations
\begin{equation}\label{complete}
\sum_{j\in P}\kappa(q^j\mid a)q_j=\pi_{\mf p} a,
\qquad
\sum_{j\in P}\kappa(a\mid q_j)q^j=\pi_{\mf m^\perp} a
\qquad
\text{for all }a\in\mf g\,,
\end{equation}
where $\pi_{\mf p}:\,\mf g\twoheadrightarrow\mf p$ 
and $\pi_{\mf m^\perp}:\,\mf g\twoheadrightarrow\mf m^{\perp}$
are the projection maps with kernels $\mf m$ and $\mf p^\perp$ respectively.

For $a\in\mf n$, we have $\pi_{\mf p}(a)\in\mf p$, and $a-\pi_{\mf p}(a)\in\mf m\subset\mf n$.
Hence, $\pi_{\mf p}(a)\in\mf p\cap\mf n$.
It follows that
\begin{equation}\label{20130206:eq1}
\pi_{\mf p}(\mf n)=\mf n\cap\mf p\subset\mf g_{\frac12}\,,
\end{equation}
and this space is naturally isomorphic to $\quot{\mf n}{\mf m}\simeq\quot{\mf l^{\perp\omega}}{\mf l}$.
It is then clear that the bilinear form $\omega$
restricts to a non-degenerate skew-symmetric 
bilinear form on $\mf n\cap\mf p$.
Let $\{v^r\}_{r\in R}$, $\{v_r\}_{r\in R}$
(where $\#(R)=\dim\mf n-\dim\mf m=\dim\mf l^{\perp\omega}-\dim\mf l$),
be bases of $\mf n\cap\mf p$
dual with respect to $\omega(\cdot\,,\,\cdot)$, i.e.
\begin{equation}\label{20130206:eq2}
\omega(v^q,v_r)=\delta_{q,r}\,\,
\text{ for all } q,r\in R\,.
\end{equation}
We have the following completeness relations:
\begin{equation}\label{20130206:eq3}
\pi_{\mf p}(a)
=\sum_{r\in R}\kappa(f\mid [a,v^r])v_r
=-\sum_{r\in R}\kappa(f\mid[a,v_r])v^r
\,\,\text{ for all } a\in\mf n\,.
\end{equation}

Finally, we fix an element $s\in\Ker(\ad\mf n)\subset\mf g$.
In the next section, when applying the Lenard-Magri scheme of integrability,
we will need some further assumptions on the element $s$ (see Section \ref{sec:4.2}).

\subsection{Definition of classical \texorpdfstring{$\mc W$}{W}-algebras}\label{sec:3.2}

Let us consider the affine PVA $\mc V(\mf g)=\mc V(\mf g,\kappa,zs)$, where $z\in\mb F$,
from Example \ref{affinePVA}. 
As a differential algebra, it is $\mc V(\mf g)=S(\mb F[\partial]\mf g)$,
and the $\lambda$-bracket on it is given by
\begin{equation}\label{lambda}
\{a_\lambda b\}_z=[a,b]+\kappa(a\mid b)\lambda+z\kappa(s\mid[a,b])
\quad,\qquad a,b\in\mf g\,,
\end{equation}
and extended to $\mc V(\mf g)$ by the Master Formula \eqref{masterformula}.
Note that since, by assumption, $[s,\mf n]=0$,
$\mb F[\partial]\mf n\subset\mc V(\mf g)$ is a Lie conformal subalgebra (see Definition \ref{def:lca}),
with the $\lambda$-bracket $\{a_\lambda b\}_z=[a,b]$, $a,b\in\mf n$
(it is independent on $z$).

Consider the differential subalgebra
$\mc V(\mf p)=S(\mb F[\partial]\mf p)$ of $\mc V(\mf g)$,
and denote by $\rho:\,\mc V(\mf g)\twoheadrightarrow\mc V(\mf p)$,
the differential algebra homomorphism defined on generators by
\begin{equation}\label{rho}
\rho(a)=\pi_{\mf p}(a)+\kappa(f\mid a),
\qquad a\in\mf g\,.
\end{equation}
Note that, since by assumption $\mf p\subset\mf g_{\leq\frac12}$,
we have that $\rho$ acts as the identity on $\mc V(\mf p)$.
\begin{lemma}\phantomsection\label{20120511:lem1}
\begin{enumerate}[(a)]
\item
For every $a\in\mf n$ and $g\in\mc V(\mf m)=S(\mb F[\partial]\mf m)\subset\mc V(\mf g)$,
we have $\rho\{a_\lambda g\}_z=0$.
\item
For every $a\in\mf n$ and $g\in\mc V(\mf g)$, we have
$\rho\{a_\lambda \rho(g)\}_z=\rho\{a_\lambda g\}_z$.
\item
We have a representation of the Lie conformal algebra $\mb F[\partial]\mf n$ 
on the differential subalgebra $\mc V(\mf p)\subset\mc V(\mf g)$ given by
($a\in\mf n$, $g\in\mc V(\mf p)$):
\begin{equation}\label{20120511:eq1}
a\,^\rho_\lambda\,g=\rho\{a_\lambda g\}_z
\end{equation}
(note that the RHS is independent of $z$ since, by assumption, $s\in\Ker(\ad\mf n)$).
\item
The $\lambda$-action of $\mb F[\partial]\mf n$ on $\mc V(\mf p)$ given by \eqref{20120511:eq1}
is by conformal derivations (see Definition \ref{def:lca}(c)).
\end{enumerate}
\end{lemma}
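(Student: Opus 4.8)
The plan is to derive everything from part (a): parts (b)--(d) are then formal consequences of (a) together with the facts that $\rho$ is a differential algebra homomorphism restricting to the identity on $\mc V(\mf p)$ and that $\mc V(\mf g)$ is a PVA. For part (a) I would induct on the polynomial degree of $g\in\mc V(\mf m)=S(\mb F[\partial]\mf m)$. If $g=g_1g_2$ with $g_i$ of smaller degree, the left Leibniz rule \eqref{lleibniz} and multiplicativity of $\rho$ give $\rho\{a_\lambda g\}_z=\rho\{a_\lambda g_1\}_z\,\rho(g_2)+\rho\{a_\lambda g_2\}_z\,\rho(g_1)$, which vanishes by the inductive hypothesis; if $g=\partial^n c$ with $c\in\mf m$, sesquilinearity \eqref{sesqui} and $\partial\circ\rho=\rho\circ\partial$ give $\rho\{a_\lambda\partial^n c\}_z=(\lambda+\partial)^n\rho\{a_\lambda c\}_z$, reducing to the base case $g=c\in\mf m$. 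For this base case, by \eqref{lambda} we have $\{a_\lambda c\}_z=[a,c]+\kappa(a\mid c)\lambda+z\kappa(s\mid[a,c])$, where $\kappa(a\mid c)=0$ since $\mf n,\mf m\subset\mf g_{\geq\frac12}$ and $\kappa(\mf g_i,\mf g_j)=0$ unless $i+j=0$, and $z\kappa(s\mid[a,c])=z\kappa([s,a]\mid c)=0$ since $s\in\Ker(\ad\mf n)$; hence $\{a_\lambda c\}_z=[a,c]$. By \eqref{rho}, $\rho\{a_\lambda c\}_z=\pi_{\mf p}([a,c])+\kappa(f\mid[a,c])$, and both summands vanish: $\pi_{\mf p}([a,c])=0$ because $[\mf n,\mf m]\subset\mf g_{\geq1}\subset\mf m=\Ker\pi_{\mf p}$, and $\kappa(f\mid[a,c])=\omega(a,c)=0$ because, as noted after \eqref{subalgebras}, $\mf m$ is the kernel of the restriction of $\omega$ to $\mf n$. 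This proves (a).

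For part (b), by $\mb F[\lambda]$-linearity I would reduce to $g$ a monomial in the generators of $\mc V(\mf g)=S(\mb F[\partial]\mf m)\otimes S(\mb F[\partial]\mf p)$ (the tensor decomposition coming from $\mf g=\mf m\oplus\mf p$), writing $g=m\,p$ with $m$ a monomial in the $\mf m$-generators and $p$ a monomial in the $\mf p$-generators. Since $\rho(\partial^n c)=\partial^n\kappa(f\mid c)$ for $c\in\mf m$ — which is $0$ for $n\geq1$ and a scalar for $n=0$ — one gets $\rho(m)\in\mb F$, while $\rho(p)=p$. The left Leibniz rule gives $\{a_\lambda g\}_z=\{a_\lambda m\}_z\,p+\{a_\lambda p\}_z\,m$, and applying $\rho$, using part (a) to kill the first term, yields $\rho\{a_\lambda g\}_z=\rho(m)\,\rho\{a_\lambda p\}_z=\rho\{a_\lambda(\rho(m)p)\}_z=\rho\{a_\lambda\rho(g)\}_z$.

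For part (c), I would first note that for $a\in\mf n$ one has $\{a_\lambda b\}_z=[a,b]+\kappa(a\mid b)\lambda$ independently of $z$ (the term $z\kappa(s\mid[a,b])=z\kappa([s,a]\mid b)$ vanishes), so by the Master Formula \eqref{masterformula} $\{a_\lambda g\}_z$ is $z$-independent for every $g$; thus \eqref{20120511:eq1} is well posed and defines an $\mb F$-bilinear map $\mb F[\partial]\mf n\times\mc V(\mf p)\to\mb F[\lambda]\otimes\mc V(\mf p)$ (the values lie in $\mc V(\mf p)$ because $\rho$ does). The two sesquilinearity identities of Definition \ref{def:lca}(b) are immediate from sesquilinearity of $\{\cdot\,_\lambda\,\cdot\}_z$ and $\partial\circ\rho=\rho\circ\partial$. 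For the Jacobi identity $a\,^\rho_\lambda\,(b\,^\rho_\mu\,g)-b\,^\rho_\mu\,(a\,^\rho_\lambda\,g)=\{a_\lambda b\}\,^\rho_{\lambda+\mu}\,g$ — which by sesquilinearity may be checked for $a,b\in\mf n$ and $g\in\mc V(\mf p)$ — I would use part (b), extended $\mb F[\mu]$-linearly and applied with $g$ replaced by $\{b_\mu g\}_z\in\mc V(\mf g)[\mu]$, to obtain $a\,^\rho_\lambda\,(b\,^\rho_\mu\,g)=\rho\{a_\lambda\rho\{b_\mu g\}_z\}_z=\rho\{a_\lambda\{b_\mu g\}_z\}_z$, and likewise $b\,^\rho_\mu\,(a\,^\rho_\lambda\,g)=\rho\{b_\mu\{a_\lambda g\}_z\}_z$. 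Subtracting and invoking the PVA Jacobi identity \eqref{jacobi} of $\mc V(\mf g)$, together with $\{a_\lambda b\}_z=[a,b]$ for $a,b\in\mf n$, this difference equals $\rho\{[a,b]_{\lambda+\mu}\,g\}_z=\{a_\lambda b\}\,^\rho_{\lambda+\mu}\,g$, where $\{a_\lambda b\}=[a,b]$ is the bracket of the Lie conformal algebra $\mb F[\partial]\mf n$.

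Part (d) then follows directly: for $a\in\mf n$ and $g,h\in\mc V(\mf p)$, the left Leibniz rule in $\mc V(\mf g)$ gives $\{a_\lambda gh\}_z=\{a_\lambda g\}_z\,h+\{a_\lambda h\}_z\,g$, and applying the homomorphism $\rho$, which fixes $g$ and $h$, yields $a\,^\rho_\lambda\,(gh)=(a\,^\rho_\lambda\,g)h+(a\,^\rho_\lambda\,h)g$. The main obstacle is the base case of (a): the vanishing of $\rho\{a_\lambda c\}_z$ rests on the two structural facts $[\mf n,\mf m]\subset\mf g_{\geq1}\subset\Ker\pi_{\mf p}$ and $\kappa(f\mid[\mf n,\mf m])=\omega(\mf n,\mf m)=0$ — the second being precisely the isotropy condition encoded in the inclusion $\mf m\subset\mf n$ — and it is these, rather than any computation, that make the Lie conformal algebra action \eqref{20120511:eq1} well defined.
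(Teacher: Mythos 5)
Your proposal is correct and follows essentially the same route as the paper's proof: reduce (a) to generators $c\in\mf m$ (the paper invokes the Master Formula where you use Leibniz and sesquilinearity, which is the same thing), deduce (b) from (a) via the decomposition $\mc V(\mf g)=\mc V(\mf m)\otimes\mc V(\mf p)$ and the left Leibniz rule, and obtain the Jacobi identity in (c) by applying (b) to $\{b_\mu g\}_z$ together with the PVA Jacobi identity of $\mc V(\mf g)$. Your treatment of the base case of (a) is slightly more explicit than the paper's (you also verify $\kappa(a\mid c)=0$ and the vanishing of the $z$-term), but the argument is the same.
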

\begin{proof}
To prove (a), we can use the Master Formula \eqref{masterformula} 
to reduce to the case when $g\in\mf m$.
In this case, (a) is immediate since, by the definitions \eqref{subalgebras} of $\mf m$ and $\mf n$,
we have $[\mf m,\mf n]\subset\mf m$ and $\kappa(f\mid[\mf m,\mf n])=0$.
Next, let us prove part (b). Since, by construction, 
$\mf g=\mf m\oplus\mf p$, we have $\mc V(\mf g)=\mc V(\mf m)\otimes\mc V(\mf p)$.
Part (b) then follows immediately by part (a) and the left Leibniz rule \eqref{lleibniz},
using the fact that $\rho$ acts as the identity on $\mc V(\mf p)$.
As for parts (c) and (d), 
clearly the $\lambda$-action \eqref{20120511:eq1} satisfies sesquilinearity and
the Leibniz rule, since $\rho$ is a differential algebra homomorphism.
We are left to prove the Jacobi identity for this $\lambda$-action.
For $a,b\in\mf n$ and $g\in\mc V(\mf p)$ we have, by part (b),
$$
\begin{array}{c}
\displaystyle{
\vphantom{\Big(}
a\,^\rho_\lambda\,(b\,^\rho_\mu\,g)
-b\,^\rho_\mu\,(a\,^\rho_\lambda\,g)
=\rho\{a_\lambda\{b_\mu g\}_z\}_z-\rho\{b_\mu\{a_\lambda g\}_z\}_z
} \\
\displaystyle{
\vphantom{\Big(}
=\rho\{{\{a_\lambda b\}_z}_{\lambda+\mu} g\}_z
={\{a_\lambda b\}_z}\,^\rho_{\lambda+\mu}\,g\,.
} 
\end{array}
$$
\end{proof}

We let $\mc W\subset\mc V(\mf p)$ be the subspace
annihilated by the Lie conformal algebra action of $\mb F[\partial]\mf n$:
\begin{equation}\label{20120511:eq2}
\mc W=\mc V(\mf p)^{\mb F[\partial]\mf n}
=\big\{g\in\mc V(\mf p)\,\mid\,a\,^\rho_\lambda\,g=0\,\text{ for all }a\in\mf n\}\,.
\end{equation}
\begin{lemma}\phantomsection\label{20120511:lem2}
\begin{enumerate}[(a)]
\item
$\mc W\subset\mc V(\mf p)$ is a differential subalgebra.
\item
For every $g\in\mc W$ and $h\in\mc V(\mf m)$, we have
$\rho\{g_\lambda h\}_z=\rho\{h_\lambda g\}_z=0$.
\item
For every elements $g\in\mc W$ and $h\in\mc V(\mf g)$, we have
$\rho\{g_\lambda\rho(h)\}_z=\rho\{g_\lambda h\}_z$,
and $\rho\{\rho(h)_\lambda g\}_z=\rho\{h_\lambda g\}_z$.
\item
For every $g,h\in\mc W$, we have $\rho\{g_\lambda h\}_z\in\mb F[\lambda]\otimes\mc W$.
\item
The map $\{\cdot\,_\lambda\,\cdot\}_{z,\rho}:\,\mc W\otimes\mc W\to\mb F[\lambda]\otimes\mc W$
given by 
\begin{equation}\label{20120511:eq3}
\{g_\lambda h\}_{z,\rho}=\rho\{g_\lambda h\}_z
\end{equation}
defines a PVA structure on $\mc W$.
\end{enumerate}
\end{lemma}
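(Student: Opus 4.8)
The plan is to establish the five parts in order, using parts (a)--(d) of Lemma \ref{20120511:lem1} as the main input. For part (a): that $\mc W$ is a differential subalgebra follows from the fact, noted in Lemma \ref{20120511:lem1}(d), that the $\lambda$-action of $\mb F[\partial]\mf n$ on $\mc V(\mf p)$ is by conformal derivations, together with sesquilinearity. Indeed, if $g,h\in\mc W$ then $a\,^\rho_\lambda\,(gh)=(a\,^\rho_\lambda\,g)h+(a\,^\rho_\lambda\,h)g=0$, so $gh\in\mc W$; and $a\,^\rho_\lambda\,(\partial g)=(\lambda+\partial)(a\,^\rho_\lambda\,g)=0$, so $\partial g\in\mc W$. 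Thus $\mc W$ is closed under product and under $\partial$, and it clearly contains the constants.

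For part (b): take $g\in\mc W$ and $h\in\mc V(\mf m)$. By Lemma \ref{20120511:lem1}(a), $\rho\{a_\lambda h\}_z=0$ for every $a\in\mf n$. Since $\mf m\subset\mf n$, in particular $\rho\{a_\lambda h\}_z=0$ for $a\in\mf m$; but I want $\rho\{g_\lambda h\}_z$ for $g\in\mc W$, not just $g\in\mf m$. The route is to use skew-symmetry \eqref{skewsim}: $\{h_\lambda g\}_z=-\{g_{-\lambda-\partial}h\}_z$. First one shows $\rho\{h_\lambda g\}_z=0$: writing $g$ as a differential polynomial in the $q_i$'s and applying the left Leibniz rule \eqref{lleibniz} and sesquilinearity, $\{h_\lambda g\}_z$ is a sum of terms each containing a factor $\{h_\lambda q_i\}_z=-\{q_i{}_{-\lambda-\partial}h\}_z$ (by skew-symmetry, with $q_i\in\mf p\subset\mf g$); but actually the cleanest argument is: $h\in\mc V(\mf m)$ and $g\in\mc W$ means $a\,^\rho_\lambda\,g=0$ for all $a\in\mf n\supset\mf m$, and one reduces $\{h_\lambda g\}_z$ via the right Leibniz rule \eqref{rleibniz} to brackets $\{a_\lambda g\}_z$ with $a\in\mf m$, then applies $\rho$ and Lemma \ref{20120511:lem1}(b) to get $\rho\{a_\lambda g\}_z = \rho\{a_\lambda\rho(g)\}_z$; since $\rho(g)=g$ and $a\in\mf m\subset\mf n$, this is $a\,^\rho_\lambda\,g=0$. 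Hence $\rho\{h_\lambda g\}_z=0$, and then skew-symmetry gives $\rho\{g_\lambda h\}_z=-\rho\{h_{-\lambda-\partial}g\}_z=0$ since $\rho$ commutes with $-\lambda-\partial$ (it is a differential algebra homomorphism).

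For part (c): write $\mc V(\mf g)=\mc V(\mf m)\otimes\mc V(\mf p)$ (from $\mf g=\mf m\oplus\mf p$), so any $h\in\mc V(\mf g)$ is a sum of products $h_{\mf m}h_{\mf p}$ with $h_{\mf m}\in\mc V(\mf m)$, $h_{\mf p}\in\mc V(\mf p)$, and $\rho(h)$ replaces each $h_{\mf m}$ by a constant (image of products of $\kappa(f\mid\cdot)$ terms) times $h_{\mf p}$. Using the left Leibniz rule \eqref{lleibniz} to expand $\{g_\lambda h_{\mf m}h_{\mf p}\}_z=\{g_\lambda h_{\mf m}\}_z h_{\mf p}+\{g_\lambda h_{\mf p}\}_z h_{\mf m}$, applying $\rho$, and invoking part (b) to kill the term $\rho(\{g_\lambda h_{\mf m}\}_z h_{\mf p})$ after first commuting $\rho$ past $h_{\mf p}=\rho(h_{\mf p})$, one gets $\rho\{g_\lambda h\}_z=\rho\{g_\lambda\rho(h)\}_z$; the second identity $\rho\{\rho(h)_\lambda g\}_z=\rho\{h_\lambda g\}_z$ follows symmetrically using the right Leibniz rule \eqref{rleibniz} together with part (b) again, or alternatively by skew-symmetry from the first identity. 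Part (d) is then immediate: for $g,h\in\mc W$ and any $a\in\mf n$, $a\,^\rho_\lambda\,(\rho\{g_\lambda' h\}_z)$ should vanish; by Lemma \ref{20120511:lem1}(b) and the Jacobi identity \eqref{jacobi} in $\mc V(\mf g)$, $\rho\{a_\mu\{g_\lambda h\}_z\}_z=\rho\{\{a_\mu g\}_z{}_{\mu+\lambda}h\}_z+\rho\{g_\lambda\{a_\mu h\}_z\}_z$, and by part (c) the RHS equals $\rho\{(a\,^\rho_\mu g)_{\mu+\lambda}h\}_z+\rho\{g_\lambda(a\,^\rho_\mu h)\}_z=0$ since $g,h\in\mc W$; hence $\{g_\lambda h\}_{z,\rho}\in\mb F[\lambda]\otimes\mc W$. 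Finally, part (e): sesquilinearity and the Leibniz rules for $\{\cdot\,_\lambda\,\cdot\}_{z,\rho}$ follow from those of $\{\cdot\,_\lambda\,\cdot\}_z$ because $\rho$ is a differential algebra homomorphism; skew-symmetry follows from that of $\{\cdot\,_\lambda\,\cdot\}_z$ composed with $\rho$ (again $\rho$ commutes with $-\lambda-\partial$); and the Jacobi identity is verified by applying $\rho$ to the Jacobi identity for $\{\cdot\,_\lambda\,\cdot\}_z$ and repeatedly using part (c) to replace inner brackets by their $\rho$-images. The main obstacle is bookkeeping in parts (b) and (c): correctly using skew-symmetry and the two Leibniz rules to reduce brackets of general elements of $\mc W$ (or $\mc V(\mf m)$) to brackets involving generators in $\mf n$ or $\mf m$, where Lemma \ref{20120511:lem1}(a)--(b) applies; once (b) and (c) are in place, (a), (d), (e) are formal consequences.
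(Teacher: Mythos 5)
Your proposal is correct and follows essentially the same route as the paper's proof: (a) from the conformal-derivation property of the $\mb F[\partial]\mf n$-action, (b) by reducing via the Leibniz rules/Master Formula to generators in $\mf m\subset\mf n$ and invoking the definition of $\mc W$ together with skew-symmetry, (c) from (b) via the decomposition $\mc V(\mf g)=\mc V(\mf m)\otimes\mc V(\mf p)$, and (d), (e) by applying $\rho$ to the Jacobi identity in $\mc V(\mf g)$ and using Lemma \ref{20120511:lem1}(b) and part (c) to insert $\rho$ into the inner brackets. The only cosmetic difference is that the paper phrases the reduction in (b) via the Master Formula in one step, whereas you spell out the right Leibniz rule and skew-symmetry separately.
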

\begin{proof}
Part (a) follows from the fact that the $\lambda$-action \eqref{20120511:eq1}
of $\mb F[\partial]\mf n$ on $\mc V(\mf p)$ is by conformal derivations.
As for part (b), 
we can use the Master Formula \eqref{masterformula} to reduce to the case when $h\in\mf m$,
and in this case the statement is obvious by the definition of $\mc W$
(and the fact that $\mf m\subset\mf n$).
Since $\mc V(\mf g)=\mc V(\mf m)\otimes\mc V(\mf p)$,
part (c) follows immediately by part (b) and the left and right Leibniz rules \eqref{lleibniz}-\eqref{rleibniz},
using the fact that $\rho$ acts as the identity on $\mc V(\mf p)$.
Next, let us prove part (d).
For $a\in\mf n$ and $g,h\in\mc W$,
$$
\begin{array}{c}
\rho\{a_\lambda\rho\{g_\mu h\}_z\}_z
=\rho\{a_\lambda\{g_\mu h\}_z\}_z
=\rho\{{\{a_\lambda g\}_z}_{\lambda+\mu} h\}_z+\rho\{g_\mu\{a_\lambda h\}_z\}_z
\\
=\rho\{\rho{\{a_\lambda g\}_z}_{\lambda+\mu} h\}_z+\rho\{g_\mu\rho\{a_\lambda h\}_z\}_z=0\,.
\end{array}
$$
In the first equality we used Lemma \ref{20120511:lem1}(b),
while in the third equality we used part (c).
It follows that $\rho\{g_\mu h\}_z$ lies in $\mb F[\mu]\otimes\mc W$, proving (d).
Finally, let us prove part (e).
Since $\rho$ is a differential algebra homomorphism,
the $\lambda$-bracket \eqref{20120511:eq3} obviously satisfies
sesquilinearity, skewsymmetry, and the left and right Leibniz rules.
We are left to check the Jacobi identity.
For $g,h,k\in\mc W$ we have, by part (c),
$$
\begin{array}{c}
\{h_\lambda\{k_\mu g\}_{z,\rho}\}_{z,\rho}
=\rho\{h_\lambda\{k_\mu g\}_{z}\}_{z}
=\rho\{{\{h_\lambda k\}_{z}}_{\lambda+\mu} g\}_{z}
+\rho\{k_\mu\{h_\lambda g\}_{z}\}_{z}
\\
=\{{\{h_\lambda k\}_{z,\rho}}_{\lambda+\mu} g\}_{z,\rho}
+\{k_\mu\{h_\lambda g\}_{z,\rho}\}_{z,\rho}\,.
\end{array}
$$
\end{proof}
\begin{corollary}\label{20130206:cor}
\begin{enumerate}[(a)]
\item
An element $g\in\mc V(\mf g)$ is such that $\rho(g)\in\mc W$
if and only if $\rho\{a_\lambda g\}_z=0$ for every $a\in\mf n$.
\item
If $g\in\mc V(\mf g)$ is such that $\rho(g)\in\mc W$ and $h\in\mc V(\mf m)$, 
we have $\rho\{g_\lambda h\}_z=\rho\{h_\lambda g\}_z=0$.
\item
If $g\in\mc V(\mf g)$ is such that $\rho(g)\in\mc W$ and $h\in\mc V(\mf g)$, 
we have
$\rho\{g_\lambda\rho(h)\}_z=\rho\{g_\lambda h\}_z$,
and $\rho\{\rho(h)_\lambda g\}_z=\rho\{h_\lambda g\}_z$.
\item
If $g,h\in\mc V(\mf g)$ are such that $\rho(g),\rho(h)\in\mc W$,
then
$\rho\big(\{\rho(g)_\lambda\rho(h)\}_z\big)=\rho\big(\{g_\lambda h\}_z\big)$.
\end{enumerate}
\end{corollary}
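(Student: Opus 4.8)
\textbf{Proof proposal for Corollary \ref{20130206:cor}.}

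The plan is to observe that each of the four statements is obtained from the corresponding statement in Lemmas \ref{20120511:lem1} and \ref{20120511:lem2} by replacing ``$g\in\mc W$'' with ``$g\in\mc V(\mf g)$ such that $\rho(g)\in\mc W$'', and the mechanism that makes this work is part (b) of Lemma \ref{20120511:lem1}, which allows one to freely insert or remove $\rho$ inside a $\lambda$-bracket with an element of $\mf n$ on the left. First I would prove (a): if $g\in\mc V(\mf g)$, then by Lemma \ref{20120511:lem1}(b) we have $\rho\{a_\lambda g\}_z=\rho\{a_\lambda\rho(g)\}_z=a\,^\rho_\lambda\,\rho(g)$ for every $a\in\mf n$; since $\rho(g)\in\mc V(\mf p)$, the condition that this vanishes for all $a\in\mf n$ is, by the very definition \eqref{20120511:eq2} of $\mc W$, equivalent to $\rho(g)\in\mc W$.

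Next, for part (b), given $g$ with $\rho(g)\in\mc W$ and $h\in\mc V(\mf m)$, I would reduce as in the proof of Lemma \ref{20120511:lem2}(b), via the Master Formula \eqref{masterformula} and the Leibniz rules, to the case $h\in\mf m\subset\mf n$. Then $\rho\{h_\lambda g\}_z=0$ is exactly part (a) applied with $a=h$ (using $\rho(g)\in\mc W$), and $\rho\{g_\lambda h\}_z=0$ follows from this together with skew-symmetry of the $\lambda$-bracket and the fact that $\rho$ commutes with the $(-\lambda-\partial)$-substitution. Part (c) is then deduced from (b) exactly as Lemma \ref{20120511:lem2}(c) is deduced from \ref{20120511:lem2}(b): decompose $\mc V(\mf g)=\mc V(\mf m)\otimes\mc V(\mf p)$, write $h=h'\cdot p$ with $h'\in\mc V(\mf m)$, $p\in\mc V(\mf p)$, apply the left and right Leibniz rules \eqref{lleibniz}--\eqref{rleibniz}, note $\rho(h)=h'|_{h'\to\text{const}}\cdot$\dots — more precisely use that $\rho$ is a differential algebra homomorphism acting as the identity on $\mc V(\mf p)$ — and kill the $\mc V(\mf m)$-factors using part (b).

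Finally, part (d) follows by combining both halves of (c): starting from $\rho\{\rho(g)_\lambda\rho(h)\}_z$, apply the second identity in (c) with the roles of $g,h$ interchanged (using $\rho(h)\in\mc W$, viewing $\rho(g)$ as an element of $\mc V(\mf g)$ whose $\rho$-image lies in $\mc W$, which it does since $\rho$ is idempotent) to replace $\rho(g)$ by $g$, obtaining $\rho\{g_\lambda\rho(h)\}_z$; then apply the first identity in (c) with $g$ (since $\rho(g)\in\mc W$) to replace $\rho(h)$ by $h$, obtaining $\rho\{g_\lambda h\}_z$. The only point requiring a little care — and the closest thing to an obstacle — is the bookkeeping in part (c): one must track which variable carries the $\lambda$ and make sure that the right Leibniz rule \eqref{rleibniz}, with its $\lambda+\partial$ shifts, interacts correctly with $\rho$; but since $\rho$ is a differential homomorphism this is routine, and everything else is a direct transcription of the arguments already given for Lemmas \ref{20120511:lem1} and \ref{20120511:lem2}.
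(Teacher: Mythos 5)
Your proposal is correct and follows essentially the same route as the paper: part (a) from Lemma \ref{20120511:lem1}(b) and the definition \eqref{20120511:eq2} of $\mc W$, parts (b) and (c) by repeating the proofs of Lemma \ref{20120511:lem2}(b),(c) with part (a) in place of the definition, and part (d) by chaining the two identities of (c) (the paper cites Lemma \ref{20120511:lem2}(c) for one of the two steps, but since your part (c) subsumes it, this is an immaterial difference).
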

\begin{proof}
By Lemma \ref{20120511:lem1}(b) we have, for $a\in\mf n$ and $g\in\mc V(\mf g)$,
$\rho\{a_\lambda\rho(g)\}_z=\rho\{a_\lambda g\}_z$.
Part (a) follows from the definition \eqref{20120511:eq2} of the space $\mc W$.
The proofs of (b) and (c) are the same as the proofs 
of Lemma \ref{20120511:lem2}(b) and (c) respectively,
using part (a) instead of the definition of $\mc W$.
Finally, 
part (d) is an immediate corollary of Lemma \ref{20120511:lem2}(c)
and of part (c).
\end{proof}

\begin{definition}\label{walg3}
The classical $\mc W$-algebra 
is the differential algebra $\mc W$ defined by \eqref{20120511:eq2}
with the PVA structure given by \eqref{20120511:eq3}.
\end{definition}
\begin{lemma}\label{20130207:lem}
We have $\partial\mc V(\mf g)\cap\mc W=\partial\mc W$.
In particular, 
we have a natural embedding 
$\quot{\mc W}{\partial\mc W}\subset\quot{\mc V(\mf g)}{\partial\mc V(\mf g)}$.
\end{lemma}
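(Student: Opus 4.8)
The inclusion $\partial\mc W\subset\partial\mc V(\mf g)\cap\mc W$ is immediate, since $\mc W$ is a differential subalgebra of $\mc V(\mf g)$ by Lemma \ref{20120511:lem2}(a), so the content is the reverse inclusion: if $w\in\mc W$ can be written as $w=\partial g$ for some $g\in\mc V(\mf g)$, then in fact $g$ can be chosen in $\mc W$. The plan is to first reduce to a normalization of $g$ using the decomposition $\mc V(\mf g)=\mc V(\mf m)\otimes\mc V(\mf p)$ coming from $\mf g=\mf m\oplus\mf p$, and the fact that $\rho$ is the identity on $\mc V(\mf p)$ and kills (the augmentation ideal generated by) $\mf m$. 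Since $\partial g=w\in\mc V(\mf p)$, applying $\rho$ gives $\partial\rho(g)=\rho(\partial g)=w$ (using that $\rho$ commutes with $\partial$), so we may as well replace $g$ by $\rho(g)\in\mc V(\mf p)$: that is, \emph{without loss of generality $g\in\mc V(\mf p)$ and $w=\partial g$}. Now it remains to show that such a $g$ automatically lies in $\mc W$, i.e.\ that $a\,^\rho_\lambda\,g=0$ for all $a\in\mf n$.

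For this I would use sesquilinearity of the Lie conformal algebra action of $\mb F[\partial]\mf n$ on $\mc V(\mf p)$ (Lemma \ref{20120511:lem1}(c)): for $a\in\mf n$,
$$
(\lambda+\partial)\,\big(a\,^\rho_\lambda\,g\big)
= a\,^\rho_\lambda\,(\partial g)
= a\,^\rho_\lambda\,w = 0\,,
$$
the last equality because $w\in\mc W$. Writing $a\,^\rho_\lambda\,g=\sum_{n}\lambda^n c_n$ with $c_n\in\mc V(\mf p)$, the equation $(\lambda+\partial)\sum_n\lambda^n c_n=0$ forces, by comparing coefficients of powers of $\lambda$ (top degree first), $c_n=0$ for all $n$; here one uses that $\mc V(\mf p)$ is a domain, or simply that $\partial$ has no nontrivial kernel on $\mc V(\mf p)\setminus\mb F$ together with the fact that a constant term would have to vanish as well. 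Hence $a\,^\rho_\lambda\,g=0$ for every $a\in\mf n$, so $g\in\mc W$, which proves $\partial\mc V(\mf g)\cap\mc W\subset\partial\mc W$.

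The main obstacle is the harmless-looking step ``$(\lambda+\partial)c=0$ implies $c=0$'': one must be slightly careful because $\mc V(\mf p)$ does contain the constants $\mb F$, on which $\partial$ vanishes; but the relation $(\lambda+\partial)\sum_n\lambda^n c_n=0$ read off in the top two powers of $\lambda$ gives first $c_N=0$ (top coefficient) and then inductively $\lambda^n(\partial c_n+c_{n-1})=0$, hence $c_{n-1}=-\partial c_n$, so all $c_n$ vanish once the top one does — no appeal to constants is needed, so the point is genuinely routine. The final assertion, that $\quot{\mc W}{\partial\mc W}$ embeds into $\quot{\mc V(\mf g)}{\partial\mc V(\mf g)}$, is then immediate: the inclusion $\mc W\hookrightarrow\mc V(\mf g)$ descends to the quotients, and its kernel is exactly $(\partial\mc V(\mf g)\cap\mc W)/\partial\mc W=0$ by what was just shown.
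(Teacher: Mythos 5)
Your proof is correct and follows essentially the same route as the paper's: reduce to a preimage lying in $\mc V(\mf p)$ and then use sesquilinearity, $(\lambda+\partial)\big(a\,^\rho_\lambda\,g\big)=a\,^\rho_\lambda\,(\partial g)=0$, to conclude $a\,^\rho_\lambda\,g=0$ by comparing coefficients of $\lambda$ from the top down. The only cosmetic difference is that you normalize the preimage by applying $\rho$, whereas the paper observes directly that a $\partial$-preimage of an element of $\mc V(\mf p)$ already lies in $\mc V(\mf p)$; both reductions are valid.
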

\begin{proof}
If $p\in\mc V(\mf g)$ is such that $\partial p\in\mc W$,
first it is clear that $p\in\mc V(\mf p)$.
Moreover, by definition of $\mc W$ we have 
$a_\lambda^\rho\partial p=(\lambda+\partial)\big(a_\lambda^\rho p\big)=0$ for $a\in\mf n$,
which immediately implies that $a_\lambda^\rho p=0$, i.e. $p\in\mc W$.
\end{proof}

\begin{remark}\label{wz}
The Poisson vertex algebra $\mc W$ can be constructed in the same way
for an arbitrary choice of $s$ in $\mf g$ 
(taking the Lie conformal algebra $\mb F[\partial]\mf n\oplus\mb F$ of $\mc V$).
However the differential algebra $\mc W$ is independent of the choice of $z\in\mb F$
if and only if $[s,\mf n]=0$.
This independence of $z$ will be very important in the next section,
where we construct integrable hierarchies of Hamiltonian equations,
since there we need to view $z$ as a formal parameter.
\end{remark}

\begin{remark}
In literature, the name classical $\mc W$-algebra is referred to the Poisson 
structure corresponding to the case $z=0$. 
As we will see, the whole family of PVAs $\mc W$, 
parametrized by $z\in\mb F$, plays an important role in obtaining an integrable hierarchy 
of Hamiltonian equations associated to the classical $\mc W$-algebra.
\end{remark}

Recall that we fixed a basis $\{q_i\}_{i\in P}$ of $\mf p$
and the dual basis $\{q^i\}_{i\in P}$ of $\mf m^\perp$.
We can find an explicit formula for the $\lambda$-bracket in $\mc W$ as follows.
Recalling the Master Formula \eqref{masterformula} and using \eqref{rho} 
and the definition \eqref{lambda} of the $\lambda$-bracket in $\mc V$, we get
($g,h\in\mc W$):
\begin{equation}\label{wbrack}
\{g_{\lambda}h\}_{z,\rho}
=\{g_{\lambda}h\}_{H,\rho}-z\{g_{\lambda}h\}_{K,\rho}\,,
\end{equation}
where
\begin{equation}\label{wbrackX}
\{g_{\lambda}h\}_{X,\rho}
=\sum_{\substack{i,j\in P\\m,n\in\mb Z_+}}
\frac{\partial h}{\partial q_j^{(n)}}(\lambda+\partial)^n
X_{ji}(\lambda+\partial)
(-\lambda-\partial)^m\frac{\partial g}{\partial q_i^{(m)}}\,,
\end{equation}
for $X$ one of the two matrices 
$H,K\in\Mat_{k\times k}\mc V(\mf p)[\lambda]$ ($k=\#(P)$),
given by
\begin{equation}\label{HKds}
H_{ji}(\lambda)
=\pi_{\mf p}[q_i,q_j]+\kappa(q_i\mid q_j)\lambda+\kappa(f\mid[q_i,q_j]),
\qquad
K_{ji}(\partial)=\kappa(s\mid[q_j,q_i])
\,,
\end{equation}
for $i,j\in P$.

Recall that a $k\times k$ matrix with entries in $\mc V(\mf p)[\lambda]$
corresponds, via \eqref{eq:identif}, to a linear map 
$\mf p\otimes\mc V(\mf p)\to \mf p^*\otimes\mc V(\mf p)$,
and that we can identify $\mf p^*\simeq\mf m^\perp$ via the bilinear form $\kappa$.
Therefore, we can describe the above matrices $H$ and $K$
as the following linear maps $\mf p\otimes\mc V(\mf p)\to \mf m^\perp\otimes\mc V(\mf p)$:
\begin{equation}\label{HKmaps}
\begin{array}{l}
\displaystyle{
\vphantom{\Big(}
H(a\otimes g)
=\sum_{i\in P}\pi_{\mf m^\perp}[q^i,a]\otimes q_ig
+\pi_{\mf m^\perp}[f,a]\otimes g
+\pi_{\mf m^\perp}(a)\otimes \partial g
\,,} \\
\displaystyle{
\vphantom{\Big(}
K(a\otimes g)
=\pi_{\mf m^\perp}[a,s]\otimes g
\,,
}
\end{array}
\end{equation}
for every $a\in\mf p$ and $g\in\mc V(\mf p)$.

Note that, even though the $\lambda$-bracket \eqref{wbrack} on the PVA $\mc W$
is formally associated to the matrices $H$ and $K$ in \eqref{HKds} 
via the Master Formula \eqref{masterformula},
$H$ and $K$ are NOT Poisson structures (on $\mc V(\mf p)$).
Indeed, $\mc V(\mf p)$ is not a PVA, namely the $\lambda$-bracket $\{\cdot\,_\lambda\,\cdot\}_{z,\rho}$
on $\mc V(\mf p)$ (given by the same formula \eqref{wbrack})
is not a PVA $\lambda$-bracket.

\begin{remark}\label{hamred}
The classical $\mc W$-algebra can be equivalently defined,
without fixing a complementary subspace $\mf p\subset\mf g$ of $\mf m$,
via the so called ``classical Hamiltonian reduction'' (see \cite{DSK06}).
The general construction is as follows.
Let $\mc V$ be a Poisson vertex algebra,
let $\mu:\,R\to\mc V$ be a Lie conformal algebra homomorphism,
and let $I_0\subset S(R)$ be a differential algebra ideal 
such that $[R_\lambda I_0]\subset \mb F[\lambda]\otimes I_0$.
Let $I\subset\mc V$ be the differential algebra ideal generated by $\mu(I_0)$.
The corresponding classical \emph{Hamiltonian reduction} is defined as the differential algebra
$$
\mc W(\mc V,R,I_0)=(\mc V/I)^{\mu(R)}
=\big\{f+I\,\big|\ \{\mu(a)_\lambda f\}\in\mb F[\lambda]\otimes I \text{ for all } a\in R\big\}
\,,
$$
endowed with the $\lambda$-bracket $\{f+I_\lambda g+I\}=\{f_\lambda g\}+\mb F[\lambda]\otimes I$.
It is not hard to show that this $\lambda$-bracket is well defined.
The classical $\mc W$-algebra is obtained by taking
$\mc V=\mc V(\mf g,\kappa,s)$, 
$R=\mb F[\partial]\mf n$, 
and the differential algebra ideal $I_0$ of $S(R)$
generated by $\big\{m-\kappa(f|m)\,\big|\,m\in\mf m\big\}$,
so that $I=\ker\rho$
(note that $\Ker\rho$ is independent of the choice of $\mf p$).
%
%
To see this, let
\begin{equation}\label{wtilde}
\widetilde{\mc W}
=\big\{g\in\mc V(\mf g)\,\mid\,\{a_\lambda g\}_z\in\mb F[\lambda]\otimes\Ker\rho
\text{ for all }a\in\mf n\big\}\subset\mc V(\mf g)\,.
\end{equation}
Since $[s,\mf n]=0$, the space $\widetilde{\mc W}$ is independent of $z$.
Clearly, the map $\rho:\,\mc V(\mf g)\to\mc V(\mf p)$
induces differential algebra isomorphism $\quot{\mc V(\mf g)}{\Ker\rho}\simeq\mc V(\mf p)$,
which restricts to a differential algebra isomorphism
$\quot{\widetilde{\mc W}}{\Ker\rho}\simeq\mc W$.
%
\end{remark}

\begin{remark}\label{uhirin}
The PVA $\mc W$ was constructed in \cite{DSK06}
as a quasiclassical limit of a family of vertex algebras,
obtained by a cohomological construction in \cite{KW04}.
The isomorphism of this construction with the construction in the present paper
via classical Hamiltonian reduction is proved in \cite{Suh12}.
\end{remark}

\subsection{Gauge transformations and Drinfeld-Sokolov approach to classical \texorpdfstring{$\mc W$}{W}-algebras}\label{sec:3.3}

In this section we show that the definition of the classical $\mc W$-algebra
given in Section \ref{sec:3.2}
is equivalent to the original definition of Drinfeld and Sokolov \cite{DS85},
given in terms of gauge invariance.

Recall from Section \ref{sec:hom} 
the definition of the Lie algebra $\mb F\partial\ltimes(\mf g\otimes\mc V(\mf g))$.
%
Let
\begin{equation}\label{q}
q=\sum_{i\in P}q^i\otimes q_i
\,\,\in\mf m^{\perp}\otimes\mc V(\mf p)\,.
\end{equation}
Note that $q=(\pi_{\mf m^\perp}\otimes1)u$,
where $u\in\mf g\otimes\mc V(\mf g)$ was defined in Proposition \ref{int_hier1}.
Let
$$
L=\partial+q+f\otimes1\in\mb F\partial\ltimes(\mf g\otimes\mc V(\mf p))\,.
$$
A \emph{gauge transformation} is, by definition, 
a change of variables formula $q\mapsto q^A\in\mf m^\perp\otimes\mc V(\mf p)$,
for $A\in\mf n\otimes\mc V(\mf p)$, given by
\begin{equation}\label{gauge}
e^{\ad A}L=\partial+q^A+f\otimes1\,.
\end{equation}
In \cite{DS85}, Drinfeld and Sokolov defined the classical $\mc W$-algebra
as the subspace $\mc W\subset\mc V(\mf p)$ consisting of \emph{gauge invariant}
differential polynomials $g$, that is, such that $g(q^A)=g(q)$ for every $A\in\mf n\otimes\mc V(\mf p)$.
Here and further we use the following notation:
for $g\in\mc V(\mf p)$ and $r=\sum_i q^i\otimes r_i\in\mf m^\perp\otimes\mc V(\mf p)$,
we let $g(r)$  be the differential polynomial in $q_1,\dots,q_k$
obtained replacing $q_i^{(m)}$ by $\partial^mr_i$ in the differential polynomial $g$.

In this section we will prove that the space of gauge invariant polynomials coincides
with the space $\mc W$ defined in \eqref{20120511:eq2}.
The key observation is that the action of the gauge group $g\mapsto g(q^A)\in\mc V(\mf p)$
is obtained by exponentiating the Lie conformal algebra action of $\mb F[\partial]\mf n$
on $\mc V(\mf p)$ given by \eqref{20120511:eq1}.
This is stated in the following
\begin{theorem}\label{20120511:thm1}
For every $a\otimes h\in\mf n\otimes\mc V(\mf p)$ and $g\in\mc V(\mf p)$, we have
\begin{equation}\label{20120511:eq4}
g(q^{a\otimes h})=
\sum_{n\in\mb Z_+}\frac{(-1)^n}{n!}
(a\,^\rho_{\lambda_1}\,\dots a\,^\rho_{\lambda_n}\,g)\,
\big(\big|_{\lambda_1=\partial}h\big)\dots\big(\big|_{\lambda_n=\partial}h\big)
\,,
\end{equation}
where,
for a polynomial $p(\lambda_1,\dots,\lambda_n)
=\sum c\, \lambda_1^{i_1}\dots\lambda_n^{i_n}$,
we denote
$$
p(\lambda_1,\dots,\lambda_n)
\big(\big|_{\lambda_1=\partial}h_1\big)\dots\big(\big|_{\lambda_n=\partial}h_n\big)
=\sum c\,(\partial^{i_1}h_1)\dots(\partial^{i_n}h_n)
\,.
$$
\end{theorem}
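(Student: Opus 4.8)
The plan is to reduce the identity to a statement about the one-parameter orbit. Fix $a \in \mf n$ and $h \in \mc V(\mf p)$, and for a formal parameter $t$ introduce the $t$-dependent gauge transformation with parameter $ta \otimes h$, i.e.\ set $q^{ta \otimes h}$ via $e^{t \ad(a \otimes h)}L = \partial + q^{ta \otimes h} + f\otimes 1$. Then $t \mapsto g(q^{ta\otimes h})$ is a formal power series in $t$, and the RHS of \eqref{20120511:eq4} (with $h$ replaced by $th$, equivalently with an extra $t^n$ in the degree-$n$ term) is also a formal power series in $t$; setting $t=1$ at the end recovers the claim. So it suffices to show the two series agree coefficient by coefficient in $t$, and for this I would show they satisfy the same first-order ODE in $t$ with the same initial condition at $t=0$ (where both sides equal $g(q)$).

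\textbf{Key steps.} First, compute $\frac{d}{dt} q^{ta \otimes h}$. Differentiating $e^{t\ad(a\otimes h)}L$ gives $[a \otimes h, e^{t\ad(a\otimes h)}L] = [a\otimes h,\, \partial + q^{ta\otimes h} + f\otimes 1]$, and since $[a\otimes h, \partial] = -\partial(a)\otimes h - a \otimes \partial h$ lives in $\mf n \otimes \mc V(\mf p)$ and $[a \otimes h, f \otimes 1] = [a,f]\otimes h$, the derivative $\frac{d}{dt}q^{ta\otimes h}$ is an explicit element of $\mf m^\perp \otimes \mc V(\mf p)$ obtained by projecting $[a\otimes h, \partial + q + f\otimes 1]$ appropriately; crucially its ``$q$-dependence'' is linear, matching exactly the structure of the map $H$ in \eqref{HKmaps} (this is where \eqref{HKds}, \eqref{HKmaps} and the formula for $a \,^\rho_\lambda\, g$ via $H$ enter). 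Second, by the chain rule $\frac{d}{dt} g(q^{ta\otimes h}) = \sum_{i,m} \frac{\partial g}{\partial q_i^{(m)}}(q^{ta\otimes h})\, \partial^m\!\left(\frac{d}{dt}q^{ta\otimes h}\right)_i$, and I would massage this, using the definition of $a\,^\rho_\lambda\,\cdot$ through $H$ and sesquilinearity (moving the $\partial$'s as shifts in $\lambda$), into the form $\frac{d}{dt}g(q^{ta\otimes h}) = -\big(a\,^\rho_\lambda\,g\big)(q^{ta\otimes h})\big(\big|_{\lambda=\partial}h\big)$, i.e.\ the derivative of the evaluated polynomial is the evaluation of $-a\,^\rho_\lambda\,g$ with $h$ plugged in for $\lambda \mapsto \partial$. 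Third, differentiate the RHS of \eqref{20120511:eq4}: using that $a\,^\rho_\lambda\,(-)$ is a conformal derivation (Lemma \ref{20120511:lem1}(d)) and Leibniz, the $t$-derivative of the degree-$n$ term produces the degree-$(n+1)$ structure times $(-1)$, so the RHS satisfies the \emph{same} ODE $\Phi'(t) = -(a\,^\rho_\lambda\,(-))$ applied to $\Phi(t)$ with $h$ plugged in. By uniqueness of solutions of this formal ODE (both sides agreeing at $t=0$ with $g(q)$), the two series coincide, and $t=1$ gives \eqref{20120511:eq4}.

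\textbf{Main obstacle.} The delicate part is Step~2: carefully matching $\frac{d}{dt}q^{ta\otimes h}$ — which is an \emph{element of} $\mf m^\perp \otimes \mc V(\mf p)$, with its $q$-linear and $q$-independent pieces — against the operator $H$ of \eqref{HKmaps} acting on $a \otimes h$, and then re-expressing the chain-rule sum $\sum_{i,m}\frac{\partial g}{\partial q_i^{(m)}}\partial^m(\cdot)$ in the exact shape of the Master Formula \eqref{wbrackX} so that it becomes $(a\,^\rho_\lambda\, g)$ evaluated at $h$ via $\lambda = \partial$. All the powers of $\lambda+\partial$, the signs $(-\lambda-\partial)^m$, and the bookkeeping of where $\partial$ lands (on $h$ vs.\ inside $q^{ta\otimes h}$) must line up; sesquilinearity \eqref{sesqui} and the commutation relation \eqref{partialcomm2} are the tools that make this work. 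Once the ODE is correctly identified, the recursion on the RHS and the uniqueness argument are routine; it is worth noting that this is essentially the standard fact that exponentiating a conformal (Lie algebra) action on functions produces the substitution action, so the computation is guaranteed to close — the only real work is the careful normalization of indices and signs.
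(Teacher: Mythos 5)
Your strategy --- interpolate with a formal parameter $t$ and show that both sides of \eqref{20120511:eq4} solve the same formal ODE with the same initial condition --- is genuinely different from the paper's argument. The paper expands both sides directly, reduces the identity to the generators $q_i$ (their equation \eqref{20120516:eq2}, for all powers $n$ of the adjoint action at once), and then extends to all of $\mc V(\mf p)$ by checking that both sides commute with $\partial$ and satisfy the same binomial functional equation under products $g_1g_2$. Your route replaces that functional-equation bookkeeping by a first-order recursion in $t$, and only needs the generator identity for a single commutator (your Step 2 at $t=0$ is exactly \eqref{20120516:eq2} with $n=1$, fed through the Master Formula). That is a real economy; the price is that the entire burden shifts onto identifying the ODE correctly, and there your Step 3 is wrong as justified.

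The point is that the operator $D_h(g):=(a\,^\rho_\lambda\,g)\big(\big|_{\lambda=\partial}h\big)$ is indeed a derivation of $\mc V(\mf p)$ commuting with $\partial$, but the right-hand side of \eqref{20120511:eq4} is \emph{not} $e^{-D_h}g$. Writing $a\,^\rho_\lambda\,g=\sum_k c_k\lambda^k$, one has $D_h^2(g)=D_h\big(\sum_k c_k\,\partial^k h\big)=\sum_k D_h(c_k)\,\partial^kh+\sum_k c_k\,D_h(\partial^kh)$, and the second sum --- where the action hits the already-substituted copies of $h$ --- is exactly what the degree-two term $(a\,^\rho_{\lambda_1}a\,^\rho_{\lambda_2}g)\big(\big|_{\lambda_1=\partial}h\big)\big(\big|_{\lambda_2=\partial}h\big)$ does \emph{not} contain. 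So invoking ``conformal derivation plus Leibniz'' to differentiate the series computes the wrong thing. The recursion that actually holds is $T_{n+1}(g)=\sum_k T_n(c_k)\,\partial^kh$, where $T_n$ is the degree-$n$ term of the series; it holds simply because the iterated action is nested innermost-first, so the outer $\lambda$-actions only ever touch the coefficients $c_k$, never the substituted $h$'s. Accordingly the ODE must be read not as a scalar equation $\Phi'=-D_h\Phi$ but as the linear system $\Phi_g'(t)=-\sum_k\Phi_{c_k}(t)\,\partial^kh$ indexed by $g\in\mc V(\mf p)$. Your Step 2, read carefully, shows the left side satisfies precisely this system (the substitution $q\mapsto q^{ta\otimes h}$ lands only on the coefficients of $a\,^\rho_\lambda\,g$, while $h$ is left untouched), both sides agree at $t=0$, and uniqueness of formal power series solutions of the system then closes the argument. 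With Step 3 repaired in this way the proof is correct.
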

\begin{proof}
First, by Lemma \ref{20120511:lem1}(b), the RHS of equation \eqref{20120511:eq4} is
$$
\sum_{n\in\mb Z_+}^\infty\frac{(-1)^n}{n!}
\rho\{a_{\lambda_1}\dots \{a_{\lambda_n}g\}_z\dots\}_z
\big(\big|_{\lambda_1=\partial}h\big)\dots\big(\big|_{\lambda_n=\partial}h\big)
\,.
$$
Next, we expand the LHS of equation \eqref{20120511:eq4}
in Taylor series, using the definition \eqref{gauge} of $q^{a\otimes h}$:
$$
\begin{array}{l}
\displaystyle{
g(q^{a\otimes h})=
g\Big(q+\sum_{n\geq1}\frac1{n!}(\ad a\otimes h)^n(\partial+f\otimes1+q)\Big)
} \\
\displaystyle{
=
\!\!\!\!\!\sum_{\substack{s\in\mb Z_+,i_1,\dots,i_s\in P \\ m_1,\dots,m_s\in\mb Z_+ \\ n_1,\dots,n_s\geq1}}\!\!
\frac{1}{s!n_1!\dots n_s!}
\frac{\partial^s g}{\partial q_{i_1}^{(m_1)}\dots\partial q_{i_s}^{(m_s)}}
\big(\partial^{m_1}(\ad a\otimes h)^{n_1}(\partial+f\otimes1+q)\big)_{i_1}
\dots
} \\
\displaystyle{
\vphantom{\Big(}
\,\,\,\,\,\,\,\,\,\,\,\,\,\,\,\,\,\,\,\,\,\,\,\,\,\,\,\,\,\,\,\,\,\,\,\,\,\,\,\,\,\,\,\,\,\,\,\,\,\,\,\,\,\,\,\,\,\,\,\,\,\,\,\,\,\,\,\,\,\,\,\,\,\,\,\,\,\,\,\,\,
\dots
\big(\partial^{m_s}(\ad a\otimes h)^{n_s}(\partial+f\otimes1+q)\big)_{i_s}\,.
}
\end{array}
$$
Combining the above equations, we deduce that equation \eqref{20120511:eq4}
is equivalent to (for every $N\in\mb Z_+$):
\begin{equation}\label{20120516:eq1}
\begin{array}{l}
\displaystyle{
\rho\{a_{\lambda_1}\dots \{a_{\lambda_N}g\}_z\dots\}_z
\big(\big|_{\lambda_1=\partial}h\big)\dots\big(\big|_{\lambda_N=\partial}h\big)
} \\
\displaystyle{
=
\sum_{\substack{s\in\mb Z_+,\,i_1,\dots,i_s\in P \\ m_1,\dots,m_s\in\mb Z_+ \\ 
n_1,\dots,n_s\geq1 \\ (n_1+\dots+n_s=N)}}
\frac{N!}{s!n_1!\dots n_s!} (-1)^N
\frac{\partial^s g}{\partial q_{i_1}^{(m_1)}\dots\partial q_{i_s}^{(m_s)}}
} \\
\displaystyle{
\vphantom{\Big(}
\times
\big(\partial^{m_1}(\ad a\otimes h)^{n_1}(\partial+f\otimes1+q)\big)_{i_1}
\dots\big(\partial^{m_s}(\ad a\otimes h)^{n_s}(\partial+f\otimes1+q)\big)_{i_s}
\,.}
\end{array}
\end{equation}

We start by proving equation \eqref{20120516:eq1} when $g=q_i$, for every $i\in P$.
Namely, we need to prove that, for every $n\geq1$, we have
\begin{equation}\label{20120516:eq2}
\rho\{a_{\lambda_1}\dots \{a_{\lambda_n}q_i\}_z\dots\}_z
\big(\big|_{\lambda_1=\partial}h\big)\dots\big(\big|_{\lambda_n=\partial}h\big)
=
(-1)^n
\big((\ad a\otimes h)^n(\partial+f\otimes1+q)\big)_i
\,.
\end{equation}
By the second completeness relation \eqref{complete} and the invariance of the bilinear form $\kappa$, 
we have
$$
\begin{array}{l}
\displaystyle{
\vphantom{\Big(}
(-1)^n
\big((\ad a\otimes h)^n(\partial+f\otimes1+q)\big)_i
} \\
\vphantom{\Big(}
\displaystyle{
=(-1)^n\kappa\big((\ad a\otimes h)^n(\partial+f\otimes1+q)\,\mid\,q_i\otimes1\big)
} \\
\vphantom{\Big(}
\displaystyle{
=
\delta_{n,1}\kappa(a\mid q_i)\partial h
+\kappa\big(f\,\mid\,(\ad a)^n(q_i)\big)h^n
+\pi_{\mf p}(\ad a)^n(q_i)h^n
\,,}
\end{array}
$$
which is the same as the LHS of \eqref{20120516:eq2}.
Note that, in view of the above computation, the LHS of \eqref{20120516:eq2}
is the same as $\big((a^\rho_\partial)^nq_i\big)_\to h^n$,
where, as usual, the arrow means that $\partial$ should be moved to the right (to act on $h^n$).

In view of \eqref{20120516:eq2} (and the above observation), 
equation \eqref{20120516:eq1} can be rewritten as follows
\begin{equation}\label{20120516:eq3}
\begin{array}{l}
\vphantom{\Bigg(}
\displaystyle{
\rho\{a_{\lambda_1}\dots \{a_{\lambda_N}g\}_z\dots\}_z
\big(\big|_{\lambda_1=\partial}h\big)\dots\big(\big|_{\lambda_N=\partial}h\big)
} \\
\displaystyle{
=
\sum_{\substack{s\in\mb Z_+,\,i_1,\dots,i_s\in P \\ m_1,\dots,m_s\in\mb Z_+ \\ 
n_1,\dots,n_s\geq1 \\ (n_1+\dots+n_s=N)}}
\frac{N!}{s!n_1!\dots n_s!}
\frac{\partial^s g}{\partial q_{i_1}^{(m_1)}\dots\partial q_{i_s}^{(m_s)}}
} \\
\displaystyle{
\times
\Big(\partial^{m_1}
\big((a^\rho_\partial)^{n_1}q_{i_1}\big)_\to h^{n_1}
\Big)
\dots
\Big(\partial^{m_s}
\big((a^\rho_\partial)^{n_s}q_{i_s}\big)_\to h^{n_s}
\Big)
\,.}
\end{array}
\end{equation}
Let us denote the two sides of equation \eqref{20120516:eq3}
by $LHS_N(g)$ and $RHS_N(g)$.
The identity $LHS_N(q_i)=RHS_N(q_i)$ for every $i\in P$ and every $N\in\mb Z_+$
is given by the above observations.
Moreover, it is easy to check that
$LHS_N(\partial g)=\partial LHS_N(g)$ and, using equation \eqref{partialcomm2} $s$ times,
we also have that $RHS_N(\partial g)=\partial RHS_N(g)$.
Furthermore, it is not difficult to prove, using the left Leibniz rule \eqref{lleibniz},
that $LHS_N(g)$ satisfies the functional equation
$$
LHS_N(g_1g_2)=\sum_{n=0}^N\binom{N}{n} LHS_n(g_1)LHS_{N-n}(g_2)\,.
$$
In order to prove that \eqref{20120516:eq3} holds for every $g\in\mc V(\mf p)$,
it suffices to show that $RHS_N(g)$ satisfies the same functional equation.
We have, by the Leibniz rule for partial derivatives,
$$
\begin{array}{l}
\vphantom{\Bigg(}
\displaystyle{
RHS_N(g_1g_2)
=
\sum_{\substack{s\in\mb Z_+,\,i_1,\dots,i_s\in P \\ m_1,\dots,m_s\in\mb Z_+ \\ 
n_1,\dots,n_s\geq1 \\ (n_1+\dots+n_s=N)}}
\frac{N!}{s!n_1!\dots n_s!}
\frac{\partial^s g_1g_2}{\partial q_{i_1}^{(m_1)}\dots\partial q_{i_s}^{(m_s)}}
} \\
\displaystyle{
\,\,\,\,\,\,\,\,\,\,\,\,\,\,\,\,\,\,\,\,\,\,\,\,\,\,\,\,\,\,\,\,\,\,\,\,
\times
\Big(\partial^{m_1}
\big((a^\rho_\partial)^{n_1}q_{i_1}\big)_\to h^{n_1}
\Big)
\dots
\Big(\partial^{m_s}
\big((a^\rho_\partial)^{n_s}q_{i_s}\big)_\to h^{n_s}
\Big)
} \\
\displaystyle{
=
\sum_{\substack{s\in\mb Z_+,\,i_1,\dots,i_s\in P \\ m_1,\dots,m_s\in\mb Z_+ \\ 
n_1,\dots,n_s\geq1 \\ (n_1+\dots+n_s=N)}}
\frac{N!}{s!n_1!\dots n_s!}
\sum_{a=0}^s
\binom{s}{a}
\frac{\partial^a g_1}{\partial q_{i_1}^{(m_1)}\dots\partial q_{i_a}^{(m_a)}}
\frac{\partial^{s-a} g_2}{\partial q_{i_{a+1}}^{(m_{a+1})}\dots\partial q_{i_s}^{(m_s)}}
} \\
\displaystyle{
\,\,\,\,\,\,\,\,\,\,\,\,\,\,\,\,\,\,\,\,\,\,\,\,\,\,\,\,\,\,\,\,\,\,\,\,
\times
\Big(\partial^{m_1}
\big((a^\rho_\partial)^{n_1}q_{i_1}\big)_\to h^{n_1}
\Big)
\dots
\Big(\partial^{m_s}
\big((a^\rho_\partial)^{n_s}q_{i_s}\big)_\to h^{n_s}
\Big)
} \\
\displaystyle{
=\sum_{n=0}^N\binom{N}{n}RHS_n(g_1)RHS_{N-n}(g_2)
\,.
}
\end{array}
$$
\end{proof}
\begin{remark}\label{ref:confgroup}
The gauge transformation $g(q)\mapsto g(q^A)$ is not a group action on $\mc V(\mf p)$.
In view of Theorem \ref{20120511:thm1}, it is rather a ``Lie conformal group'' action.
\end{remark}
As immediate consequence of Theorem \ref{20120511:thm1} we have the following result.
\begin{corollary}\label{20120511:cor}
The space of gauge invariant differential polynomials $g\in\mc V(\mf p)$
coincides with the differential algebra $\mc W$ defined in \eqref{20120511:eq2}.
\end{corollary}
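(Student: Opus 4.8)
The plan is to deduce Corollary \ref{20120511:cor} directly from Theorem \ref{20120511:thm1}, by unwinding what ``gauge invariant'' and ``annihilated by the $\lambda$-action of $\mb F[\partial]\mf n$'' each mean, and comparing them via formula \eqref{20120511:eq4}.

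First I would argue the easy inclusion $\mc W\subset\{\text{gauge invariant polynomials}\}$. Let $g\in\mc W$, so that $a\,^\rho_\lambda\,g=0$ for every $a\in\mf n$. Given any $A=a\otimes h\in\mf n\otimes\mc V(\mf p)$ with $a\in\mf n$ fixed, every summand on the RHS of \eqref{20120511:eq4} with $n\geq1$ contains a factor $a\,^\rho_{\lambda_1}\dots a\,^\rho_{\lambda_n}\,g$, which vanishes since already the innermost application $a\,^\rho_{\lambda_n}\,g=0$. Hence only the $n=0$ term survives and $g(q^{a\otimes h})=g(q)$. A general element $A\in\mf n\otimes\mc V(\mf p)$ can be written as a finite sum $A=\sum_{k}a_k\otimes h_k$ with $a_k\in\mf n$, and the gauge transformation $e^{\ad A}$ is a composition of the one-parameter-type transformations together with the Baker--Campbell--Hausdorff relations; so invariance under each $a_k\otimes h_k$ (for all choices of $h_k$, in particular after conjugating the base point) gives invariance under $A$. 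More cleanly, it suffices to note that the ``Lie conformal group'' generated by the $e^{\ad(a\otimes h)}$, $a\in\mf n$, is exactly the gauge group $\exp(\mf n\otimes\mc V(\mf p))$, so a polynomial fixed by all the generators is gauge invariant.

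Conversely, suppose $g\in\mc V(\mf p)$ is gauge invariant. Fix $a\in\mf n$ and apply \eqref{20120511:eq4} with $h$ replaced by $\epsilon h$ for a formal scalar $\epsilon$ (equivalently, track the polynomial degree in $h$): the RHS becomes $\sum_{n\geq0}\frac{(-\epsilon)^n}{n!}(a\,^\rho_{\lambda_1}\dots a\,^\rho_{\lambda_n}\,g)(\big|_{\lambda_i=\partial}h)$, while the LHS is constant in $\epsilon$ by gauge invariance. Equating the coefficient of $\epsilon^1$ gives, for every $a\in\mf n$ and every $h\in\mc V(\mf p)$,
$$
(a\,^\rho_\lambda\,g)\big(\big|_{\lambda=\partial}h\big)=0\,.
$$
Since this holds for all $h$ in the algebra of differential polynomials $\mc V(\mf p)$ — in particular taking $h$ to be a single generator $q_i$, so that $\partial^k h=q_i^{(k)}$ are algebraically independent — the polynomial $a\,^\rho_\lambda\,g\in\mb F[\lambda]\otimes\mc V(\mf p)$ must have all coefficients zero, i.e. $a\,^\rho_\lambda\,g=0$. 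As this holds for every $a\in\mf n$, we conclude $g\in\mc W$.

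The only genuine subtlety — and the step I would be most careful about — is the reduction in the first direction from invariance under the ``elementary'' gauge transformations $g\mapsto g(q^{a\otimes h})$ to invariance under an arbitrary $A\in\mf n\otimes\mc V(\mf p)$: one must make sure the group generated by the elementary transformations is all of $\exp(\mf n\otimes\mc V(\mf p))$, which follows because $\mf n\otimes\mc V(\mf p)$ is spanned (over $\mb F$) by elements of the form $a\otimes h$ with $a\in\mf n$, and repeated use of Baker--Campbell--Hausdorff (legitimate here since $\mf n$, hence $\mf n\otimes\mc V(\mf p)$, is nilpotent) expresses $e^{\ad A}$ as a finite product of such elementary exponentials. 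In the other direction the extraction of the linear-in-$h$ term is completely routine once one observes that the $n$-th summand in \eqref{20120511:eq4} is homogeneous of degree $n$ in $h$.
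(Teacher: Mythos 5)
Your overall strategy is the one the paper intends: the paper gives no proof at all of Corollary \ref{20120511:cor} (it is declared an ``immediate consequence'' of Theorem \ref{20120511:thm1}), and the two inclusions you extract from formula \eqref{20120511:eq4} --- vanishing of all $n\geq 1$ terms for $g\in\mc W$, and isolation of the degree-one part in $h$ for the converse --- are exactly the suppressed details. Your identification of the reduction from elementary tensors $a\otimes h$ to arbitrary $A\in\mf n\otimes\mc V(\mf p)$ as the delicate point is also right; since $g\mapsto g(q^A)$ is not a group action (Remark \ref{ref:confgroup}), the composition law is $T_B\circ T_A=T_C$ with $e^{\ad C}=e^{\ad A(q^B)}\,e^{\ad B}$, and one should note that the class of elementary tensors is stable under the substitution $h\mapsto h(q^B)$ appearing there, so that the standard generation argument for the nilpotent group $\exp(\mf n\otimes\mc V(\mf p))$ applies; your parenthetical ``after conjugating the base point'' gestures at this correctly.

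There is, however, one step whose written justification does not work as stated. In the converse direction you deduce $a\,^\rho_\lambda\,g=0$ from $(a\,^\rho_\lambda\,g)\big(\big|_{\lambda=\partial}h\big)=0$ for all $h$ by ``taking $h$ to be a single generator $q_i$, so that the $q_i^{(k)}$ are algebraically independent.'' Writing $a\,^\rho_\lambda\,g=\sum_k\lambda^k c_k$, the choice $h=q_i$ gives only $\sum_k c_k\,q_i^{(k)}=0$, and this does \emph{not} force $c_k=0$, because the coefficients $c_k\in\mc V(\mf p)$ may themselves involve the variables $q_i^{(k)}$ (e.g.\ $c_0=q_i'$, $c_1=-q_i$ satisfy $c_0q_i+c_1q_i'=0$). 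The repair is immediate: choose instead $h=q_i^{(M)}$ with $M$ strictly larger than the differential order of every $c_k$; then the variables $q_i^{(M+k)}$ occurring in $\partial^k h$ do not appear in any coefficient, and algebraic independence of the $q_i^{(n)}$ gives $c_k=0$ for all $k$. With this one-line fix the proof is complete.
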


\subsection{Generators of the classical \texorpdfstring{$\mc W$}{W}-algebra}\label{sec:3.4}

Using the description of the classical $\mc W$-algebras in terms of gauge invariance,
we will prove, following the ideas of Drinfeld and Sokolov, that the differential algebra
$\mc W$ is an algebra of differential polynomials in $r=\dim\Ker(\ad f)$ variables, 
and we will provide an algorithm to find explicit generators.
A cohomological proof of this for quantum $\mc W$-algebras,
which also works for classical $\mc W$-algebras,
was given in \cite{KW04,DSK06}.

Note that, by the definitions \eqref{subalgebras} of $\mf m$ and $\mf n$,
we have $[f,\mf n]\subset\mf m^\perp$.
Furthermore, from representation theory of $\mf{sl}_2$, we know that
$\ad f:\,\mf n\to\mf m^\perp$ is an injective map.
Fix a subspace $V\subset\mf m^\perp$ complementary to $[f,\mf n]$,
compatible with the direct sum decomposition \eqref{dec}:
$V=\bigoplus_{i\geq0}V_i$, where $V_i\subset\mf m^\perp\cap\mf g_i$
is a subspace complementary to $[f,\mf n\cap\mf g_{i+1}]$.
Clearly, $\dim(V)=\dim\Ker(\ad f)$
(by representation theory of $\mf{sl}_2$ we can choose, for example, $V=\Ker(\ad e)$).

Before stating the main result of this section, we introduce some important gradings.
In the algebra of differential polynomials $\mc V(\mf p)$
we have the usual \emph{polynomial grading}, 
$\mc V(\mf p)=\bigoplus_{n\in\mb Z_+}\mc V(\mf p)(n)$.
For a homogeneous polynomial $g\in\mc V(\mf p)$,
we denote by $\deg(g)$ its degree,
and for an arbitrary polynomial $g$ we let $g(n)$ be its homogeneous component of degree $n$.
Also, we extend this decomposition to $\mf m^\perp\otimes\mc V(\mf p)$
by looking only at the second factor in the tensor products.
We thus have the polynomial degree decomposition
\begin{equation}\label{deg1}
\mf m^\perp\otimes\mc V(\mf p)
=\bigoplus_{n\in\mb Z_+}\mf m^\perp\otimes\mc V(\mf p)(n)\,.
\end{equation}

The algebra $\mf g$, as well as its subspace $\mf m^\perp$ 
and its subalgebra $\mf n\subset\mf m^\perp$,
has the decomposition \eqref{dec} by $\ad x$-eigenspaces.
We extend this grading to
the Lie algebra $\mf g\otimes\mc V(\mf p)$
by looking only at the first factor in the tensor product:
\begin{equation}\label{deg2}
\mf g\otimes\mc V(\mf p)
=\bigoplus_{i\in\frac12\mb Z}\mf g_i\otimes\mc V(\mf p)\,.
\end{equation}
For a homogeneous element $X\in\mf g\otimes\mc V(\mf p)$
we denote by $\dx(X)$ its $\ad x$-eigenvalue,
and for an arbitrary element we let $X[i]$ be its homogeneous component 
of $\ad x$-eigenvalue $i$.

In the algebra of differential polynomials $\mc V(\mf p)$
we introduce a second  grading, which we call \emph{conformal weight}
and denote by $\Delta$, defined as follows.
For a monomial $g=a_1^{(m_1)}\dots a_s^{(m_s)}$,
product of derivatives of elements $a_i\in\mf p$ homogeneous with respect 
to the $\ad x$-eigenspace decomposition \eqref{dec},
we define its conformal weight as
\begin{equation}\label{degcw}
\Delta(g)=s-\dx(a_1)-\dots-\dx(a_s)+m_1+\dots+m_s\,.
\end{equation}
As we will see below, this grading restricts to the conformal weight w.r.t. 
an explicitly defined  Virasoro element of $\mc W$,
hence the name ``conformal weight''.
Thus we get the conformal weight space decomposition
$\mc V(\mf p)=\bigoplus_{i\in\frac12\mb Z}\mc V(\mf p)\{i\}$.

Finally, we define a grading of the Lie algebra
$\mb F\partial\ltimes(\mf g\otimes\mc V(\mf p))$,
which we call \emph{weight} and denote by $\wt$, as follows.
We let $\wt(\partial)=1$,
and, for $a\otimes g\in\mf g\otimes\mc V(\mf p)$, we let
$\wt(a\otimes g)=-\dx(a)+\Delta(g)$.
We thus get the corresponding weight space decomposition
\begin{equation}\label{deg3}
\mf g\otimes\mc V(\mf p)
=\bigoplus_{i\in\frac12\mb Z}(\mf g\otimes\mc V(\mf p))\{i\}\,,
\end{equation}
where $(\mf g\otimes\mc V(\mf p))\{i\}=\bigoplus_j\mf g_{j}\otimes(\mc V(\mf p)\{i+j\})$.
It is immediate to check that this is indeed a Lie algebra grading of the Lie algebra 
$\mb F\partial\ltimes(\mf g\otimes\mc V(\mf p))$.

\begin{lemma}\label{linalg}
Assume that $\mf g^f\subset\mf p$ (this is the case, for example, if $\mf p\subset\mf g$ is compatible 
with the $\ad x$-eigenspace decomposition \eqref{dec}).
Consider the direct sum decomposition $\mf m^\perp=[f,\mf n]\oplus V$,
and let $\{v^j\}_{j\in P}$ be a basis of $\mf m^\perp$,
such that $\{v^j\}_{j\in J}$ is a basis of $V$
and $\{v^j\}_{j\in P\backslash J}$ is a basis of $[f,\mf n]$.
Consider the dual (with respect to $\kappa$) basis $\{v_j\}_{j\in P}$ of $\mf p$.
Then $\{v_j\}_{j\in J}$ is a basis of $\mf g^f$.
\end{lemma}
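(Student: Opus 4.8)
The plan is to show that $\mathrm{span}\{v_j\mid j\in J\}$ equals $\mf g^f=\Ker(\ad f)$. Since $\{v_j\}_{j\in J}$ is a subset of a basis of $\mf p$ it is linearly independent, and by construction $\#J=\dim V=\dim\Ker(\ad f)$; hence it suffices to prove the inclusion $\mathrm{span}\{v_j\mid j\in J\}\subset\mf g^f$. For $j\in J$ the dual basis relations $\kappa(v^i\mid v_j)=\delta_{ij}$ give $\kappa(v^i\mid v_j)=0$ for all $i\in P\setminus J$, i.e. $\kappa([f,\mf n]\mid v_j)=0$, since $\{v^i\}_{i\in P\setminus J}$ spans $[f,\mf n]$. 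So everything reduces to the following claim: if $v\in\mf p$ satisfies $\kappa(v\mid[f,\mf n])=0$, then $v\in\mf g^f$.

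To prove the claim I would first rewrite the hypothesis using invariance of $\kappa$: $\kappa(v\mid[f,n])=-\kappa([f,v]\mid n)$ for $n\in\mf n$, so the condition reads $[f,v]\in\mf n^\perp$. Next I would exploit the $\ad x$-grading \eqref{dec}. Since $\mf g_{\geq1}\subset\mf n$, one has $\mf n^\perp\subset(\mf g_{\geq1})^\perp=\bigoplus_{i\geq-\frac12}\mf g_i$. Write $v=\sum_{i\leq\frac12}v_i$ with $v_i\in\mf g_i$; this is possible because $\mf p\subset\mf g_{\leq\frac12}$, keeping in mind that $\mf p$ itself is not assumed graded. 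Then $[f,v]=\sum_{i\leq\frac12}[f,v_i]$ with $[f,v_i]\in\mf g_{i-1}$, and comparing $\ad x$-homogeneous components with $[f,v]\in\bigoplus_{i\geq-\frac12}\mf g_i$ forces $[f,v_i]=0$ for all $i\leq0$; thus each such $v_i$ already lies in $\mf g^f$, and $[f,v]=[f,v_{\frac12}]\in\mf n^\perp\cap\mf g_{-\frac12}$.

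The main obstacle is the identification $\mf n^\perp\cap\mf g_{-\frac12}=[f,\mf l]$. Here $\mf n=\mf l^{\perp\omega}\oplus\mf g_{\geq1}$ gives $\mf n^\perp=(\mf l^{\perp\omega})^\perp\cap(\mf g_{\geq1})^\perp$; intersecting with $\mf g_{-\frac12}$, which lies in $(\mf g_{\geq1})^\perp$, leaves $\{a\in\mf g_{-\frac12}\mid\kappa(a\mid\mf l^{\perp\omega})=0\}$. Since the pairing $\kappa\colon\mf g_{-\frac12}\times\mf g_{\frac12}\to\mb F$ is perfect and $\dim\mf l^{\perp\omega}=\dim\mf g_{\frac12}-\dim\mf l$, this space has dimension $\dim\mf l=\dim[f,\mf l]$ (using bijectivity of $\ad f$ on $\mf g_{\frac12}$); and it contains $[f,\mf l]$, since for $\ell\in\mf l$ and $b\in\mf l^{\perp\omega}$ one has $\kappa([f,\ell]\mid b)=\kappa(f\mid[\ell,b])=\omega(\ell,b)=0$. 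Hence the two spaces coincide, and since $[f,v_{\frac12}]\in[f,\mf l]$ with $\ad f$ injective on $\mf g_{\frac12}$, we get $v_{\frac12}\in\mf l$.

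It remains to conclude. We have $v=w+v_{\frac12}$ with $w:=\sum_{i\leq0}v_i\in\mf g^f$ and $v_{\frac12}\in\mf l$; by the hypothesis $\mf g^f\subset\mf p$ we have $w\in\mf p$, while $v_{\frac12}\in\mf l\subset\mf m$. Since $v\in\mf p$ and $\mf g=\mf p\oplus\mf m$, uniqueness of this decomposition forces $v_{\frac12}=0$, so $v=w\in\mf g^f$, proving the claim and hence the lemma. Aside from the linear-algebra identification of $\mf n^\perp\cap\mf g_{-\frac12}$, which only uses $\mf{sl}_2$-representation theory and non-degeneracy of $\omega$ on $\mf g_{\frac12}$, the single subtle point is that $\mf p$ need not be graded, which is precisely why one must reassemble the graded components $v_i$ into $w$ before invoking the decomposition $\mf g=\mf p\oplus\mf m$.
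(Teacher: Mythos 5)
Your proof is correct, but it takes a genuinely different route from the paper's. Both arguments rest on the count $\dim V=\dim\mf g^f$ plus one inclusion, but you each prove the \emph{opposite} inclusion. The paper identifies $\mf g\simeq\mf g^*$ via $\kappa$, observes that the dual basis $\{v_j\}_{j\in J}$ spans $(\mf p^\perp\oplus[f,\mf n])^\perp$, proves the easy containment $\mf g^f\subset(\mf p^\perp\oplus[f,\mf n])^\perp$ (a one-line consequence of invariance of $\kappa$ and the hypothesis $\mf g^f\subset\mf p$), and lets the dimension count do the rest. You instead prove directly the hard containment $\{v\in\mf p\mid\kappa(v\mid[f,\mf n])=0\}\subset\mf g^f$, which requires real structural input: the graded analysis of $[f,v]\in\mf n^\perp$, the identification $\mf n^\perp\cap\mf g_{-\frac12}=[f,\mf l]$ (which is where the isotropy of $\mf l$ and the non-degeneracy of $\omega$ on $\mf g_{\frac12}$ actually enter), and a final use of $\mf g^f\subset\mf p$ to kill the $\mf g_{\frac12}$-component via $\mf p\cap\mf m=0$. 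All of these steps check out, including your careful handling of the fact that $\mf p$ need not be graded. The paper's duality argument is more economical; your argument is longer but more informative, since it exhibits concretely \emph{why} an element of $\mf p$ orthogonal to $[f,\mf n]$ must centralize $f$, rather than deducing it from a dimension match.
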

\begin{proof}
First note that, if $\mf p\subset\mf g$ is compatible 
with the $\ad x$-eigenspace decomposition \eqref{dec}, then
$\mf p=\mf p_{\frac12}\oplus\mf g_{\leq0}$, where $\mf p_{\frac12}\subset\mf g_{\frac12}$.
Hence, $\mf g^f\subset\mf g_{\leq0}\subset\mf p$
(proving the statement in parenthesis).

We identify $\mf g$ with $\mf g^*$ via the bilinear form $\kappa$.
The decomposition $\mf g=\mf m\oplus\mf p$ corresponds,
via this identification, to the ``dual'' decomposition
$\mf g=\mf p^\perp\oplus\mf m^\perp$, 
with $\mf p^\perp\simeq\mf m^*$ and $\mf m^\perp\simeq\mf p^*$.
Similarly, the decomposition $\mf m^\perp=[f,\mf n]\oplus V$
corresponds, via the same identification, to the ``dual'' decomposition
$\mf p=(\mf p^\perp\oplus V)^\perp\oplus(\mf p^\perp\oplus[f,\mf n])^\perp$,
with $(\mf p^\perp\oplus V)^\perp\simeq[f,\mf n]^*$ and $(\mf p^\perp\oplus[f,\mf n])^\perp\simeq V^*$.
Hence, in order to prove the lemma, we only have to show that
$$
(\mf p^\perp\oplus[f,\mf n])^\perp=\mf g^f\,.
$$
Since, by assumption, $\mf g^f\subset\mf p$,
the inclusion $\mf g^f\subset(\mf p^\perp\oplus[f,\mf n])^\perp$ is obvious.
On the other hand, $(\mf p^\perp\oplus[f,\mf n])^\perp\simeq V^*$,
so that $\dim(\mf p^\perp\oplus[f,\mf n])^\perp=\dim(V)=\dim(\mf g^f)$,
proving the claim.
\end{proof}

\begin{theorem}\phantomsection\label{20120511:thm2}
\begin{enumerate}[(a)]
\item
There exists a unique $X\in\mf n\otimes\mc V(\mf p)$
such that $w=q^X$ lies in $V\otimes\mc V(\mf p)$.
Moreover, $X$ and $w$ are homogeneous elements of
$\mf g\otimes\mc V(\mf p)$ 
with respect to the weight space decomposition \eqref{deg3},
with weights 
$\wt(X)=0$ and $\wt(w)=1$.
\item
Let $X\in\mf n\otimes\mc V(\mf p)$ be as in (a).
Assume that $\mf p\subset\mf g$ is compatible 
with the $\ad x$-eigenspace decomposition \eqref{dec}.
Let $\{v^j\}_{j\in P}$ be a basis of $\mf m^\perp$ consisting of $\ad x$-eigenvectors,
such that $\{v^j\}_{j\in J}$ is a basis of $V$, 
and $\{v^j\}_{j\in P\backslash J}$ is a basis of $[f,\mf n]$.
Let $\{v_j\}_{j\in P}$ be the corresponding dual basis of $\mf p$,
so that, by Lemma \ref{linalg}, 
$\{v_j\}_{j\in J}$ is a basis of $\mf g^f$.
Then, writing
$$
q^X=w=\sum_{j\in J}v^j\otimes w_j\in V\otimes\mc V(\mf p)\,,
$$
we have that
$w_j\in\mc V(\mf p)$ is homogeneous with respect to the conformal weight decomposition,
of conformal weight
$\Delta(w_j)=1+\dx(v^j)$,
and it has the form
\begin{equation}\label{20120724:eq2}
w_j=v_j+g_j\,,
\end{equation}
where 
$g_j=\sum b_1^{(m_1)}\dots b_s^{(m_s)}\in\mc V(\mf p)\{1+\dx(v^j)\}$
is a sum with $s+m_1+\dots+m_s>1$.
\item
The differential algebra $\mc W$ is the algebra of differential polynomials in
the variables $w_1,\dots,w_r$.
\end{enumerate}
\end{theorem}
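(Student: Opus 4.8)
The plan is to prove (a) by a grade-by-grade recursion in the spirit of Drinfeld--Sokolov, and then to deduce (b) and (c) from it.

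For (a), write $L=\partial+f\otimes1+q$ with $q$ as in \eqref{q}, and for $X\in\mf n\otimes\mc V(\mf p)$ expand
\[
q^X=e^{\ad X}L-\partial-f\otimes1
=q-(\ad f\otimes1)X-\partial X+[X,q]+\sum_{n\geq2}\tfrac1{n!}(\ad X)^nL\,,
\]
a finite sum since $\mf n$ is nilpotent; note also $q^X\in\mf m^\perp\otimes\mc V(\mf p)$, using $\mf n\subset\mf m^\perp$, $[f,\mf n]\subset\mf m^\perp$ and $[\mf n,\mf m^\perp]\subset\mf m^\perp$ (the last from $[\mf m,\mf n]\subset\mf m$). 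I would refine the $\ad x$-grading \eqref{dec} to $\mf n=\bigoplus_{k\geq1/2}\mf n_k$ and $\mf m^\perp=\bigoplus_{k\geq-1/2}(\mf m^\perp\cap\mf g_k)$, with $\mf m^\perp\cap\mf g_i=[f,\mf n]_i\oplus V_i$, and use that by $\mf{sl}_2$ theory $\ad f\colon\mf n_{k+1}\to[f,\mf n]_k$ is bijective for every $k\geq-\tfrac12$. Writing $X=\sum_kX_k$, $X_k\in\mf n_k\otimes\mc V(\mf p)$, a $\ad x$-degree count shows the $\mf g_i$-component of $q^X$ depends only on $X_{1/2},\dots,X_{i+1}$ and contains $X_{i+1}$ only through the linear term $-(\ad f\otimes1)X_{i+1}$. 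Hence one solves, recursively in increasing $i=-\tfrac12,0,\tfrac12,1,\dots$, the requirement that the $[f,\mf n]_i$-part of the $\mf g_i$-component of $q^X$ vanish: given $X_{1/2},\dots,X_i$ this is a linear equation $(\ad f\otimes1)X_{i+1}=(\text{known})$, uniquely solvable by the bijectivity above. This yields the unique $X$ with $w=q^X\in V\otimes\mc V(\mf p)$. The homogeneity statements are then automatic: $q$ is $\wt$-homogeneous of weight $1$ (each $q^i\otimes q_i$ has weight $-\dx(q^i)+\Delta(q_i)=\dx(q_i)+(1-\dx(q_i))=1$), while $\partial$ and $\ad(f\otimes1)$ raise $\wt$ by $1$; feeding this into the recursion forces $\wt(X_{i+1})=0$ at each step, hence $\wt(X)=0$ and $\wt(w)=1$. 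This paragraph is the main work; the delicate point is checking that grading by $\ad x$-eigenvalue really decouples the recursion so that $X_{i+1}$ enters only through $\ad f$, and that it preserves weight $0$ for $X$.

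Part (b) is then read off from (a). The conformal-weight statement $\Delta(w_j)=1+\dx(v^j)$ is exactly $\wt(v^j\otimes w_j)=1$. For the shape \eqref{20120724:eq2}, note $X$ has no constant term (if $a\otimes g$ occurs in $X_k$ then $\Delta(g)=\dx(a)=k\geq\tfrac12>0$), so every monomial in $X$ has derivative count $\geq1$; consequently $-\partial X$ has count $\geq2$, $[X,q]$ has count $\geq2$ (since $q=\sum_{j\in P}v^j\otimes v_j$ with each $v_j$ of count $1$), the higher brackets even more, and $-(\ad f\otimes1)X$ lies in $[f,\mf n]\otimes\mc V(\mf p)$, which meets $V\otimes\mc V(\mf p)$ trivially. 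Thus the $V$-component of $q^X$ is that of $q$, namely $\sum_{j\in J}v^j\otimes v_j$, plus a sum of monomials of count $>1$ and conformal weight $1+\dx(v^j)$, which is precisely \eqref{20120724:eq2}.

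For (c) I would argue in three steps. First, each $w_j\in\mc W$: by Corollary \ref{20120511:cor}, $\mc W$ is the algebra of gauge-invariant differential polynomials, and since gauge transformations form a group (composition of the $e^{\ad A}$'s is again such, as $\mf n$ is nilpotent), $(q^A)^{X(q^A)}=q^C$ for some $C\in\mf n\otimes\mc V(\mf p)$; as $q^C\in V\otimes\mc V(\mf p)$, uniqueness in (a) gives $C=X$, so $w(q^A)=w(q)$, i.e.\ each $w_j$ is gauge invariant. Second, $\mc W\subseteq\mb F[w_j^{(n)}\mid j\in J,n\in\mb Z_+]$: for $g\in\mc W$, gauge invariance with $A=X$ gives $g=g(q)=g(q^X)=g(w)$, and $g(w)$ is a polynomial in the $w_j^{(n)}$ because the components of $w\in V\otimes\mc V(\mf p)$ are $\mb F$-linear combinations of the $w_j$. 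Third, the $w_j^{(n)}$ are differentially algebraically independent: by Lemma \ref{linalg} the $v_j$, $j\in J$, extend to a basis of $\mf p$, so the $v_i^{(m)}$ are free polynomial generators of $\mc V(\mf p)$; grading $\mc V(\mf p)$ by total derivative count and using $w_j^{(n)}=v_j^{(n)}+(\text{count}>n+1)$ from (b), a triangularity argument (pass to the associated graded, or order monomials by count) gives independence. Combining the three steps, $\mc W=\mb F[w_j^{(n)}\mid j\in J,n\in\mb Z_+]$, a differential polynomial algebra in $\#J=\dim V=\dim\Ker(\ad f)=r$ variables. Once (a) is in place, (b) and (c) are essentially formal consequences, so the recursion of (a) is where the real content lies.
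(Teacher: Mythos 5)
Your proposal is correct and follows essentially the same route as the paper: part (a) by recursion on the $\ad x$-eigenvalue using the bijectivity of $\ad f$ and the decomposition $\mf m^\perp\cap\mf g_i=[f,\mf n\cap\mf g_{i+1}]\oplus V_i$, part (b) by isolating the low-order (constant and linear/derivative-count-one) contributions, and part (c) by the gauge-invariance-plus-uniqueness argument together with a leading-term triangularity argument for differential algebraic independence. The only cosmetic difference is that you deduce the vanishing of the constant term of $X$ directly from $\wt(X)=0$, where the paper runs a separate short induction on the polynomial-degree-zero components.
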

\begin{proof}
Consider the expansion the elements $q$, $X$, and $w$ according to
the $\ad x$-eigenspace decomposition \eqref{deg2} of $\mf g\otimes\mc V(\mf p)$:
$q=\sum_{i\geq-\frac12}q[i]$, where $q[i]\in(\mf m^\perp\cap\mf g_i)\otimes\mc V(\mf p)$;
$X=\sum_{i\geq\frac12}X[i]$, where $X[i]\in(\mf n\cap\mf g_i)\otimes\mc V(\mf p)$;
and $w=\sum_{i\geq0}w[i]$, where $w[i]\in V_i\otimes\mc V(\mf p)$.
We want to prove, by induction on $i\geq-\frac12$,
that the elements
$X[i+1]\in(\mf n\cap\mf g_{i+1})\otimes\mc V(\mf p),\,i\geq-\frac12$,
and $w[i]\in V_i\otimes\mc V(\mf p),\,i\geq0$,
are uniquely determined by the equation $q^X=w$,
and they are homogeneous of weights $\wt(X[i+1])=0$ and $\wt(w[i])=1$.

Equating the terms of $\ad x$-eigenvalue $-\frac12$ in both sides of the equation $q^X=w$,
we get the equation
$$
[f\otimes1,X[\frac12]]
=q[-\frac12]
\,.
$$
Since $\ad f$ restricts to a bijection $\mf g_{\frac12}\stackrel{\sim}{\longrightarrow}\mf g_{-\frac12}$,
this uniquely defines $X[\frac12]\in(\mf n\cap\mf g_{\frac12})\otimes\mc V(\mf p)$.
Moreover, since $q[-\frac12]$ is homogeneous of weight $1$
and $\ad(f\otimes1)(\mf g\otimes\mc V(\mf p))\{i\}\subset(\mf g\otimes\mc V(\mf p))\{i+1\}$,
we get that $\wt(X[\frac12])=0$.

Next, fix $i\geq0$ and suppose by induction that we (uniquely) determined all elements
$X[j+1]\in(\mf n\cap\mf g_{j+1})\otimes\mc V(\mf p)$
and $w[j]\in V_i\otimes\mc V(\mf p)$ for $j<i$,
and that $\wt(X[j+1])=0$ and $\wt(w[j])=1$.
Equating the terms of $\ad x$-eigenvalue $i$ in both sides of the equation $q^X=w$,
we get an equation in $w[i]$ and $X[i+1]$ of the form
$$
w[i]+[f\otimes1,X[i+1]]=A\,,
$$
where $A\in\mf g_i\otimes\mc V(\mf p)$ is certain complicated expression,
involving the adjoint action of $X[j+1]$, with $j<i$, on $\partial$, $q$ and $f\otimes 1$,
which is homogeneous of $\ad x$-eigenvalue $\dx(A)=i$,
and of conformal weight $\wt(A)=1$.
Since $\mf g_i=[f,\mf g_{i+1}]\oplus V_i$,
and since $\ad f$ restricts to a bijection 
$\mf g_{i+1}\stackrel{\sim}{\longrightarrow}[f,\mf g_{i+1}]$,
the above equation determines uniquely 
$X[i+1]\in\mf g_{i+1}\otimes\mc V(\mf p)$ (note that $\mf n\cap\mf g_{i+1}=\mf g_{i+1}$ for $i\geq0$)
and $w[i]\in V_i\otimes\mc V(\mf p)$.
Moreover, since $\wt(A)=1$, we also get that, necessarily, 
$\wt(x[i+1])=0$ and $\wt(w[i])=1$.
This proves part (a).

For part (b), consider first the homogeneous components of $X\in\mf n\otimes\mc V(\mf p)$ 
and $w\in V \otimes\mc V(\mf p)$
of degree $0$, with respect to the polynomial degree decomposition \eqref{deg1}:
$X(0)\in\mf n\otimes\mb F\simeq\mf n$ and $w(0)\in V\otimes\mb F\simeq V$.
Clearly, $\deg(f\otimes1)=0$ and $\deg(q)=1$.
By looking at the terms of polynomial degree $0$
in both sides of the equation $q^X=w$, we get
$$
\big(e^{\ad X(0)}-1\big)f=w(0)\,.
$$
By an inductive argument on the $\ad x$-eigenvalues,
similar to the one used in the proof of (a),
it is not hard to show that, necessarily, $X(0)=0$ and $w(0)=0$.
Next, 
we study the homogeneous components of polynomial degree $1$:
$X(1)\in\mf n\otimes\mc V(\mf p)(1)=\mf n\otimes\mb F[\partial]\mf p$ 
and $w\in V\otimes\mc V(\mf p)(1)=\in V \otimes\mb F[\partial]\mf p$.
By looking at the terms of polynomial degree $1$
in both sides of the equation $q^X=w$, we get
\begin{equation}\label{20120724:eq1}
w(1)+[f\otimes1,X(1)]=q-X(1)'\,.
\end{equation}
By definition, $w(1)=\sum_{j\in J}v^j\otimes w_j(1)$,
and $q=\sum_{j\in P}v^j\otimes v_j$.
Equation \eqref{20120724:eq1} thus implies that
$$
w_j(1)-v_j\in\partial\mb F[\partial]\mf p\,,
$$
namely $w_j$ admits the decomposition as in \eqref{20120724:eq2}.
The conditions on the conformal weights of the elements $w_j$ and $g_j$
immediately follow from the fact that $\Delta(w)=1$.

Finally, we prove part (c). 
First, we prove that the elements $\{w_j\}_{j\in J}$ are differentially algebraically independent,
that is they generate a differential polynomial algebra.
For this, introduce in $\mc V(\mf p)$ the differential polynomial degree
$\text{dd}(v_j^{(n)})=n+1$ for every basis element $v_j\in\mf p$ and $n\in\mb Z_+$.
Suppose, by contradiction, that $P(w_1,\dots,w_r)=\sum w_{i_1}^{(n_1)}\dots w_{i_s}^{(n_s)}=0$
is a non trivial differential polynomial relation among the $w_j$'s.
If we let $P_0$ be the homogeneous component of $P$ of minimal differential polynomial degree,
then this relation can be written as
$P_0(v_1,\dots,v_r)+$ (stuff of higher differential polynomial degree in the $v_i$'s) $=0$.
Hence $P_0(v_1,\dots,v_r)=0$, contradicting the fact that the elements 
$v_1,\dots,v_r\in\mf g^f\subset\mf p$
are differentially algebraically independent in $\mc V(\mf p)$.

Next, we prove that all the coefficients $w_j,\,j\in J$, lie in $\mc W$.
In view of Corollary \ref{20120511:cor}, this is equivalent to show that
the differential polynomials $w_j\in\mc V(\mf p)$ are gauge invariant:
$w_j(q^A)=w_j$ for every $A\in\mf n\otimes\mc V(\mf p)$.
First, note that, obviously, $w(q^A)=\sum_{j\in J}v^j\otimes w_j(q^A)$ lies in $V\otimes\mc V(\mf p)$.
On the other hand, by definition of gauge transformation, we have
$$
\begin{array}{l}
\displaystyle{
\vphantom{\Big(}
w(q^A)=q^{X(q^A)}(q^A)=
e^{\ad X(q^A)}(\partial+f\otimes1+q^A)-\partial-f\otimes1
} \\
\displaystyle{
\vphantom{\Big(}
=e^{\ad X(q^A)}e^{\ad A}(\partial+f\otimes1+q)-\partial-f\otimes1\,.
}
\end{array}
$$
By the Baker-Campbell-Hausdorff formula,
there exists $\widetilde A\in\mf n\otimes\mc V(\mf p)$ such that
$e^{\ad X(q^A)}e^{\ad A}=e^{\ad\widetilde A}$.
Hence, the above equation reads
$w(q^A)=q^{\widetilde A}\in V\otimes\mc V(\mf p)$.
By the uniqueness of $X\in\mf n\otimes\mc V(\mf p)$ and $w\in V\otimes\mc V(\mf p)$ in part (a),
it follows that, necessarily, $\widetilde A=X$ and $w(q^A)=w$, as we wanted.

To conclude the proof of part (b), we are left to show that all the elements of $\mc W$
are differential polynomials in $w_1,\dots,w_r$.
Indeed, if $g\in\mc W$, then by Corollary \ref{20120511:cor} it is gauge invariant.
Hence, in particular, $g=g(q^X)=g(w)$, namely it is expressed as a differential polynomial
in the elements $w_j,\,j\in J$
(here we are using the obvious fact that, if we write $w$ in basis $\{q^i\}_{i\in P}$ of $\mf m^\perp$
as $\sum_{i\in P}q^i\otimes h_i$,
then the elements $h_i$ are linear combinations of the $w_j$'s).
\end{proof}

\begin{corollary}\label{daniele}
The Poisson vertex algebra $\mc W$, with the $\lambda$-bracket $\{g_\lambda h\}_{z,\rho}$,
is independent of the choice of the isotropic subspace $\mf l\subset\mf g_{\frac12}$ for $z=0$,
and also for arbitrary $z$, 
provided that $s\in\Ker(\ad\mf g_{\geq\frac12})$ is the same.
\end{corollary}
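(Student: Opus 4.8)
By transitivity of isomorphism it suffices to show that, for every isotropic $\mf l\subset\mf g_{\frac12}$, the classical $\mc W$-algebra $\mc W_{\mf l}$ attached to $\mf l$ is isomorphic, as a PVA, to the one attached to $\mf l=0$. For $\mf l=0$ we have $\mf m_0=\mf g_{\geq1}\subset\mf n_0=\mf g_{\geq\frac12}$ and we take the $\ad x$-compatible complement $\mf p_0=\mf g_{\leq\frac12}$ (so $\mf g^f\subset\mf p_0$); write $\rho_0,\mc W_0,\widetilde{\mc W}_0,I_0=\Ker\rho_0$ for the corresponding objects, and $\rho_{\mf l},\mc W_{\mf l},\widetilde{\mc W}_{\mf l},I_{\mf l}=\Ker\rho_{\mf l}$ for a general $\mf l$, with $\mf p_{\mf l}$ chosen $\ad x$-compatible and containing $\mf g^f$. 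At $z=0$ the bracket $\{\cdot\,_\lambda\,\cdot\}_{0,\rho}$ involves only the matrix $H$ of \eqref{HKds} and not $s$, and under the hypothesis $s\in\Ker(\ad\mf g_{\geq\frac12})$ the whole argument below works for arbitrary $z$; in either case $\mc W_{\mf l}$ and $\mc W_0$ are both well defined, since $\mf n_{\mf l}\subset\mf n_0=\mf g_{\geq\frac12}$.

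First I would produce a comparison map via the Hamiltonian reduction picture of Remark \ref{hamred}. Since $\mf m_0\subset\mf m_{\mf l}$ and $\kappa(f\mid\mf l)=0$, the generators of $I_0$ are among those of $I_{\mf l}$, so $I_0\subset I_{\mf l}=I_0+(\mf l)$, where $(\mf l)\subset\mc V(\mf g)$ denotes the differential ideal generated by $\mf l$; and since $\mf n_{\mf l}\subset\mf n_0$, the defining condition of $\widetilde{\mc W}$ gives $\widetilde{\mc W}_0\subset\widetilde{\mc W}_{\mf l}$. Hence $\rho_0(g)\mapsto\rho_{\mf l}(g)$, for $g\in\widetilde{\mc W}_0$, is a well-defined ($z$-independent) differential algebra homomorphism $\Phi:\mc W_0\to\mc W_{\mf l}$ — well defined because $I_0\subset I_{\mf l}$. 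Using Corollary \ref{20130206:cor}(d) in both settings together with $I_0\subset I_{\mf l}$, one checks that $\{\Phi(\rho_0(g))_\lambda\Phi(\rho_0(h))\}_{z,\rho}^{\mf l}=\rho_{\mf l}(\{g_\lambda h\}_z)=\Phi(\{\rho_0(g)_\lambda\rho_0(h)\}_{z,\rho}^{0})$, so $\Phi$ is in fact a PVA homomorphism for every admissible $z$.

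Next I would prove that $\Phi$ is injective. From the above one obtains $\Ker\Phi\subset\mc W_0\cap(\mf l)$, now with $(\mf l)$ the differential ideal of $\mc V(\mf p_0)$ generated by $\mf l$ (here $\rho_0(\mf l)=\mf l$ because $\mf l\subset\mf p_0$), so injectivity reduces to $\mc W_0\cap(\mf l)=0$. For this I would apply Theorem \ref{20120511:thm2} to $\mf l=0$ with $\mf p_0$ and $V=\mf g^e$: $\mc W_0$ is the algebra of differential polynomials in generators $w_j=v_j+g_j$, $j\in J$, where $\{v_j\}_{j\in J}$ is a basis of $\mf g^f\subset\mf g_{\leq0}$ and each $g_j$ has strictly larger degree for the grading $\mathrm{dd}(v_j^{(n)})=n+1$ used in the proof of Theorem \ref{20120511:thm2}(c). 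Since $\mf g^f\subset\mf g_{\leq0}$ while $\mf l\subset\mf g_{\frac12}$, the images $\bar v_j$ are linearly independent in $\mf p_0/\mf l$, hence differentially algebraically independent in $\mc V(\mf p_0)/(\mf l)$; a minimal $\mathrm{dd}$-degree argument then shows that no nonzero differential polynomial in the $w_j$ can lie in $(\mf l)$, giving $\mc W_0\cap(\mf l)=0$.

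Finally, for surjectivity (hence bijectivity) I would use the conformal weight grading $\Delta$ of Section \ref{sec:3.4}, extended to $\mc V(\mf g)$ by $\Delta(a)=1-\dx(a)$ for $a\in\mf g$ homogeneous and $\Delta(\partial)=1$. One checks that $\rho_0$ and $\rho_{\mf l}$ are $\Delta$-graded (hence so is $\Phi$), and that the Lie conformal algebra actions of $\mb F[\partial]\mf n_0$ and $\mb F[\partial]\mf n_{\mf l}$, as well as the ideals $I_0$ and $I_{\mf l}$, are $\Delta$-homogeneous — here one uses that the $z$-term of the affine bracket drops out on $\mf g_{\geq\frac12}$ because $[s,\mf g_{\geq\frac12}]=0$ (and there is no $z$-term when $z=0$). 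Thus $\mc W_0$ and $\mc W_{\mf l}$ are $\Delta$-graded free differential polynomial algebras on $r=\dim\mf g^f$ generators of conformal weights $1+\dx(v^j)$, $\{v^j\}$ a basis of $\mf g^e$ — the same finite list of positive weights independently of $\mf l$ — so their $\Delta$-homogeneous components are finite-dimensional of equal dimension and the graded injection $\Phi$ is bijective. Hence $\Phi$ is a PVA isomorphism for every admissible $z$. The step I expect to require the most care is the injectivity reduction $\mc W_0\cap(\mf l)=0$: one must set up the $\mathrm{dd}$-filtration and verify that passing to the quotient $\mc V(\mf p_0)/(\mf l)$ does not destroy the differential independence of the leading terms $v_j$; the $\Delta$-grading bookkeeping behind surjectivity is comparatively routine.
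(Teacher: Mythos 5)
Your proof is correct, and it follows the paper's overall strategy for the first half: the paper likewise builds the comparison map from the inclusion of reduction ideals $I_0\subset I_{\mf l}$ (equivalently $\widetilde{\mc W}_0\subset\widetilde{\mc W}_{\mf l}$, with $\widetilde\phi$ induced by the projection $\pi_{\mf p_{\mf l}}$), and checks it is a PVA homomorphism under exactly your hypotheses on $z$ and $s$. Where you diverge is in proving bijectivity. The paper does it in one stroke: it applies $\id\otimes\widetilde\phi$ to the gauge-fixed relation $q^X=w\in V\otimes\mc V(\mf p_0)$ of Theorem \ref{20120511:thm2}, invokes the uniqueness of the pair $(X,w)$ in the $\mf l$-setting to conclude that $\phi$ carries the canonical free generators $w_j$ of $\mc W_0$ onto the canonical free generators of $\mc W_{\mf l}$, and is therefore an isomorphism. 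You instead split the work: injectivity via $\Ker\Phi=\mc W_0\cap(\mf l)$ and a minimal $\mathrm{dd}$-degree (leading term $v_j$) argument — which is essentially the paper's own proof of differential algebraic independence in Theorem \ref{20120511:thm2}(c), transplanted to the quotient $\mc V(\mf p_0)/(\mf l)$, and valid since $\mf g^f\cap\mf l\subset\mf g_{\leq 0}\cap\mf g_{\frac12}=0$ — and surjectivity via a graded dimension count in the conformal weight, using that both sides are free on $r$ generators with the identical multiset of weights $\{1+\dx(v^j)\}$ and that $\Phi$ preserves $\Delta$. Both routes are sound. The paper's argument is shorter and has the side benefit, exploited later in the proof of Proposition \ref{daniele2}(b), of matching up the distinguished generators on the two sides; yours avoids tracking the gauge element $X$ through $\widetilde\phi$ at the cost of the extra (but routine, and correctly handled) bookkeeping that the weight-$\Delta$ components are finite-dimensional of equal dimension and that $\Phi$ is graded.
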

\begin{proof}
Let $\mf l_1\subset\mf l_2\subset\mf g_\frac12$ be isotropic subspaces and
$\mf m_1\subset\mf n_1$ and $\mf m_2\subset\mf n_2$ the corresponding nilpotent
subalgebras of $\mf g$ defined in \eqref{subalgebras}. Then
\begin{equation}\label{inclusions}
\mf m_1\subset\mf m_2\subset\mf n_2\subset\mf n_1\,.
\end{equation}
Let $I_i=\left(m-\kappa(f\mid m)\mid m\in\mf m_i\right)\subset{\mc V(\mf g)}$ and
$\widetilde{\mc W}_i\subset\mc V(\mf g)$ be defined as in \eqref{wtilde}, for $i=1,2$.
Clearly, by \eqref{inclusions}, $I_1\subset I_2$, from which follows easily that
$\widetilde{\mc W}_1\subset\widetilde{\mc W}_2$.
Hence, by Remark \ref{hamred}, we have a differential algebra homomorphism
$\phi:\mc W_1\rightarrow\mc W_2$, where $\mc W_i$, $i=1,2$, is the classical
$\mc W$-algebra corresponding to $\mf l_i$.
For $z=0$, or, for arbitrary z, provided that
$s=s_1=s_2\in\Ker(\ad\mf g_{\geq\frac12})$, this is a PVA 
homomorphism. Indeed, in this case,
$\widetilde{\mc W}_1\subset\widetilde{\mc W}_2$ is a Poisson vertex subalgebra.

We want to show that $\phi$ is a differential algebra isomorphism.
Fix $\mf p_1,\mf p_2\subset\mf g_{\leq\frac12}$ be complementary spaces to $\mf m_1$
and $\mf m_2$ in $\mf g$ respectively: 
$\mf g=\mf m_1\oplus\mf p_1=\mf m_2\oplus\mf p_2$.
By the arbitrariness of the choice of these complementary subspaces (see Remark \ref{hamred}), 
we may assume $\mf p_1\supset\mf p_2$. Let us denote by $\rho_i$, $i=1,2$, the
differential algebra homomorphism defined in \eqref{rho} corresponding to $\mf p_i$.
We have a differential algebra homomorphism induced by the following diagram
\begin{equation}\label{comdiag}
\xymatrix
{\mc V(\mf g)\ar[r]^{\rho_1}\ar[d]_{\rho_2}&\mc V(\mf p_1)\ar[ld]^{\widetilde\phi}\\
\mc V(\mf p_2)&
}
\end{equation}
(it exists since $\Ker\rho_1=I_1\subset\Ker\rho_2=I_2$).
It is easy to check that $\widetilde\phi$ is the differential algebra homomorphism induced
by the projection map $\pi_{\mf p_2}:\mf p_1\rightarrow\mf p_2$, and the restriction
to $\mc W_1=\mc V(\mf p_1)^{\mb F[\partial]\mf n}\subset\mc V(\mf p_1)$
is the differential algebra homomorphism $\phi:\mc W_1\rightarrow\mc W_2$
constructed above.
Let $q_i\in\mf m_i^{\perp}\otimes\mc V(\mf p_i)$, for $i=1,2$, as in \eqref{q}.
We note that
the choice of the vector space $V$ in Theorem \ref{20120511:thm2} does not
depend on the choice of the isotropic subspace $\mf l$
(indeed $\ad f:\mf g_{\frac12}\rightarrow\mf g_{-\frac12}$ is an isomorphism).
Hence, we may choose $V\subset\mf m_i^\perp$ such that 
$\mf m_i^{\perp}=[f,\mf n_i]\oplus V$, for $i=1,2$. Since the restriction of the
differential map $\phi$ to $\mf p_1\subset\mc V(\mf p_2)$ is the projection
map $\pi_{\mf p_2}$, we also note that
$(\id\otimes\phi)q_1=q_2$. Let $X\in\mf n_1\otimes\mc V(\mf p_1)$ such that
$q_1^X\in V\otimes\mc V(\mf p_1)$.
Then $(\id\otimes\phi)q_1^X=q_2^{(\id\otimes\tilde\phi)X}\in V\otimes\mc V(\mf p_2)$.
By the uniqueness argument in the proof of Theorem \ref{20120511:thm2},
it follows that $(\id\otimes\phi)X\in\mf n_2\otimes\mc V(\mf p_2)$.
By Theorem \ref{20120511:thm2}(b),
$\phi$ maps the generators of $\mc W_1$ to the generators of $\mc W_2$.
Hence it is an isomorphism.

Finally, if we take $\mf l_1=0$, it follows that the PVA structure obtained for $z=0$,
or, for arbitrary $z$, provided that $s=s_1=s_2\in\Ker(\ad\mf g_{\geq\frac12})$,
on $\mc W_2$ does not depend on the choice of the isotropic subspace $\mf l_2$.
\end{proof}
\begin{remark}\label{rem:different_l}
It is proved in \cite{GG02} for finite $\mc W$-algebras, 
by a method not applicable in our setup,
that 
they are independent of the choice of the isotropic subspace $\mf l\subset\mf g_{\frac12}$.
\end{remark}

\begin{definition}
Let $\mc V$ be a PVA. An element $L\in\mc V$ is called a \emph{Virasoro element}
of \emph{central charge} $c\in\mb F$ if
\begin{equation}\label{virasorofield}
\{L_\lambda L\}=(\partial+2\lambda)L+c\lambda^3
+\alpha\lambda
\,,
\end{equation}
for some $\alpha\in\mb F$.
(The term $\alpha\lambda$ can be removed by replacing $L$ with $L+\frac{\alpha}{2}$.)
An element $a\in\mc V$ is called an \emph{eigenvector} of
\emph{conformal weight} $\Delta_a\in\mb F$ if 
\begin{equation}\label{quasiprimary}
\{L_\lambda a\}=(\partial+\Delta_a\lambda)a+o(\lambda^2)\,.
\end{equation}
It is called a \emph{primary element} of conformal weight $\Delta_a$
if $\{L_\lambda a\}=(\partial+\Delta_a\lambda)a$.
\end{definition}

\begin{proposition}\label{daniele2}
Consider the PVA $\mc W$ with $\lambda$-bracket $\{\cdot\,_\lambda\,\cdot\}_{z,\rho}$
defined by equation \eqref{20120511:eq3}.
\begin{enumerate}[(a)]
\item
The following element lies in $\mc W$:
$$
L=
\rho\Big(\frac12\sum_{i\in I}u^iu_i+x^{\prime}
+\frac12\sum_{r\in R}v^r\partial v_r\Big)\,\in\mc W\,,
$$
where, as before, 
$\{u_i\}_{i\in I}$ and $\{u^i\}_{i\in I}$ are bases of $\mf g$ 
dual with respect to $\kappa$,
$\{v^r\}_{r\in R}$ and $\{v_r\}_{r\in R}$
are bases of $\mf n\cap\mf p$ $\omega$-dual, 
i.e. such that \eqref{20130206:eq2} holds,
and $\rho$ is the map \eqref{rho}.
Furthermore, the $\lambda$-bracket of $L$ with itself is
\begin{equation}\label{virasoro}
\{L_\lambda L\}_{z,\rho}
=(\partial+2\lambda)L-\kappa(x\mid x)\lambda^3+2\kappa(f\mid s)z\lambda\,.
\end{equation}
In particular, $L\in\mc W$ is a Virasoro element
of central charge $c=-\kappa(x\mid x)$.
\item
Assume that  $\mf p\subset\mf g$ is compatible 
with the $\ad x$-eigenspace decomposition \eqref{dec},
and consider the generators $w_j=v_j+g_j\in\mc W,\,j\in J$,
provided by Theorem \ref{20120511:thm2}(b),
where $\{v_j\}_{j\in J}$ is a basis of $\mf g^f$ 
consisting of $\ad x$-eigenvectors:
$[x,v_j]=(1-\Delta_j)v_j,\,j\in J$ (with $\Delta_j\geq1$).
Then $w_j$ is an $L$-eigenvector of conformal weight $\Delta_j$ for $z=0$.
\item
The PVA $\mc W$ is graded, as an algebra, by conformal weights:
$\mc W=\mb F\oplus\mc W\{1\}\oplus\mc W\{\frac32\}\oplus\mc W\{2\}\oplus\dots$.
Moreover, for $i=1$ or $\frac32$, $\mc W\{i\}$ is spanned over $\mb F$ by the generators
$w_j$ such that $\Delta(w_j)=1+\dx(v^j)=i$,
and all of them are primary elements for $z=0$.
\end{enumerate}
\end{proposition}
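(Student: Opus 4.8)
The plan is to carry out every computation in the affine PVA $\mc V(\mf g)=\mc V(\mf g,\kappa,zs)$ and then push it down through $\rho$ by means of Corollary \ref{20130206:cor}. Write $\widetilde L=\frac12\sum_i u^iu_i+x'+\frac12\sum_r v^r\partial v_r\in\mc V(\mf g)$, so that $L=\rho(\widetilde L)$, and denote by $\Sigma$, $x'$, $\Gamma$ its three summands. For part (a), $L\in\mc W$ is equivalent, by Corollary \ref{20130206:cor}(a), to $\rho\{a_\lambda\widetilde L\}_z=0$ for all $a\in\mf n$; by sesquilinearity and the Leibniz rules one may take $a$ homogeneous, $a\in\mf n\cap\mf g_j$ with $j\ge\frac12$. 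Using the bracket \eqref{lambda} and $[s,\mf n]=0$ (which kills every $z$-term), one gets $\{a_\lambda\Sigma\}_z=\lambda a$ (the $\ad a$-invariance of the Casimir $\sum u^iu_i$ kills the $\lambda$-free part), $\{a_\lambda x'\}_z=(\lambda+\partial)(-ja)$ (since $[a,x]=-ja$ and $\kappa(a\mid x)=0$), and, using $[s,v^r]=0$, $\kappa(\mf g_{\ge\frac12}\mid v^r)=0$ and the $\omega$-completeness relations \eqref{20130206:eq3}, $\rho\{a_\lambda\Gamma\}_z=\frac12\partial\pi_{\mf p}(a)-\frac\lambda2\pi_{\mf p}(a)$; summing these, and using that $\rho(a)=\pi_{\mf p}(a)+\kappa(f\mid a)$ equals $\pi_{\mf p}(a)$ if $j=\frac12$, the scalar $\kappa(f\mid a)$ if $j=1$, and $0$ if $j\ge\frac32$, everything cancels. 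For the self-bracket, Corollary \ref{20130206:cor}(d) gives $\{L_\lambda L\}_{z,\rho}=\rho\{\widetilde L_\lambda\widetilde L\}_z$, which I expand into the nine $\lambda$-brackets among $\Sigma,x',\Gamma$: one obtains $\{\Sigma_\lambda\Sigma\}_z=(\partial+2\lambda)\Sigma$ (from $\{a_\lambda\Sigma\}_z=\lambda a+z[s,a]$, skew-symmetry and Casimir invariance, which again kills the $z$-part); $\{x'_\lambda x'\}_z=-\lambda(\lambda+\partial)\{x_\lambda x\}_z=-\kappa(x\mid x)\lambda^3$; the $\Sigma$--$x'$ cross terms contribute $(\partial+2\lambda)x'$ together with, after $\rho$ and using $[s,\mf n]=0$, the residual $2\kappa(f\mid s)z\lambda$ (only the $\ad x$-degree-$1$ component of $s$ contributes, through $[s,x]$ and the shift built into $\rho$); and the $\Gamma$-self-bracket together with the $\Sigma$--$\Gamma$ and $x'$--$\Gamma$ cross terms combine, after $\rho$, to $(\partial+2\lambda)\rho(\Gamma)$, with no contribution to the central or $z\lambda$ terms (none of these has a $z$-part, as one checks directly). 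Adding up gives \eqref{virasoro}, whence $L$ is a Virasoro element of central charge $c=-\kappa(x\mid x)$.

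For the grading assertion of part (c): by Theorem \ref{20120511:thm2} the differential algebra $\mc W$ is freely generated by $w_1,\dots,w_r$, with $w_j$ homogeneous of conformal weight $\Delta_j=1-\dx(v_j)\ge1$; since conformal weight is additive under products and raised by $1$ under $\partial$, $\mc W$ carries a conformal-weight grading, and \eqref{degcw} shows at once that $\mc V(\mf p)\{0\}=\mb F$ and $\mc W\{\frac12\}=0$, while any product of two or more of the $w_j^{(n)}$'s has conformal weight $\ge2$; hence for $i\in\{1,\frac32\}$ the space $\mc W\{i\}$ is the $\mb F$-span of those $w_j$ with $\Delta_j=i$.

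For part (b) and the ``primary'' assertion in (c), fix $z=0$ and (harmlessly, by Remark \ref{hamred}) take $\mf p$ compatible with \eqref{dec}. First, the bracket $\{\cdot\,_\lambda\,\cdot\}_{0,\rho}$ is homogeneous for conformal weight: a case check on the three terms of \eqref{HKds} shows that the $\lambda^0$- and $\lambda^1$-parts of $H_{ji}(\lambda)$ have conformal weights $\Delta(q_i)+\Delta(q_j)-1$ and $\Delta(q_i)+\Delta(q_j)-2$ respectively, and feeding this into \eqref{wbrackX} gives that the $\lambda^k$-coefficient of $\{g_\lambda h\}_{0,\rho}$ is homogeneous of conformal weight $\Delta(g)+\Delta(h)-1-k$. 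Since $L=\rho(\widetilde L)\in\mc W$, Corollary \ref{20130206:cor}(c) gives $\{L_\lambda g\}_{0,\rho}=\rho\{\widetilde L_\lambda g\}_0$ for $g\in\mc W$; evaluating this on the generators $w_j=v_j+g_j$ (with $v_j\in\mf g^f$) by a direct computation — where the Sugawara part produces the grading $\partial+\lambda\cdot(\text{number of }\mf g\text{-factors})$, the $x'$ part corrects this by $-\dx$ on each factor (so the leading term $v_j$ acquires the shift $-\dx(v_j)\lambda=(\Delta_j-1)\lambda$), and the ghost part together with the $\rho$-shift provide exactly the corrections needed on $g_j$ — yields $\{L_\lambda w_j\}_{0,\rho}=(\partial+\Delta_j\lambda)w_j+o(\lambda^2)$, proving (b). Finally, by the gradedness above the $\lambda^k$-coefficient of $\{L_\lambda w_j\}_{0,\rho}$ for $k\ge2$ lies in $\mc W\{\Delta_j+1-k\}$: for $\Delta_j=\frac32$ this vanishes ($\mc W\{\frac12\}=0$ for $k=2$, negative weight for $k\ge3$); for $\Delta_j=1$ the only surviving case $k=2$ lies in $\mc W\{0\}=\mb F$, and a short computation — using that $\mf g^f\cap\mf g_0$ consists of $\mf{sl}_2$-invariants and is therefore $\kappa$-orthogonal to $x$ — shows it is $0$; so such $w_j$ are primary for $z=0$.

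I expect the main obstacle to be the bookkeeping of the ghost summand $\Gamma=\frac12\sum_r v^r\partial v_r$ together with the shift $a\mapsto\pi_{\mf p}(a)+\kappa(f\mid a)$ hidden in $\rho$: in each of the computations above one must isolate the $\mf g_{\frac12}$-components and repeatedly invoke the $\omega$-duality relations \eqref{20130206:eq3}. In particular $\widetilde L$ does \emph{not} generate translations on all of $\mc V(\mf g)$ (its bracket with a generator $q_i\notin\mc W$ picks up ghost corrections), so the eigenvector computation in (b) must genuinely be carried out on the generators $w_j$ of $\mc W$ — which is exactly the point where Theorem \ref{20120511:thm2}(b) enters — and a nonzero isotropic subspace $\mf l$ is what makes this part least transparent.
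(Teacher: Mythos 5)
Your parts (a) and (c) follow the paper's own proof quite closely: the membership $L\in\mc W$ is obtained by the same cancellation of $a\,^\rho_\lambda\,\rho(L^{\mf g})$ against $a\,^\rho_\lambda\,\rho(\frac12\sum_rv^r\partial v_r)$ via the completeness relations \eqref{20130206:eq3}, the self-bracket is expanded into the same pieces, and the primariness argument in (c) is the paper's conformal-weight bookkeeping plus $\kappa(\mf g^f\cap\mf g_0\mid x)=0$. One step in (a) is genuinely missing, however: to land on the term $2\kappa(f\mid s)z\lambda$ in \eqref{virasoro} you need $\rho([x,s])$ to be the \emph{scalar} $\kappa(f\mid[x,s])$, i.e.\ you need $\pi_{\mf p}([x,s])=0$, equivalently $[x,s]\in\mf m$. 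Writing $s=s_0+s_{\frac12}+s_1+\cdots$, the components of $[x,s]$ in $\mf g_{\geq1}$ lie in $\mf m$ automatically, but $s_{\frac12}$ does not a priori; the paper proves $s_{\frac12}\in\mf l$ by noting that $s_{\frac12}$ commutes with $\mf l^{\perp\omega}\subset\mf n$, so $\omega(s_{\frac12},\mf l^{\perp\omega})=0$ and hence $s_{\frac12}\in(\mf l^{\perp\omega})^{\perp\omega}=\mf l\subset\mf m$. Without this, your ``residual $2\kappa(f\mid s)z\lambda$'' would be contaminated by a nonconstant term $\tfrac12 z(\partial+2\lambda)\pi_{\mf p}(s_{\frac12})$, so the phrase ``the shift built into $\rho$'' is hiding a claim that must be proved.

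For part (b) you take a genuinely different route, and there the central step is asserted rather than proved. The paper first establishes (b) when $\mf l$ is maximal isotropic (so $L=\rho(L^{\mf g})$ and the claim follows from \eqref{20120723:eq1} and the Leibniz rule), and then transfers it to arbitrary $\mf l$ through the isomorphism of Corollary \ref{daniele}, after checking that it maps one Virasoro element to the other. You instead propose a direct computation of $\rho\{(L^{\mf g}+\frac12\sum_rv^r\partial v_r)_\lambda w_j\}_0$, leaving the ghost contribution as ``exactly the corrections needed on $g_j$''. This is precisely the point requiring an argument, and as described it is misleading: the ghost summand in fact contributes \emph{nothing}. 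Indeed, for any $w\in\mc W$ the right Leibniz rule \eqref{rleibniz} expresses $\{(v^r\partial v_r)_\lambda w\}_z$ through $\{{v^r}_{\lambda+\partial}w\}_z$ and $\{{v_r}_{\lambda+\partial}w\}_z$ applied to $\partial v_r$ and $v^r$; since $v^r,v_r\in\mf n\cap\mf p\subset\mf n$ and $w\in\mc W$, every coefficient of these brackets is killed by $\rho$, and because $\rho$ is a differential algebra homomorphism commuting with $\partial$ this forces $\rho\{(v^r\partial v_r)_\lambda w\}_z=0$. Hence $\{L_\lambda w_j\}_{0,\rho}=\rho\{L^{\mf g}{}_\lambda w_j\}_0$, and the Sugawara-plus-$x'$ Leibniz computation you sketch, applied to all of $w_j$ (which is homogeneous of conformal weight $\Delta_j$ by Theorem \ref{20120511:thm2}(b) --- the decomposition $w_j=v_j+g_j$ is not needed here), gives $(\partial+\Delta_j\lambda)w_j+O(\lambda^2)$. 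With these two points supplied, your argument closes, and in part (b) it is actually shorter than the paper's detour through Corollary \ref{daniele}.
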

\begin {proof}
It is straightforward to check that
$$
L^{\mf g}=
\frac12\sum_{i\in I}u^iu_i+x^{\prime}\in\mc V(\mf g)
$$
is a Virasoro element of the affine PVA $\mc V(\mf g)$
(with $\lambda$-bracket \eqref{lambda})
with central charge $-\kappa(x\mid x)$:
\begin{equation}\label{20130206:eq9}
\{L^{\mf g}{}_\lambda L^{\mf g}\}_z
=(\partial+2\lambda)L^{\mf g}-\kappa(x\mid x)\lambda^3
+z(\partial+2\lambda)[x,s]
\,,
\end{equation}
and that, for $z=0$, $a\in\mf g_j$ is an eigenvector of conformal weight $\Delta=1-j$
(primary provided that $\kappa(x\mid a)=0$), more precisely:
\begin{equation}\label{20120723:eq1}
\begin{array}{l}
\displaystyle{
\{a_\lambda L^{\mf g}\}_z=-ja'+(1-j)a\lambda+\kappa(a\mid x)\lambda^2
-z\left([a,s]+j\kappa(s\mid a)\lambda\right)
\,,}\\
\displaystyle{
\{L^{\mf g}{}_\lambda a\}_z=(\lambda+\partial)a-ja\lambda-\kappa(a\mid x)\lambda^2
-z\left([s,a]+j\kappa(s\mid a)\lambda\right)
\,.
}
\end{array}
\end{equation}
For $a\in\mf n\cap\mf g_j$, we have, by the definition \eqref{20120511:eq1}
of the action of $\mb F[\partial]\mf n$ on $\mc V(\mf p)$,
$$
a\,^\rho_\lambda\,\rho(L^\mf g)=\rho\{a_\lambda\rho(L^{\mf g})\}_z=
\rho\{a_\lambda L^{\mf g}\}_z
=\rho\Big(-ja'+(1-j)a\lambda+\kappa(a\mid x)\lambda^2\Big)\,,
$$
where in the second equality we used Lemma \ref{20120511:lem1}(b).
Since $a\in \mf n\subset\mf g_{\geq\frac12}$, we have $\kappa(a\mid x)=0$. 
Moreover,
$\rho(a)=\pi_{\mf p}(a)+\kappa(f\mid a)$,
so that $\rho(a')=\partial\pi_{\mf p}(a)$,
and $(1-j)\rho(a)=(1-j)\pi_{\mf p}(a)$.
Recall also that, if $a\in\mf n\cap\mf g_j$, 
then $\pi_{\mf p}(a)\in\mf n\cap\mf p$ is zero unless $j=\frac12$.
Hence,
\begin{equation}\label{20130206:eq4}
a\,^\rho_\lambda\,\rho(L^\mf g)
=
-\frac12\partial\pi_{\mf p}(a)+\frac12\lambda\pi_{\mf p}(a)\,.
\end{equation}
Next, if $a\in\mf n$ and $v\in\mf n\cap\mf p$, 
we have $\kappa(s\mid [a,v])=0$, since $s\ker(\ad\mf n)$,
and $\kappa(a\mid v)=0$, since $\mf n\subset\mf g_{\geq\frac12}$.
Hence,
$$
\{a_\lambda v^r\partial v_r\}_z
=
[a,v^r] v_r^\prime
+v^r(\lambda+\partial)[a,v_r]\,.
$$
Therefore,
if $a\in\mf n\cap\mf g_j$, we have, by the definition \eqref{20120511:eq1}
of the action of $\mb F[\partial]\mf n$ on $\mc V(\mf p)$ and Lemma \ref{20120511:lem1}(b),
$$
a\,^\rho_\lambda\,\rho\Big(\frac12\sum_{r\in R}v^r\partial v_r\Big)
=\frac12\sum_{r\in R}\rho\{a_\lambda v^r\partial v_r\}_z
=\frac12\sum_{r\in R}\rho
\big([a,v^r] v_r^\prime+v^r(\lambda+\partial)[a,v_r]\big)\,.
$$
Note that, if $a\in\mf n\cap\mf g_j$ with $j\geq1$
and $v\in\mf n\cap\mf p\subset\mf g_{\geq\frac12}$,
then $[a,v]\in\mf g_{\geq\frac32}$, so that $\rho[a,v]=0=\kappa(f\mid [a,v])$.
If instead $a\in\mf n\cap\mf g_{\frac12}$
and $v\in\mf n\cap\mf p$,
then $[a,v]\in\mf g_{\geq1}$, so that $\rho[a,v]=\kappa(f\mid [a,v])$.
Therefore,
$$
a\,^\rho_\lambda\,\rho\Big(\frac12\sum_{r\in R}v^r\partial v_r\Big)
=\frac12\sum_{r\in R}
\rho\big(\kappa(f\mid [a,v^r]) v_r^\prime+\lambda\kappa(f\mid [a,v_r])v^r\big)\,.
$$
In conclusion, by the completeness relations \eqref{20130206:eq3}, we get
\begin{equation}\label{20130206:eq5}
a\,^\rho_\lambda\,\rho\Big(\frac12\sum_{r\in R}v^r\partial v_r\Big)
=\frac12\partial\pi_{\mf p}(a)
-\frac12\lambda\pi_{\mf p}(a)\,.
\end{equation}
Equations \eqref{20130206:eq4} and \eqref{20130206:eq5}
imply that $\rho\Big(L^{\mf g}+\frac12\sum_{r\in R}v^r\partial v_r\Big)$ lies in $\mc W$,
proving the first claim in (a).
It remains to prove that $L\in\mc W$ satisfies equation \eqref{virasoro}. 
By definition of the PVA structure on $\mc W$ \eqref{20120511:eq3}, we have,
using Corollary \ref{20130206:cor}(d)
\begin{equation}\label{20130206:eq10}
\begin{array}{c}
\displaystyle{
\{L_\lambda L\}_{z,\rho}=\rho\{L_\lambda L\}_z
=\rho\Bigg(
\{L^{\mf g}{}_\lambda L^{\mf g}\}_z
+\frac12\sum_{r\in R}\{L^{\mf g}{}_\lambda v^r\partial v_r\}_z
} \\
\displaystyle{
+\frac12\sum_{r\in R}\{v^r\partial v_r{}_\lambda L^{\mf g}\}_z
+\frac14\sum_{q,r\in R}
\{v^q\partial v_q{}_\lambda v^r\partial v_r\}_z
\Bigg)\,.
}
\end{array}
\end{equation}
Recall that, for $r\in R$, we have $v^r,v_r\in\mf n\cap\mf p\subset\mf g_{\frac12}$.
Note also that, since $s$ lies in $\ker(\ad\mf n)\subset\mf g_{\geq0}$, 
we have $\kappa(s\mid v)=0$ for all $v\in\mf g_{\geq\frac12}$.
Hence, 
using \eqref{20120723:eq1}, we get
$$
\begin{array}{l}
\displaystyle{
\{L^{\mf g}{}_\lambda v^r\partial v_r\}_z
=
\{L^{\mf g}{}_\lambda v^r\}_z \partial v_r
+v^r(\lambda+\partial)\{L^{\mf g}{}_\lambda v_r\}_z
} \\
\displaystyle{
=
\big((\lambda+\partial)v^r-\frac12\lambda v^r\big)\partial v_r
+v^r(\lambda+\partial)
\big((\lambda+\partial)v_r-\frac12\lambda v_r\big)
} \\
\displaystyle{
=
(2\lambda+\partial)\big( v^r\partial v_r \big)
+\frac12\lambda^2 v^rv_r
\,,
}
\end{array}
$$
which implies, 
\begin{equation}\label{20130206:eq6}
\{L^{\mf g}{}_\lambda \sum_{r\in R} v^r\partial v_r\}_z
=
(2\lambda+\partial)\sum_{r\in R} v^r\partial v_r
\,,
\end{equation}
since, by skew-symmetry of $\omega$, 
we have $\sum_{r\in R}v^rv_r=0$.
By skew-symmetry of the $\lambda$-bracket we also get
\begin{equation}\label{20130206:eq7}
\{\sum_{r\in R}v^r\partial v_r {}_\lambda L^{\mf g}\}_z
=
(2\lambda+\partial)\sum_{r\in R} v^r\partial v_r
\,.
\end{equation}
Furthermore, we have, for $q,r\in R$,
$$
\begin{array}{l}
\displaystyle{
\vphantom{\Big(}
\{v^q\partial v_q\,_\lambda\,v^r\partial v_r\}_z
=
(\partial v_r)\{v^q\,{}_{\lambda+\partial}v^r\}_z(\partial v_q)
+v^r(\lambda+\partial)\{v^q\,{}_{\lambda+\partial}v_r\}_z(\partial v_q)
} \\
\displaystyle{
\vphantom{\Big(}
+(\partial v_r)\{v_q\,{}_{\lambda+\partial}v^r\}_z(-\lambda-\partial) v^q
+v^r(\lambda+\partial)\{v_q\,{}_{\lambda+\partial}v_r\}_z(-\lambda-\partial) v^q
} \\
\displaystyle{
\vphantom{\Big(}
=
[v^q,v^r](\partial v_r)(\partial v_q)
+v^r(\lambda+\partial)\big([v^q,v_r](\partial v_q)\big)
} \\
\displaystyle{
\vphantom{\Big(}
-[v_q,v^r](\partial v_r)(\lambda+\partial)v^q
-v^r(\lambda+\partial)\big([v_q,v_r](\lambda+\partial) v^q\big)
}
\end{array}
$$
Note that for $q,r\in R$, we have $v^q,v^r\in\mf n\cap\mf p\subset\mf g_{\frac12}$,
so that $\rho(v^r)=v^r$ and $\rho([v^q,v^r])=\kappa(f\mid [v^q,v^r])$ (and similarly with lower indices).
Hence, applying $\rho$ and summing over all indices in the above equation, we get,
using the completeness relations \eqref{20130206:eq3},
\begin{equation}\label{20130206:eq8}
\rho\Big\{\sum_{q\in R}v^q\partial v_q\,_\lambda\,\sum_{r\in R}v^r\partial v_r\Big\}_z
=-2(2\lambda+\partial)\sum_{r\in R}v^r\partial v_r
\,.
\end{equation}
Combining equations \eqref{20130206:eq9}, \eqref{20130206:eq6}, \eqref{20130206:eq7} 
and \eqref{20130206:eq8} into \eqref{20130206:eq10}, we get
\begin{equation}\label{20130206:eq11}
\{L_\lambda L\}_{z,\rho}
=
(\partial+2\lambda)L
-\kappa(x\mid x)\lambda^3+z(\partial+2\lambda)\rho([x,s])
\,.
\end{equation}
We claim that $[x,s]$ lies in $\mf m$.
Since, by assumption, $s\in\ker(\ad\mf n)$,
we have $s\in\mf g_{\geq0}$.
Let then $s=s_0+s_{\frac12}+s_1+\dots$, with $s_j\in\mf g_j$,
and clearly $s_j\in\ker(\ad\mf n)$ for all $j\geq0$.
We have $[x,s]=\frac12s_{\frac12}+s_1+\dots$. 
Therefore, to prove that $[x,s]$ lies in $\mf m$,
it suffices to prove that $s_\frac12\in\mf m$.
But $s_{\frac12}$ commutes with $\mf n$, hence with $\mf l^{\perp\omega}\subset\mf n$,
so that $\omega(s_{\frac12},n)=\kappa(f\mid [s_{\frac12},n])=0$ for all $n\in\mf l^{\perp\omega}$.
It follows that $s_{\frac12}\in(\mf l^{\perp\omega})^{\perp\omega}=\mf l\subset\mf m$,
proving the claim.
Therefore, $\rho([x,s])=\kappa(f\mid [x,s])$,
and equation \eqref{20130206:eq11}
reduces to equation \eqref{virasoro},
proving part (a).

Next, we first prove part (b) in the special case when $\mf l\subset\mf g_{\frac12}$
is maximal isotropic with respect to $\omega(\cdot\,,\,\cdot)$,
namely when $\mf m=\mf n$, and therefore $L=\rho(L^{\mf g})$.
Letting $z=0$ in the second equation of \eqref{20120723:eq1}
we have, for $a\in\mf g_{1-\Delta}$,
$$
\{L^{\mf g}\,{}_\lambda a\}_{z=0}=(\partial+\Delta\lambda)a+O(\lambda^2)\,.
$$
It immediately follows, by the Leibniz rule and Lemma \ref{20120511:lem2}(c),
that, if $w\in\mc W\{\Delta\}$, then
$$
\{L_\lambda w\}_{z=0,\rho}=\rho\{\rho(L^{\mf g})\,{}_\lambda w\}_{z=0}
=\rho\{L^{\mf g}\,{}_\lambda w\}_{z=0}
=(\partial+\Delta\lambda)w+O(\lambda^2)\,.
$$
Hence, if $\mf l=\mf l^{\perp\omega}$, 
part (b) follows from Theorem \ref{20120511:thm2}(b).
Next, let us prove (b) for arbitrary isotropic subspace $\mf l_1\subset\mf g_{\frac12}$.
Let $\mf l_2\subset\mf g_{\frac12}$ be a maximal isotropic subspace containing $\mf l_1$,
and let $\mf p_1$ and $\mf p_2$ be subspaces of $\mf g_{\leq\frac12}$
complementary to $\mf m_1=\mf l_1\oplus\mf g_{\geq\frac12}$
and to $\mf m_2=\mf l_2\oplus\mf g_{\geq\frac12}$ respectively.
We thus have
$$
0\subset\mf l_1\subset\mf l_2=\mf l_2^{\perp\omega}\subset\mf l_1^{\perp\omega}\subset\mf g_{\frac12}\,.
$$
Let $\mc W_1$ and $\mc W_2$ be the $\mc W$-algebras corresponding
to the choices $\mf l=\mf l_1$ and $\mf l_2$ respectively,
and consider the PVA isomorphism $\varphi:\,\mc W_{1}\to\mc W_{2}$
provided by Corollary \ref{daniele}.
By the proof of Corollary \ref{daniele},
these isomorphism maps the generators $w_j$ of $\mc W_{1}$, homogeneous 
with respect to conformal weight,
to homogeneous generators of $\mc W_{2}$.
We next want to see how the Virasoro elements are related by these isomorphisms.
Fix a basis $\{\tilde v^r\}_{r\in J_1}$ of $\mf l_1$,
extend it to a basis of $\mf l_2$ by adding elements $\{v^r\}_{r\in J_2}\subset\mf p_1$,
further extend it to a basis of $\mf l_1^{\perp}$ 
by adding elements $\{v_r\}_{r\in J_2}\subset\mf p_2$
which are $\omega$-dual to the $v^r$'s, i.e. $\omega(v^q,v_r)=\delta_{q,r}$,
and finally further extended this bases to a basis of $\mf g_{\frac12}$ 
by adding elements $\{\tilde v_r\}_{r\in J_1}\subset\mf p_1$
which are $\omega$-dual to the $\tilde v^r$'s, i.e. $\omega(\tilde v^q_i,\tilde v_r)=\delta_{q,r}$.
Then,
the Virasoro elements in $\mc W_1$ and $\mc W_2$ provided by part (a)
are respectively
$$
\begin{array}{l}
L_1=\rho_1\Big(L^{\mf g}+\sum_{r\in J_2}v^r\partial v_r-\sum_{r\in J_2}v_r\partial v^r\Big)
\,,\,\,
L_2=\rho_2(L^{\mf g})
\,,
\end{array}
$$
where $\rho_i,\,i=1,2$ are the differential algebra homomorphisms $\mc V(\mf g)\to\mc V(\mf p_i)$
in \eqref{rho}, for $\mf p=\mf p_i$.
By the definition of the isomorphism $\varphi:\,\mc W_1\to\mc W_2$ defined in the proof
of Corollary \ref{daniele}, we have
$$
\varphi(L_1)=\rho_2\Big(L^{\mf g}+\sum_{r\in J_2}v^r\partial v_r-\sum_{r\in J_2}v_r\partial v^r\Big)
\,,
$$
which is equal to $L_2$ since $v^r\in\mf l_2$ for every $r$, so that $\rho_2(v^r)=0$.
But then, take a  generator $w_j\in\mc W_1\{\Delta\}$,
which is mapped by $\varphi$ to a generator $\varphi(w_j)\in\mc W_2\{\Delta\}$.
Since claim (b) holds in the case of maximal isotropic subspace $\mf l$,
we have
$\{{L_2}_\lambda \varphi(w_j)\}_{z=0,\rho_2}=(\partial+\Delta\lambda)\varphi(w_j)+O(\lambda^2)$.
Therefore, since $\varphi$ is a PVA isomorphism mapping $L_1$ to $L_2$,
we conclude that
$\{{L_1}_\lambda w_j\}_{z=0,\rho_1}=(\partial+\Delta\lambda)w_j+O(\lambda^2)$,
proving the claim in (b).

For (c), one only needs to prove that all elements in $\mc W\{1\}$ and $\mc W\{\frac32\}$
are primary elements.
First, if $w_j\in\mc W_{\frac32}$ we have
$\{L_\lambda w_j\}=\partial w_j+\frac32\lambda w_j+L_{(2)}w_j\lambda^2+\dots$.
But for $n\geq2$ the element $L_{(n)}w_j\in\mc W$ has conformal weight
$\Delta(L_{(n)}w_j)=2+\frac32-n-1\leq\frac12$.
Hence it must be $L_{(n)}w_j=0$, proving that $w_j$ is primary.
Finally, if $w_j\in\mc W_1$, then, by Theorem \ref{20120511:thm2}(b),
it must be of the form
$w_j=v_j+\sum_k a_kb_k$, with $v_j\in\mf g^f_0=\mf g^f\cap\mf g_0$ 
and $a_k,b_k\in\mf p_{\frac12}=\mf p\cap\mf g_{\frac12}$.
But by equation \eqref{20120723:eq1}, for $z=0$
all the elements $v_j\in\mf g^f_0$ and $a_k,b_k\in\mf g_{\frac12}$ are primary
elements with respect to the Virasoro element $L^{\mf g}$ in $\mc V(\mf g)$
(we use the fact that $\kappa(v_j\mid x)=\frac12\kappa(v_j\mid [e,f])=0$ for $v_j\in\mf g^f_0$).
Hence, by the Leibniz rule $w_j$ is also a primary element
with respect $L^{\mf g}\in\mc V(\mf g)$ for $z=0$,
and therefore, by Lemma \ref{20120511:lem2}(c), $w_j$ is a primary element
with respect to $L\in\mc W$ (for $z=0$).
\end{proof}

\begin{corollary}\label{20120726:cor}
Assume that $\mf p\subset\mf g$ is compatible with the $\ad x$-eigenspace 
decomposition \eqref{dec} (in particular $\mf g^f\subset\mf p$).
Let $\{v_j\}_{j\in J}$ be a basis of $\mf g^f$ consisting of $\ad x$-eigenvectors: 
$[x,v_j]=(1-\Delta_j)v_j,\,j\in J$ (with $\Delta_j\geq1$).
Let $\{\tilde w_j\}_{j\in J}$ be an arbitrary collection of elements of $\mc W$
of the form
$\tilde w_j=v_j+\tilde g_j$,
where 
$$
\tilde g_j=\sum b_1^{(m_1)}\dots b_s^{(m_s)}\in\mc V(\mf p)\{\Delta_j\}\,,
$$
is a sum such of products of $\ad x$-eigenvectors $b_i\in\mf p$,
such that
$$
(1-\dx(b_1))+\dots+(1-\dx(b_s))+m_1+\dots+m_s=\Delta_j\,,
$$
and $s+m_1+\dots+m_s>1$.
(Such a collection of vectors exists by Theorem \ref{20120511:thm2}(b)).
Then:
\begin{enumerate}[(a)]
\item
The elements $\{\tilde w_j\}_{j\in J}$
form a set of generators for the algebra of differential polynomials $\mc W$.
\item
For $\Delta_j=1$ or $\frac32$,
the generators $\tilde w_j$ are uniquely determined 
by the corresponding basis elements $v_j\in\mf g^f_j$,
in particular, we have $\tilde w_j=w_j$ from Theorem \ref{20120511:thm2}(b).
\end{enumerate}
\end{corollary}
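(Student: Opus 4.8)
The plan is to compare the given collection $\{\tilde w_j\}_{j\in J}$ with the distinguished generators $\{w_j\}_{j\in J}$ furnished by Theorem~\ref{20120511:thm2}(b), and to show that passing from one to the other is an invertible change of variables. The main tool is the one already used in the proof of Theorem~\ref{20120511:thm2}(c): the \emph{differential polynomial degree} $\mathrm{dd}$ on $\mc V(\mf p)$, for which $v^{(n)}$ has degree $n+1$ when $v\in\mf p$. This is a $\mb Z_+$-grading of $\mc V(\mf p)$ with respect to which $\partial$ has degree $1$, and it induces an increasing filtration of $\mc W$. The key remark is that both $w_j=v_j+g_j$ and $\tilde w_j=v_j+\tilde g_j$ have the \emph{same} $\mathrm{dd}$-minimal term $v_j$ (of degree $1$), since by hypothesis every monomial of $g_j$ and of $\tilde g_j$ has $\mathrm{dd}$-degree $\geq2$. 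Alongside it I would keep track of the conformal weight grading $\mc W=\mb F\oplus\bigoplus_{\Delta\geq1}\mc W\{\Delta\}$, noting $\tilde w_j,w_j\in\mc W\{\Delta_j\}$. I would begin by recording the following \emph{leading term principle}: if $Q$ is a nonzero differential polynomial in indeterminates $w_1,\dots,w_r$ and $Q_0$ is its homogeneous component of minimal degree when $w_k^{(n)}$ is assigned degree $n+1$, then the $\mathrm{dd}$-minimal part of $Q(w_1,\dots,w_r)\in\mc V(\mf p)$ is $Q_0(v_1,\dots,v_r)$, which is nonzero because $v_1,\dots,v_r\in\mf g^f$ are differentially algebraically independent in $\mc V(\mf p)$ (proof of Theorem~\ref{20120511:thm2}(c)). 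Applied to a hypothetical differential relation among the $\tilde w_j$, this yields their differential algebraic independence.

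For part (a), let $\mc A\subset\mc W$ be the differential subalgebra generated by the $\tilde w_j$, and prove $\mc W\{\Delta\}\subset\mc A$ by induction on $\Delta$, the case $\Delta=0$ being clear. Given $g\in\mc W\{\Delta\}$, express it as a differential polynomial in the $w_k$; any differential monomial occurring, other than a bare generator $w_k$ with $\Delta_k=\Delta$, involves only generators of conformal weight $<\Delta$ (each generator having conformal weight $\geq1$), hence lies in $\mc A$ by the inductive hypothesis. So it suffices to show that every generator $w_j$ with $\Delta_j=\Delta$ lies in $\mc A$. For this apply the leading term principle to $\tilde w_j-w_j=\tilde g_j-g_j$, which lies in $\mc W\{\Delta\}$ and all of whose monomials in $\mc V(\mf p)$ have $\mathrm{dd}$-degree $\geq2$: its expansion in the $w_k$ can therefore have no term of degree $1$, i.e.\ no term $c\,w_k$, so that expansion involves only generators of conformal weight $<\Delta$ and hence lies in $\mc A$; thus $w_j=\tilde w_j-(\tilde g_j-g_j)\in\mc A$. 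This closes the induction, giving $\mc A=\mc W$, which together with the independence noted above proves (a).

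For part (b), let $\Delta_j\in\{1,\frac32\}$. By Proposition~\ref{daniele2}(c) --- or directly, since a differential monomial in the $w_k$ of total conformal weight $1$ or $\frac32$ can only be a single bare generator of that weight, each generator having conformal weight $\geq1$ and $\partial$ raising conformal weight by $1$ --- the space $\mc W\{\Delta_j\}$ is spanned by the distinguished generators of conformal weight $\Delta_j$. Hence $\tilde w_j-w_j=\sum_{k\,:\,\Delta_k=\Delta_j}c_k w_k$ for some $c_k\in\mb F$, and the $\mathrm{dd}$-minimal part of the right-hand side in $\mc V(\mf p)$ is $\sum_k c_k v_k$, whereas the left-hand side $\tilde g_j-g_j$ has all monomials of $\mathrm{dd}$-degree $\geq2$; linear independence of the $v_k$ forces all $c_k=0$, so $\tilde w_j=w_j$. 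The hard part throughout is the bookkeeping of the two compatible structures --- the conformal weight grading, genuinely respected by the expansion in the free generators $w_k$ because each $w_k$ is conformal-weight homogeneous, and the $\mathrm{dd}$-filtration, only a filtration on $\mc W$, which is why one must argue with minimal terms rather than with homogeneity --- needed to make the ``unitriangularity'' of the transition $w_j\leftrightarrow\tilde w_j$ precise; once the leading term principle is set up, the rest is short.
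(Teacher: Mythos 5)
Your proof is correct and takes essentially the same route as the paper's: order the generators by conformal weight, show that $\tilde w_j$ equals $w_j$ plus a differential polynomial involving only generators of strictly smaller conformal weight, and invert the resulting unitriangular transition (with part (b) following because no such lower-weight contribution can exist when $\Delta_j=1$ or $\frac32$). Your explicit ``leading term principle'' for the differential polynomial degree $\mathrm{dd}$ nicely fills in the one step the paper leaves implicit, namely why no \emph{other} generator of the same conformal weight can occur linearly in $\tilde w_j - w_j$.
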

\begin{proof}
By Theorem \ref{20120511:thm2}(c), each element $\tilde w_j$ 
is a differential polynomial in the generators $\{w_k\}_{k\in J}$
defined in Theorem \ref{20120511:thm2}(b).
On the other hand,
if we order the generators according to their increasing conformal weights,
each $\tilde w_j$ is equal to $w_j$ plus a differential polynomial $P_j$ in the elements $w_k$'s
with $k<j$.
Hence, each $w_j$ can be expressed as a differential polynomial in the $\tilde w_k$'s,
proving part (a).
Part (b) follows from the same argument and the observation that the conformal weight 
of the generators of $\mc W$ are greater than or equal to $1$,
so that $P_j=0$ if $\Delta_j=1$ or $\frac32$.
\end{proof}

\subsection{Examples of classical \texorpdfstring{$\mc W$}{W}-algebras}

\begin{example}[Virasoro-Magri and Gardner-Faddeev-Zakharov PVAs]\label{W-sl2}
Let $\mf g=\mf{sl}_2$ with standard generators $f,h=2x,e$ and fix
$\kappa(a\mid b)=\tr(ab)$, for any $a,b\in\mf{sl}_2$.
With respect to the $\ad x$-eigenspaces decomposition \eqref{dec},
we have $\mf n=\mf m=\mb Fe$,
$\mf m^{\perp}=\mb Fh\oplus\mb Fe$,
$[f,\mf n]=\mb Fh\subset\mf m^\perp$.
We fix the subspace $\mf p=\mb Fh\oplus\mb Ff\subset\mf{sl}_2$ complementary to $\mf m$,
and the subspace $V=\mb F e\subset\mf m^\perp$ complementary to $[f,\mf n]$.
Then the element $X=e\otimes(-\frac{h}{2})\in\mf n\otimes\mc V(\mf p)$ brings
$q=\frac12 h\otimes h+e\otimes f\in\mf m^{\perp}\otimes\mc V(\mf p)$ to
$$
q^X=e\otimes\big(\frac{h^2}{4}+\frac{h^\prime}{2}+f\big)\in V\otimes\mc V(\mf p)\,.
$$
By Theorem \ref{20120511:thm2}, the corresponding $\mc W$-algebra is,
as differential algebra, equal to the algebra of differential polynomials 
$\mc W=\mb F[w,w',w'',\dots]\subset\mc V(\mf p)$,
where 
$$
w=\frac{h^2}{4}+\frac{h^\prime}{2}+f\in\mc V(\mf p)\,.
$$
We note that $w=L=\rho(L^{\mf{sl}_2})$ (see Proposition \ref{daniele2}),
namely it is a Virasoro element
for the $z=0$ $\lambda$-bracket. If we take $s=e\in\Ker(\ad\mf n)$,
by an easy computation we obtain
$$
\{w_\lambda w\}_{z,\rho}=(2\lambda+\partial)w-\frac{\lambda^3}{2}+2z\lambda\,.
$$
Hence, the two compatible Poisson structures $H,K\in\mc W[\lambda]$ associated 
to this family of $\lambda$-brackets via \eqref{masterformula} are
$H(\lambda)=(\partial+2\lambda)w-\frac{\lambda^3}{2}$,
known as the \emph{Virasoro-Magri} Poisson structure (of central charge $c=-\frac{1}{2}$),
and $K(\lambda)=-2\lambda$, known as the \emph{Gardner-Faddeev-Zakharov} 
Poisson structure (up to the factor $-2$).
\end{example}

\begin{example}\label{W-sl3-principale}
Let $\mf g=\mf{sl}_3$ and fix $\kappa(a\mid b)=\tr(ab)$ for $a,b\in\mf{sl}_3$.
Let $f\in\mf{sl}_3$ be its principal nilpotent element. In the matrix realization it is
$f=E_{21}+E_{32}$. 
We can extend $f$ to an $\mf{sl}_2$-triple $(f,h=2x,e)$,
with $x=E_{11}-E_{33}$.
In this case $\mf g_{\frac12}=0$, and hence $\mf n=\mf m\subset\mf{sl}_3$ 
is the nilpotent subalgebra of strictly upper triangular matrices.
Its orthogonal complement $\mf m^{\perp}$ consists of all upper triangular matrices.
We can fix $\mf p\subset\mf{sl}_3$ to be the subspace, complementary to $\mf m$,
consisting of all lower triangular matrices.
A basis of $\mf p$ is $\{h_1=E_{11}-E_{22},h_2=E_{22}-E_{33},E_{21},E_{31},E_{32}\}$.
Also, as a subspace $V\subset\mf m^\perp$
complementary to $[f,\mf n]\subset\mf m^\perp$ we choose, for example,
$V=\mf g^e=\Ker(\ad e)=\mb F(E_{12}+E_{23})\oplus\mb FE_{13}$. 
After a straightforward computation, one can find $X\in\mf n\otimes\mc V(\mf p)$ such that
$q^X=(E_{12}+E_{23})\otimes w_1+E_{13}\otimes w_2\in V\otimes\mc V(\mf p)$.
The answer is as follows:
$X=E_{12}\otimes a+E_{23}\otimes b+E_{13}\otimes c$, where
$$
a=-\frac13(2h_1+h_2)
\,\,,\,\,\,\,
b=-\frac13(h_1+2h_2)
\,\,,\,\,\,\,
c=\frac12(E_{32}-E_{21})-\frac16( h_1^2-h_2^2+h_1'- h_2')\,,
$$
and
\begin{align*}
w_1=&
\frac12(E_{21}+E_{32})+\frac16(h_1^2+h_1h_2+h_2^2)+\frac12(h_1'+h_2')\,, \\
w_2=&
E_{31}+\frac13h_1(E_{21}-2E_{32})+\frac13h_2(2E_{21}-E_{32})+\frac2{27}(h_1^3-h_2^3)
+\frac19 h_1h_2(h_1-h_2)
\\
&
+\frac12(E_{21}^\prime-E_{32}^\prime)+\frac16h_1(2h_1^\prime-h_2^\prime)
+\frac16h_2(h_1^\prime-2h_2^\prime)+\frac16( h_1''-h_2'')\,.
\end{align*}
It is easy to check that $w_1=\frac12 L=\rho(L^{\mf{sl}_3})$ (see Proposition \ref{daniele2}). 
Hence, by Theorem \ref{20120511:thm2},  
$L$ and $w_2$ generate the algebra of differential polynomials $\mc W$.
Letting
$s=E_{13}\in\Ker(\ad n)$, we get the following formulas for the
$\lambda$-bracket \eqref{wbrack} on the generators of the classical $\mc W$-algebra
\begin{align*}
\{L{}_\lambda L\}_{z,\rho}&=(2\lambda+\partial)L-2\lambda^3,\\
\{L{}_\lambda w_2\}_{z,\rho}&
=(3\lambda+\partial)w_2+3z\lambda,\\
\{w_2{}_\lambda w_2\}_{z,\rho}&
=
\frac13(2\lambda+\partial)L^2
-\frac16(\lambda+\partial)^3L-\frac16\lambda^3L
-\frac14\lambda(\lambda+\partial)(2\lambda+\partial)L+\frac16\lambda^5\,.
\end{align*}
In particular,  $w_2$ is a primary element of conformal weight $3$ for $z=0$.
The corresponding compatible Poisson structures $H,K\in\Mat_{2\times2}\mc W[\lambda]$ are
$$
\begin{array}{l}
\displaystyle{
H(\lambda)=
\left(\begin{array}{cc}
(2\lambda+\partial)L-2\lambda^3 & 
(3\lambda+2\partial)w_2 \\
\\
(3\lambda+\partial)w_2
& 
\begin{array}{c}
\frac13(2\lambda+\partial)L^2
-\frac16(\lambda+\partial)^3L-\frac16\lambda^3L\\
-\frac14\lambda(\lambda+\partial)(2\lambda+\partial)L+\frac16\lambda^5
\end{array}
\end{array}\right)
\,,} \\
\displaystyle{
K(\lambda)=
\left(\begin{array}{cc}
0 & -3\lambda \\
-3\lambda & 0
\end{array}\right)
\,.
}
\end{array}
$$
\end{example}

\begin{example}\label{W-sl3-minimal}
For $\mf g=\mf{sl}_3$, with the invariant bilinear form $\kappa(a\mid b)=\tr(ab)$,
consider the lowest root vector $f=E_{31}\in\mf{sl}_3$.
We can extend it to an $\mf{sl}_2$-triple $(f,h=2x,e)$,
with $x=\frac12E_{11}-\frac12E_{33}$, $e=E_{13}$.
In this case $\mf g_{\frac12}=\mb FE_{12}\oplus\mb FE_{23}$.
Let us choose $\mf l\subset\mf g_{\frac12}$ to be the maximal isotropic subspace $\mf l=\mb F(E_{12}+E_{23})=\mf l^{\perp\omega}$.
In this case, $\mf n=\mf m=\mb F(E_{12}+E_{23})\oplus\mb FE_{13}\subset\mf{sl}_3$, 
and its orthogonal complement 
$\mf m^{\perp}$ is generated by $E_{21}-E_{32}$ and all upper triangular matrices in $\mf{sl}_3$.
We can fix the subspace $\mf p\subset\mf{sl}_3$, complementary to $\mf m$,
with basis $\{g=E_{12}-E_{23},h_1=E_{11}-E_{22},h_2=E_{22}-E_{33},E_{21},E_{31},E_{32}\}$.
Also in this case, a subspace $V\subset\mf m^\perp$
complementary to $[f,\mf n]\subset\mf m^\perp$ is, for example,
$V=\mf g^e=\mb F(h_1-h_2)\oplus\mb FE_{12}\oplus\mb FE_{23}\oplus\mb FE_{13}$. 
Hence, we can find $X=(E_{12}+E_{23})\otimes a+E_{13}\otimes b\in\mf n\otimes\mc V(\mf p)$ 
such that $q^X=E_{13}\otimes w_1+E_{12}\otimes w_2+
E_{23}\otimes w_3+(h_1-h_2)\otimes w_4\in V\otimes\mc V(\mf p)$.
The answer is as follows:
$$
a=-\frac12g,
\qquad
b=-\frac12 h_1-\frac12 h_2\,,
$$
and
$$
\begin{array}{l}
\displaystyle{
w_1=
E_{31}-\frac3{64} g^4+\frac12g(E_{21}-E_{32})+\frac18 g^2(h_1-h_2)+\frac14( h_1+h_2)^2+\frac12 h_1'+\frac12 h_2'
\,,} \\
\displaystyle{
w_2=
E_{21}-\frac18g^3+\frac12gh_1+\frac12g'
\,\,,\,\,\,\,
w_3=
E_{32}+\frac18g^3+\frac12gh_2+\frac12g'
\,,} \\
\displaystyle{
w_4=
-\frac18 g^2+\frac16(h_1-h_2)
\,.
}
\end{array}
$$
It is not hard to check that $L=\rho(L^{\mf sl_3})=w_1+3w_4^2$
(see Proposition \ref{daniele2}). Hence, by Theorem \ref{20120511:thm2},
$\mc W$ is the algebra of differential polynomials in the generators $L,w_2,w_3,w_4$.
Letting $s=E_{12}+E_{23}\in\Ker(\ad \mf n)$, we get the following formulas for the
$\lambda$-brackets \eqref{wbrack} on the generators of $\mc W$:
\begin{align*}
\{L_\lambda L\}_{z,\rho}&=(2\lambda+\partial)L-\frac12\lambda^3,\\
\{L{}_\lambda w_2\}_{z,\rho}&=(\frac32\lambda+\partial)w_2+\frac32z\lambda,\\
\{L{}_\lambda w_3\}_{z,\rho}&=(\frac32\lambda+\partial)w_3+\frac32z\lambda,\\
\{L{}_\lambda w_4\}_{z,\rho}&=(\lambda+\partial)w_4,\\
\{w_2{}_\lambda w_2\}_{z,\rho}&=0,\\
\{w_2{}_\lambda w_3\}_{z,\rho}&=
-L+12w_4^2-3(2\lambda+\partial)w_4+\lambda^2,\\
\{w_2{}_\lambda w_4\}_{z,\rho}&=\frac12 w_2+\frac12 z,\\
\{w_3{}_\lambda w_3\}_{z,\rho}&=0,\\
\{w_3{}_\lambda w_4\}_{z,\rho}&=-\frac12 w_3-\frac12 z,\\
\{w_4{}_\lambda w_4\}_{z,\rho}&=\frac16 \lambda\,.
\end{align*}
%
We note that, with respect to the $z=0$ $\lambda$-bracket,
$w_2$ and $w_3$ are primary elements of conformal weight $\frac32$,
and $w_4$ is a primary element of conformal weight $1$.

We can also consider $\mf l=0$. Then $\mf l^{\perp\omega}=\mf g_{\frac12}$.
Hence, $\mf n$ consists of all strictly upper triangular matrices and
$\mf m=\mb FE_{13}\subset\mf n$.
The orthogonal complement 
$\mf m^{\perp}$ is spanned by $E_{21}$, $E_{32}$ and all upper triangular matrices in $\mf{sl}_3$.
We can fix the subspace $\mf p\subset\mf{sl}_3$, complementary to $\mf m$,
with basis $\{E_{12},E_{23},h_1=E_{11}-E_{22},h_2=E_{22}-E_{33},E_{21},E_{31},E_{32}\}$.
Again, as a subspace $V\subset\mf m^\perp$
complementary to $[f,\mf n]\subset\mf m^\perp$ we take $V=\mf g^e
=\mb F(h_1-h_2)\oplus\mb FE_{12}\oplus\mb FE_{23}\oplus\mb FE_{13}$. 
In this case, we can find $X=E_{12}\otimes a+E_{23}\otimes b+E_{13}\otimes c\in\mf n\otimes\mc V(\mf p)$ 
such that $q^X=E_{13}\otimes w_1+E_{12}\otimes w_2+
E_{23}\otimes w_3+(h_1-h_2)\otimes w_4\in V\otimes\mc V(\mf p)$.
We get the following answer:
$$
a=E_{23},
\qquad
b=-E_{12},
\qquad
c=-\frac12 h_1-\frac12 h_2
$$
and
$$
\begin{array}{l}
\displaystyle{
w_1=
E_{31}+E_{12}E_{21}+E_{23}E_{32}-\frac34E_{12}^2E_{32}^2+\frac14(h_1+h_2)^2
-\frac12E_{12}E_{23}(h_1-h_2)
} \\
\displaystyle{
\,\,\,\,\,\,\,\,\,\,\,\,
+\frac12E_{23}E_{12}^{\prime}
-\frac12E_{12}E_{23}^{\prime}+\frac12 h_1'+\frac12 h_2'
} \\
\displaystyle{
\vphantom{\bigg(}
w_2=E_{21}-E_{12}E_{23}^2-E_{23}h_1-E_{23}^{\prime}
\,\,,\,\,\,\,
w_3=
E_{32}-E_{12}^2E_{23}+E_{12}h_2+E_{12}^{\prime}
\,
} \\
\displaystyle{
w_4=
\frac12E_{12}E_{23}+\frac16(h_1-h_2)
\,.
}
\end{array}
$$
Also in this case we get $L=w_1+3w_4^2$.
Letting
$s=E_{13}\in\Ker(\ad \mf n)$, we get the following formulas for the
$\lambda$-brackets \eqref{wbrack} on the generators of $\mc W$:
\begin{align*}
\{L_\lambda L\}_{z,\rho}&=(2\lambda+\partial)L-\frac12\lambda^3+2z\lambda,\\
\{L{}_\lambda w_2\}_{z,\rho}&=(\frac32\lambda+\partial)w_2,\\
\{L{}_\lambda w_3\}_{z,\rho}&=(\frac32\lambda+\partial)w_3,\\
\{L{}_\lambda w_4\}_{z,\rho}&=(\lambda+\partial)w_4,\\
\{w_2{}_\lambda w_2\}_{z,\rho}&=0,\\
\{w_2{}_\lambda w_3\}_{z,\rho}&=
-L+12w_4^2-3(2\lambda+\partial)w_4+\lambda^2-z,\\
\{w_2{}_\lambda w_4\}_{z,\rho}&=\frac12 w_2,\\
\{w_3{}_\lambda w_3\}_{z,\rho}&=0,\\
\{w_3{}_\lambda w_4\}_{z,\rho}&=-\frac12 w_3,\\
\{w_4{}_\lambda w_4\}_{z,\rho}&=\frac16 \lambda\,.
\end{align*}
%
As stated in Corollary \ref{daniele} we note that the Poisson
structure corresponding to $z=0$ does not change for different choices
of the isotropic subspace $\mf l\subset\mf g_{\frac12}$
(but it does change for arbitrary $z$ with the change of $s$).
\end{example}


\section{Generalized Drinfeld-Sokolov hierarchies in the non-homogeneous case}\label{sec:non_hom}
In this section we construct,
using the Lenard-Magri scheme, an integrable hierarchy of Hamiltonian equations
for the classical $\mc W$-algebra defined in Definition \ref{walg3}.
We use the same setup and notation as in the previous section.
\subsection{Reformulation of the Lenard-Magri scheme}\label{sec:4.1}

\begin{proposition}\label{int_hier1_ds}
Letting $L(z)=\partial+q+(f+zs)\otimes1\in\mb F\partial\ltimes(\mf g\otimes\mc V(\mf p))$,
we have,
for $a\in\mf p$ and $g\in\mc V(\mf p)$,
$$
(H-zK)(a\otimes g)=(\pi_{\mf m^\perp}\otimes1)[L(z),a\otimes g]\,.
$$
\end{proposition}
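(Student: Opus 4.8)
The plan is to compute both sides of the asserted identity directly from their definitions and compare them term by term. First I would recall the description of $H$ and $K$ as linear maps $\mf p\otimes\mc V(\mf p)\to\mf m^\perp\otimes\mc V(\mf p)$, obtained from \eqref{HKds} via the identification \eqref{eq:identif} together with $\mf p^*\simeq\mf m^\perp$, namely the formulas \eqref{HKmaps}. Subtracting $z$ times the $K$-formula from the $H$-formula gives
$$
(H-zK)(a\otimes g)=\sum_{i\in P}\pi_{\mf m^\perp}[q^i,a]\otimes q_ig+\pi_{\mf m^\perp}[f+zs,a]\otimes g+\pi_{\mf m^\perp}(a)\otimes\partial g\,.
$$

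Next I would expand $[L(z),a\otimes g]$ inside the Lie algebra $\mb F\partial\ltimes(\mf g\otimes\mc V(\mf p))$. Writing $L(z)=\partial+q+(f+zs)\otimes1$ with $q=\sum_{i\in P}q^i\otimes q_i$, and using the Lie bracket rule $[b\otimes f,c\otimes h]=[b,c]\otimes fh$ together with $[\partial,a\otimes g]=a\otimes\partial g$, one obtains
$$
[L(z),a\otimes g]=a\otimes\partial g+\sum_{i\in P}[q^i,a]\otimes q_ig+[f+zs,a]\otimes g\,\in\mf g\otimes\mc V(\mf p)\,,
$$
where the membership uses the observation recalled before Lemma \ref{lem1} that bracketing an element of $\mb F\partial\ltimes(\mf g\otimes\mc V(\mf p))$ with an element of $\mf g\otimes\mc V(\mf p)$ stays in $\mf g\otimes\mc V(\mf p)$. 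Applying $\pi_{\mf m^\perp}\otimes1$ to this expression reproduces exactly the displayed formula for $(H-zK)(a\otimes g)$, which is the assertion.

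The computation is entirely routine, so there is no real obstacle; the only point I would emphasize is the necessity of the projection $\pi_{\mf m^\perp}$. In the homogeneous case (Proposition \ref{int_hier1}) one effectively has $\mf p=\mf g$ and no projection is needed, but here $\mf p$ is a proper subspace of $\mf g$ and the raw bracket $[L(z),a\otimes g]$ generally leaves $\mf m^\perp\otimes\mc V(\mf p)$: for instance the term $a\otimes\partial g$ has $a\in\mf p$ and must be replaced by $\pi_{\mf m^\perp}(a)\otimes\partial g$ to lie in the target of $H$. This is exactly why the statement is phrased with $(\pi_{\mf m^\perp}\otimes1)$ applied to the bracket rather than as a bare equality, and why the three summands of $(H-zK)$ in \eqref{HKmaps} are precisely $\pi_{\mf m^\perp}$ applied to the three summands of $[L(z),a\otimes g]$.
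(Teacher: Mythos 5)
Your proof is correct and follows essentially the same route as the paper, whose proof of this proposition is simply the remark that it follows immediately from the formulas \eqref{HKmaps}; you have merely written out the term-by-term expansion of $[L(z),a\otimes g]$ in $\mb F\partial\ltimes(\mf g\otimes\mc V(\mf p))$ and the matching with $(H-zK)(a\otimes g)$ that the paper leaves implicit. Your closing remark about the role of $\pi_{\mf m^\perp}$, contrasting with Proposition \ref{int_hier1}, is accurate and a useful clarification.
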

\begin{proof}
It follows immediately from \eqref{HKmaps}.
\end{proof}
The variational derivative \eqref{eq:varder} in the algebra of differential polynomials $\mc V(\mf p)$,
denoted by $\frac{\delta}{\delta q}:\,\mc V(\mf p)\to\mf p\otimes\mc V(\mf p)$,
is given by ($g\in\mc V(\mf p)$):
\begin{equation}\label{variational_ds}
\frac{\delta g}{\delta q}=
\sum_{i\in P}q_i\otimes\frac{\delta g}{\delta q_i}
\,\in\mf p\otimes\mc V(\mf p)\,.
\end{equation}
\begin{lemma}\label{gauge1}
If $g\in\mc W\subset\mc V(\mf p)$, 
then $\left[L(z),\frac{\delta g}{\delta q}\right]\in\mf m^{\perp}\otimes\mc V(\mf g)$.
\end{lemma}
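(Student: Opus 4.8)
The plan is to establish that $\kappa(a\otimes 1\mid[L(z),\frac{\delta g}{\delta q}])=0$ for every $a\in\mf m$, which is precisely the assertion that $[L(z),\frac{\delta g}{\delta q}]$ lies in $\mf m^\perp\otimes\mc V(\mf p)$ (the form $\kappa$ being extended componentwise to $\mf g\otimes\mc V(\mf p)$). First I would dispose of the parameter $z$: since $[s,\mf m]=0$ we have $\kappa([s,q_i]\mid m)=-\kappa(q_i\mid[s,m])=0$ for all $m\in\mf m$, so with $L=\partial+q+f\otimes 1$ and
$$[L(z),\tfrac{\delta g}{\delta q}]=[L,\tfrac{\delta g}{\delta q}]+z\sum_{i\in P}[s,q_i]\otimes\tfrac{\delta g}{\delta q_i}$$
the $z$-term already belongs to $\mf m^\perp\otimes\mc V(\mf p)$; hence it suffices to treat $L$.

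The essential input is that $g\in\mc W$ is gauge invariant, i.e.\ $g(q^{a\otimes k})=g(q)$ for all $a\in\mf n$ and $k\in\mc V(\mf p)$ (Corollary \ref{20120511:cor}). I would differentiate this identity in $k$ at $k=0$: by \eqref{gauge},
$$q^{a\otimes\epsilon k}=e^{\ad(a\otimes\epsilon k)}L-\partial-f\otimes 1=q+\epsilon[a\otimes k,L]+O(\epsilon^2)\,,$$
and, using the inclusions $\mf n\subseteq\mf m^\perp$, $[f,\mf n]\subseteq\mf m^\perp$ and $[\mf n,\mf m^\perp]\subseteq\mf m^\perp$ (all immediate from the definitions in \eqref{subalgebras} together with $[\mf m,\mf n]\subseteq\mf m$ and invariance of $\kappa$), one checks that $[a\otimes k,L]\in\mf m^\perp\otimes\mc V(\mf p)$. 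Therefore
$$0=\frac{d}{d\epsilon}\Big|_{\epsilon=0}g(q^{a\otimes\epsilon k})=D_{[a\otimes k,L]}g\,,$$
where for $r=\sum_{i\in P}q^i\otimes r_i\in\mf m^\perp\otimes\mc V(\mf p)$ I write $D_r g=\sum_{i\in P,\,n\in\mb Z_+}\frac{\partial g}{\partial q_i^{(n)}}\partial^n r_i$ for the corresponding directional derivative (note that the $i$-th component of $q$ in the basis $\{q^i\}$ is the generator $q_i$).

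To turn this into the pointwise statement I would integrate. On the one hand $\int D_r g=\int\kappa(r\mid\frac{\delta g}{\delta q})$ for $r\in\mf m^\perp\otimes\mc V(\mf p)$, by integration by parts together with $\kappa(q^i\mid q_j)=\delta_{ij}$. On the other hand, moving $L$ across the pairing — integration by parts for the summand $\partial$ of $L$, invariance of $\kappa$ on $\mf g\otimes\mc V(\mf p)$ for the summand $q+f\otimes 1$ — gives
$$\int\kappa\big([a\otimes k,L]\mid\tfrac{\delta g}{\delta q}\big)=\int\kappa\big(a\otimes k\mid[L,\tfrac{\delta g}{\delta q}]\big)=\int k\,\kappa\big(a\otimes 1\mid[L,\tfrac{\delta g}{\delta q}]\big)\,.$$
Hence $\int k\,\psi=0$ for all $k\in\mc V(\mf p)$, where $\psi=\kappa(a\otimes 1\mid[L,\frac{\delta g}{\delta q}])$; by non-degeneracy of the pairing $(k,\psi)\mapsto\int k\psi$ on the algebra of differential polynomials (for instance $\int\psi^2=0$ forces $\psi=0$) this gives $\psi=0$. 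Since this holds for all $a\in\mf n$, in particular for all $a\in\mf m$, we get $[L,\frac{\delta g}{\delta q}]\in\mf m^\perp\otimes\mc V(\mf p)$, and with the first paragraph $[L(z),\frac{\delta g}{\delta q}]\in\mf m^\perp\otimes\mc V(\mf p)\subseteq\mf m^\perp\otimes\mc V(\mf g)$.

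The step I expect to be most delicate is the last one — passing from the vanishing of the directional derivative to the pointwise identity $\psi=0$: it is crucial that $[a\otimes k,L]$ genuinely lie in $\mf m^\perp\otimes\mc V(\mf p)$ so that the identity $\int D_r g=\int\kappa(r\mid\frac{\delta g}{\delta q})$ applies, and one must then invoke non-degeneracy of the $\int$-pairing. If one wishes to avoid the latter, an alternative is to expand $\kappa(a\otimes 1\mid[L(z),\frac{\delta g}{\delta q}])$ directly via the completeness relations \eqref{complete} and invariance, obtaining $\partial\kappa(a\otimes 1\mid\frac{\delta g}{\delta q})+\sum_{i\in P}\rho([q_i,a])\frac{\delta g}{\delta q_i}$; this is exactly the value at $1$ of the formal adjoint of the (identically vanishing) differential operator $k\mapsto D_{[a\otimes k,L]}g$, hence it vanishes.
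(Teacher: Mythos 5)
Your proof is correct, but it takes a genuinely different route from the paper's. The paper works entirely inside the PVA formalism: since $\mf m\subset\mf n$, the defining property of $\mc W$ gives $\rho\{{q_k}_\lambda g\}_z=0$ identically in $\lambda$ for $q_k$ in a basis of $\mf m$; skew-symmetry of the affine $\lambda$-bracket then yields the \emph{pointwise} identity $\rho\{g_\lambda q_k\}_z|_{\lambda=0}=0$, and the Master Formula identifies this expression with $\kappa\big(\big[L(z),\frac{\delta g}{\delta q}\big]\,\big|\,q_k\otimes1\big)$ — three lines, no integration. You instead use the gauge-invariance characterization of $\mc W$ (Corollary \ref{20120511:cor}, hence Theorem \ref{20120511:thm1}), linearize $g(q^{a\otimes\epsilon k})=g(q)$ in $\epsilon$, and dualize by integration by parts; this is a nice geometric reading of the lemma ($\frac{\delta g}{\delta q}$ is ``orthogonal to the gauge orbits''), and your verifications that $[a\otimes k,L]\in\mf m^\perp\otimes\mc V(\mf p)$ and that the $z$-term is harmless are all sound. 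The cost is twofold: you import the heavier Theorem \ref{20120511:thm1}, and your main line only produces the identity modulo $\partial\mc V(\mf p)$, so you must invoke non-degeneracy of the pairing $(k,\psi)\mapsto\tint k\psi$ to upgrade to the pointwise statement. That fact is true (and your $\tint\psi^2=0\Rightarrow\psi=0$ argument can be made rigorous by looking at the highest derivative, in characteristic zero), but you only gesture at it; you correctly identify this as the delicate step, and your proposed alternative — reading off the vanishing of the formal adjoint of the identically zero operator $k\mapsto D_{[a\otimes k,L]}g$ — avoids it and is, in substance, the paper's computation in different clothing.
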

\begin{proof}
Let $\{q_i\}_{i\in M}$ be a basis of $\mf m$, so that, if $I=P\cup M$,
$\{q_i\}_{i\in I}$ is a basis of $\mf g$.
In order to prove the lemma, we only need to show that
$$
\kappa\Big(\Big[L(z),\frac{\delta g}{\delta q}\Big]\,\Big|\,q_k\otimes1\Big)=0
\qquad
\text{ for all } k\in M\,.
$$
By the definition \eqref{20120511:eq2} of the space $\mc W\subset\mc V(\mf p)$,
we have $\rho\{{q_k}_\lambda g\}_z=0$ for all $k\in M$ (since $\mf m\subset\mf n$).
By the skewsymmetry of the $\lambda$-bracket, and using the fact that $\rho$ is a homomorphism
of differential algebras, we thus get, using the Master Formula \eqref{masterformula} 
and the definition \eqref{lambda} of the $\lambda$-bracket $\{\cdot\,_\lambda\,\cdot\}_z$ on $\mc V(\mf g)$, 
that 
$$
0=\rho\left.\{g_\lambda q_k\}_z\right|_{\lambda=0}
=\sum_{i\in P}\Big(
\pi_{\mf p}[q_i,q_k]+\kappa(f+zs\mid[q_i,q_k])+\kappa(q_i\mid q_k)\partial\Big)\frac{\delta g}{\delta q_i}
\,,
$$
for every $k\in M$.
On the other hand, it is not hard to check that the RHS above is equal to
$\kappa\Big(\Big[L(z),\frac{\delta g}{\delta q}\Big]\,\Big|\,q_k\otimes1\Big)$,
proving the claim.
\end{proof}
\begin{corollary}\label{cor4apr}
If $g\in\mc W\subset\mc V(\mf p)$, 
then 
$$
(H-zK)\Big(\frac{\delta g}{\delta q}\Big)=\left[L(z),\frac{\delta g}{\delta q}\right]
\,.
$$
\end{corollary}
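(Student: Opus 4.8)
The plan is to obtain the identity by simply combining Proposition \ref{int_hier1_ds} with Lemma \ref{gauge1}; no new computation is needed. First I would expand the variational derivative via \eqref{variational_ds} as $\frac{\delta g}{\delta q}=\sum_{i\in P}q_i\otimes\frac{\delta g}{\delta q_i}\in\mf p\otimes\mc V(\mf p)$. The maps $H,K$ (viewed via the identification \eqref{eq:identif} as the $\mb F$-linear maps $\mf p\otimes\mc V(\mf p)\to\mf m^\perp\otimes\mc V(\mf p)$ of \eqref{HKmaps}), the commutator $[L(z),\,\cdot\,]$, and $\pi_{\mf m^\perp}\otimes1$ are all $\mb F$-linear. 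Applying Proposition \ref{int_hier1_ds} to each summand $q_i\otimes\frac{\delta g}{\delta q_i}$ and summing over $i\in P$ therefore gives
\[
(H-zK)\Big(\frac{\delta g}{\delta q}\Big)=(\pi_{\mf m^\perp}\otimes1)\Big[L(z),\frac{\delta g}{\delta q}\Big]\,,
\]
where I use the observation preceding Lemma \ref{lem1} that $[L(z),A]\in\mf g\otimes\mc V(\mf p)$ for $A\in\mf p\otimes\mc V(\mf p)$, so that $\pi_{\mf m^\perp}\otimes1$ can indeed be applied to the right-hand side.

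Now I invoke the hypothesis $g\in\mc W$. By Lemma \ref{gauge1}, $[L(z),\frac{\delta g}{\delta q}]$ already lies in $\mf m^\perp\otimes\mc V(\mf g)$, hence in $\mf m^\perp\otimes\mc V(\mf p)$. Since $\pi_{\mf m^\perp}\otimes1$ restricts to the identity on the subspace $\mf m^\perp\otimes\mc V(\mf p)\subset\mf g\otimes\mc V(\mf p)$, we get
\[
(\pi_{\mf m^\perp}\otimes1)\Big[L(z),\frac{\delta g}{\delta q}\Big]=\Big[L(z),\frac{\delta g}{\delta q}\Big]\,,
\]
and combining with the previous display yields the assertion.

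There is no real obstacle here: the only substantive input is Lemma \ref{gauge1}, whose proof rests on the defining property $\rho\{{q_k}_\lambda g\}_z=0$ for $q_k\in\mf m\subset\mf n$ of elements of $\mc W$, together with the skew-symmetry of the $\lambda$-bracket. The present corollary is essentially a repackaging of Proposition \ref{int_hier1_ds} and Lemma \ref{gauge1}, the point being the disappearance of the projection $\pi_{\mf m^\perp}$ once one knows that the commutator $[L(z),\frac{\delta g}{\delta q}]$ already takes values in $\mf m^\perp$.
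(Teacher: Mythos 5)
Your proof is correct and follows exactly the paper's route: the paper also derives the corollary as an immediate consequence of Proposition \ref{int_hier1_ds} and Lemma \ref{gauge1}, with the projection $\pi_{\mf m^\perp}\otimes1$ disappearing because the commutator already takes values in $\mf m^\perp\otimes\mc V(\mf p)$. You have merely written out the linearity argument that the paper leaves implicit.
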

\begin{proof}
It is an immediate consequence of Proposition \ref{int_hier1_ds} and Lemma \ref{gauge1}.
\end{proof}

According to the so called Lenard-Magri scheme of integrability \cite{Mag78} (see also \cite{BDSK09}),
in order to construct an integrable hierarchy of bi-Hamiltonian equations in $\mc W$,
we need to find a sequence of local functionals
$\tint g_n\in\quot{\mc W}{\partial\mc W},\,n\in\mb Z_+$, such that
\begin{equation}\label{eq:lenard2}
\left.\tint \{{g_0}_\lambda p\}_{K,\rho}\right|_{\lambda=0}=0
\,\,\,\text{ and }\,\,\,
\left.\tint \{{g_n}_\lambda p\}_{H,\rho}\right|_{\lambda=0}
=\left.\tint \{{g_{n+1}}_\lambda p\}_{K,\rho}\right|_{\lambda=0}
\,,\,\, p\in\mc W\,.
\end{equation}
In this case it is not hard to prove (see e.g. \cite{BDSK09}) 
that we get the corresponding integrable hierarchy of Hamiltonian equations
(see \eqref{eq:hierarchy}):
$$
\frac{dp}{dt_n}
=\left. \{{g_n}_\lambda p\}_{H,\rho}\right|_{\lambda=0},
\qquad
n\in\mb Z_+\,,
$$
provided that the $\tint g_n$'s span an infinite dimensional subspace of 
$\quot{\mc W}{\partial\mc W}$.

We can reformulate the Lenard-Magri recursion relation \eqref{eq:lenard2}
in terms of the matrices $H$ and $K$ defined in \eqref{HKds}.
By \eqref{wbrackX}, equation \eqref{eq:lenard2} reads
($n\in\mb Z_+$, $p\in\mc W$)
\begin{equation}\label{eq:lenard3}
\int \sum_{i,j\in P}
\frac{\delta p}{\delta q_j} K_{ji}(\partial) \frac{\delta g_0}{\delta q_i}
=0
\,,\,\,
\int \sum_{i,j\in P}
\frac{\delta p}{\delta q_j} H_{ji}(\partial) \frac{\delta g_n}{\delta q_i}
=
\int \sum_{i,j\in P}
\frac{\delta p}{\delta q_j} K_{ji}(\partial) \frac{\delta g_{n+1}}{\delta q_i}
\,.
\end{equation}
Equivalently, we can rewrite equation \eqref{eq:lenard3}
in terms of the maps 
$H,K:\,\mf p\otimes\mc V(\mf p)\to\mf m^{\perp}\otimes\mc V(\mf p)$ 
defined in \eqref{HKmaps}.
For this, we consider the non-degenerate pairing
$(\mf m^{\perp}\otimes\mc V(\mf p))\times(\mf p\otimes\mc V(\mf p))
\to\quot{\mc V(\mf p)}{\partial\mc V(\mf p)}$,
defined by
$$
a\otimes g,b\otimes h\mapsto \tint\kappa(a\mid b)gh\,.
$$
In terms of the dual bases $\{q_i\}_{i\in P},\,\{q^i\}_{i\in P}$ of $\mf p$
and $\mf m^{\perp}$ respectively, 
the pairing is
$\sum_{i\in P}q^i\otimes g_i\,,\,\sum_{j\in P}q_j\otimes h_j
\mapsto\int\sum_{i\in P}g_ih_i$.
Then, the Lenard-Magri recursion relation \eqref{eq:lenard3}
can be rewritten as
$$
\int\kappa\Big(K\Big(\frac{\delta g_0}{\delta q}\Big)\,\Big|\,\frac{\delta p}{\delta q}\Big)=0
\,,\,\,
\int\kappa\Big( 
H\Big(\frac{\delta g_n}{\delta q}\Big)
-K\Big(\frac{\delta g_{n+1}}{\delta q}\Big)
\,\Big|\,\frac{\delta p}{\delta q}\Big)
=0
\,,
$$
for $p\in\mc W,\,n\in\mb Z_+$.
In terms of the generating series 
$\tint g(z)=\sum_{n\in\mb Z_+}\tint g_nz^{-n+N}\in\quot{\mc W}{\partial\mc W}((z^{-1}))$
($N\in\mb Z$ is arbitrary)
these relations can be equivalently rewritten, using Corollary \ref{cor4apr}, as
\begin{equation}\label{eq:lenard5}
\int\kappa\Big(
\Big[L(z),\frac{\delta g(z)}{\delta q}\Big]
\,\Big|\,\frac{\delta p}{\delta q}\Big)
=0
\,\,\text{ in }\,\,
\quot{\mc V(\mf p)}{\partial\mc V(\mf p)}((z^{-1}))
\,,\,\,
p\in\mc W\,.
\end{equation}
Here and below, we extend $\kappa$ to a bilinear map 
$(\mf g((z^{-1}))\otimes\mc V(\mf p))\times(\mf g((z^{-1}))\otimes\mc V(\mf p))\to\mc V(\mf p)((z^{-1}))$
as in \eqref{eq:forma} and linearly in $z$.

\subsection{Basic assumptions}\label{sec:4.2}

In the remainder of the paper we will assume that $s\in\Ker(\ad \mf n)\subset\mf g$ 
is a homogeneous element with respect to the $\ad x$-eigenspace decomposition \eqref{dec},
and that the Lie algebra $\mf g((z^{-1}))$ admits a decomposition 
\begin{equation}\label{eq:dech}
\mf g((z^{-1}))=\Ker\ad(f+zs)\oplus\im\ad(f+zs)
\end{equation}
(as pointed out in Remark \ref{rem:semisimple},
this is equivalent to semisemplicity of the element
$f+zs$ of the reductive Lie algebra $\mf g((z^{-1}))$
over the field $\mb F((z^{-1}))$).
Under these two assumptions, we will be able to construct, in the following sections,
the desired series $\tint g(z)\in\quot{\mc W}{\partial\mc W}((z^{-1}))$ solving \eqref{eq:lenard5},
thus providing an integrable hierarchy of bi-Hamiltonian equations in $\mc W$.

We denote $\mf h=\Ker\ad(f+zs)\subset\mf g((z^{-1}))$.
Then $\im\ad(f+zs)=\mf h^\perp$ is the orthogonal complement
to $\mf h$ with respect to the non-degenerate symmetric invariant
bilinear form $\kappa_0:\,\mf g((z^{-1}))\times\mf g((z^{-1}))\to\mb F$
(the constant term of $\kappa$ on $\mf g((z^{-1}))$)
given by 
$$
\kappa_0(a(z)\mid b(z))=\sum_{i\in\mb Z}\kappa(a_i\mid b_{-i})\,,
$$
for $a(z)=\sum_{i\in\mb Z}a_iz^{-i}$ and $b(z)=\sum_{i\in\mb Z}b_iz^{-i}\in\mf g((z^{-1}))$.

\subsection{Outline}

The applicability of the Lenard-Magri scheme of integrability will be achieved, 
following the ideas of Drinfeld and Sokolov \cite{DS85}, in four steps:
\begin{enumerate}[1.]
\item
In Section \ref{sec:4.4} we find $h(z)\in\mf h\otimes\mc V(\mf p)$
such that $e^{\ad U(z)}(L(z))=\partial+(f+zs)\otimes1+h(z)$ for some
$U(z)\in\mf g((z^{-1}))\otimes\mc V(\mf p)$.
\item
In Section \ref{sec:4.5} we prove that,
if $a(z)\in Z(\mf h)$, then 
$\tint g(z)=\tint \kappa(a(z)\otimes1\mid h(z))\in(\quot{\mc V(\mf p)}{\partial\mc V(\mf p)})((z^{-1}))$
solves the Lenard-Magri recursion condition \eqref{eq:lenard5}.
\item
In Section \ref{sec:4.6} we prove that $\tint g(z)$ defined above lies in
$(\quot{\mc W}{\partial\mc W})((z^{-1}))$
(namely, the coefficients of $g(z)$ lie in $\mc W$ up to total derivatives, 
cf. Lemma \ref{20130207:lem}).
\item
Finally, in Section \ref{sec:4.7} we prove that 
the coefficients $\tint g_n$ of the Laurent series $\tint g(z)$
span an infinite-dimensional subspace of $\quot{\mc W}{\partial\mc W}$.
\end{enumerate}

\subsection{Step 1}\label{sec:4.4}

We extend the gradation \eqref{dec} of $\mf g$
to a gradation of $\mf g((z^{-1}))$ by letting $f+zs$ be homogeneous of degree $-1$.
In other words, let $s$ have $\ad x$-eigenvalue $m\geq0$ 
($s$ is an eigenvector by assumption,
and it lies in the centralizer of $e$, 
hence it has non-negative eigenvalue),
then we let $z$ have degree $-m-1$.
\begin{lemma}
\begin{enumerate}[(a)]
\item
For $i\in\frac12\mb Z$,
let $\mf g((z^{-1}))_i\subset\mf g((z^{-1}))$ be the space of homogeneous elements of degree $i$.
We have the decomposition
\begin{equation}\label{decz}
\mf g((z^{-1}))=\widehat\bigoplus_{i\in\frac12\mb Z}\mf g((z^{-1}))_i\,,
\end{equation}
where the direct sum is completed by allowing infinite series in positive degrees.
\item
If $U(z)\in \mf g((z^{-1}))_{>0}\otimes\mc V(\mf p)$, 
then we have a well defined Lie algebra automorphism
$e^{\ad U(z)}$ of the Lie algebra $\mb F\partial\ltimes\big(\mf g((z^{-1}))\otimes\mc V(\mf p)\big)$.
\end{enumerate}
\end{lemma}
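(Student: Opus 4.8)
For part (a), the idea is to extend the $\ad x$-gradation \eqref{dec} to $\mf g((z^{-1}))$ by assigning $z$ a degree. Since $f$ is homogeneous of degree $\dx(f)=-1$ and $s$ of degree $\dx(s)=m$, requiring $f+zs$ to be homogeneous of degree $-1$ forces $z$ to have degree $-m-1$; concretely, one sets $\mf g((z^{-1}))_i=\bigoplus_{j+n(m+1)=i}\mf g_jz^{-n}$. I would then check three routine facts. First, this is a Lie algebra gradation, $[\mf g((z^{-1}))_i,\mf g((z^{-1}))_j]\subset\mf g((z^{-1}))_{i+j}$, which is immediate from $[\mf g_{j_1},\mf g_{j_2}]\subset\mf g_{j_1+j_2}$ and $z^{-n}z^{-p}=z^{-n-p}$. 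Second, for fixed $i$ the equation $j+n(m+1)=i$, with $j$ ranging over the finitely many $\ad x$-eigenvalues of $\mf g$, has only finitely many solutions $(j,n)$, so each $\mf g((z^{-1}))_i$ is finite-dimensional. Third, writing $a(z)=\sum_n a_nz^{-n}$ and decomposing each $a_n=\sum_j a_{n,j}$ according to \eqref{dec}, the series $a(z)$ is uniquely the sum of its homogeneous components, uniqueness coming from the directness of \eqref{dec}. Because $m\geq0$ the degree of $z$ is negative, so higher negative powers of $z$ carry strictly higher degree; hence the degrees occurring in a given $a(z)$ are bounded below but may be unbounded above, which is precisely the sense in which the direct sum \eqref{decz} is completed — only in the positive direction.

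For part (b), I would extend the gradation of (a) to $\mc L:=\mb F\partial\ltimes\big(\mf g((z^{-1}))\otimes\mc V(\mf p)\big)$ by declaring both $\partial$ and $\mc V(\mf p)$ to have degree $0$. This is again a Lie algebra gradation (using $[\partial,a\otimes g]=a\otimes\partial g$), it is bounded below, and it is complete in the following sense: every formal sum $\sum_i w_i$, with $w_i\in\mc L_i$ and $\{i\mid w_i\neq0\}$ bounded below, converges to an element of $\mc L$, since collecting by powers of $z$ each coefficient is a finite sum lying in $\mf g\otimes\mc V(\mf p)$ and only finitely many positive powers of $z$ occur, by the counting of (a). Now $U(z)\in\mf g((z^{-1}))_{>0}\otimes\mc V(\mf p)$ has all of its homogeneous components of degree at least $\frac12$, so $\ad U(z)$ raises degree by at least $\frac12$ and $(\ad U(z))^k$ by at least $k/2$; hence in each fixed degree of $\mc L$ only finitely many terms of $\sum_{k\geq0}\frac1{k!}(\ad U(z))^k$ contribute, and $e^{\pm\ad U(z)}$ are well-defined $\mb F$-linear endomorphisms of $\mc L$. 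That $e^{\ad U(z)}$ is a Lie algebra homomorphism follows from $\ad U(z)$ being a derivation of $\mc L$: one compares $e^{\ad U(z)}[a,b]$ with $[e^{\ad U(z)}a,e^{\ad U(z)}b]$ using the Leibniz expansion $(\ad U(z))^n[a,b]=\sum_{k=0}^n\binom{n}{k}[(\ad U(z))^ka,(\ad U(z))^{n-k}b]$, all rearrangements being legitimate by the degree estimate. Finally $e^{\ad U(z)}$ is invertible with inverse $e^{-\ad U(z)}$, hence an automorphism.

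The only point that needs genuine care is the degree estimate: $\ad U(z)$ must strictly raise degree by an amount bounded away from $0$. This is what makes the exponential series terminate in each fixed degree and hence converge in $\mc L$, and it uses both hypotheses — that $m\geq0$, so that $z$ has negative degree and the gradation of $\mc L$ is bounded below, and that $U(z)$ lies in strictly positive degrees. Once this estimate is in hand, the homomorphism and invertibility statements are the standard bookkeeping for exponentials of degree-raising derivations, in the spirit of the Baker--Campbell--Hausdorff formula invoked earlier in the paper.
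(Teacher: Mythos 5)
Your proof is correct and follows essentially the same route as the paper: the paper's (very terse) proof also rests on assigning $z$ degree $-m-1<0$ and on the two inclusions showing each graded piece meets only finitely many powers of $z$ and vice versa, from which both the completed decomposition and the local finiteness of the exponential series follow. Your write-up simply makes explicit the degree-raising estimate for $\ad U(z)$ that the paper leaves implicit in ``Part (b) follows from (a).''
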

\begin{proof}
Let $\Delta$ be the maximal eigenvalue of $\ad x$ in $\mf g$.
Since $z$ has degree $-m-1<0$, we have
\begin{equation}\label{decz2}
\mf g((z^{-1}))_i\subset
\!\!\!\!\!\!
\bigoplus_{-\frac{i}{m+1}-\frac{\Delta}{m+1}\leq n\leq-\frac{i}{m+1}+\frac{\Delta}{m+1}}
\!\!\!\!\!\!
\mf gz^n
\,\,\,\text{ and }\,\,\,
\mf gz^n\subset
\!\!\!\!\!\!
\bigoplus_{-n(m+1)-\Delta\leq i\leq-n(m+1)+\Delta}
\!\!\!\!\!\!
\mf g((z^{-1}))_i
\,.
\end{equation}
The decomposition \eqref{decz} follows immediately by these inclusions.
Part (b) follows from (a).
\end{proof}

Note that, since $f+zs$ is homogeneous in $\mf g((z^{-1}))$,
then $\mf h=\Ker\ad(f+zs)\subset\mf g((z^{-1}))$
and $\mf h^\perp=\im\ad(f+zs)\subset\mf g((z^{-1}))$ 
are compatible with the decomposition \eqref{decz}:
$\mf h=\widehat\bigoplus_{i\in\frac12\mb Z}\mf h_i$
and $\mf h^\perp=\widehat\bigoplus_{i\in\frac12\mb Z}\mf h^\perp_i$,
where $\mf h_i=\mf h\cap\mf g((z^{-1}))_i$
and $\mf h^\perp_i=\mf h^\perp\cap\mf g((z^{-1}))_i$.
In the following, for $k\in\frac12\mb Z$ and a subspace $V=\widehat\bigoplus_{i\in\frac12\mb Z}V_i$
compatible with the decomposition \eqref{decz} (such as $\mf h$ or $\mf h^\perp$),
we denote $V_{>k}=\widehat\bigoplus_{i>k}V_i$.

\begin{proposition}\label{int_hier2_ds}
Let $r=\sum_{i\in P}q^i\otimes r_i\in\mf m^\perp\otimes\mc V(\mf p)$.
\begin{enumerate}[(a)]
\item
There exist unique formal Laurent series $U(z)\in\mf{h}^\perp_{>0}\otimes\mc V(\mf p)$
and $h(z)\in\mf{h}_{>-1}\otimes\mc V(\mf p)$ such that
\begin{equation}\label{L0_dsr}
e^{\ad U(z)}(\partial+(f+zs)\otimes1+r)=\partial+(f+zs)\otimes1+h(z)\,.
\end{equation}
Moreover, the coefficients of $U(z)$ and $h(z)$ are differential polynomials in
$r_1,\dots,r_k$.
\item
An automorphism $e^{\ad U(z)}$, with $U(z)\in\mf{g}((z^{-1}))_{>0}\otimes\mc V(\mf p)$,
solving \eqref{L0_dsr} for some $h(z)\in\mf{h}_{>-1}\otimes\mc V(\mf p)$,
is defined uniquely up to multiplication on the left by automorphisms of the form $e^{\ad S(z)}$, 
where $S(z)\in \mf h_{>0}\otimes\mc V(\mf p)$.
\end{enumerate}
\end{proposition}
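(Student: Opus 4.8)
The plan is to follow very closely the proof of Proposition~\ref{int_hier2}, replacing the $z$-grading used there by the degree grading \eqref{decz} of $\mf g((z^{-1}))$, in which $f+zs$ is homogeneous of degree $-1$. The hypotheses of Section~\ref{sec:4.2} are used only through the following fact: since $f+zs$ is homogeneous, $\mf h=\Ker\ad(f+zs)$ and $\mf h^\perp=\im\ad(f+zs)$ are graded subspaces and $\ad(f+zs)$ is a graded operator of degree $-1$; together with $\mf h\cap\mf h^\perp=0$ this implies that $\ad(f+zs)$ restricts, for every half-integer $i$, to a linear isomorphism $\mf h^\perp_{i+1}\stackrel{\sim}{\longrightarrow}\mf h^\perp_i$, hence (tensoring with $\mc V(\mf p)$, and using that each graded piece is finite-dimensional by \eqref{decz2}) to a bijection $\mf h^\perp_{i+1}\otimes\mc V(\mf p)\stackrel{\sim}{\longrightarrow}\mf h^\perp_i\otimes\mc V(\mf p)$ whose inverse is $\mb F((z^{-1}))$-linear in the first factor and therefore preserves the property of having differential-polynomial coefficients.

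For part (a), subtracting $\partial+(f+zs)\otimes1$ turns \eqref{L0_dsr} into an equation in $\mf g((z^{-1}))\otimes\mc V(\mf p)$, which we solve by equating homogeneous components in the grading \eqref{decz}, extended through the first tensor factor. Writing $r=\sum_i r[i]$, $U(z)=\sum_i U[i]$, $h(z)=\sum_i h[i]$ and expanding $e^{\ad U(z)}$ in Taylor series, a routine degree count shows that in the component of degree $i$ the unknown $U[i+1]$ occurs only through the linear term $[(f+zs)\otimes1,U[i+1]]=\ad(f+zs)(U[i+1])\in\mf h^\perp\otimes\mc V(\mf p)$, all other terms involving $r$ and the $U[j]$ with $j<i+1$ only. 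Thus the degree-$i$ component reads $h[i]+[(f+zs)\otimes1,U[i+1]]=A_i$ with $A_i\in\mf g((z^{-1}))_i\otimes\mc V(\mf p)$ inductively determined; decomposing $A_i=A_i^{\mf h}+A_i^{\mf h^\perp}$ along $\mf g((z^{-1}))\otimes\mc V(\mf p)=(\mf h\otimes\mc V(\mf p))\oplus(\mf h^\perp\otimes\mc V(\mf p))$ forces $h[i]=A_i^{\mf h}\in\mf h_i\otimes\mc V(\mf p)$ and $\ad(f+zs)(U[i+1])=A_i^{\mf h^\perp}$, which by the isomorphism above determines $U[i+1]\in\mf h^\perp_{i+1}\otimes\mc V(\mf p)$ uniquely. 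Since $r\in\mf m^\perp\otimes\mc V(\mf p)\subset\mf g_{\geq-\frac12}\otimes\mc V(\mf p)$, this recursion starts at the lowest occurring degree and produces unique $U(z)\in\mf h^\perp_{>0}\otimes\mc V(\mf p)$ and $h(z)\in\mf h_{>-1}\otimes\mc V(\mf p)$; the claim that all their coefficients are differential polynomials in $r_1,\dots,r_k$ follows by the same induction, since every operation involved ($\ad$, $\partial$, and the inverse of $\ad(f+zs)$ on $\mf h^\perp$) preserves this property.

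For part (b), if $\widetilde U(z)\in\mf g((z^{-1}))_{>0}\otimes\mc V(\mf p)$ is another solution of \eqref{L0_dsr}, with output $\widetilde h(z)\in\mf h_{>-1}\otimes\mc V(\mf p)$, then by the Baker-Campbell-Hausdorff formula (applicable here because the grading \eqref{decz} is positive, exactly as in the discussion preceding Proposition~\ref{int_hier2}) there is $S(z)\in\mf g((z^{-1}))_{>0}\otimes\mc V(\mf p)$ with $e^{\ad\widetilde U(z)}=e^{\ad S(z)}e^{\ad U(z)}$. Applying both sides to $L(z)$ gives $\widetilde h(z)=e^{\ad S(z)}(\partial+(f+zs)\otimes1+h(z))-\partial-(f+zs)\otimes1$, and we show by induction on the degree $j$ that $S[j]\in\mf h_j\otimes\mc V(\mf p)$: granting this for all degrees $<j$, the only term involving $S[j]$ in the degree-$(j-1)$ component of this equation is $[(f+zs)\otimes1,S[j]]=\ad(f+zs)(S[j])\in\mf h^\perp\otimes\mc V(\mf p)$, while all the remaining degree-$(j-1)$ terms lie in $\mf h\otimes\mc V(\mf p)$ (here one uses that $(f+zs)\otimes1\in\mf h$, that $\mf h$ is a subalgebra, and that $[(f+zs)\otimes1,S[j']]=0$ for $S[j']\in\mf h$), and the left-hand side $\widetilde h[j-1]$ lies in $\mf h\otimes\mc V(\mf p)$ as well; projecting onto $\mf h^\perp\otimes\mc V(\mf p)$ then gives $\ad(f+zs)(S[j])=0$, i.e.\ $S[j]\in\mf h_j\otimes\mc V(\mf p)$. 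Conversely, for any $S(z)\in\mf h_{>0}\otimes\mc V(\mf p)$ a direct check --- again using that $\mf h$ is an $\ad(f+zs)$-invariant subalgebra --- shows that $e^{\ad S(z)}e^{\ad U(z)}$ is again a solution of \eqref{L0_dsr}, with new output $e^{\ad S(z)}(\partial+h(z))-\partial\in\mf h_{>-1}\otimes\mc V(\mf p)$. This proves (b).

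The only genuinely new ingredient compared with the homogeneous case is the first paragraph --- making the degree grading \eqref{decz} interact with the decomposition $\mf g((z^{-1}))=\mf h\oplus\mf h^\perp$ so that $\ad(f+zs)$ becomes invertible on $\mf h^\perp$ one degree at a time --- after which everything reduces to the same bookkeeping as in Proposition~\ref{int_hier2}. The one point requiring some care is to check that the series produced by the recursion really lie in the completed spaces $\mf h^\perp_{>0}\otimes\mc V(\mf p)$ and $\mf h_{>-1}\otimes\mc V(\mf p)$ --- equivalently, that they are honest Laurent series in $z^{-1}$ --- which follows from the inclusions \eqref{decz2}.
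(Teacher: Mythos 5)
Your proof is correct and follows essentially the same route as the paper's: the same induction on the grading \eqref{decz} using the decomposition $\mf g((z^{-1}))=\mf h\oplus\mf h^\perp$ and the graded bijectivity of $\ad(f+zs)$ on $\mf h^\perp$, and the same Baker--Campbell--Hausdorff argument with a degree-by-degree projection onto $\mf h^\perp$ for part (b). The only (harmless) additions are your explicit justification that $\ad(f+zs)$ restricts to isomorphisms $\mf h^\perp_{i+1}\to\mf h^\perp_i$ and the converse verification in (b) that left multiplication by $e^{\ad S(z)}$ with $S(z)\in\mf h_{>0}\otimes\mc V(\mf p)$ again yields a solution, neither of which the paper spells out.
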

\begin{proof}
Let us write $U(z)=\sum_{i\geq\frac12}U_i(z)$,
where $U_i(z)\in\mf h^{\perp}_i\otimes\mc V(\mf p),\,i\geq\frac12$,
and $h(z)=\sum_{i\geq-\frac12}h_i(z)$,
where $h_i(z)\in\mf h_i\otimes\mc V(\mf p),\,i\geq-\frac12$.
We will determine $U_{i+1}(z)\in\mf h^{\perp}_{i+1}\otimes\mc V(\mf p)$
and $h_i(z)\in\mf h_i\otimes\mc V(\mf p)$, inductively on $i\geq-\frac12$,
by equating the homogeneous components of degree $i$ 
in each sides of equation \eqref{L0_dsr}.

Recall that $\mf m^\perp\subset\mf g_{\geq-\frac12}$.
Equating the terms of degree $-\frac12$ in both sides of \eqref{L0_dsr},
we get the equation
$$
h_{-\frac12}(z)+[(f+zs)\otimes1,U_{\frac12}(z)]
=(\pi_{-\frac12}\otimes1)r\in\mf g_{-\frac12}\otimes\mc V(\mf p)\,,
$$
where $\pi_{-\frac12}:\,\mf g((z^{-1}))\twoheadrightarrow\mf g((z^{-1}))_{-\frac12}$
denotes the projection on the component of degree $-\frac12$.
Since we have the decomposition 
$\mf g_{-\frac12}\subset\mf g((z^{-1}))_{-\frac12}=\mf h_{-\frac12}\oplus\mf h^\perp_{-\frac12}$,
and since $\ad(f+zs)$ restricts to a bijection 
$\mf h^\perp_{\frac12}\stackrel{\sim}{\longrightarrow}\mf h^\perp_{-\frac12}$,
the above equation determines uniquely $h_{-\frac12}(z)\in\mf h_{-\frac12}\otimes\mc V(\mf p)$ 
and $U_{\frac12}(z)\in\mf h^\perp_{\frac12}\otimes\mc V(\mf p)$.
Moreover, the coefficients of $h_{-\frac12}(z)$ and $U_{\frac12}(z)$ 
are obviously differential polynomials in $r_1,\dots,r_k$.

Next, suppose by induction on $i$ that we determined all elements
$U_{j+1}(z)\in\mf h^\perp_{j+1}\otimes\mc V(\mf p)$
and $h_j(z)\in\mf h_{j}\otimes\mc V(\mf p)$ for $j<i$, $i>-\frac12$,
and that their coefficients are differential polynomials in $r_1,\dots,r_k$.
Equating the terms of degree $i$ in both sides of \eqref{L0_dsr},
we get an equation in $h_i(z)$ and $U_{i+1}(z)$ of the form
$$
h_i(z)+[(f+zs)\otimes1,U_{i+1}(z)]=A(z)\,,
$$
where $A(z)\in\mf g((z^{-1}))_i\otimes\mc V(\mf p)$ is certain complicated (differential polynomial) 
expression involving all the elements $U_{j+1}(z)$ and $h_j(z)$ for $j<i$.
As before, since $\mf g((z^{-1}))_i=\mf h_i\oplus\mf h^\perp_i$,
and since $\ad(f+zs)$ restricts to a bijection 
$\mf h^\perp_{i+1}\stackrel{\sim}{\longrightarrow}\mf h^\perp_{i}$,
the above equation determines uniquely $h_i(z)\in\mf h_i\otimes\mc V(\mf p)$ 
and $U_{i+1}(z)\in\mf h^\perp_{i+1}\otimes\mc V(\mf p)$,
and their coefficients are differential polynomials in $r_1,\dots,r_k$.
This proves part (a).

In the proof of part (b) we follow the same argument as in the proof of Proposition \ref{int_hier2}(b).
Let $U(z)\in \mf{h}^\perp_{>0}\otimes\mc V(\mf p),\,h(z)\in\mf h_{>-1}\otimes\mc V(\mf p)$ 
be the unique solution of \eqref{L0_dsr} given by part (a).
Let also 
$\widetilde U(z)\in \mf g((z^{-1}))_{>0}\otimes\mc V(\mf p),\,
\widetilde h(z)\in\mf h_{>-1}\otimes\mc V(\mf p)$ 
be some other solution of \eqref{L0}:
$e^{\ad \widetilde{U}(z)}(\partial+(f+zs)\otimes1+r)=\partial+(f+zs)\otimes1+\widetilde{h}(z)$.
By the Baker-Campbell-Hausdorff formula \cite{Ser92},
there exists $S(z)=\sum_{i>0}^\infty S_i(z)\in\mf g((z^{-1}))_{>0}\otimes\mc V(\mf p)$ 
such that
$e^{\ad\widetilde U(z)}e^{-\ad U(z)}=e^{\ad S(z)}$.
To conclude the proof of (b), we need to show that $S(z)\in\mf h_{>0}\otimes\mc V(\mf p)$.
By construction, we have
\begin{equation}\label{eq:ciaociao}
\partial+(f+zs)\otimes1+\widetilde{h}(z)=e^{\ad S(z)}(\partial+(f+zs)\otimes1+h(z))\,.
\end{equation}
Comparing the terms of degree $-\frac12$ in both sides of the above equation,
we get
$$
\mf h^\perp_{-\frac12}\otimes\mc V(\mf p)\ni[(f+zs)\otimes1,S_{\frac12}(z)]
=
h_{-\frac12}(z)-\widetilde{h}_{-\frac12}(z)\in\mf h_{-\frac12}\otimes\mc V(\mf p)\,.
$$
Since $\mf h^\perp_{-\frac12}\cap\mf h_{-\frac12}=0$,
we conclude that 
$\widetilde{h}_{-\frac12}(z)=h_{-\frac12}(z)$ and
$S_{\frac12}(z)\in\mf h_{\frac12}\otimes\mc V(\mf p)$.
Next,
assuming by induction that $S_j(z)\in\mf h_j\otimes\mc V(\mf p)$ for all $j<i$,
and comparing the terms of degree $i$ in both sides of equation \eqref{eq:ciaociao},
we easily get that 
$[(f+zs)\otimes1,S_{i+1}(z)]\in(\mf h^\perp_i\otimes\mc V(\mf p))\cap(\mf h_i\otimes\mc V(\mf p))=0$,
namely $S_{i+1}(z)\in\mf h_{i+1}\otimes\mc V(\mf p)$, as desired.
\end{proof}
Consider the special case when $r=q\in\mf m^\perp\otimes\mc V(\mf p)$.
In this case, equation \eqref{L0_dsr} reads
\begin{equation}\label{L0_ds}
L_0(z):=e^{\ad U(z)}(L(z))=\partial+(f+zs)\otimes1+h(z)\,.
\end{equation}
Proposition \ref{int_hier2_ds} states that there exist unique
$U_0(z)\in\mf{h}^\perp_{>0}\otimes\mc V(\mf p)$
and $h_0(z)\in\mf{h}_{>-1}\otimes\mc V(\mf p)$ solving \eqref{L0_ds},
and any other solution of \eqref{L0_ds} with $U(z)\in\mf g((z^{-1}))_{>0}\otimes\mc V(\mf p)$ 
and $h(z)\in\mf h_{>-1}\otimes\mc V(\mf p)$
is obtained from the unique one $(U_0(z),h_0(z))$
by taking $e^{\ad U(z)}=e^{\ad S(z)}e^{\ad U_0(z)}$
for some $S(z)\in\mf h_{>0}\otimes\mc V(\mf p)$.

\subsection{Step 2}\label{sec:4.5}

Throughout this section, we let $U(z)\in\mf g((z^{-1}))_{>0}\otimes\mc V(\mf p)$
and $h(z)\in\mf h_{>-1}\otimes\mc V(\mf p)$ be a solution of equation \eqref{L0_ds},
and we fix an element 
$a(z)\in Z(\mf h)$.
(For example, $a(z)=f+zs\in Z(\mf h)$).
We also denote
\begin{equation}\label{gz}
\tint g(z)=\tint \kappa(a(z)\otimes1\mid h(z))\,\in\quot{\mc V(\mf p)}{\partial\mc V(\mf p)}((z^{-1}))\,.
\end{equation}
The main result of the section will be Theorem \ref{main-step2} below, 
where we show that $\tint g(z)$ solves the Lenard-Magri recursion equation \eqref{eq:lenard5}.
In the following Sections \ref{sec:4.6} and \ref{sec:4.7} we will then show that, in fact, $\tint g(z)$
is independent of the choice of the solution $U(z),h(z)$ of \eqref{L0_ds},
that it lies in $(\quot{\mc W}{\partial\mc W})((z^{-1}))$,
and that its coefficients span an infinite-dimensional subspace of $\quot{\mc W}{\partial\mc W}$,
thus completing the proof of the applicability of the Lenard-Magri scheme of integrability.

Before proving the main theorem, we need some preliminary results.
First, as immediate consequence of Proposition \ref{int_hier2_ds}, we have the following result.
\begin{corollary}\label{int_hier3_ds}
We have $[L(z),e^{-\ad U(z)}(a(z)\otimes1)]=0$.
\end{corollary}
\begin{proof}
Since, by assumption, $a(z)\in Z(\mf h)\subset\mf h$, we have
$[\partial+(f+zs)\otimes1+h(z),a(z)\otimes1]=0$. Using the fact that
$e^{-\ad U(z)}$ is an automorphism of the Lie algebra $\mb F\partial\ltimes(\mf g((z^{-1}))\otimes\mc V(\mf p))$,
we have
$[L(z),e^{-\ad U(z)}(a(z)\otimes1)]
=e^{-\ad U(z)}[\partial+(f+zs)\otimes1+h(z),a(z)\otimes1]=0$.
\end{proof}

\begin{lemma}\label{step2-lemma1}
For $a\otimes g\in\mf g\otimes\mc V(\mf p)$ and $p\in\mc W$, we have
\begin{equation}\label{eq:gauge}
\tint\kappa\Big([L(z),a\otimes g]\,\Big|\,\frac{\delta p}{\delta q}\Big)
=\tint\rho{\{a_{\partial}p\}_z}_\rightarrow g\,.
\end{equation}
\end{lemma}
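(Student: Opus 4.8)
The plan is to unwind both sides of \eqref{eq:gauge} using the explicit description of $L(z)$ and of the $\lambda$-bracket on $\mc V(\mf g)$, and to reduce to a computation that is essentially the definition of the variational derivative together with an integration by parts. First I would expand the left-hand side. Writing $L(z)=\partial+q+(f+zs)\otimes1$ and $\frac{\delta p}{\delta q}=\sum_{j\in P}q_j\otimes\frac{\delta p}{\delta q_j}$, the commutator $[L(z),a\otimes g]$ has three contributions: $\partial(a\otimes g)=a\otimes\partial g$ from the $\partial$ term, $[q,a\otimes g]$ from $q=\sum_{i\in P}q^i\otimes q_i$, and $[(f+zs)\otimes1,a\otimes g]=[f+zs,a]\otimes g$. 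Pairing each of these against $\frac{\delta p}{\delta q}$ via $\kappa$ and integrating, I would collect: $\sum_j\tint\kappa(a\mid q_j)(\partial g)\frac{\delta p}{\delta q_j}$, $\sum_{i,j}\tint\kappa([q^i,a]\mid q_j)q_i g\frac{\delta p}{\delta q_j}$, and $\sum_j\tint\kappa([f+zs,a]\mid q_j)g\frac{\delta p}{\delta q_j}$. Using invariance of $\kappa$ in the middle term, $\kappa([q^i,a]\mid q_j)=-\kappa(a\mid[q^i,q_j])=\kappa([a,q^i]\mid q_j)$, so the completeness relation \eqref{complete} lets me replace $\sum_i\kappa([a,q^i]\mid q_j)q_i$ by $\pi_{\mf p}[a,q_j]$ type expressions — more precisely I want to recognize the sum $\sum_{i,j}$ as producing the coefficients $\pi_{\mf p}[a,q_j]+\kappa(f\mid[a,q_j])$, i.e. exactly $\rho[a,q_j]=\rho\{a_0 q_j\}_z$ up to the $\lambda$-linear piece.

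Next I would expand the right-hand side. By skew-symmetry of the $\lambda$-bracket and then the Master Formula \eqref{masterformula} (together with $\rho$ being a differential algebra homomorphism acting as the identity on $\mc V(\mf p)$), $\tint\rho\{a_\partial p\}_{z\to}g$ unfolds as a sum over $j\in P$, $n\in\mb Z_+$ of $\tint \rho\{a_\lambda q_j\}_z$-type terms, contracted with $\frac{\partial p}{\partial q_j^{(n)}}$ and with $(\partial+\cdots)^n$ acting appropriately; after integrating by parts to convert $\sum_n(\lambda+\partial)^n\frac{\partial p}{\partial q_j^{(n)}}$ into the variational derivative $\frac{\delta p}{\delta q_j}$ (this is where sesquilinearity and \eqref{partialcomm2} enter, exactly as in the passage from \eqref{masterformula} to \eqref{liebrak}), the right-hand side becomes $\sum_j\tint\big(\rho\{a_\lambda q_j\}_z\big|_{\lambda\rightsquigarrow}\big)g\cdot\frac{\delta p}{\delta q_j}$ with the surviving $\lambda$'s turned into $\partial$'s acting on $g$. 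The explicit $\lambda$-bracket \eqref{lambda} gives $\rho\{a_\lambda q_j\}_z=\rho[a,q_j]+\kappa(a\mid q_j)\lambda = \pi_{\mf p}[a,q_j]+\kappa(f\mid[a,q_j])+\kappa(a\mid q_j)\lambda$. Substituting $\lambda=\partial$ acting on $g$ reproduces term-by-term the three pieces I extracted from the left-hand side: the non-$\lambda$ part matches the $q$- and $(f+zs)$-contributions (the $zs$ part appearing because $\kappa(f\mid[a,q_j])$ should really be $\kappa(f+zs\mid[a,q_j])$ — I'd double-check that $L(z)$ carries $f+zs$ while $\rho$ only inserts $f$, so the $zs\otimes1$ commutator term is precisely what supplies the missing $z\kappa(s\mid[a,q_j])$), and the $\kappa(a\mid q_j)\partial g$ part matches the $\partial$-contribution.

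The main obstacle is bookkeeping: making sure the $\rho$'s, the direction of the arrows (which $\partial$ acts on $p$ versus on $g$), and the sign conventions in skew-symmetry \eqref{skewsim} all line up, and in particular that the $f$ vs. $f+zs$ discrepancy between $\rho$ and $L(z)$ is absorbed correctly by the central term $z\kappa(s\mid[a,\cdot])$ in \eqref{lambda}. I expect no conceptual difficulty beyond this — the identity is, morally, just the statement that pairing $[L(z),\,\cdot\,]$ against $\frac{\delta p}{\delta q}$ is the adjoint of the $\lambda$-action of $a$ on $p$ under the variational pairing, which is exactly the content already used implicitly in Lemma \ref{gauge1}. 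A clean way to organize the proof is to first establish \eqref{eq:gauge} for $p$ a generator $q_k$ (where both sides are elementary), then extend to all of $\mc W$ — indeed to all of $\mc V(\mf p)$ — by checking both sides are differential-polynomial derivations in $p$ in the appropriate sense, mirroring the Leibniz-rule bootstrap used in the proof of Theorem \ref{20120511:thm1}; but since the Master Formula already packages the Leibniz rule, a direct expansion as sketched above should suffice.
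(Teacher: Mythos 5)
Your proposal is correct and takes essentially the same route as the paper: both sides are reduced --- via the definition \eqref{variational_ds} of the variational derivative on the left and the Master Formula \eqref{masterformula} plus integration by parts on the right --- to the generator-level identity $\kappa([L(z),a\otimes g]\mid q_i\otimes 1)=\rho{\{a_{\partial}q_i\}_z}_{\rightarrow}g$, which the paper obtains by citing equation \eqref{20120516:eq2} with $n=1$ and which you verify directly by the same invariance-and-completeness computation. Your observation that the $zs$ part of $L(z)$ is matched by the $z\kappa(s\mid[a,\cdot])$ term of the $\lambda$-bracket \eqref{lambda} is exactly the point that makes the cited identity valid here for arbitrary $a\in\mf g$ rather than only $a\in\mf n$.
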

\begin{proof}
By the definition \eqref{variational_ds} of the variational derivative we have
$$
\int\kappa\left([L(z),a\otimes g]\left|\,\frac{\delta p}{\delta q}\right)\right.
=\int\sum_{i\in P}\frac{\delta p}{\delta q_i}\kappa([L(z),a\otimes g]\mid q_i\otimes1)\,.
$$
On the other hand, using Master Formula \eqref{masterformula} and integration by parts, we get
$$
\tint\rho{\{a_{\partial}p\}_z}_\rightarrow g
=\int\sum_{i\in P}\frac{\delta p}{\delta q_i}\rho{\{a_\partial q_i\}_z}_\rightarrow g\,.
$$
Hence, equation \eqref{eq:gauge} follows immediately from equation \eqref{20120516:eq2} with $n=1$.
\end{proof}

\begin{lemma}\label{step2-lemma2}
We have
\begin{equation} \label{var_der_ds}
\frac{\delta g(z)}{\delta q}
=(\pi_{\mf p}\otimes1)\Big(e^{-\ad U(z)}(a(z)\otimes1)\Big)
\,\in(\mf p\otimes\mc V(\mf p))((z^{-1}))\,,
\end{equation}
where the projection $\pi_{\mf p}:\,\mf g\to\mf p$ is extended to $\mf g((z^{-1}))$
in the obvious way.
\end{lemma}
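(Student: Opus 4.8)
The plan is to compute the variational derivative $\frac{\delta g(z)}{\delta q}$ directly from the definition \eqref{variational_ds}, mimicking the proof of Proposition \ref{int_hier3} in the homogeneous case. Writing $g(z)=\kappa(a(z)\otimes1\mid h(z))$ and using \eqref{L0_ds} to express $h(z)=e^{\ad U(z)}(\partial+(f+zs)\otimes1+q)-\partial-(f+zs)\otimes1$, I would apply $\sum_{i\in P, m\in\mb Z_+}q_i\otimes(-\partial)^m\frac{\partial}{\partial q_i^{(m)}}$ and expand the exponential in its Taylor series. The term $k=0$ (i.e. the contribution of $q$ itself before any $\ad U(z)$ is applied) produces $\pi_{\mf p}(a(z)\otimes1)$ after using the completeness relation \eqref{complete} and invariance of $\kappa$; combined with the $k\geq1$ terms one should assemble $(\pi_{\mf p}\otimes1)(e^{-\ad U(z)}(a(z)\otimes1))$, provided the remaining terms cancel.

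To carry out this cancellation I would use the two combinatorial lemmas already available: Lemma \ref{lem1}, to differentiate the nested adjoint expression $(\ad U(z))^k(\partial+(f+zs)\otimes1+q)$ with respect to $q_i^{(m)}$, producing three groups of terms — one where a $\frac{\partial U(z)}{\partial q_i^{(m)}}$ is inserted into the adjoint chain, one where $\frac{\partial q}{\partial q_i^{(m)}}$ is differentiated, and one "shift" term $-\ad(U_1)\cdots(\frac{\partial U_k}{\partial q_i^{(m-1)}})$ coming from the $\partial$ in $L(z)$ and the commutation rule \eqref{partialcomm}. The terms where $\partial q/\partial q_i^{(m)}$ is hit collapse (after $\sum_i q_i\otimes(-\partial)^m(\cdot)$ and invariance of $\kappa$) into $\sum_{k\geq1}\frac1{k!}(-\ad U(z))^k(a(z)\otimes1)$, which together with the $k=0$ term gives $e^{-\ad U(z)}(a(z)\otimes1)$, and then $\pi_{\mf p}$ after the completeness relation. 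The two remaining pieces — the insertion terms reorganized via Lemma \ref{lem2} into a commutator $[A_{i,m}(z), e^{\ad U(z)}L(z)]$, and the shift terms $-\sum_i q_i\otimes(-\partial)^m\kappa(a(z)\otimes1\mid A_{i,m-1}(z))$ — must cancel against each other; this uses $e^{\ad U(z)}L(z)=\partial+(f+zs)\otimes1+h(z)$, invariance of $\kappa$, and the hypothesis $a(z)\in Z(\mf h)$ (so that $\kappa(a(z)\otimes1\mid[h(z),A_{i,m}(z)])=0$ and the $\partial$ part of $e^{\ad U(z)}L(z)$ converts $[A_{i,m}(z),\partial]$ into $-\partial A_{i,m}(z)$, shifting $(-\partial)^m$ to $(-\partial)^{m+1}$). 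I expect this to be essentially verbatim the argument in the proof of Proposition \ref{int_hier3}, with $L(z)=\partial+q+zs\otimes1$ there replaced by $L(z)=\partial+q+(f+zs)\otimes1$ and $\mf h=\Ker\ad s$ replaced by $\mf h=\Ker\ad(f+zs)$, and with everything regraded by the modified degree of Section \ref{sec:4.4} so that all series make sense in $\mf g((z^{-1}))_{>0}\otimes\mc V(\mf p)$.

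The main obstacle I anticipate is purely bookkeeping: making sure all the formal series involved are well-defined in the completed space $\mf g((z^{-1}))\otimes\mc V(\mf p)$ (the sum over $k$ in the exponential converges because $U(z)\in\mf g((z^{-1}))_{>0}\otimes\mc V(\mf p)$ raises degree, by the grading argument before Proposition \ref{int_hier2_ds}), and that the partial-derivative-of-$\mf g((z^{-1}))$-valued-series manipulations are legitimate termwise. There is also a minor subtlety that here $h(z)$ need not be central or commute with $U(z)$, but this is handled exactly as in Proposition \ref{int_hier3}: one only needs $a(z)\in Z(\mf h)$, not $\mf h$ abelian, for the crucial cancellation step. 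Once \eqref{var_der_ds} is established, Corollary \ref{int_hier3_ds} together with Corollary \ref{cor4apr} and Lemma \ref{gauge1} will give the Lenard-Magri recursion, but that is the content of the next theorem, not of this lemma.
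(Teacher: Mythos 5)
Your proposal is correct and follows essentially the same route as the paper: the paper's proof of Lemma \ref{step2-lemma2} is precisely the computation you describe, expanding $e^{\ad U(z)}$ in the variational derivative, using Lemma \ref{lem1} to split the terms, recovering $(\pi_{\mf p}\otimes1)\big(e^{-\ad U(z)}(a(z)\otimes1)\big)$ from the completeness relation \eqref{complete} and invariance of $\kappa$, and cancelling the insertion terms (reorganized via Lemma \ref{lem2} into $[A_{i,m}(z),e^{\ad U(z)}L(z)]$) against the shift terms using \eqref{L0_ds} and $a(z)\in Z(\mf h)$. The paper itself presents this as the same computation as Proposition \ref{int_hier3} with the modifications you list.
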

\begin{proof}
The proof follows from a straightforward computation following the same steps as in  
the proof of  Proposition \ref{int_hier3}.
By the definition \eqref{variational_ds} of the variational derivative
and the definition \eqref{gz} of $\tint g(z)$, we have
\begin{equation}\label{eq:30apr-1}
\begin{array}{l}
\displaystyle{
\frac{\delta g(z)}{\delta q}
=\sum_{i\in P,m\in\mb Z_+}
q_i\otimes(-\partial)^m
\frac{\partial g(z)}{\partial q_i^{(m)}}
=\sum_{i\in P,m\in\mb Z_+}
q_i\otimes(-\partial)^m
\kappa\Big(a(z)\otimes1\,\Big|\,\frac{\partial h(z)}{\partial q_i^{(m)}}\Big)\,.
} \\
\displaystyle{
=\sum_{i\in P,m\in\mb Z_+}
q_i\otimes(-\partial)^m
\kappa\Big(
a(z)\otimes1\,\Big|\,\frac{\partial}{\partial q_i^{(m)}}
\Big(
e^{\ad U(z)} (L(z))-\partial-(f+zs)\otimes1
\Big)\Big)
} 
\end{array}
\end{equation}
In the last identity we used equation \eqref{L0_ds}.
We next expand $e^{\ad U(z)}$ in power series. The first term of the expansion is,
by the definition \eqref{q} of $q\in\mf m^\perp\otimes\mc V(\mf p)$ 
and the first completeness relation \eqref{complete},
\begin{equation}\label{eq:30apr-1a}
\sum_{i\in P,m\in\mb Z_+}
q_i\otimes(-\partial)^m
\kappa\Big(
a(z)\otimes1\,\Big|\,\frac{\partial q}{\partial q_i^{(m)}}\Big)
=
\sum_{i\in P}
\kappa(a(z)\mid q^i)
q_i\otimes1
=
\pi_{\mf p} a(z)\otimes 1\,.
\end{equation}
By Lemma \ref{lem1}, all the other terms in the power series expansion of the RHS of \eqref{eq:30apr-1} are
\begin{equation}\label{eq:30apr-1b}
\begin{array}{l}
\displaystyle{
\sum_{k=1}^\infty\frac1{k!}
\sum_{i\in P,m\in\mb Z_+}
q_i\otimes(-\partial)^m
\kappa\Big(
a(z)\otimes1\,\Big|\,
\frac{\partial}{\partial q_i^{(m)}}(\ad U(z))^kL(z)
\Big)
=
\sum_{k=1}^\infty\frac1{k!}
} \\
\displaystyle{
\times
\!\sum_{i\in P,m\in\mb Z_+}\!\!
q_i\otimes(-\partial)^m
\kappa\Bigg(
a(z)\otimes1\,\Bigg|\,
\sum_{h=0}^{k-1}
(\ad U(z))^h \big(\ad \frac{\partial U(z)}{\partial q_i^{(m)}}\big) (\ad U(z))^{k-h-1} L(z)
} \\
\displaystyle{
+(\ad U(z))^k\frac{\partial}{\partial q_i^{(m)}}(q+(f+zs)\otimes1)
-(\ad U(z))^{k-1}\frac{\partial U(z)}{\partial q_i^{(m-1)}}
\Bigg)
=
\!\!\!
\sum_{h,k\in\mb Z_+}^\infty
\!\!\!
\frac1{(h\!+\!k\!+\!1)!}
} \\
\displaystyle{
\times\sum_{i\in P,m\in\mb Z_+}
q_i\otimes(-\partial)^m
\kappa\Bigg(
a(z)\otimes1\,\Bigg|\,
(\ad U(z))^h \big(\ad \frac{\partial U(z)}{\partial q_i^{(m)}}\big) (\ad U(z))^k L(z)
\Bigg)
} \\
\displaystyle{
+\sum_{k=1}^\infty\frac1{k!}
(\pi_{\mf p}\otimes1)\left((-\ad U(z))^k (a(z)\otimes1)\right)
} \\
\displaystyle{
-\sum_{k\in\mb Z_+}\frac1{(k+1)!}
\sum_{i\in P,m\in\mb Z_+}
q_i\otimes(-\partial)^m
\kappa\Bigg(
a(z)\otimes1\,\Bigg|\,
(\ad U(z))^{k}\frac{\partial U(z)}{\partial q_i^{(m-1)}}
\Bigg)
\,.
}
\end{array}
\end{equation}
For the first and last terms in the RHS we just changed the summation indices,
while for the second term we used the first completeness relation \eqref{complete}
and the invariance of the bilinear map $\kappa$.
Combining \eqref{eq:30apr-1a} and the second term in the RHS of \eqref{eq:30apr-1b},
we get
$(\pi_{\mf p}\otimes1)\left(e^{-\ad U(z)}(a(z)\otimes1)\right)$,
which is the same as the RHS of \eqref{var_der_ds}.
Hence, in order to complete the proof of the proposition,
we are left to show that the first and last term in the RHS of \eqref{eq:30apr-1b} cancel out.
The last term of the RHS of \eqref{eq:30apr-1b} can be rewritten as
\begin{equation}\label{eq:30apr-3}
-\sum_{i\in P,m\in\mb Z_+}
q_i\otimes(-\partial)^m
\kappa(
a(z)\otimes1\mid
A_{i,m-1}(z))\,,
\end{equation}
where $A_{i,m}(z)=\sum_{k\in\mb Z_+}\frac1{(k+1)!}
(\ad U(z))^{k}\frac{\partial U(z)}{\partial q_i^{(m)}}$.
On the other hand, by Lemma \ref{lem2}, the first term of the RHS of \eqref{eq:30apr-1b} is equal to
$$
\sum_{i\in P,m\in\mb Z_+}
q_i\otimes(-\partial)^m
\kappa\big(
a(z)\otimes1\,\mid\,
\big[A_{i,m}(z),e^{\ad U(z)}L(z)\big]
\big)
\,.
$$
By equation \eqref{L0_ds}, the invariance of the bilinear map $\kappa$
and the assumption that $a(z)$ lies in the center of $\mf h$,
the above expression is equal to
$$
\sum_{i\in P,m\in\mb Z_+}
q_i\otimes(-\partial)^{m+1}
\kappa(
a(z)\otimes1\mid A_{i,m}(z))
\,,
$$
which, combined with \eqref{eq:30apr-3}, gives zero.
\end{proof}

\begin{theorem}\label{main-step2}
The formal Laurent series $\tint g(z)\in(\quot{\mc V(\mf p)}{\partial\mc V(\mf p)})((z^{-1}))$ in \eqref{gz}
solves the Lenard-Magri recursion equation \eqref{eq:lenard5}.
\end{theorem}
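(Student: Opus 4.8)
The plan is to reduce the recursion \eqref{eq:lenard5} to two facts already at our disposal: that $F(z):=e^{-\ad U(z)}(a(z)\otimes1)$ commutes with $L(z)$ (Corollary \ref{int_hier3_ds}), and that $\frac{\delta g(z)}{\delta q}$ is exactly the $\mf p$-component of $F(z)$ relative to the direct sum decomposition $\mf g=\mf m\oplus\mf p$ (Lemma \ref{step2-lemma2}).

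First I would split $F(z)=(\pi_{\mf m}\otimes1)F(z)+(\pi_{\mf p}\otimes1)F(z)$, where $\pi_{\mf m},\pi_{\mf p}:\mf g\to\mf g$ are the projections onto $\mf m$ and $\mf p$ along $\mf g=\mf m\oplus\mf p$, extended to $\mf g((z^{-1}))\otimes\mc V(\mf p)$ coefficient-wise. By Lemma \ref{step2-lemma2} we have $(\pi_{\mf p}\otimes1)F(z)=\frac{\delta g(z)}{\delta q}$, while Corollary \ref{int_hier3_ds} gives $[L(z),F(z)]=0$; hence
$$
\Big[L(z),\frac{\delta g(z)}{\delta q}\Big]=-\big[L(z),(\pi_{\mf m}\otimes1)F(z)\big]\,.
$$
Next I would pair this identity with $\frac{\delta p}{\delta q}$ (for $p\in\mc W$) under $\kappa$ and integrate. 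Writing $(\pi_{\mf m}\otimes1)F(z)=\sum_{i}b_i\otimes c_i(z)$ with $\{b_i\}$ a basis of the finite-dimensional space $\mf m$ and $c_i(z)\in\mc V(\mf p)((z^{-1}))$, and applying Lemma \ref{step2-lemma1} to each summand $b_i\otimes c_i(z)$ (extended linearly in $z$), the right-hand side becomes $-\sum_i\tint\rho{\{{b_i}_{\partial}p\}_z}_\rightarrow c_i(z)$. Since each $b_i$ lies in $\mf m\subset\mf n$ and $p\in\mc W$, the definition \eqref{20120511:eq2} of $\mc W$ gives $b_i\,{}^\rho_\lambda\,p=\rho\{{b_i}_\lambda p\}_z=0$; hence every term vanishes, and therefore $\tint\kappa\big([L(z),\frac{\delta g(z)}{\delta q}]\,\big|\,\frac{\delta p}{\delta q}\big)=0$, which is exactly \eqref{eq:lenard5}.

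The only point requiring care is the bookkeeping in the completed Laurent-series setting of Section \ref{sec:4.4}: one must check that $F(z)$, its projections $(\pi_{\mf m}\otimes1)F(z)$, $(\pi_{\mf p}\otimes1)F(z)$, and the bracket $[L(z),\,\cdot\,]$ are all well-defined elements of the appropriate completion of $\mf g((z^{-1}))\otimes\mc V(\mf p)$, and that, after expanding $c_i(z)=\sum_n c_{i,n}z^{-n}$, Lemma \ref{step2-lemma1} --- which is stated for a single $a\otimes g\in\mf g\otimes\mc V(\mf p)$ --- may legitimately be applied term by term in $z^{-1}$. Given the grading estimates \eqref{decz2} this is routine, and no idea beyond assembling Corollary \ref{int_hier3_ds} and Lemmas \ref{step2-lemma1}--\ref{step2-lemma2} is needed; the substance has already been carried out in the preliminary results of this section.
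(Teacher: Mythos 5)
Your proposal is correct and follows essentially the same route as the paper's proof: decompose $e^{-\ad U(z)}(a(z)\otimes1)$ along $\mf g=\mf m\oplus\mf p$, use Lemma \ref{step2-lemma2} and Corollary \ref{int_hier3_ds} to reduce $\big[L(z),\frac{\delta g(z)}{\delta q}\big]$ to $-\big[L(z),(\pi_{\mf m}\otimes1)e^{-\ad U(z)}(a(z)\otimes1)\big]$, and then kill the pairing with $\frac{\delta p}{\delta q}$ via Lemma \ref{step2-lemma1} and the definition \eqref{20120511:eq2} of $\mc W$ (since $\mf m\subset\mf n$). The only difference is that you spell out the $z$-expansion bookkeeping that the paper leaves implicit; that is a harmless elaboration, not a new idea.
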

\begin{proof}
By \eqref{var_der_ds} we have
$$
\begin{array}{l}
\displaystyle{
\frac{\delta g(z)}{\delta q}
=(\pi_{\mf p}\otimes1)\Big(e^{-\ad U(z)}(a(z)\otimes1)\Big)
} \\
\displaystyle{
=e^{-\ad U(z)}(a(z)\otimes1)-(\pi_{\mf m}\otimes1)\Big(e^{-\ad U(z)}(a(z)\otimes1)\Big)\,,
}
\end{array}
$$
so that, by Corollary \ref{int_hier3_ds}, we get
$$
\left[L(z),\frac{\delta g(z)}{\delta q}\right]
=-\Big[L(z),(\pi_{\mf m}\otimes1)\Big(e^{-\ad U(z)}(a(z)\otimes1)\Big)\Big]\,.
$$
Hence, \eqref{eq:lenard5} holds 
by \eqref{eq:gauge} and the definition \eqref{20120511:eq2} of the space $\mc W\subset\mc V(\mf p)$
(recall that $\mf m\subset\mf n$).
\end{proof}

\subsection{Step 3}\label{sec:4.6}

In this section we will show that the Laurent series $\tint g(z)$ given by \eqref{gz}
has coefficients in $\quot{\mc W}{\partial\mc W}$ (Proposition \ref{20120515:thm} below).
\begin{lemma}\label{20120515:lem}
The Laurent series $\tint g(z)$ defined by \eqref{gz}
is independent of the choice of 
$U(z)\in\mf g((z^{-1}))_{>0}\otimes\mc V(\mf p)$, $h(z)\in\mf h_{>-1}\otimes\mc V(\mf p)$,
solving equation \eqref{L0_ds}.
\end{lemma}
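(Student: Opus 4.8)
The plan is to use Proposition \ref{int_hier2_ds}(b), which tells us precisely how two solutions of \eqref{L0_ds} differ, together with the invariance of $\kappa$ and the hypothesis $a(z)\in Z(\mf h)$, exactly as in the proof of Remark \ref{totder} in the homogeneous case. So first I would let $U(z),h(z)$ and $\widetilde U(z),\widetilde h(z)$ be two solutions of \eqref{L0_ds} with $U(z),\widetilde U(z)\in\mf g((z^{-1}))_{>0}\otimes\mc V(\mf p)$ and $h(z),\widetilde h(z)\in\mf h_{>-1}\otimes\mc V(\mf p)$. By Proposition \ref{int_hier2_ds}(b) (applied with $r=q$), there exists $S(z)\in\mf h_{>0}\otimes\mc V(\mf p)$ with $e^{\ad\widetilde U(z)}=e^{\ad S(z)}e^{\ad U(z)}$.

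Next I would substitute this relation into the defining equations \eqref{L0_ds} for the two solutions to get the identity
$$
\partial+(f+zs)\otimes1+\widetilde h(z)=e^{\ad S(z)}\big(\partial+(f+zs)\otimes1+h(z)\big)\,.
$$
Since $S(z)\in\mf h_{>0}\otimes\mc V(\mf p)$ and $(f+zs)\otimes1\in\mf h$ (indeed $f+zs\in Z(\mf h)\subset\mf h$), the term $[S(z),(f+zs)\otimes1]$ vanishes, so expanding $e^{\ad S(z)}=\sum_{n\ge0}\frac1{n!}(\ad S(z))^n$ and collecting we obtain
$$
\widetilde h(z)-h(z)=\big(e^{\ad S(z)}-1\big)(\partial)+\big(e^{\ad S(z)}-1\big)(h(z))
=-\sum_{n\ge0}\frac1{(n+1)!}(\ad S(z))^n(\partial S(z))+\big(e^{\ad S(z)}-1\big)(h(z))\,,
$$
using $[\partial,S(z)]=\partial S(z)$ interpreted in the semidirect product (the $n\ge1$ terms acting on $\partial$ reorganize into the displayed closed form, as in Lemma \ref{lem2}). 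In particular $\widetilde h(z)-h(z)$ lies in $\mf h$ and is an explicit Lie-theoretic expression built from $S(z)$ and $h(z)$ with all brackets taken against elements of $\mf h$.

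The point is then to pair against $a(z)\otimes1$ with $a(z)\in Z(\mf h)$ and integrate. By invariance of the bilinear map $\kappa:(\mf g((z^{-1}))\otimes\mc V(\mf p))^{\times2}\to\mc V(\mf p)((z^{-1}))$ and the fact that $a(z)$ is central in $\mf h$, every term of the form $\kappa(a(z)\otimes1\mid(\ad S(z))^n(\cdot))$ with $n\ge1$ and argument in $\mf h$ vanishes, since $[\,a(z)\otimes1\,,\,S(z)\,]=0$ and we can move one copy of $\ad S(z)$ onto $a(z)\otimes1$. Hence the only surviving contribution is $\kappa\big(a(z)\otimes1\mid -\partial S(z)\big)=-\partial\,\kappa(a(z)\otimes1\mid S(z))$ from the $n=0$ piece of the first sum (the $e^{\ad S(z)}-1$ acting on $h(z)$ contributes nothing after pairing, again by centrality of $a(z)$). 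Therefore
$$
\kappa(a(z)\otimes1\mid\widetilde h(z))-\kappa(a(z)\otimes1\mid h(z))=-\partial\,\kappa(a(z)\otimes1\mid S(z))\in\partial\mc V(\mf p)((z^{-1}))\,,
$$
so the two expressions differ by a total derivative and $\tint g(z)$ is well defined.

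I expect the only mildly delicate point to be the bookkeeping that reorganizes $\sum_{n\ge1}\frac1{n!}(\ad S(z))^n(\partial)$ into the closed form $\sum_{n\ge0}\frac1{(n+1)!}(\ad S(z))^n(\partial S(z))$ and, more importantly, the verification that after pairing with the central element $a(z)$ all higher-order terms genuinely cancel; this is precisely the mechanism already used in the proof of Proposition \ref{int_hier3} via Lemmas \ref{lem1} and \ref{lem2}, so I would invoke that argument rather than redo it. Everything else is a direct application of Proposition \ref{int_hier2_ds}(b) and the invariance and centrality hypotheses.
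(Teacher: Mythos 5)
Your proof is correct, but it takes a genuinely different route from the paper's. Both arguments start from Proposition \ref{int_hier2_ds}(b) to write $e^{\ad\widetilde U(z)}=e^{\ad S(z)}e^{\ad U(z)}$ with $S(z)\in\mf h_{>0}\otimes\mc V(\mf p)$. From there the paper does \emph{not} compute $\widetilde h(z)-h(z)$ at all: it invokes Lemma \ref{step2-lemma2} to get $\frac{\delta\widetilde g(z)}{\delta q}=(\pi_{\mf p}\otimes1)\big(e^{-\ad U(z)}e^{-\ad S(z)}(a(z)\otimes1)\big)=\frac{\delta g(z)}{\delta q}$ (using $a(z)\in Z(\mf h)$), then uses $\Ker\big(\frac{\delta}{\delta q}\big)=\partial\mc V(\mf p)\oplus\mb F$ to conclude the two functionals differ by a constant, and finally needs a separate observation (that $h(z)[0]=0$ for any solution, from the proof of Lemma \ref{20120522:lem1}) to kill that constant. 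Your argument instead computes the difference directly: applying $e^{\ad S(z)}$ to $\partial+(f+zs)\otimes1+h(z)$ and using $[S(z),(f+zs)\otimes1]=0$, you get $\widetilde h(z)-h(z)=-\sum_{n\ge0}\frac1{(n+1)!}(\ad S(z))^n(\partial S(z))+(e^{\ad S(z)}-1)(h(z))$, and invariance of $\kappa$ together with $[a(z)\otimes1,S(z)]=0$ annihilates every term that begins with an $\ad S(z)$, leaving exactly $-\partial\,\kappa(a(z)\otimes1\mid S(z))$. This is the mechanism of Remark \ref{totder}, as you say. What your route buys: it is self-contained modulo Proposition \ref{int_hier2_ds}(b), it produces an explicit exact-derivative formula for the discrepancy, and it sidesteps both the kernel characterization of $\frac{\delta}{\delta q}$ and the constant-term check entirely (a genuine total derivative has no constant ambiguity). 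What the paper's route buys: it reuses Lemma \ref{step2-lemma2}, which is needed anyway for Theorem \ref{main-step2}, so the marginal cost is small. One cosmetic remark: the rewriting of $\sum_{n\ge1}\frac1{n!}(\ad S(z))^n(\partial)$ as $-\sum_{n\ge0}\frac1{(n+1)!}(\ad S(z))^n(\partial S(z))$ is an elementary reindexing from $(\ad S(z))(\partial)=-\partial S(z)$, not really an application of Lemma \ref{lem2} (which concerns commutators against $e^{\ad U}(L)$); this mis-attribution does not affect the argument.
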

\begin{proof}
Let $\widetilde{U}(z)\in\mf g((z^{-1}))_{>0}\otimes\mc V(\mf p)$, 
$\widetilde{h}(z)\in\mf h_{>-1}\otimes\mc V(\mf p)$ be any other solution of equation \eqref{L0_ds},
and let $\tint\widetilde{g}(z)=\tint\kappa(a(z)\otimes1\mid\widetilde{h}(z))$.
By Proposition \eqref{int_hier2_ds}(b) there exists
$S(z)\in\mf h_{>0}\otimes\mc V(\mf p)$ such that
$e^{\ad\widetilde{U}(z)}=e^{\ad S(z)}e^{\ad U(z)}$.
By Lemma \ref{step2-lemma2}, we then have
$$
\begin{array}{l}
\displaystyle{
\frac{\delta\widetilde{g}(z)}{\delta q}
=(\pi_{\mf p}\otimes1)\Big(e^{-\ad U(z)}e^{-\ad S(z)}(a(z)\otimes1)\Big)
} \\
\displaystyle{
=(\pi_{\mf p}\otimes1)\Big(e^{-\ad U(z)}(a(z)\otimes1)\Big)
=\frac{\delta g(z)}{\delta q}
\,.
}
\end{array}
$$
In the second equality we used the assumption that $a(z)\in Z(\mf h)$.
Since in the algebra of differential polynomials $\mc V(\mf p)$ we have
$\Ker\big(\frac{\delta}{\delta q}\big)=\partial\mc V(\mf p)\oplus\mb F$,
we deduce that $\tint \tilde g(z)$ and $\tint g(z)$ differ at most by a constant.
On the other hand, as explained in the proof of Lemma \ref{20120522:lem1},
the constant term $\tilde h(z)[0]\in\mf h\otimes1$ of $\tilde h(z)$
is always zero.
Therefore, the constant term of $\tint \tilde g(z)$ is zero as well.
\end{proof}
\begin{proposition}\label{20120515:thm}
We have $\tint g(z)\in(\quot{\mc W}{\partial\mc W})((z^{-1}))$.
\end{proposition}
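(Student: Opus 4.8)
The plan is to choose the solution $(U(z),h(z))$ of \eqref{L0_ds} in a way that makes it manifest that $h(z)$ has coefficients in $\mc W$, and then to invoke Lemma \ref{20120515:lem}. Concretely, I would first perform a preliminary gauge transformation bringing $L(z)$ to the ``canonical form'' built out of the generators of $\mc W$: let $X\in\mf n\otimes\mc V(\mf p)$ be the unique element provided by Theorem \ref{20120511:thm2}(a), so that $w:=q^X\in V\otimes\mc V(\mf p)$, and recall from the proof of Theorem \ref{20120511:thm2} (which at that point uses only part (a) together with Corollary \ref{20120511:cor}) that the components of $w$ in any basis of $\mf m^\perp$ lie in $\mc W$. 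Since $[s,\mf n]=0$, the automorphism $e^{\ad X}$ fixes $(f+zs)\otimes1$, so $L^X(z):=e^{\ad X}(L(z))=\partial+w+(f+zs)\otimes1$.

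Next I would apply Proposition \ref{int_hier2_ds}(a) with $r=w$: there are unique $\widehat U(z)\in\mf h^\perp_{>0}\otimes\mc V(\mf p)$ and $\widehat h(z)\in\mf h_{>-1}\otimes\mc V(\mf p)$ with $e^{\ad\widehat U(z)}(L^X(z))=\partial+(f+zs)\otimes1+\widehat h(z)$, and---this is the key point---the coefficients of $\widehat h(z)$ are differential polynomials in the components of $w$, hence lie in the differential subalgebra $\mc W$. Composing, $e^{\ad\widehat U(z)}e^{\ad X}(L(z))=\partial+(f+zs)\otimes1+\widehat h(z)$. Because $\mf n\subset\mf g_{\geq\frac12}$, the element $X$ has strictly positive degree in the grading of $\mf g((z^{-1}))$, so both $\widehat U(z)$ and $X$ lie in $\mf g((z^{-1}))_{>0}\otimes\mc V(\mf p)$; by the Baker--Campbell--Hausdorff formula \cite{Ser92}, $e^{\ad\widehat U(z)}e^{\ad X}=e^{\ad\widetilde U(z)}$ for some $\widetilde U(z)\in\mf g((z^{-1}))_{>0}\otimes\mc V(\mf p)$. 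Thus $(\widetilde U(z),\widehat h(z))$ is an admissible solution of \eqref{L0_ds}, and Lemma \ref{20120515:lem} yields $\tint g(z)=\tint\kappa(a(z)\otimes1\mid\widehat h(z))$.

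Finally I would check that the representative $\kappa(a(z)\otimes1\mid\widehat h(z))$ lies coefficientwise in $\mc W$: expanding $\widehat h(z)$ in a basis $\{b^\alpha\}$ of $\mf g$ as $\sum_\alpha b^\alpha\otimes\widehat h^\alpha(z)$ with each $z$-coefficient of $\widehat h^\alpha(z)$ in $\mc W$ by the previous step, one gets $\kappa(a(z)\otimes1\mid\widehat h(z))=\sum_\alpha\kappa(a(z)\mid b^\alpha)\,\widehat h^\alpha(z)$, whose $z$-coefficients are $\mb F$-linear combinations of the $z$-coefficients of the $\widehat h^\alpha(z)$, hence in $\mc W$. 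Since the argument of Lemma \ref{20130207:lem} also gives $\partial\mc V(\mf p)\cap\mc W=\partial\mc W$, so that $\quot{\mc W}{\partial\mc W}$ embeds in $\quot{\mc V(\mf p)}{\partial\mc V(\mf p)}$, this establishes $\tint g(z)\in(\quot{\mc W}{\partial\mc W})((z^{-1}))$.

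The step I expect to require the most care is ensuring that the composite gauge-plus-dressing transformation $e^{\ad\widehat U(z)}e^{\ad X}$ is itself of the admissible form $e^{\ad\widetilde U(z)}$ with $\widetilde U(z)$ of strictly positive degree, since this is precisely what legitimizes the appeal to Lemma \ref{20120515:lem}; this goes hand in hand with a brief bookkeeping check, using the inclusions \eqref{decz2}, that the completed tensor products $\mf h^\perp_{>0}\otimes\mc V(\mf p)$ and $\mf h_{>-1}\otimes\mc V(\mf p)$ actually sit inside $(\mf g\otimes\mc V(\mf p))((z^{-1}))$, so that pairing $\widehat h(z)$ against the fixed Laurent series $a(z)\in Z(\mf h)$ genuinely yields an element of $\mc V(\mf p)((z^{-1}))$.
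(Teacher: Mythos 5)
Your proof is correct and takes essentially the same route as the paper: gauge $L(z)$ to canonical form via the $X$ of Theorem \ref{20120511:thm2}(a), apply Proposition \ref{int_hier2_ds}(a) with $r=w$ so that $\widehat h(z)$ has coefficients that are differential polynomials in the generators of $\mc W$, combine the exponentials by Baker--Campbell--Hausdorff, and invoke Lemma \ref{20120515:lem}. (Only a phrasing quibble: $e^{\ad X}$ fixes $zs\otimes1$ but not $f\otimes1$; the identity $e^{\ad X}(L(z))=\partial+w+(f+zs)\otimes1$ follows from the definition \eqref{gauge} of $q^X$ together with $[X,s\otimes1]=0$, which is what your computation in effect uses.)
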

\begin{proof}
Fix a subspace $V\subset\mf m^\perp$ complementary to $[f,\mf n]\subset\mf m^\perp$
compatible with the direct sum decomposition \eqref{dec},
and let $X\in\mf n\otimes\mc V(\mf p)$ and $w\in V\otimes\mc V(\mf p)$
be the unique element provided by Theorem \ref{20120511:thm2}(a).
By Proposition \ref{int_hier2_ds}(a) and Theorem \ref{20120511:thm2}(b),
there exist unique $U_w(z)\in\mf h^\perp_{>0}\otimes\mc W$ and $h_w(z)\in\mf h_{>0}\otimes\mc W$,
such that
$$
e^{\ad U_w(z)}(\partial+(f+zs)\otimes1+w)=\partial+(f+zs)\otimes1+h_w(z)\,.
$$
By the identity $w=q^X$, we can rewrite the above equation as
$$
e^{\ad U_w(z)}e^{-\ad X}(\partial+(f+zs)\otimes1+q)=\partial+(f+zs)\otimes1+h_w(z)\,.
$$
Since $\mf n\subset\mf g((z^{-1}))_{>0}$,
by the Baker-Campbell-Hausdorff formula there exists 
$\widetilde U(z)\in\mf g((z^{-1}))_{>0}\otimes\mc V(\mf p)$
such that $e^{\ad U_w(z)}e^{-\ad X}=e^{\ad\widetilde{U}(z)}$.
Hence, $\widetilde{U}(z),h_w(z)$ is another solution of equation \eqref{L0_ds}.
By Lemma \ref{20120515:lem} we thus conclude that
$\tint g(z)=\tint\kappa(a(z)\otimes1\mid h_w(z))\in(\quot{\mc W}{\partial\mc W})((z^{-1}))$.
\end{proof}

\subsection{Step 4}\label{sec:4.7}
Consider the Laurent series
$\tint g(z)=\sum_{n\in\mb Z_+}\tint g_nz^{-n+N}$ defined in \eqref{gz}.
In this section we prove that, if $a(z)\in Z(\mf h)$
does not lie in the center of $\mf g((z^{-1}))$,
then the local functionals $\{\tint g_n\}_{n\in\mb Z_+}$
span an infinite-dimensional subspace of $\quot{\mc W}{\partial\mc W}$.

We start by computing explicitly 
the linear (as polynomial in $\mc V(\mf p)$) part of $\frac{\delta g(z)}{\delta q}$.
Let $U(z)\in\mf h^\perp\otimes\mc V(\mf p)$ and $h(z)\in\mf h\otimes\mc V(\mf p)$
be the unique solution of equation \eqref{L0_ds}.
Let $U(z)=\sum_{k\in\mb Z_+}U(z)[k]$ be the decomposition of $U(z)$
according to the usual grading 
of the algebra of differential polynomials $\mc V(\mf p)=S(\mb F[\partial]\mf p)$.
\begin{lemma}\label{20120522:lem1}
The linear component of $U(z)$ is:
\begin{equation}\label{20120523:eq1}
U(z)[1]=
\sum_{n\in\mb Z_+}(-1)^n (\ad(f+zs)^{-n-1}\otimes1) (\pi_{\mf h^\perp}\otimes1) \partial^n q\,,
\end{equation}
where $\ad(f+zs)^{-1}$ denotes the inverse map 
of the bijection $\ad(f+zs)\big|_{\mf h^\perp}:\,\mf h^\perp\to\mf h^\perp$,
and $\pi_{\mf h^\perp}:\,\mf g((z^{-1}))\to\mf h^\perp$ denotes the projection onto $\mf h^\perp$
with kernel $\mf h$.
\end{lemma}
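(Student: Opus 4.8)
The plan is to extract the linear part of equation \eqref{L0_ds} directly. Write $U(z) = \sum_k U(z)[k]$ and $h(z) = \sum_k h(z)[k]$ for the decompositions by polynomial degree in $\mc V(\mf p) = S(\mb F[\partial]\mf p)$. Recall that $q = \sum_{i\in P} q^i \otimes q_i$ has polynomial degree $1$, while $f+zs$ has degree $0$; by the inductive argument in the proof of Proposition \ref{int_hier2_ds}(a), run one degree at a time, one sees (as in Remark \ref{rem:grado2}) that $U(z)[0] = 0$ and $h(z)[0] = 0$, so the lowest nonzero contribution is in degree $1$. Expanding $e^{\ad U(z)}(L(z)) = \partial + (f+zs)\otimes 1 + h(z)$ and collecting the terms of polynomial degree $1$, the only contributions come from: the term $q$ itself; the term $[U(z)[1], (f+zs)\otimes 1]$ coming from $\ad U(z)$ applied to the degree-$0$ part $(f+zs)\otimes 1$; and the term $\partial U(z)[1]$ coming from $\ad U(z)$ applied to $\partial$ (since $[U,\partial] = -\partial U$ in $\mb F\partial \ltimes (\mf g((z^{-1}))\otimes \mc V(\mf p))$). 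All higher powers $(\ad U(z))^k$ with $k\geq 2$ raise the polynomial degree to at least $2$, and $\ad U(z)[1]$ applied to $q$ also has degree $\geq 2$. Hence the degree-$1$ identity reads
\begin{equation}\label{eq:lindeg}
h(z)[1] + [(f+zs)\otimes 1, U(z)[1]] = q - \partial U(z)[1]\,.
\end{equation}

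Next I would project this onto $\mf h^\perp \otimes \mc V(\mf p)$ using the decomposition $\mf g((z^{-1})) = \mf h \oplus \mf h^\perp$ with $\mf h^\perp = \im\ad(f+zs)$. Since $h(z) \in \mf h_{>-1}\otimes\mc V(\mf p)$ and $\ad(f+zs)$ preserves this splitting (as noted after \eqref{decz}), applying $\pi_{\mf h^\perp}\otimes 1$ to \eqref{eq:lindeg} kills $h(z)[1]$ and, because $U(z) \in \mf h^\perp_{>0}\otimes\mc V(\mf p)$ so $U(z)[1] \in \mf h^\perp\otimes\mc V(\mf p)$, we get
\begin{equation}\label{eq:heq}
(\ad(f+zs)\otimes 1)(U(z)[1]) + \partial U(z)[1] = (\pi_{\mf h^\perp}\otimes 1)\, q\,.
\end{equation}
(Here I use $\ad(f+zs)$ as the invertible endomorphism of $\mf h^\perp$.) This is a linear equation of the form $(A + \partial)v = w$ with $A = \ad(f+zs)\otimes 1$ invertible on $\mf h^\perp\otimes\mc V(\mf p)$ and $\partial$ acting only on the $\mc V(\mf p)$-factor. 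It is solved formally by the Neumann-type series $v = \sum_{n\geq 0} (-1)^n A^{-1}(\partial A^{-1})^n w = \sum_{n\geq 0}(-1)^n A^{-n-1}\partial^n w$, since $A$ commutes with $\partial$ (both act on disjoint tensor factors). Substituting $A^{-1} = \ad(f+zs)^{-1}\otimes 1$ and $w = (\pi_{\mf h^\perp}\otimes 1)q$, and checking that this series converges in $\mf h^\perp_{>0}\otimes\mc V(\mf p)$ with respect to the grading \eqref{decz} — which it does, because $\ad(f+zs)^{-1}$ raises the $\mf g((z^{-1}))$-degree by $1$, so the coefficient of $z$-degree $i$ involves only finitely many $n$ — yields exactly the claimed formula \eqref{20120523:eq1}. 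Uniqueness of $U(z)[1]$ follows from the uniqueness in Proposition \ref{int_hier2_ds}(a) together with the fact that equation \eqref{eq:heq} has a unique solution in $\mf h^\perp_{>0}\otimes\mc V(\mf p)$.

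I expect the main subtlety to be bookkeeping rather than mathematical depth: isolating precisely which terms in the exponential expansion $e^{\ad U(z)}(L(z))$ contribute to polynomial degree $1$, and keeping straight the two gradings in play — the polynomial degree in $\mc V(\mf p)$ (used to isolate \eqref{eq:lindeg}) and the $\frac12\mb Z$-grading \eqref{decz} of $\mf g((z^{-1}))$ (needed to make sense of the infinite series and invoke invertibility of $\ad(f+zs)$ on $\mf h^\perp$). One should also record, as a byproduct needed elsewhere (e.g. in Lemma \ref{20120515:lem}), that the constant term $h(z)[0] = 0$, which comes for free from the same degree-$0$ analysis. The argument is essentially the degree-$1$ specialization of the inductive scheme already established in Proposition \ref{int_hier2_ds}, so no new machinery is required.
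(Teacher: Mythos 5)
Your proposal is correct and follows essentially the same route as the paper: kill $U(z)[0]$ and $h(z)[0]$ by the degree-zero analysis, extract the polynomial-degree-one identity $h(z)[1]+[(f+zs)\otimes 1,U(z)[1]]=q-\partial U(z)[1]$, project onto $\mf h^\perp\otimes\mc V(\mf p)$, and invert $\ad(f+zs)+\partial$. The only cosmetic difference is that the paper solves the resulting equation by recursion on the grading \eqref{decz} (obtaining the finite sums $U(z)[1]_{i+1}=\sum_{n=0}^{[i+\frac12]}\cdots$ and then summing), whereas you write the Neumann series in closed form and check its convergence degree by degree -- the same argument in a different order.
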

\begin{proof}
We proceed as in Remark \ref{rem:grado2}.
First, we equate the homogeneous components of degree $0$ (as polynomials in $\mc V(\mf p)$)
in both sides of \eqref{L0_ds}.
We get 
$$
(e^{\ad U(z)[0]}-1)(f+zs)\otimes1=h(z)[0]\,.
$$
Since $\mf h\cap\mf h^\perp=0$, 
it is not hard to prove, inductively on the grading \eqref{decz},
that $U(z)[0]=0$ in $\mf h^\perp\otimes1$
and $h(z)[0]=0$ in $\mf h\otimes1$.
In fact,
the same argument can be used to prove that,
for \emph{any} solution 
$U(z)\in\mf g((z^{-1}))\otimes\mc V(\mf p)$, $h(z)\in\mf h\otimes\mc V(\mf p)$
of equation \eqref{L0_ds}, we have
$h(z)[0]=0$ and $U(z)[0]\in\mf h\otimes 1$
(a fact that was used in the proof of Lemma \ref{20120515:lem}).

Next, equating the homogeneous components of degree $1$, we get that
$h(z)[1]=(\pi_{\mf h}\otimes1)q$, while $U(z)[1]$ satisfies the equation
$$
[(f+zs)\otimes1,U(z)[1]]=(\pi_{\mf h^\perp}\otimes1)q-U^{\prime}(z)[1]\,.
$$
Let $U(z)[1]=\sum_{0\neq i\in\frac12\mb Z_+}U(z)[1]_i$ be the decomposition of $U(z)[1]$
according to the grading \eqref{decz}. We can solve recursively the above 
equation by equating terms of the same degree with respect to the grading \eqref{decz}.
We find that, for $i\geq-\frac12$ in $\frac12\mb Z$,
$$
U(z)[1]_{i+1}
=\sum_{n=0}^{\left[i+\frac12\right]}(-1)^{n}
(\ad(f+zs)^{-n-1}\otimes1)(\pi_{\mf h^\perp_{i-n}})\partial^{n}q\,
$$
where $[i]$ denotes the integer part of $i\in\frac12\mb Z$,
and $\pi_{\mf h^\perp_{i}}$ is the projection on $\mf h^\perp_i=\mf g((z^{-1}))_i\cap\mf h^\perp$.
Equation \eqref{20120523:eq1} is obtained from the above equation
after summing over all possible degrees $i\geq-\frac12$,
and using the identity
$\sum_i(\pi_{\mf h^\perp_i}\otimes1)q=(\pi_{\mf h^\perp}\otimes1)q$.
\end{proof}
\begin{lemma}\label{20120522:lem2}
The linear part of $\frac{\delta g(z)}{\delta q}\in(\mf p\otimes\mc V(\mf p))((z^{-1}))$, 
with respect to the usual differential polynomial grading in $\mc V(\mf p)$, is
\begin{equation}\label{20120523:eq2}
\frac{\delta g(z)}{\delta q}[1]=
\sum_{n\in\mb Z_+}(-1)^n (\pi_{\mf p}\otimes1)
(\ad(f+zs)^{-n-1}\otimes1) [a(z)\otimes1,\partial^n q]\,,
\end{equation}
where, as before, $\pi_{\mf p}:\,\mf g\to\mf p$ denotes the projection onto $\mf p$
with kernel $\mf m$, and it is extended to $\mf g((z^{-1}))$
in the obvious way.
\end{lemma}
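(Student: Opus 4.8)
The plan is to start from the closed formula for the variational derivative obtained in Lemma \ref{step2-lemma2}, namely $\frac{\delta g(z)}{\delta q}=(\pi_{\mf p}\otimes1)\big(e^{-\ad U(z)}(a(z)\otimes1)\big)$, and to isolate the component which is linear in the differential variables. Since $a(z)\otimes1$ has polynomial degree $0$ and, by the first part of the proof of Lemma \ref{20120522:lem1}, $U(z)[0]=0$, expanding $e^{-\ad U(z)}=\sum_{k\in\mb Z_+}\frac{(-1)^k}{k!}(\ad U(z))^k$ shows that the only contribution to the degree-$1$ part of $(\pi_{\mf p}\otimes1)\big(e^{-\ad U(z)}(a(z)\otimes1)\big)$ comes from the $k=1$ term with $U(z)$ replaced by its linear component $U(z)[1]$. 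Hence $\frac{\delta g(z)}{\delta q}[1]=-(\pi_{\mf p}\otimes1)[U(z)[1],a(z)\otimes1]=(\pi_{\mf p}\otimes1)[a(z)\otimes1,U(z)[1]]$.

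The second step is to substitute the explicit expression \eqref{20120523:eq1} for $U(z)[1]$ and push the operator $\ad(a(z)\otimes1)$ through the sum. The key algebraic observation is that, since $a(z)\in Z(\mf h)$ and $f+zs\in\mf h$, one has $[a(z),f+zs]=0$, so $\ad(a(z))$ and $\ad(f+zs)$ commute as endomorphisms of $\mf g((z^{-1}))$; moreover $\ad(a(z))$ preserves $\mf h$ (because $a(z)\in\mf h$) and, by invariance of $\kappa_0$, also preserves $\mf h^\perp$, so it commutes with the inverse map $\ad(f+zs)^{-1}\big|_{\mf h^\perp}$ as well. Therefore each bracket $\big[a(z)\otimes1,(\ad(f+zs)^{-n-1}\otimes1)(\pi_{\mf h^\perp}\otimes1)\partial^n q\big]$ equals $(\ad(f+zs)^{-n-1}\otimes1)\big[a(z)\otimes1,(\pi_{\mf h^\perp}\otimes1)\partial^n q\big]$. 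Finally, decomposing $\partial^n q=(\pi_{\mf h}\otimes1)\partial^n q+(\pi_{\mf h^\perp}\otimes1)\partial^n q$ and using that $\big[a(z)\otimes1,(\pi_{\mf h}\otimes1)\partial^n q\big]=0$ (again because $a(z)\in Z(\mf h)$), we may replace $(\pi_{\mf h^\perp}\otimes1)\partial^n q$ by $\partial^n q$ inside the bracket; note that the resulting element $[a(z)\otimes1,\partial^n q]$ then automatically lies in $\mf h^\perp\otimes\mc V(\mf p)$, so that applying $\ad(f+zs)^{-n-1}$ to it is legitimate. Applying $(\pi_{\mf p}\otimes1)$ at the end and collecting terms gives exactly \eqref{20120523:eq2}.

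The argument is essentially bookkeeping, and I do not expect a genuine obstacle. The only point requiring care is the commutation of $\ad(a(z)\otimes1)$ with $\ad(f+zs)^{-n-1}\otimes1$: this should be justified by first checking that $\ad(a(z))$ leaves the subspace $\mf h^\perp$ invariant, so that the inverse operator is defined precisely on the relevant elements, rather than by a formal manipulation of the operators $\ad(f+zs)^{-1}$ in isolation. Once this is settled the computation closes immediately.
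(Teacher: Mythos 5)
Your proof is correct and follows essentially the same route as the paper: both reduce to $\frac{\delta g(z)}{\delta q}[1]=(\pi_{\mf p}\otimes1)[a(z)\otimes1,U(z)[1]]$ via Lemma \ref{step2-lemma2}, substitute \eqref{20120523:eq1}, and then use that $\ad a(z)$ commutes with $\ad(f+zs)$ (hence with its inverse on $\mf h^\perp$) and annihilates $\mf h$, so that $\ad a(z)\circ\pi_{\mf h^\perp}=\ad a(z)$. Your extra care in checking that $\ad a(z)$ preserves $\mf h^\perp$ (via invariance of $\kappa_0$) is a welcome, if implicit in the paper, justification.
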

\begin{proof}
By Lemma \ref{step2-lemma2} and equation \eqref{20120523:eq1}, 
the linear part of $\frac{\delta g(z)}{\delta q}$ is 
given by 
$$
\begin{array}{l}
\displaystyle{
\frac{\delta g(z)}{\delta q}[1]
=(\pi_{\mf p}\otimes1)[a(z)\otimes1,U(z)[1]]
} \\
\displaystyle{
=\sum_{n\in\mb Z_+}(-1)^n 
(\pi_{\mf p}\otimes1)
\big[
a(z)\otimes1,
(\ad(f+zs)^{-n-1}\otimes1) (\pi_{\mf h^\perp}\otimes1) \partial^n q
\big]
\,.}
\end{array}
$$
Equation \eqref{20120523:eq2} follows from the facts that,
since $a(z)$ is in the center of $\mf h=\Ker\ad(f+zs)$,
we have that $\ad(f+zs)$ and $\ad a(z)$ commute, and 
$\ad a(z)\circ\pi_{\mf h^\perp}=\ad a(z)$.
\end{proof}
\begin{lemma}\label{20120522:lem3}
If $X\in\mf h^\perp$ is non zero, then
$\pi_{\mf p}\ad(f+zs)^{-n-1}X$ 
is different from zero for infinitely many values of $n\in\mb Z_+$.
\end{lemma}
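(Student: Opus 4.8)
The plan is to argue by contradiction. Assume that $\pi_{\mf p}\ad(f+zs)^{-n-1}X$ vanishes for all but finitely many $n$; then there is $N\in\mb Z_+$ such that $\ad(f+zs)^{-n-1}X\in\mf m((z^{-1}))$ for all $n\geq N$, where $\mf m((z^{-1}))$ is the kernel of the (coefficientwise extended) projection $\pi_{\mf p}$. Write $\Lambda=f+zs$ and set $X'=\ad\Lambda^{-(N+1)}X$, which lies in $\mf h^\perp$ and is nonzero since $\ad\Lambda$ restricts to a bijection of $\mf h^\perp$ onto itself. The assumption then reads $\ad\Lambda^{-k}X'\in\mf m((z^{-1}))$ for every $k\in\mb Z_+$. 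I would derive a contradiction by showing that the $\mb F((z^{-1}))$-span $V$ of $\{\ad\Lambda^{-k}X'\mid k\in\mb Z_+\}$ is zero, whereas $0\neq X'\in V$.

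First I would record the structure of $V$. Being a subspace of the finite-dimensional space $\mf g((z^{-1}))$, it is finite-dimensional over $\mb F((z^{-1}))$; it is contained in $\mf h^\perp$ and in $\mf m((z^{-1}))$, since these are $\mb F((z^{-1}))$-subspaces containing every generator $\ad\Lambda^{-k}X'$; and it is $\ad\Lambda$-invariant. For the last point, $\ad\Lambda^{-1}$ maps the generators of $V$ into $V$, hence maps the finite-dimensional space $V$ into $V$; as $\ad\Lambda^{-1}$ is injective on $\mf h^\perp\supseteq V$, this restriction is bijective, so $\ad\Lambda(V)=V$ as well.

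The core of the proof is a bootstrapping induction on the $\ad x$-grading \eqref{dec}, phrased as: $V\subseteq\mf g_{\geq j}((z^{-1}))$ for every $j\in\frac12\mb Z$ with $j\geq\frac12$, where $\mf g_{\geq j}=\bigoplus_{i\geq j}\mf g_i$. The base case $j=\frac12$ holds because $\mf m\subseteq\mf g_{\geq1/2}$, hence $V\subseteq\mf m((z^{-1}))\subseteq\mf g_{\geq1/2}((z^{-1}))$. For the step, assume $V\subseteq\mf g_{\geq j}((z^{-1}))$ and take $v\in V$, with $\ad x$-homogeneous components $(v)_i\in\mf g_i((z^{-1}))$, so $v=\sum_{i\geq j}(v)_i$. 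Since $s$ is $\ad x$-homogeneous of degree $m\geq0$, the term $z[s,v]$ lies in $\mf g_{\geq j+m}((z^{-1}))\subseteq\mf g_{\geq j}((z^{-1}))$, while $[f,v]$ lies in $\mf g_{\geq j-1}((z^{-1}))$ with $\mf g_{j-1}$-component $[f,(v)_j]$; therefore the $\mf g_{j-1}$-component of $[\Lambda,v]$ equals $[f,(v)_j]$. But $[\Lambda,v]=\ad\Lambda(v)\in V\subseteq\mf g_{\geq j}((z^{-1}))$, so this component is $0$, and since $\ad f\colon\mf g_j\to\mf g_{j-1}$ is injective for $j>0$ (representation theory of $\mf{sl}_2$, hence also $\mb F((z^{-1}))$-linearly) we conclude $(v)_j=0$, i.e. $v\in\mf g_{\geq j+1/2}((z^{-1}))$. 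Hence $V\subseteq\mf g_{\geq j+1/2}((z^{-1}))$, completing the induction. Taking $j$ larger than the maximal $\ad x$-eigenvalue $\Delta$ yields $V\subseteq\mf g_{\geq j}((z^{-1}))=0$, the desired contradiction.

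The one point requiring care — and the only real obstacle — is the passage to the span $V$ of \emph{all} the iterates together with the finite-dimensionality argument that makes it $\ad\Lambda$-stable, since this is precisely what allows $[\Lambda,v]\in V$ to be invoked in the inductive step. A naive attempt to run the induction directly on the individual elements $\ad\Lambda^{-k}X'$ fails, because each inductive step would erode the range of admissible $k$. Everything else — the behaviour of $[f,\cdot]$, $[s,\cdot]$ and of multiplication by $z$ with respect to the grading — is routine bookkeeping.
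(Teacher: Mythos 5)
Your proof is correct, but it is organized quite differently from the paper's. The paper argues \emph{forward}: if $A=\ad(f+zs)^{-n-1}X$ lies in $\mf m((z^{-1}))\setminus\{0\}$, then $\ad(f+zs)$ acts on it as $\ad f$ (using $[s,\mf m]=0$), which strictly lowers the $\ad x$-degree and is injective on $\mf g_{>0}$; since $\mf m\subset\mf g_{\geq\frac12}$, the iterate $\ad(f+zs)^kA$ must leave $\mf m((z^{-1}))$ for some $0\leq k\leq\Delta$. This gives the quantitative statement that among any $\Delta+1$ consecutive indices below a vanishing one there is a non-vanishing one, whence infinitely many. You instead negate the statement, pass to the $\mb F((z^{-1}))$-span $V$ of all backward iterates, use finite-dimensionality to make $V$ stable under $\ad(f+zs)$ itself, and then kill $V$ by a degree bootstrap; note also that where the paper uses $[s,\mf m]=0$, you only use that $s$ is $\ad x$-homogeneous of non-negative degree. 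Both routes rest on the same two facts (injectivity of $\ad f$ on positive-degree subspaces and $\mf m\subset\mf g_{\geq\frac12}$); the paper's version is shorter and yields an explicit bound on the gaps between non-vanishing indices, while yours trades that for the cleaner structural statement that a nonzero $\ad(f+zs)$-invariant subspace cannot be contained in $\mf m((z^{-1}))$. Your closing remark is also fair: the invariant-subspace device is exactly what replaces the paper's bounded-window bookkeeping.
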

\begin{proof}
First note that, if $A\in\mf m((z^{-1}))\backslash\{0\}$,
then $\ad(f+zs)A=[f,A]$ (since $s\in\Ker\ad(\mf n)$ and $\mf m\subset\mf n$), 
and therefore, 
since $\ad f:\,\mf g_i\to\mf g_{i-1}$ is injective for $i>0$,
we have that $\ad(f+zs)^kA\not\in\mf m((z^{-1}))$
for some $0\leq k\leq\Delta$, where $\Delta$ is the maximal 
$\ad x$ eigenvalue in $\mf g$.
Then, for every $n\in\mb Z_+$ such that $\pi_{\mf p}\ad(f+zs)^{-n-1}X=0$ 
we have that $\pi_{\mf p}\ad(f+zs)^{k-n-1}X\neq0$ for some $0\leq k\leq\Delta$.
The claim follows.
\end{proof}
\begin{lemma}\phantomsection\label{20120522:lem4}
\begin{enumerate}[(a)]
\item
Consider the decomposition of $a(z)\in Z(\mf h)$ with respect to the grading \eqref{decz}:
$a(z)=\sum_{k\in\frac12\mb Z}a_k(z)$, where $a_k(z)\in Z(\mf h)_k$.
Suppose that the basis $\{q^i\}_{i\in P}$ of $\mf m^\perp$ is homogeneous with respect 
to the decomposition \eqref{dec}, and let $h(i)$ be the degree of $q^i$.
Let also 
$\frac{\delta g(z)}{\delta q}[1]=\sum_{-K\leq k\in\frac12\mb Z}\frac{\delta g(z)}{\delta q}[1]_k$
be the decomposition of $\frac{\delta g(z)}{\delta q}[1]\in\mf p((z^{-1}))\otimes\mc V(\mf p)$
according to \eqref{decz}.
Then we have
$$
\frac{\delta g(z)}{\delta q}[1]_k
=\sum_{i\in P}\sum_{n=0}^{N+k+h(i)+1}A_{k,i,n}\otimes q_i^{(n)}\,,
$$
with ``leading term''
$$
A_{k,i,N+k+h(i)+1}=(-1)^{k+N+h(i)+1}\pi_{\mf p}\ad(f+zs)^{-k-N-h(i)-2}[a_{-N}(z),q^i]\,.
$$
\item
Assume that $a(z)$ does not lie in the center of $\mf g((z^{-1}))$.
Let $-N\in\frac12\mb Z$ be the minimal degree such that $a_{-N}(z)\not\in Z(\mf g((z^{-1})))$,
and let $\bar h=h(\bar i)$, where $\bar i\in P$ is such that $[a_{-N}(z),q^{\bar i}]\neq0$.
Then the leading terms $A_{k,\bar i,N+k+\bar h+1}$ are non zero
for infinitely many values of $k\in\frac12\mb Z$.
\item
In particular, for infinitely many values of the degree $k$,
the elements $\frac{\delta g(z)}{\delta q}[1]_k$ are non zero 
and they have distinct differential orders in the variable $q_{\bar i}\in\mc V(\mf p)$.
\end{enumerate}
\end{lemma}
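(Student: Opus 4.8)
The plan is to prove the three parts in order, with part (a) being the computational core and parts (b), (c) being quick consequences. For part (a), I would start from the explicit formula for the linear part $\frac{\delta g(z)}{\delta q}[1]$ given by Lemma \ref{20120522:lem2}, namely
$$
\frac{\delta g(z)}{\delta q}[1]=\sum_{n\in\mb Z_+}(-1)^n(\pi_{\mf p}\otimes1)(\ad(f+zs)^{-n-1}\otimes1)[a(z)\otimes1,\partial^nq]\,,
$$
and expand everything according to the grading \eqref{decz}. Writing $q=\sum_{i\in P}q^i\otimes q_i$ with $q^i$ homogeneous of degree $h(i)$, one has $\partial^nq=\sum_{i\in P}q^i\otimes q_i^{(n)}$, and $\partial^nq^i$ — viewed inside $\mf g((z^{-1}))$ — still has degree $h(i)$ since $\partial$ acts only on the $\mc V(\mf p)$-factor. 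The element $a(z)$ decomposes as $\sum_k a_k(z)$ with $a_k(z)\in Z(\mf h)_k$, and since $\ad(f+zs)$ lowers the degree by $1$ (being homogeneous of degree $-1$), the term indexed by $a_\ell(z)$, $q^i$, and $n$ contributes to the component of degree $\ell+h(i)-(n+1)$. Setting this equal to $k$ gives $n = \ell+h(i)-k-1$, so in degree $k$ each pair $(\ell,i)$ contributes at most one value of $n$, and $n\geq 0$ forces $\ell\geq k+h(i)+1$. Because $a(z)$ has minimal degree $-N$ (its component $a_{-N}(z)$ being the bottom piece), the relevant range is $-N\leq\ell$, i.e. $n\leq N+k+h(i)+1$. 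The leading term (largest $n$, hence highest differential order in $q_i$) comes precisely from $\ell=-N$, giving $n=N+k+h(i)+1$ and the stated formula for $A_{k,i,N+k+h(i)+1}$. I would be careful here about which convention ``leading'' refers to — it is the highest-order derivative $q_i^{(n)}$ appearing — and verify the sign $(-1)^n=(-1)^{k+N+h(i)+1}$ directly from $(-1)^n$ with $n=N+k+h(i)+1$.

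For part (b), I would invoke Lemma \ref{20120522:lem3}: choosing $\bar i\in P$ with $[a_{-N}(z),q^{\bar i}]\neq0$ (which exists precisely because $a_{-N}(z)\notin Z(\mf g((z^{-1})))$, by definition of $-N$ as the minimal such degree), the vector $X=[a_{-N}(z),q^{\bar i}]$ is a nonzero element of $\mf g((z^{-1}))$. One checks $X\in\mf h^\perp$: since $a_{-N}(z)\in Z(\mf h)$ commutes with $\mf h$ and $\ad a_{-N}(z)$ therefore maps $\mf h^\perp$ to $\mf h^\perp$ and kills $\mf h$, and $q^{\bar i}\in\mf m^\perp$, one has $[a_{-N}(z),q^{\bar i}]=[a_{-N}(z),\pi_{\mf h^\perp}q^{\bar i}]\in\mf h^\perp$ — this is the same reduction used in the proof of Lemma \ref{20120522:lem2}. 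Then Lemma \ref{20120522:lem3} says $\pi_{\mf p}\ad(f+zs)^{-m-1}X\neq0$ for infinitely many $m\in\mb Z_+$; matching $m=k+N+\bar h+1$ shows $A_{k,\bar i,N+k+\bar h+1}\neq0$ for infinitely many values of $k\in\frac12\mb Z$. Part (c) is then immediate: for each such $k$, the component $\frac{\delta g(z)}{\delta q}[1]_k$ contains the nonzero term $A_{k,\bar i,N+k+\bar h+1}\otimes q_{\bar i}^{(N+k+\bar h+1)}$, which has differential order $N+k+\bar h+1$ in the variable $q_{\bar i}$, and these orders are distinct for distinct $k$ (with no other $\ell,i'$ producing a higher-order $q_{\bar i}$-derivative in degree $k$, by the range analysis in part (a)); hence these $\frac{\delta g(z)}{\delta q}[1]_k$ are in particular nonzero and of distinct differential orders.

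The main obstacle I anticipate is bookkeeping in part (a): one must track simultaneously the polynomial grading (to isolate the linear part), the $\ad x$-type grading \eqref{decz} of $\mf g((z^{-1}))$ (to split into degree-$k$ components), and the differential-polynomial order in each $q_i$, while keeping the $z$-dependence implicit inside the homogeneous pieces $a_\ell(z)$, $\mf h^\perp_j$, etc. The risk is a sign error or an off-by-one in the exponent $-k-N-h(i)-2$ of $\ad(f+zs)$; I would double-check it against the base case and against Lemma \ref{20120522:lem2} by reading off the coefficient of $q_i^{(n)}$ there and matching degrees. Everything else — parts (b) and (c) — follows formally once (a) is in place and Lemma \ref{20120522:lem3} is invoked, so no genuinely new idea is required beyond the grading analysis.
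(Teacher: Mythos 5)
Your overall strategy is exactly the one the paper uses: decompose the explicit formula of Lemma \ref{20120522:lem2} according to the grading \eqref{decz}, observe that for fixed degree $k$ each pair $(\ell,i)$ selects at most one $n$, use the vanishing of $[a_\ell(z),q^i]$ for $\ell<-N$ to bound $n$ and isolate the leading term, and then feed that leading term into Lemma \ref{20120522:lem3} to get parts (b) and (c). The paper's own proof consists of the single identity
$$
\frac{\delta g(z)}{\delta q}[1]_k=\sum_{i\in P}\sum_{n\in\mb Z_+}(-1)^n\,\pi_{\mf p}\ad(f+zs)^{-n-1}[a_{k-n-h(i)-1}(z),q^i]\otimes q_i^{(n)}
$$
followed by the remark that the bracket vanishes unless its index is $\geq-N$; so in spirit you are filling in the same proof, and your reduction of part (b) to Lemma \ref{20120522:lem3} (including the check that $[a_{-N}(z),q^{\bar i}]\in\mf h^\perp$, which is genuinely needed to invoke that lemma) is correct.

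However, the degree bookkeeping in your part (a) contains a sign error that inverts the whole logic. The operator $\ad(f+zs)^{-n-1}$ is the $(n+1)$-st power of the \emph{inverse} of $\ad(f+zs)\big|_{\mf h^\perp}$; since $\ad(f+zs)$ lowers the \eqref{decz}-degree by $1$, its inverse raises it by $1$, so $\ad(f+zs)^{-n-1}$ raises the degree by $n+1$, it does not lower it. The term indexed by $(\ell,i,n)$ therefore lands in degree $\ell+h(i)+(n+1)$, whence $n=k-\ell-h(i)-1$, i.e.\ $\ell=k-n-h(i)-1$ as in the paper's displayed formula. With your convention ($n=\ell+h(i)-k-1$) the constraint $\ell\geq-N$ produces a \emph{lower} bound on $n$: the sum over $n$ would then not be finite, there would be no leading term, and $\ell=-N$ would give the \emph{smallest} $n$ rather than the largest --- indeed, substituting $\ell=-N$ into your own formula yields $n=-N+h(i)-k-1$, not the value $N+k+h(i)+1$ that you then assert. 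The stated conclusions are recovered only by quoting them, not by deriving them. With the corrected sign everything downstream works: $\ell\geq-N$ bounds $n$ from above, the maximum is attained exactly at $\ell=-N$, and (b), (c) follow as you describe. (The precise value of the maximal $n$ is also sensitive to whether $h(i)$ denotes the $\ad x$-eigenvalue of $q^i$ or of its dual $q_i$; this does not affect parts (b) and (c), which only need that the maximal $n$ is a fixed affine function of $k$.) So the gap is a fixable computational slip rather than a missing idea, but as written your argument does not establish the finiteness of the sum in part (a) or the identification of the leading term.
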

\begin{proof}
From equation \eqref{20120523:eq2} we have
$$
\frac{\delta g(z)}{\delta q}[1]_k=
\sum_{i\in P}\sum_{n\in\mb Z_+}(-1)^n \pi_{\mf p}
\ad(f+zs)^{-n-1} [a_{k-n-h(i)-1}(z),q^i]\otimes q_i^{(n)}\,.
$$
Part (a) follows from the above equation and the fact that, by assumption, $[a_k(z),q^i]$
can be non zero only for $k\geq -N$.
Part (b) follows from part (a) and Lemma \ref{20120522:lem3}.
Part (c) is obvious.
\end{proof}
\begin{lemma}\label{20120522:lem5}
Let $a(z)\in Z(\mf h)\backslash Z(\mf g((z^{-1})))$.
Let $\frac{\delta g(z)}{\delta q}[1]=\sum_{n\in\mb Z_+}\frac{\delta g_n}{\delta q}[1]z^{-n+N}$
be the expansion of $\frac{\delta g(z)}{\delta q}[1]\in\mf p((z^{-1}))\otimes\mc V(\mf p)$
in power series of $z$.
Then, the coefficients $\frac{\delta g_n}{\delta q}[1]$ 
span an infinite dimensional subspace of $\mf p\otimes\mc V(\mf p)$.
\end{lemma}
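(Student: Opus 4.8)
The plan is to argue by contradiction: I will show that the differential orders of the coefficients $\frac{\delta g_n}{\delta q}[1]$ in one fixed variable $q_{\bar i}$ are unbounded, which is incompatible with these coefficients spanning a finite–dimensional subspace of $\mf p\otimes\mc V(\mf p)$. First I would recall the notation of Lemma \ref{20120522:lem4}: let $-N\in\frac12\mb Z$ be minimal with $a_{-N}(z)\notin Z(\mf g((z^{-1})))$, fix $\bar i\in P$ with $[a_{-N}(z),q^{\bar i}]\neq0$, set $\bar h=h(\bar i)$, and let $S\subset\frac12\mb Z$ be the set of degrees $k$ (with respect to the grading \eqref{decz}) for which the leading coefficient $A_{k,\bar i,N+k+\bar h+1}\in\mf p((z^{-1}))$ of $\frac{\delta g(z)}{\delta q}[1]_k$ is nonzero; by Lemma \ref{20120522:lem4}(b) the set $S$ is infinite (this is exactly where the hypothesis $a(z)\notin Z(\mf g((z^{-1})))$ is used). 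For $k\in S$, Lemma \ref{20120522:lem4}(a) tells us $\frac{\delta g(z)}{\delta q}[1]_k$ has differential order exactly $N+k+\bar h+1$ in $q_{\bar i}$, with top–order part $A_{k,\bar i,N+k+\bar h+1}\otimes q_{\bar i}^{(N+k+\bar h+1)}$; since $A_{k,\bar i,N+k+\bar h+1}$ is a nonzero Laurent series over $\mf p$, some coefficient of a power of $z$ in $\frac{\delta g(z)}{\delta q}[1]_k$ acquires a term $b\otimes q_{\bar i}^{(N+k+\bar h+1)}$ with $0\neq b\in\mf p$, hence is of differential order $\geq N+k+\bar h+1$ in $q_{\bar i}$. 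That coefficient, however, is one of the $\frac{\delta g_n}{\delta q}[1]$ (up to sign).

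The crucial point I would isolate is a compatibility between the grading \eqref{decz} and the differential–polynomial structure: the decomposition $\frac{\delta g(z)}{\delta q}[1]=\sum_k\frac{\delta g(z)}{\delta q}[1]_k$ acts only on the tensor factor $\mf p((z^{-1}))\subset\mf g((z^{-1}))$ and leaves the factor $\mc V(\mf p)$ untouched. Indeed, from the explicit formula for $\frac{\delta g(z)}{\delta q}[1]_k$ derived in the proof of Lemma \ref{20120522:lem4} (together with \eqref{20120523:eq2}), passing to the homogeneous component of degree $k$ amounts to retaining, in each $\ad x$–homogeneous direction $v\in\mf p$ of degree $\delta_v$, only the coefficient of $z^{(\delta_v-k)/(m+1)}$ (and nothing if that exponent is not an integer); by \eqref{decz2} each such component is a genuine Laurent \emph{polynomial} in $z$. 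In particular, if every coefficient of a power of $z$ in $\frac{\delta g(z)}{\delta q}[1]$ has differential order $\leq D$ in $q_{\bar i}$, then the same bound holds for every coefficient of a power of $z$ in each $\frac{\delta g(z)}{\delta q}[1]_k$.

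Now I would conclude. Suppose the coefficients $\frac{\delta g_n}{\delta q}[1]$, $n\in\mb Z_+$, spanned a finite–dimensional subspace of $\mf p\otimes\mc V(\mf p)$; choosing a finite basis of that subspace and taking the maximum of its differential orders in $q_{\bar i}$ yields a $D\in\mb Z_+$ with $\mathrm{ord}_{q_{\bar i}}\frac{\delta g_n}{\delta q}[1]\leq D$ for all $n$, i.e. every coefficient of a power of $z$ in $\frac{\delta g(z)}{\delta q}[1]$ has differential order $\leq D$ in $q_{\bar i}$. By the compatibility above, every coefficient of a power of $z$ in each $\frac{\delta g(z)}{\delta q}[1]_k$ then has differential order $\leq D$ in $q_{\bar i}$. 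But for $k\in S$ with $N+k+\bar h+1>D$ — of which there are infinitely many, since $S$ is infinite — the first paragraph exhibits such a coefficient of differential order $N+k+\bar h+1>D$, a contradiction. Hence the span is infinite–dimensional. I expect the only delicate step to be the compatibility assertion of the second paragraph, namely that taking \eqref{decz}–homogeneous components does not interfere with the differential order in $\mc V(\mf p)$; this is not deep, but it must be read off carefully from the formula for $\frac{\delta g(z)}{\delta q}[1]_k$ in the proof of Lemma \ref{20120522:lem4} and from the finiteness of the $z$–support of homogeneous components in \eqref{decz2}.
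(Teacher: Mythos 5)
Your proposal is correct and follows essentially the same route as the paper: the paper's proof is the one\--line observation that the claim follows from Lemma \ref{20120522:lem4}(c) together with the relation \eqref{decz2} between the $z$-expansion and the grading \eqref{decz}, and your argument is precisely an unpacking of that — unbounded differential orders in $q_{\bar i}$ among the \eqref{decz}-homogeneous components, plus the finiteness of the $z$-support of each component and the fact that the grading acts only on the $\mf p((z^{-1}))$ tensor factor, force unbounded differential orders among the coefficients $\frac{\delta g_n}{\delta q}[1]$. The "compatibility" step you flag as delicate is exactly the content the paper compresses into the citation of \eqref{decz2}, and your treatment of it is sound.
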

\begin{proof}
It follows from Lemma \ref{20120522:lem4}(c) and the relation \eqref{decz2} 
between the decompositions of $\mf g((z^{-1}))$
in powers of $z$ and with respect to the grading \eqref{decz}.
\end{proof}
\begin{corollary}\label{20120522:cor}
Let $a(z)\in Z(\mf h)\backslash Z(\mf g((z^{-1})))$,
and $\tint g(z)=\sum_{n\in\mb Z_+}\tint g_nz^{-n+N}\in(\quot{\mc W }{\partial\mc W})((z^{-1}))$.
The coefficients $\tint g_n,\,n\in\mb Z_+$, span an infinite-dimensional
subspace of $\quot{\mc W}{\partial\mc W}$.
\end{corollary}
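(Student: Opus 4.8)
The plan is to derive Corollary \ref{20120522:cor} directly from Lemma \ref{20120522:lem5}, using only elementary properties of the variational derivative. First I would recall that, in the algebra of differential polynomials $\mc V(\mf p)$, one has $\frac{\delta}{\delta q}\circ\partial=0$, so that the variational derivative descends to a well-defined $\mb F$-linear map $\quot{\mc V(\mf p)}{\partial\mc V(\mf p)}\to\mf p\otimes\mc V(\mf p)$. Moreover, by the argument in the proof of Lemma \ref{20130207:lem}, if $p\in\mc V(\mf p)$ satisfies $\partial p\in\mc W$ then $p\in\mc W$; hence $\partial\mc V(\mf p)\cap\mc W=\partial\mc W$ and there is a natural embedding $\quot{\mc W}{\partial\mc W}\hookrightarrow\quot{\mc V(\mf p)}{\partial\mc V(\mf p)}$.

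Composing these maps with the projection onto the degree-$1$ component of $\mc V(\mf p)$ for the usual polynomial grading, I obtain a single $\mb F$-linear map
\[
\Phi\colon\ \quot{\mc W}{\partial\mc W}\ \longrightarrow\ \mf p\otimes\mc V(\mf p)(1)\,,
\qquad
\tint g\ \longmapsto\ \frac{\delta g}{\delta q}[1]\,.
\]
By Proposition \ref{20120515:thm} the coefficients $\tint g_n$ of the Laurent series $\tint g(z)=\sum_{n\in\mb Z_+}\tint g_nz^{-n+N}$ in \eqref{gz} all lie in $\quot{\mc W}{\partial\mc W}$, and since extraction of the coefficient of $z^{-n+N}$ commutes with the $\mb F$-linear operations ``apply $\frac{\delta}{\delta q}$'' and ``take the linear part'', one has $\Phi(\tint g_n)=\frac{\delta g_n}{\delta q}[1]$, i.e.\ the coefficient of $z^{-n+N}$ in $\frac{\delta g(z)}{\delta q}[1]$.

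Finally, since $\Phi$ is linear, it maps $\mathrm{span}_{\mb F}\{\tint g_n\mid n\in\mb Z_+\}$ onto $\mathrm{span}_{\mb F}\big\{\frac{\delta g_n}{\delta q}[1]\mid n\in\mb Z_+\big\}$, and the latter space is infinite-dimensional by Lemma \ref{20120522:lem5}; consequently $\mathrm{span}_{\mb F}\{\tint g_n\mid n\in\mb Z_+\}$ is infinite-dimensional, which is the assertion (note that we do not even need $\Phi$ to be injective). I do not expect any genuine obstacle in this last step: all the substantive work---the explicit formula for the linear part of $\frac{\delta g(z)}{\delta q}$ in Lemma \ref{20120522:lem1}, and the nonvanishing of infinitely many of its graded components in Lemmas \ref{20120522:lem3}--\ref{20120522:lem5}---has already been carried out, and what remains is only the bookkeeping above.
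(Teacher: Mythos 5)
Your proposal is correct and follows exactly the route the paper takes: the paper's proof of Corollary \ref{20120522:cor} simply declares it ``obvious from Lemma \ref{20120522:lem5}'', and the content you supply (the variational derivative kills $\partial\mc V(\mf p)$, Lemma \ref{20130207:lem} gives the embedding $\quot{\mc W}{\partial\mc W}\subset\quot{\mc V(\mf p)}{\partial\mc V(\mf p)}$, and a linear map cannot increase the dimension of a span) is precisely the bookkeeping the authors leave implicit. No gaps.
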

\begin{proof}
Obvious from Lemma \ref{20120522:lem5}.
\end{proof}

\subsection{Conclusion}\label{sec:4.8}
We can summarize all the results obtained in the previous sections in the following
\begin{theorem}\label{final}
Consider the setup of Section \ref{sec:3.1} and
assume, as in Section \ref{sec:4.2}, that
$s\in\Ker(\ad\mf n)$ is a homogeneous element
such that $\mf g((z^{-1}))$ decomposes as in \eqref{eq:dech}.
Let $a(z)\in Z(\mf h)\backslash Z(\mf g((z^{-1}))$,
and let $U(z)\in\mf g((z^{-1}))_{>0}\otimes\mc V(\mf p)$ and $h(z)\in\mf h_{>-1}\otimes\mc V(\mf p)$
be a solution of equation \eqref{L0_ds}.
Consider the differential subalgebra $\mc W\subset \mc V(\mf p)$
defined in \eqref{20120511:eq2}, with the compatible PVA structures 
$\{\cdot\,_\lambda\,\cdot\}_{H,\rho}$ and $\{\cdot\,_\lambda\,\cdot\}_{K,\rho}$
defined in \eqref{wbrackX}--\eqref{HKds}.
Then, the coefficients of the Laurent series $\tint g(z)=\sum_{n\in\mb Z_+}\tint g_nz^{N-n}$ 
defined in \eqref{gz} span an infinite-dimensional subspace of $\quot{\mc W}{\partial\mc W}$
and they satisfy the Lenard-Magri recursion conditions \eqref{eq:lenard2}. 
Hence, they are in involution with respect to both $H$ and $K$:
$$
\{\tint g_m,\tint g_n\}_{H,\rho}
=\{\tint g_m,\tint g_n\}_{K,\rho}=0
\qquad\text{for all }m,n\in\mb Z_+\,,
$$
and they define an integrable hierarchy of bi-Hamiltonian equations,
called the \emph{generalized Drinfeld-Sokolov hierarchy}:
$$
\frac{dp}{dt_n}
=\{{g_n}_{\lambda}p\}_{H,\rho}\big|_{\lambda=0}
=\{{g_{n+1}}_{\lambda}p\}_{K,\rho}\big|_{\lambda=0}
\,,
\qquad p\in\mc W,\,n\in\mb Z_+\,.
$$
\end{theorem}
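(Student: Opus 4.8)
The plan is to assemble Theorem \ref{final} from the four steps carried out in Sections \ref{sec:4.4}--\ref{sec:4.7}, together with the general Lenard-Magri/Magri mechanism recalled in Section \ref{sec:1.3}; almost all of the genuine work has already been done there, so what remains is essentially bookkeeping.

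First I would observe that, applying Proposition \ref{int_hier2_ds}(a) with $r=q$, there exist $U(z)\in\mf g((z^{-1}))_{>0}\otimes\mc V(\mf p)$ and $h(z)\in\mf h_{>-1}\otimes\mc V(\mf p)$ solving equation \eqref{L0_ds}, so that the Laurent series $\tint g(z)$ of \eqref{gz} is defined; by Lemma \ref{20120515:lem} it does not depend on the chosen solution. Next I would invoke Proposition \ref{20120515:thm} (Step 3) to get $\tint g(z)\in(\quot{\mc W}{\partial\mc W})((z^{-1}))$, so that each coefficient $\tint g_n$ is a well-defined element of $\quot{\mc W}{\partial\mc W}$; here one also uses that $\partial\mc V(\mf p)\cap\mc W=\partial\mc W$, which follows exactly as in Lemma \ref{20130207:lem} and yields the embedding $\quot{\mc W}{\partial\mc W}\hookrightarrow\quot{\mc V(\mf p)}{\partial\mc V(\mf p)}$. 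That the span of the $\tint g_n$ is infinite dimensional in $\quot{\mc W}{\partial\mc W}$ is then immediate from Corollary \ref{20120522:cor} (Step 4), which is precisely where the hypothesis $a(z)\notin Z(\mf g((z^{-1})))$ enters.

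For the recursion I would use Theorem \ref{main-step2} (Step 2): $\tint g(z)$ solves equation \eqref{eq:lenard5}. By the discussion in Section \ref{sec:4.1} — which, via Corollary \ref{cor4apr} and \eqref{wbrackX}, identifies \eqref{eq:lenard5} with the pairing form $\int\kappa((H-zK)(\frac{\delta g(z)}{\delta q})\mid\frac{\delta p}{\delta q})=0$ for $p\in\mc W$ — equation \eqref{eq:lenard5} is equivalent to the coefficientwise Lenard-Magri recursion \eqref{eq:lenard2} for $\tint g(z)=\sum_{n\in\mb Z_+}\tint g_nz^{N-n}$: extracting the coefficient of the top power of $z$ gives $\tint\{{g_0}_\lambda p\}_{K,\rho}\big|_{\lambda=0}=0$, and the coefficient of $z^{N-n}$ gives $\tint\{{g_n}_\lambda p\}_{H,\rho}\big|_{\lambda=0}=\tint\{{g_{n+1}}_\lambda p\}_{K,\rho}\big|_{\lambda=0}$ for all $p\in\mc W$. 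Since $\mc W$ is an algebra of differential polynomials (Theorem \ref{20120511:thm2}(c)) carrying the compatible PVA $\lambda$-brackets $\{\cdot\,_\lambda\,\cdot\}_{H,\rho}$ and $\{\cdot\,_\lambda\,\cdot\}_{K,\rho}$, I would then invoke Magri's theorem (\cite{Mag78}; see also \cite{BDSK09}): solutions of \eqref{eq:lenard2} are automatically pairwise in involution with respect to both $H$ and $K$, and together with the infinite-dimensionality of their span this yields the asserted integrable bi-Hamiltonian hierarchy.

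The step I expect to require the most attention — though it is routine, all the real difficulty being already absorbed into Steps 1--4 — is the passage between the two ambient spaces: checking that the generating-series identity \eqref{eq:lenard5}, which a priori lives in $\quot{\mc V(\mf p)}{\partial\mc V(\mf p)}((z^{-1}))$, genuinely descends to the recursion \eqref{eq:lenard2} inside $\quot{\mc W}{\partial\mc W}$, and that the abstract Lenard-Magri machinery of Section \ref{sec:1.3} applies verbatim to the pair $(H,K)$ regarded as a bi-Poisson structure on $\mc W$ (expressed in the generators $w_1,\dots,w_r$ of Theorem \ref{20120511:thm2}, rather than in the $q_i$'s, for which $H$ and $K$ are not even Poisson structures).
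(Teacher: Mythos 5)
Your proposal is correct and follows exactly the route the paper takes: Theorem \ref{final} is stated in Section \ref{sec:4.8} precisely as a summary of Steps 1--4 (Proposition \ref{int_hier2_ds}, Theorem \ref{main-step2}, Proposition \ref{20120515:thm}, Corollary \ref{20120522:cor}) combined with the reformulation of the Lenard-Magri recursion in Section \ref{sec:4.1} and the general involution argument of Section \ref{sec:1.3}. Your additional remarks on descending from $\quot{\mc V(\mf p)}{\partial\mc V(\mf p)}$ to $\quot{\mc W}{\partial\mc W}$ via Lemma \ref{20130207:lem}, and on applying the scheme to $\mc W$ in its generators rather than to the non-Poisson matrices $H,K$ on $\mc V(\mf p)$, are exactly the implicit points the paper relies on.
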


\subsection{Examples}\label{sec:4.9}

\begin{example}[The KdV hierarchy]
Let us consider the classical $\mc W$-algebra corresponding to the
Lie algebra $\mf{sl}_2$ constructed in Example \ref{W-sl2} and
consider $a(z)=f+zs$. We get $\tint g_0=\tint w$ and the corresponding
Hamiltonian equation is $\frac{dw}{dt_0}=w^\prime$. The next integral of motion is
$\tint g_1=-\tint\frac{w^2}{4}$ and the corresponding Hamiltonian equation is
the \emph{Korteweg-de Vries equation}
$$
\frac{dw}{dt_1}=\frac{1}{4}(w^{\prime\prime\prime}-6ww^\prime)\,.
$$
\end{example}

\begin{example}[The Boussinesq hierarchy]
Let us consider the classical $\mc W$-algebra corresponding to the Lie algebra $\mf{sl}_3$
and its principal nilpotent element $f=E_{12}+E_{23}$,
constructed in Example \ref{W-sl3-principale}.
Letting $a(z)=(f+zs)^2$ (recall that we are working in the matrix realization),
we get $\tint g_0=\tint w_2$
and the corresponding system of Hamiltonian equations is
$$
\left\{\begin{array}{l}
L_t=2{w_2}^\prime\\
{w_2}_t=-\frac16L^{\prime\prime\prime}+\frac23LL^\prime\,.
\end{array}\right.
$$
Eliminating ${w_2}$ from the system we get that $L$ satisfies the
\emph{Boussinesq equation}
$$
L_{tt}=-\frac{1}{3}L^{(4)}+\frac43(LL^\prime)^\prime\,.
$$
\end{example}

\begin{example}
Let us consider the classical $\mc W$-algebra corresponding to the Lie algebra $\mf{sl}_3$
and its minimal nilpotent element $f=E_{31}$,
constructed in Example \ref{W-sl3-minimal}.
In both cases considered (namely the choice $\mf l=0$ or $\mf l\neq0$) the element
$f+zs$, where $s=E_{13}$, 
is semisimple and we get an integrable hierarchy of bi-Hamiltonian equations
by Theorem \ref{final}.
For example, when $\mf l$ is maximal isotropic, letting $a(z)=f+zs$
we get $\tint g_0=\tint (w_2+w_3)$
and the corresponding system of Hamiltonian equations is
$$
\left\{\begin{array}{l}
L_t=\frac12 (w_2'+w_3')\\
{w_2}_t=L-12w_4^2-3w_4'\\
{w_3}_t=-L+12w_4^2-3w_4'\\
{w_4}_t=\frac12(w_2-w_3)\,.
\end{array}\right.
$$
This system of equations was first studied in \cite{BD91} and is known as \emph{fractional KdV system}.
Eliminating $L$, $w_2$, $w_3$ from the system we get that $w_4$ satisfies the equation
$$
w_4''=-\frac13{w_4}_{tttt}-8(w_4{w_4}_t)_t\,,
$$
which, after rescaling, is the Boussinesq equation with the derivatives with respect
to $x$ and $t$ exchanged.
\end{example}

\subsection{Applicability of the integrability scheme for $\mf{gl}_n$}\label{sec:4.10}

It is natural to ask when the assumptions of Theorem \ref{final} hold,
so that the proposed scheme of integrability can be applied.
In other words, given a reductive Lie algebra $\mf g$,
we want to know for which nilpotent elements $f\in\mf g$ (extended to an $\mf{sl}_2$-triple $f,h=2x,e$), 
we are able to find an isotropic subspace $\mf l\subset\mf g_{\frac12}$ and a homogeneous element
$s\in\Ker(\ad\mf n)$ (where $\mf n=\mf l^{\perp\omega}\oplus\mf g_{\geq1}$)
such that $f+zs$ is a semisimple element in $\mf g((z^{-1}))$.

It is not hard to find a general answer in the case of $\mf{gl}_n$.
In this case, the integrability scheme can be applied successfully
for all nilpotent elements $f\in\mf{gl}_n$ corresponding to the partitions of $n$
of the following type:
\begin{enumerate}[(a)]
\item
$n=r+\dots+r+1+\dots+1$,
\item
or $n=r+(r-1)+\dots+r+(r-1)+1+\dots+1$.
\end{enumerate}
For partitions of $n$ of type
$n=r+r+\dots+r+\epsilon$, where $\epsilon=0$ or $1$,
we can choose $s$ in $\Ker(\ad\mf g_{>0})$
(that is with $\mf l=0$),
such that the corresponding element $f+zs\in\mf{gl}_n((z^{-1}))$ 
is regular, homogeneous, semisimple,
which corresponds to integrable hierarchies of ``type I'', \cite{dGHM92,BdGHM93,FHM92}.
Removing the assumptions that $f+zs$ be regular 
(namely considering ``type II hierarchies'') we allow partitions of $n$
with an arbitrary number of $+1$'s.
Furthermore, 
if the partition of $n$ contains copies of $r+(r-1)$,
we are forced to choose $s$ in $\Ker(\ad\mf n)$, with $\mf n$ strictly included in $\mf g_{>0}$,
namely we need to choose a non-zero isotropic subspace
$\mf l\subset\mf g_{\frac12}$.
For partitions as in types (a) and (b) above, the corresponding homogeneous semisimple element $f+zs\in\mf g((z^{-1}))$ is
$$
f+zs^{(x)}=
\begin{pmatrix}
\Lambda^{DS}_{r,x}(z) &  &  & & & & 0 \\
& \Lambda^{DS}_{r,x}(z) &  &  & & & \\
&  & \ddots & & & & & \\
&  &  & \Lambda^{DS}_{r,x}(z) & & & \\
&  &  & & 0 & & \\
&  &  & & & \ddots & \\
0 &  &  & & & & 0
\end{pmatrix}
\,,
$$
where $x=a$ or $b$, and
$$
\begin{array}{l}
\displaystyle{
\Lambda^{DS}_{r,a}(z)=
\left(\begin{array}{ccccc}
0 & &  &  &  z\\
1 & 0 & &   & \\
  & 1 & 0 &  & \\
  &  & \ddots & \ddots & \\
  &  &  &  1 & 0
\end{array}\right)
\,,} \\
\displaystyle{
\Lambda^{DS}_{r,b}=
\left(\begin{array}{cccc|cccc}
0 & & & &&&& z\\
1 & 0 & & &&&&\\
& \ddots & \ddots & &&&& \\
& & 1 & 0 &&&& \\
\hline
&&& z & 0 & & & \\
&&&& 1 & 0 & & \\
&&&& & \ddots & \ddots & \\
&&&& & & 1 & 0
\end{array}\right)
\begin{array}{l}
\left.
\phantom{
\begin{array}{c} 1 \\ 1 \\ 1 \\ 1 \end{array}
}
\hspace{-20pt}
\right\} r
\\
\left.
\phantom{
\begin{array}{c} 1 \\ 1 \\ 1 \\ 1 \end{array}
}
\hspace{-20pt}
\right\} r-1
\end{array}
\,.
}
\end{array}
$$
%
%
We point out that our restrictions on the nilpotent element $f\in\mf{gl}_n$
are the same as those obtained in \cite{FGMS95,FGMS96},
where they constructed generalized Drinfeld-Sokolov integrable hierarchies associated to a graded element
in a Heisenberg subalgebra of $\mf g((z^{-1}))$.

As a final remark, it is not clear if it is possible to further modify
the setup of the integrability scheme to include other types of nilpotent elements $f\in\mf{gl}_n$.
For example, for $f\in\mf{gl}_6$ corresponding to the partition $6=4+2$,
we have $\mf g_{\frac12}=0$ (hence $\mf l=0$) since the gradation is even,
and one can show that there is no choice of $s\in\Ker(\ad\mf n)$, 
homogeneous in the $\ad x$-eigenspaces decomposition,
for which $f+zs$ is a semisimple element of $\mf{gl}_6((z^{-1}))$.


\end{document}